\definecolor{myurlcolor}{rgb}{0,0,0.7}
\definecolor{myrefcolor}{rgb}{0.8,0,0}
 \theoremstyle{plain}
 \theoremstyle{plain}
 \newtheorem{lem}{Lemma}
 \theoremstyle{plain}
 \newtheorem{thm}{Theorem}
 \theoremstyle{plain}
 \theoremstyle{plain}
 \newtheorem{corr}{Corollary}
 \theoremstyle{plain}
 \theoremstyle{remark}
 \newtheorem*{rem*}{Remark}
 \theoremstyle{plain}
  \newtheorem{rem}{Remark}
\theoremstyle{plain}
\newtheorem{schm}{Scheme}
\theoremstyle{plain}
\newtheorem{defn}{Definition}
\theoremstyle{plain}
\newcommand*{\textlabel}[2]{%
  \edef\@currentlabel{#1}
  \phantomsection
  #1\label{#2}
}
\newcommand{\supp}{\mathrm{supp}} 
\newcommand{\range}{\mathrm{range}} 
\newcommand{\e}{\mathrm{e}}
\newcommand{\unity}{\mathbbmss{1}}
\DeclareMathOperator{\tr}{tr}
\newcommand{\Tr}{\mathrm{Tr}}
\renewcommand{\C}{\mathbb{C}} 
\renewcommand{\U}{\mathrm{U}} 
\newcommand{\SL}{\mathrm{SL}(2,2^m)} 
\newcommand{\Pmb}{\overline{\mathcal{P}}_m} 
\newcommand{\Pm}{\mathcal{P}_m} 
\newcommand{\Pmbb}{\hat{\mathcal{P}}_m} 
\newcommand{\Cl}{\mathcal{C}_m}
\newcommand{\Ua}[1]{\mathcal{U}_{#1}}
\newcommand{\op}[1]{\left|\left|  \ #1  \ \right| \right|_{\mathrm{op}}}
\definecolor{dukeblue}{rgb}{0.0, 0.0, 0.61}
\definecolor{cadmiumgreen}{rgb}{0.0, 0.42, 0.24}
\newcommand{\tanmayc}[1]{{\color{dukeblue}1}}
\newcommand{\tanmay}[1]{{\color{cadmiumgreen} 1}}
\newcommand{\BCg}{\mathcal{B}\left( {\C^{N}}^{\otimes t} \right)}
\newcommand{\BCt}{\mathcal{B}\left( {\C^{N}}^{\otimes 2} \right)}
\newcommand{\BCr}{\mathcal{B}\left( {\C^{N}}^{\otimes 3} \right)}
\renewcommand{\H}{\mathcal{G}_{H}}
\newcommand{\EP}{\mathcal{G}_{\mathcal{P}}}
\newcommand{\ET}{\mathcal{G}_{\mathcal{T}}}
\newcommand{\Sp}{\mathrm{Sp}\left(2m, \mathbb{F}_2 \right)}
\newcommand{\id}{\mathrm{id}}
\newcommand{\tncone}{\mathrm{T}_{1}^{(\mathrm{nc})}}
\newcommand{\tnctwo}{\mathrm{T}_{2}^{(\mathrm{nc})}}
\newcommand{\tncthree}{\mathrm{T}_{3}^{(\mathrm{nc})}}
\newcommand{\tnconem}{\mathrm{T}_{1,-}^{(\mathrm{nc})}}
\newcommand{\tnconep}{\mathrm{T}_{1,+}^{(\mathrm{nc})}}
\newcommand{\tnconepm}{\mathrm{T}_{1,\pm}^{(\mathrm{nc})}}
\newcommand{\tnctwom}{\mathrm{T}_{2,-}^{(\mathrm{nc})}}
\newcommand{\tnctwop}{\mathrm{T}_{2,+}^{(\mathrm{nc})}}
\newcommand{\tnctwopm}{\mathrm{T}_{2,\pm}^{(\mathrm{nc})}}
\newcommand{\tncthreem}{\mathrm{T}_{3,-}^{(\mathrm{nc})}}
\newcommand{\tncthreep}{\mathrm{T}_{3,+}^{(\mathrm{nc})}}
\newcommand{\tncthreepm}{\mathrm{T}_{3,\pm}^{(\mathrm{nc})}}
\newcommand{\vzero}{\mathcal{V}_0}
\newcommand{\vdthree}{\mathcal{V}_{3}^{(\mathrm{d})}}
\newcommand{\vdtwo}{\mathcal{V}_{2}^{(\mathrm{d})}}
\newcommand{\vdone}{\mathcal{V}_{1}^{(\mathrm{d})}}
\newcommand{\vnc}{\mathcal{V}^{(\mathrm{nc})}}
\newcommand{\vc}{\mathcal{V}^{(\mathrm{c})}}
\newcommand{\vncones}{\mathcal{V}_{1,+}^{(\mathrm{(nc)}}}
\newcommand{\w}[2]{\ket{\hat{#1};{#2}}}
\newcommand{\q}[1]{\ket{\hat{#1}}}
\newcommand{\g}[2]{\ket{\Bar{#1};#2}}
\newcommand{\wtr}{\mathcal{W}_{\mathrm{CS}}}
\newcommand{\wpp}{\mathcal{W}_{+}}
\newcommand{\wm}{\mathcal{W}_{-}}
\newcommand{\wpm}{\mathcal{W}_{\pm}}
\newcommand{\intH}{{\int} \, d\mu(U) \,}
\newcommand{\ETT}{\hat{\ET}}
\newcommand{\ETTps}{{\ETT}_{+,0}}
\newcommand{\ETTpw}{{\ETT}_{+,1}}
\newcommand{\ETTpws}{{\ETT}_{+,2}}
\newcommand{\F}{\mathbb{F}_{2}} 
\newcommand{\FF}{\mathbb{F}_{2^m}} 
\newcommand{\Fs}{{\F}_*^{2m}}
\newcommand{\Fm}{\F^{2m}}
\newcommand{\Fl}{\mathbb{F}_{2,\mathrm{L}}^{2m}}
\newcommand{\Fr}{\mathbb{F}_{2,\mathrm{R}}^{2m}}
\newcommand{\pcs}{\mathcal{P}_{\mathrm{CS}}}
\newcommand{\pas}{\mathcal{P}_{\mathrm{AS}}}
\newcommand{\ps}{\mathcal{P}_{\mathrm{S}}}
\newcommand{\pw}{\mathcal{P}_{\omega}}
\newcommand{\pws}{\mathcal{P}_{\omega^*}}
\newcommand{\pj}[1]{\mathcal{P}_{#1}}
\newcommand{\pdp}{\mathcal{P}_{\mathrm{d},+}}
\newcommand{\pdm}{\mathcal{P}_{\mathrm{d},-}}
\newcommand{\lam}{\mathrm{\Lambda}}
\newcommand{\muv}{\mu}
\newcommand{\nuv}{\nu}
\newcommand{\Sr}{\mathrm{S}_3} 
\newcommand{\W}[1]{\mathrm{W}_{#1}}
\newcommand{\Wa}[1]{\mathcal{W}_{#1}} 
\newcommand{\sgn}[1]{\mathrm{sgn}\left(\pi\right)}
\newcommand{\s}[2]{\left\langle #1 , #2  \right\rangle}
\newcommand{\Om}{\mathrm{\Omega}}
\global\long\global\long\global\long\def\bra#1{\mbox{\ensuremath{\langle#1|}}}
\global\long\global\long\global\long\def\ket#1{\mbox{\ensuremath{|#1\rangle}}}
\global\long\global\long\global\long\def\bk#1#2{\mbox{\ensuremath{\ensuremath{\langle#1|#2\rangle}}}}
\global\long\global\long\global\long\def\kb#1#2{\mbox{\ensuremath{\ensuremath{\ensuremath{|#1\rangle\!\langle#2|}}}}}
\newcommand{\kt}[3]{\left| #1 \; \;\;\;\; #2 \;\;\;\;\; #3 \right\rangle}
\newcommand{\ptr}{\mathcal{P}_{\mathrm{tr}}}
\newcommand{\pzp}{\mathcal{P}_{0,+}}
\newcommand{\pzm}{\mathcal{P}_{0,-}}
\newcommand{\pop}{\mathcal{P}_{1,+}}
\newcommand{\pom}{\mathcal{P}_{1,-}}
\newcommand{\ptp}{\mathcal{P}_{2,+}}
\newcommand{\ptm}{\mathcal{P}_{2,-}}
\newcommand{\pjpm}{\mathcal{P}_{j,\pm}}
\newcommand{\vzp}{\mathcal{V}_{0,+}}
\newcommand{\vzm}{\mathcal{V}_{0,-}}
\newcommand{\vop}{\mathcal{V}_{1,+}}
\newcommand{\vom}{\mathcal{V}_{1,-}}
\newcommand{\vtp}{\mathcal{V}_{2,+}}
\newcommand{\vtm}{\mathcal{V}_{2,-}}
\newcommand{\vjpm}{\mathcal{V}_{k,\pm}}
\newcommand{\ETTCS}{\ETT_{\left[\mathrm{S}\right]}}
\newcommand{\ETTd}{\ETT_{\mathrm{D}}}
\newcommand{\ETTpjk}{{\ETT}_{\mathrm{D},k}}
\newcommand{\vcone}{\mathcal{V}_{1}^{(\mathrm{(c)}}}
\newcommand{\vdcone}{{\mathcal{V}_{1}^{(\mathrm{c})}}}
\newcommand{\vqcone}{{\mathcal{V}_{1,+}^{(\mathrm{c})}}}
\newcommand{\vrcone}{{\mathcal{V}_{1,-}^{(\mathrm{c})}}}
\newcommand{\pdcone}{{{\mathcal{P}_{1}^{(\mathrm{c})}}}}
\newcommand{\pqcone}{{{\mathcal{P}_{1,+}^{(\mathrm{c})}}}}
\newcommand{\prcone}{{{\mathcal{P}_{1,-}^{(\mathrm{c})}}}}
\newcommand{\pdctwo}{{{\mathcal{P}_{2}^{(\mathrm{c})}}}}
\newcommand{\pdcthree}{{{\mathcal{P}_{3}^{(\mathrm{c})}}}}
\newcommand{\tcone}{\mathrm{T}_{1}^{(\mathrm{c})}}
\newcommand{\tctwo}{\mathrm{T}_{2}^{(\mathrm{c})}}
\newcommand{\tcthree}{\mathrm{T}_{3}^{(\mathrm{c})}}
\newcommand{\tconem}{\mathrm{T}_{1,-}^{(\mathrm{c})}}
\newcommand{\tconep}{\mathrm{T}_{1,\mathrm{D}}^{(\mathrm{c})}}
\newcommand{\tctwom}{\mathrm{T}_{2,-}^{(\mathrm{c})}}
\newcommand{\tctwop}{\mathrm{T}_{2,\mathrm{D}}^{(\mathrm{c})}}
\newcommand{\tcthreem}{\mathrm{T}_{3,-}^{(\mathrm{c})}}
\newcommand{\tcthreep}{\mathrm{T}_{3,\mathrm{D}}^{(\mathrm{c})}}
\newcommand{\mh}[1]{\textcolor{red}{}}
\begin{document}    
\title{Approximate $3$-designs and partial decomposition of the Clifford group representation using transvections}
     
        \author{Tanmay Singal}
    \email{tanmaysingal@gmail.com}
    \affiliation{Institute of Physics, Faculty of Physics, Astronomy and Informatics,
Nicolaus Copernicus University, \\ Grudziadzka 5/7, 87-100 Toru\'n, Poland \\ Physics Division, National Center for Theoretical Sciences, Taipei 10617, Taiwan \\ Department of Physics, National Taiwan University, No.1 Sec. 4., Roosevelt Road, Taipei 106, Taiwan
 \\ Department of Analysis, Budapest University of Technology and Economics,
1111 Budapest, Egry József u. 1.}

    \author{Min-Hsiu Hsieh}
    \email{min-hsiu.hsieh@foxconn.com}
    \affiliation{Hon Hai Quantum Computing Research Center, Taipei, Taiwan}

\begin{abstract}
We study a scheme to implement an asymptotic unitary $3$-design. The scheme implements a random Pauli once followed by the implementation of a random transvection Clifford by using state twirling. Thus the scheme is implemented in the form of a quantum channel. We show that when this scheme is implemented $k$ times, then, in the $k \rightarrow \infty$ limit, the overall scheme implements a unitary $3$-design. This is proved by studying the eigendecomposition of the scheme: the $+1$ eigenspace of the scheme coincides with that of an exact unitary $3$-design, and the remaining eigenvalues are bounded by a constant. Using this we prove that the scheme has to be implemented approximately $\mathcal{O}(m \  + \  \log 1/\epsilon)$ times to obtain an $\epsilon$-approximate unitary $3$-design, where $m$ is the number of qubits, and $\epsilon$ is the diamond-norm distance of the exact unitary $3$-design.  Also, the scheme implements an asymptotic unitary $2$-design with the following convergence rate: it has to be sampled $\mathcal{O}(\log 1/\epsilon)$ times to be an $\epsilon$-approximate unitary $2$-design. Since transvection Cliffords are a conjugacy class of the Clifford group, the eigenspaces of the scheme's quantum channel coincide with the irreducible invariant subspaces of the adjoint representation of the Clifford group. Some of the subrepresentations we obtain are the same as were obtained in \cite{Helsen2016}, whereas the remaining are new invariant subspaces. Thus we obtain a partial decomposition of the adjoint representation for $3$ copies for the Clifford group. Thus, aside from providing a scheme for the implementation of unitary $3$-design, this work is of interest for studying representation theory of the Clifford group, and the potential applications of this topic. The paper ends with open questions regarding the scheme and representation theory of the Clifford group.
\end{abstract}
\maketitle  
\thispagestyle{plain}

\section{Introduction}
\label{sec:intro}
The act of randomisation by the Haar-measure of the $m$-qubit unitary group is an important operation in quantum computation and information. It has found applications in, 
for instance, randomized benchmarking \cite{RB1,RB2,RB3,RB4,RB5,RB6,RB7,RB8,RB9,RB10,RB11}, quantum control, quantum data hiding \cite{DiVincenzo2002}, and channel twirling \cite{Dankert2009}, to name a few. Besides practical applications, it plays a significant role in decoupling theorems, which is of immense significance in quantum information theory. It has also found applications in more general fields of physics like the black hole information paradox \cite{BH1,BH2,BH3} and thermalisation of isolated systems \cite{TH1,TH2,TH3}. For various purposes 
one is only interested in the $t$-th order moments of the Haar-measure, which can be achieved by randomisation over a finite ensemble of unitaries $\left\{\left(p_i,U_i \right)\right\}_{i=1}^n$ on the unitary group. Any finite ensemble of unitaries which reproduces these $t$-th order moments is called a unitary $t$-design. There are many equivalent formulations of unitary $t$-designs (see, e.g., \cite{Cleve16}, and the references therein). An approximate unitary $t$-design is an ensemble of unitaries which approximates the $t$-th order moments of the Haar-measure of the unitary group. One is often interested in asymptotic unitary $t$-designs, which are iterative schemes to obtain an exact design in the asymptotic limit. Given that such an iterative scheme converges, one typically asks how fast it converges to its limit. Various schemes to implement unitary $t$-designs are known in the literature, e.g. \cite{Cleve16, Zhu15,Webb15, RB11}.


For qubit systems, randomly sampling from the $m$-qubit Clifford group $\Cl$ gives an exact unitary $3$-design \cite{Zhu15,Webb15}. The order of the Clifford group is $|\Cl| = \mathcal{O} \left( 2^{m^2} \right)$. 
The Clifford group $\Cl$ is the normaliser of the $m$-qubit Pauli group $\Pmbb$ in the unitary group $\U(2^m)$. Arguably the simplest (non-Pauli) Clifford unitaries are known as transvections Clifford unitaries \cite{misc_transvections1, misc_transvections2, misc_transvections3}. While transvections have been studied earlier \cite{TR1,TR2}, more interest has been shown to them recently \cite{ Koenig14, misc_transvections1, misc_transvections2, misc_transvections3, Tan20, Pllaha21}. Such applications are based on two properties of transvections: (i) that they generate the Clifford group, and (ii) they are a conjugacy class of the Clifford group, and moreover are non-Pauli Cliffords with minimal Pauli support. This finds applications in designing fault-tolerant gadgets \cite{Pllaha21,misc_transvections1}. Also, transvections gates are particularly suited for ion-trap systems. Specifically, two-qubit transvections are suited to ion-trap architectures (see for instance, in Section B of \cite{ion_trap1}, and \cite{ion_trap2}). Also, multi-qubit gates can be performed on ion-traps efficiently \cite{ion_trap3,ion_trap4,ion_trap5,ion_trap6}. 

A recent paper \cite{Tan20}  gives a scheme for an asymptotic unitary $3$-design: it implements a random Pauli, followed by a unitary $2$-design, after which the scheme iteratively applies uniformly random transvections to interpolate the gap between the two-design and the three-design. The size of the set being sampled from reduces from {$\mathcal{O}\left(2^{m^2} \right)$}
(for the exact unitary $3$-design by using the Clifford group) to $\mathcal{O}\left( 2^m \right)$, although the need to sample exactly from the uniform ensemble of the $\SL$-subgroup requires sampling from another set of unitaries of size $\mathcal{O}\left(2^{m}\right)$. The scheme converges to a $3$-design and the authors provide the convergence rate $\mathcal{O}\left(5 \log m + \log 1/\epsilon \right)$. After a detailed study of the paper, we found that the convergence rate of their scheme is closer to $\mathcal{O}\left( \frac{3}{2} m + \log 1/\epsilon \right)$.


In this work we present a scheme to implement an asymptotic unitary $3$-design by directly randomising over transvections. The scheme employed will be as follows.
\begin{itemize}
    \item[(i)] Uniform sampling over the Pauli group.
    \item[(ii)] Uniform sampling over a set of transvection Cliffords. This step has to be performed iteratively.
\end{itemize} 
That the second step is iterative means that it has to be implemented a certain number of times (say, $k$) to converge to a unitary $3$-design. In the asymptotic $k \rightarrow \infty$ limit, the scheme gives us an exact unitary $3$-design. For any such asymptotic scheme, one is interested in the rate of convergence. 


The iterative scheme amounts to applying a Markov chain on the group algebra of $\Cl$. Furthermore, since transvections are generators of the Clifford group, this Markov chain is irreducible. By the Perron-Frobenius theorem one is guaranteed that the scheme will converge to a uniform Clifford design \cite{Diaconis2003}. As we mentioned above, a uniform Clifford design is an exact $3$-design. Hence, asymptotically, our scheme gives a unitary $3$-design.

\subsection{Main results: Overview}
We prove the following results.
\begin{itemize}
    \item[(i)] The rate of convergence for the unitary $3$-design is $k= \mathcal{O} \left(m \ {+} \ \log 1/\epsilon \right)$. 
    \item[(ii)] The rate of convergence for the asymptotic unitary $2$-design is $k= \mathcal{O} \left( \log \  1/\epsilon \right)$. 
    \item[(iii)] Partial decomposition of the three-copy adjoint representation of the Clifford group.
\end{itemize}
One can anticipate the optimal rates of convergence for any asymptotic unitary design as follows: since a minimal unitary $2$-design has $\mathcal{O}\left(2^{m}\right)$ elements \cite{Gross2006}, one anticipates that any convergent and iterative scheme (like ours and also the scheme in \cite{Tan20}), which randomly samples among $2^n$ elements during a single step, and such that there is negligible overlap between the cumulatively sampled elements after any number of steps, will converge to the minimal design in $\mathcal{O}\left( m/n \right)$ iterations. Similarly since the Clifford group has $\mathcal{O}\left(2^{m^2}\right)$ elements, one anticipates that the optimal convergence rate to be $\mathcal{O}\left( m^2/n \right)$. For $n=\mathcal{O}\left(m\right)$, Theorems \ref{thm:asymptotic_unitary_2_design} and \ref{thm:asymptotic_unitary_3_design} are consistent with this anticipation. See Remark \ref{rem:optimal_convergence_rate} for further elaboration of this point, and also see \cite{Pitt_Coste_unpublished}).


While efficient construction of quantum circuits for Clifford unitaries are known and well-studied \cite{Aaronson2008,Koenig14,Gidney2021,Bravyi_Maslov_21,Bravyi21}, one is still interested in studying new schemes adapted to different architectures. For instance, for most architectures, quantum circuits are compiled with single-qubit and two-qubit gates. The reason being that multiqubit gates are still quite noisy on these architectures. Transvection Cliffords are generally multiqubit Cliffords. Thus the implementation of our scheme will require a circuit compilation for transvections into single-qubit and two-qubit gates on such aforesaid architectures. That being said, in recent years multi-qubit gates are also being experimentally realised, particularly so for ion-trap quantum computers \cite{Hahn2019,Shapira2020,Rasmussen2020}. This makes it conducive to study various schemes adapted to multiqubit gates.   

Aside from the practical reasons mentioned above, this scheme offers a useful tool to study the representation theory of the Clifford group. In particular, we found that the eigenspaces of the $3$-copy adjoint representation of the Clifford group are also eigenspaces of the quantum channel which represents the scheme. In \cite{Helsen2016} Helsen et. al. gave a complete decomposition of the $2$-copy adjoint representation of the Clifford group into irreducible subspaces. We recover some of these subspaces in the $3$-copy case by constructing intertwiners between the subrepresentations found in \cite{Helsen2016} and subrepresentations in the $3$-copy case. Furthermore, we obtain some new invariant subspaces. About some of these invariant subspaces, we conjecture that they are irreducible. The $2$-copy adjoint representation \cite{Helsen2016} has found applications in randomized benchmarking \cite{RB11, RBCL1,RBCL2}. Other works and, also applications of the representation theory of the Clifford group have also been found \cite{Appleby2011, Zhu16,Gross17,Gross21,Haferkamp2020}. In our work we further develop the representation theory of the Clifford group. We believe that this would also be of significant interest to members of the quantum information and quantum computation community.


\section{Preliminaries and notation}
\label{sec:pre_not}

\noindent The quantum system we study is that of $m$-qubits. The Hilbert space is isomorphic to $\C^{N}$, where $N=2^m$.  We will be working with a system of $t$ copies of $m$-qubits, whose Hilbert space dimension is $2^{tm}$. The complex vector space of linear operators on such a system will be denoted by $\BCg$. \noindent By the support of a linear operator $A$ on an inner product vector space $V$, we mean the orthogonal complement in $V$ of the kernel of $A$, i.e., $\ker(A)$. 
\begin{equation}
    \label{eq:supp}
    \supp(A) \ = \ \left\{ \ v \in V, \ | \ \bk{w}{v} = 0, \ \forall \ w \in \ker(A) \right\}.
\end{equation}
 A quantum state $\rho$ will be a linear operator in $\BCg$, such that $\rho \ge 0$ and $\tr \rho =1$.  A quantum channel is a trace preserving and completely positive linear map on $\BCg$. The $\Diamond$-norm is a distance measure between two quantum channels $\mathrm{\Phi}$  and $\mathrm{\Psi}$, i.e.,
\begin{equation}
    \label{eq:def_diamond_norm}
    \left\|  \, \mathrm{\Phi} \, - \,  \mathrm{\Psi} \right\|_{\Diamond} \, = \, \underset{\rho}{\mathrm{sup}} \, \left\|   \, \mathrm{\Phi} \otimes  \id \ \left( \rho \right)  \ -  \mathrm{\Psi} \otimes \id \ \left( \rho \right) \right\|_{1},
\end{equation}
where $\|(\cdot)\|_1$ is the one-norm on $\BCg$: $\|A\|_1 = \tr \sqrt{A^\dag \  A }$, and $\id$ is the identity quantum channel.  In the RHS of equation \eqref{eq:def_diamond_norm}, the quantum channels $\Phi\otimes \mathrm{id}$ and $\Psi \otimes \mathrm{id}$ act on two copies of $t$ $m$-qubit systems. $\mathrm{\Phi}$ and $\mathrm{\Psi}$ act on the first of these two copies, whereas $\id$ acts trivially on the second copy.

Let $\left\{ U_i \right\}_{i}$ be a finite subset of $\U(N)$ of size $n$. Let $p_i>0$ be a non-zero probability distribution over $i$. Then an ensemble $\mathcal{E}$ is defined as
\begin{equation}
    \label{eq:def_unitary_ensemble}
    \mathcal{E} \ := \ \left\{ \ \left(p_i, U_i\right) \ \big| \ i \in \left\{1,2,\cdots n\right\}   \right\}.
\end{equation} 
  The $t$-fold twirl of $X$ in $\BCg$ by $\mathcal{E}$ is the quantum channel $\mathcal{G}_\mathcal{E}$ which implements the following transformation on an input $X$,
\begin{equation}
    \label{eq:t_twirl_general}
    \mathcal{G}_\mathcal{E} (X) \ := \ \sum_{i =1}^n \ p_i \ U_i^{\otimes t} \ X \ {U_i^\dag}^{\otimes t}.
\end{equation}

\begin{defn}
\label{def:Haar_measure} The unitary group $\U(N)$ admits a unique left and right translation invariant measure, $\mu$, known as the Haar-measure. A random matrix in $\U(N)$ which is distributed according to the Haar-measure is called a Haar-random unitary. 
\end{defn}

The $t$-fold twirl of a quantum state by the Haar measure will be denoted by $\H$, namely,
\begin{equation}
\label{eq:defn_t_design}
\H(\rho) \ := \ \intH \ U^{\otimes t} \ \rho \ {U^\dag}^{\otimes t} . 
\end{equation}

\begin{defn}
\label{def:unitary_t_design}
A unitary $t$-design is a finite ensemble of unitaries $\left\{p_i, U_i \right\}_{i=1}^n$ whose moments of order less than or equal to $t$ equal that of the Haar measure.
\end{defn}
The definition of a unitary $t$-design can be cast in many equivalent forms (see \cite{Cleve16} for many different forms and proofs of their equivalence). We will use the following one. An ensemble of unitaries $\mathcal{E}$ is a unitary $t$-design if and only if its $t$-fold twirl is the same as the $t$-fold twirl by the Haar-measure:
\begin{equation}
\label{eq:defn_t_design_2}
\intH \ U^{\otimes t} \ \rho \ {U^\dag}^{\otimes t} \, = \, \sum_{i=1}^n \, p_i \  U_i^{\otimes t} \ \rho \ {U_i^\dag}^{\otimes t}.
\end{equation}

\begin{defn}
\label{def:eps_approx_unitary_design}
An $\epsilon$-approximate unitary $t$-design is an ensemble of unitaries $\mathcal{E}$ whose $t$-fold twirl is within $\epsilon$ $\Diamond$-norm distance of a unitary $t$-design, i.e.,
\begin{equation}
    \label{eq:diamond_norm}
    \left\|  \, \mathcal{G}_{\mathcal{E}} \, - \,  \H \right\|_{\Diamond} \, \le \, \epsilon. 
\end{equation}
\end{defn}

\begin{defn}
\label{def:asymptotic_unitary_design}
An asymptotic unitary $t$-design is an ensemble of unitaries $\mathcal{E}$, whose $t$-fold twirl denoted by $\mathcal{G}$, when iterated $k$ times, implements a unitary $t$-design in the $k \rightarrow \infty$:
\begin{equation}
\label{def:asymptotic_unitary_design_2}
\mathrm{lim}_{k \rightarrow \infty} \, \mathcal{G}^k  \, = \H,
\end{equation}
where by $\mathcal{G}^k$ we mean $\underbrace{\mathcal{G} \circ \mathcal{G} \cdots  \circ\mathcal{G}}_{k \ \mathrm{times}}$.
\end{defn}

One typically asks is how fast the convergence is. Explicitly, given some $\epsilon > 0$, one wants to know some $k$, so that
\begin{equation}
    \label{eq:convergence_asymptotic_design}
       \left\|  \, \mathcal{G}^k \, - \,  \H \right\|_{\Diamond} \ \le \epsilon.
\end{equation}
Here $k$ is measured in terms of the order of magnitude in $m$ and $\epsilon$: one seeks a function $f(m,\epsilon)$ such that for $m\ge m_0$ and $\epsilon \le \epsilon_0$, $k \ge f(m_0,\epsilon_0)$. One expects $f$ to be of the following form: $f(m,\epsilon) = \mathrm{poly}(m) + \log \ 1/\epsilon$. When this is the case one says that $k  = \ \mathcal{O}\left(\mathrm{poly}(m) \ + \ \log 1/\epsilon\right)$. 

\subsection{Pauli group}
\label{subsec:Pauli}
The single qubit Pauli group is generated by three unitary operators $X$, $Z$ and $i \unity_2$, where  
\begin{equation}
    \label{eq:XYZ}
    X = \begin{pmatrix} 
     0 & 1 \\
     1 & 0 
     \end{pmatrix}, \;  \;
     Z = \begin{pmatrix} 
     1 & 0 \\
     0 & -1 
     \end{pmatrix},
\end{equation} and by $\unity_2$ we mean the $2 \times 2$ identity matrix. The $m$-qubit Pauli group will be denoted by  $\Pmbb$. It is the $m$-fold direct product of the single-qubit Pauli group, i.e., 
\begin{equation}
\label{eq:Pm_P1}
\hat{\mathcal{P}}_m \, = \, \underbrace{\hat{\mathcal{P}_1} \otimes \hat{\mathcal{P}_1} \otimes \cdots \otimes \hat{\mathcal{P}_1}}_{m \; \mathrm{times}}.
\end{equation}

\noindent The order of $\Pmbb$ is $4N^2$. $\Pmbb$ has a non-trivial centre: $\mathcal{Z} \left( \Pmbb \right) \ = \ \left\langle \  i \unity_{N} \right\rangle$, and  it has $4$ elements. $\mathcal{Z}\left( \Pmbb \right)$ is a normal (invariant) subgroup of $\Pmbb$, and the factor group is $\Pmb$:
\begin{align}
    \label{eq:pmbb_pmb_homomorphism}
    & \Pmb \ = \ \Pmbb \  / \ \mathcal{Z} \left( \Pmbb \right),  \\ 
    \label{eq:pmbb_pmb_homomorphism1}
    & E \in \Pmbb \longrightarrow  \ [ \ E \ ] = \left\{ \pm E, \pm i \ E\right\} \in \Pmb,
\end{align}
where 
the following composition law can be consistently defined on $\Pmb$:
\begin{equation}
    \label{eq:composition_law_pmb0}
    [ E].[F] =  [E \  F], \,  \forall \ E,F \in \Pmbb.
\end{equation} 
Hence $\Pmb$ is an abelian group. We earlier chose coset representatives for each element of $\Pmb$, such that they were self-adjoint. These also form a convenient choice for a basis for the operator space on the $m$-qubit system. Later, it will be seen that it is an orthogonal basis for this operator space (see Section \ref{eq:sp_linear_operators}). For convenience we define another related set $\Pm$ as 
\begin{equation}
    \label{eq:def_Pm}
    \Pm \ := \ \Pmb \setminus \left\{ \unity \right\}.
\end{equation} Next, $\Pmb \ \simeq \ \mathbb{F}_2^{2m}$, which is the $2m$-dimensional vector space over field $\mathbb{F}_2$, and the `group composition law' is addition.
\begin{thm}
\label{thm:pmb_fm_iso}
\begin{equation}
    \label{eq:pmb_fm_iso}
 \Pmb  \ \simeq \ \mathbb{F}_2^{2m}.
\end{equation}
\end{thm}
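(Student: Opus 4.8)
The plan is to write down an explicit bijection and check that it intertwines the group laws. Every element of $\Pmbb$ equals, up to a central phase in $\left\{\pm 1,\pm i\right\}$, a standard Pauli string
\[
X^{\mathbf a}Z^{\mathbf b}\ \defeq\ \bigl(X^{a_1}Z^{b_1}\bigr)\ot\cdots\ot\bigl(X^{a_m}Z^{b_m}\bigr),\qquad (\mathbf a,\mathbf b)\in \F^m\times\F^m,
\]
and the pair $(\mathbf a,\mathbf b)$ is \emph{uniquely} determined: this is just the linear independence of the $N^2$ Pauli strings (established later in the paper), which in particular implies no two of them are scalar multiples of one another. Define $\phi:\Pmb\to\Fm$ by $\phi\bigl([\,X^{\mathbf a}Z^{\mathbf b}\,]\bigr)\defeq(\mathbf a,\mathbf b)$. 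Since the coset $[E]=\left\{\pm E,\pm iE\right\}$ is exactly the set of scalar multiples of $E$ lying in $\Pmbb$, the pair $(\mathbf a,\mathbf b)$ depends only on the coset, so $\phi$ is well defined, and for the same reason it is injective. It is visibly surjective, and one can double-check via the order count $|\Pmb|=|\Pmbb|/|\mathcal Z(\Pmbb)|=4N^2/4=N^2=2^{2m}=|\Fm|$.

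It then remains to show $\phi$ is a homomorphism, i.e.\ that it carries the composition law \eqref{eq:composition_law_pmb0} to addition in $\Fm$. On a single qubit $ZX=-XZ$, so $Z^{\mathbf b}X^{\mathbf c}=(-1)^{\mathbf b\cdot\mathbf c}X^{\mathbf c}Z^{\mathbf b}$ with $\mathbf b\cdot\mathbf c=\sum_j b_jc_j$, and using $X^2=Z^2=\unity$ on each factor one gets
\[
X^{\mathbf a}Z^{\mathbf b}\,X^{\mathbf c}Z^{\mathbf d}\ =\ (-1)^{\mathbf b\cdot\mathbf c}\,X^{\mathbf a+\mathbf c}Z^{\mathbf b+\mathbf d}.
\]
The sign lies in $\mathcal Z(\Pmbb)$, hence disappears on passing to $\Pmb$, giving $[\,X^{\mathbf a}Z^{\mathbf b}\,].[\,X^{\mathbf c}Z^{\mathbf d}\,]=[\,X^{\mathbf a+\mathbf c}Z^{\mathbf b+\mathbf d}\,]$. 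Applying $\phi$ yields $(\mathbf a+\mathbf c,\mathbf b+\mathbf d)=\phi([X^{\mathbf a}Z^{\mathbf b}])+\phi([X^{\mathbf c}Z^{\mathbf d}])$, so $\phi$ is an isomorphism of groups and $\Pmb\simeq\Fm$.

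There is no genuine obstacle here; the only mildly delicate points are bookkeeping ones — that the representative $X^{\mathbf a}Z^{\mathbf b}$ of a class is unique (the linear independence of Paulis), and that every reordering phase one meets is central and therefore invisible in the quotient. An alternative, more structural route is to first prove $\overline{\mathcal P}_1\simeq\F^2$ — the single-qubit Pauli group modulo phases is the Klein four-group, being abelian with $[X]^2=[Z]^2=[XZ]^2=[\unity]$ — and then take the $m$-fold product via \eqref{eq:Pm_P1}, noting that quotienting by the common centre turns the central product $\Pmbb$ into the direct product $\overline{\mathcal P}_1^{\times m}$; but the explicit map above is shorter and makes manifest the symplectic structure on $\Fm$ that will be used later.
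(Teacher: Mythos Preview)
Your proof is correct and follows essentially the same route as the paper: pick the Pauli strings $X^{\mathbf a}Z^{\mathbf b}$ as coset representatives, show their product is $X^{\mathbf a+\mathbf c}Z^{\mathbf b+\mathbf d}$ up to a central phase, and conclude. The only difference is cosmetic: the paper builds the map in the reverse direction $\Fm\to\Pmb$ and inserts an extra phase, setting $E(a)\defeq i^{a_X\cdot a_Z}X(a_X)Z(a_Z)$, so that each representative is self-adjoint --- irrelevant for the isomorphism itself but convenient for the orthogonal Pauli basis and the commutation formula~\eqref{eq:Pm_composition_law} used downstream.
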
 \noindent For proof, see Appendix \ref{subsec:ph_thm_pmb_fm_iso}.  For some fixed choice of isomorphism \footnote{The isomorphism isn't unique - we construct a specific isomorphism in Appendix \ref{subsec:ph_thm_pmb_fm_iso}.} map $a \in \Fm$ to $E\in \Pmb$, then we label $E$ as $E(a)$. Note that $E(0)=\unity$. From the proof in Appendix \ref{subsec:ph_thm_pmb_fm_iso}, we see that the  commutation relation between two Paulis $E(a)$ and $E(b)$ is
\begin{equation}
    \label{eq:Pm_composition_law}
    E(a) \ E(b) \ = \ (-1)^{\s{a}{b}} \ E(b) \ E(a),
\end{equation}
{where $ \s{.}{.}:\ \Fm \times \Fm \ \longrightarrow \mathbb{F}_2  $ is the symplectic bilinear form defined on $\Fm$} 
\begin{equation}
    \label{eq:symp}
    \s{a}{b} := \ a^T \ \Om_{2m} \ b,
\end{equation} where $b$ in $\Fm$ is represented by a $2m$ bit column vector, and $a^T$ represents a $2m$ bit row vector, and where $\Om$ is the $2m \times 2m$ matrix which takes the form
\begin{equation}
    \label{eq:Omega}
    \Om \ = \ \begin{pmatrix}
    0_m & \unity_m \\
    \unity_m & 0_m 
    \end{pmatrix}.
\end{equation}
Note that $\s{a}{a} = 0$. Equation \eqref{eq:Pm_composition_law} tells us that when $\s{a}{b}=0$, then $E(a)$ and $E(b)$ commute whereas when $\s{a}{b}=1$, then $E(a)$ and $E(b)$ anti-commute. We will use the Pauli basis of $\BCg$, i.e., for $E(a_1),E(a_2),\cdots,E(a_t)\in\Pmb$, tensor products of the form $E(a_1) \otimes E(a_2) \cdots\otimes E(a_t)$ will be basis vectors. For all proofs in Section \ref{sec:pfs}, we will use elements in $\BCg$ as vectors in a vector space, and not as linear operators on $\C^{N^{\otimes t}}$. Hence it will prove convenient to simplify the notation as follows.
\begin{equation}
    \label{eq:vec1_t_copy}
    \text{\scalebox{0.75}{%
 $\frac{1}{\sqrt{N^t}}$}}  \ 
    E\left( a_1 \right) \otimes E\left( a_2 \right) \otimes \cdots \otimes E\left( a_t \right) \ \longrightarrow \ \ket{a_1,a_2,\cdots, a_t}, \; \forall \ a_1,a_2,\cdots,a_t \in \Fm.  
\end{equation} 
For any other $A \in \BCg$, we simply use the notation $A \longrightarrow \ket{A}$.

\subsection{The Clifford group}
\label{subsec:Cliff}

The Clifford group $\Cl$ is the normaliser of the Pauli group $\Pmb$ in the $m$-qubit unitary group $\U(N)$, i.e., it is the set of all those unitaries $U \in U(N)$ such that for any $E \in \Pmb$, we have 
\begin{equation}
    \label{eq:normalising_action_of_U}
    U \ E \ U^\dag \ =  \ \pm F, \; \mathrm{for} \; \mathrm{some} \; F \in \Pmb.
\end{equation} 

\subsection{Transvection Cliffords}
\label{subsec:Clm_transvections}
We define transvection Cliffords as follows. For each $E \in \Pmb$, choose any other $F \in \Pmbb$, and define the following linear operator:
\begin{equation}
    \label{eq:Cl_transvection}
    U_{E,F} \, := \, \dfrac{1}{\sqrt{2}}\left( \, F \,  +  \,  i  E \ F \, \right).
\end{equation} 
Note that $U_{E,F}$ defined in equation \eqref{eq:Cl_transvection} is a Clifford unitary because for $X\in \Pmb$,
\begin{equation}
    \label{eq:Cl_transvection_adjoint_action}
    U_{E,F} \ X \ U_{E,F}^\dag \ = \ \begin{cases}
   \, \, \,   \pm X, \, \mathrm{when} \ [E,X] \ = \ 0, \\  \, \, \,
     \pm  i \ E \ X, \, \mathrm{when} \ \left\{ E, X \right\} \ = \ 0. 
    \end{cases}
\end{equation}

\begin{rem}
\label{rem:transvections}
Later one we will also define transvections as a special case of symplectic matrices in $\Sp$. To distinguish between both kinds of transvections, we will refer to the Clifford unitaries as transvection Cliffords, and to the symplectic matrices simply as transvections. 
\end{rem}

\noindent 

\section{Main results}
\label{sec:main_results}

\noindent We prove the main results in the following two subsections. In Subsection \ref{subsec:main_results_asymptotic_unitary_designs} we give the main results of the asymptotic unitary $2$- and $3$-designs, and in Subsection \ref{sec:3_copy_adjoint_Cm} we present a partial decomposition of the three-copy adjoint representation of the Clifford group $\Cl$ on $m$-qubits.

\subsection{Asymptotic unitary $2$- and $3$-designs}
\label{subsec:main_results_asymptotic_unitary_designs}

\noindent We state and prove the main results using the definition of a unitary $t$-design by twirling. To do this, we imagine that one is implementing the Scheme \ref{schm}. For that we need two definitions.
\begin{defn}[Pauli ensemble]
\label{def:ensemble_Paulis}
We define the Pauli ensemble $\mathcal{P}$ to be the uniform ensemble of Paulis in $\Pmb$, namely,
\begin{equation}
    \label{eq:def_ensemble_Paulis}
    \mathcal{P} \ := \ \left\{ \ \left(\frac{1}{N^2}, \ E \right), \big  | \ \forall \ E \in \Pmb  \right\}.
\end{equation} The $t$-fold twirl over $\mathcal{P}$ will be denoted by $\EP$: $\forall \ X \in \BCg$, 
\begin{equation}
    \label{eq:def_EP}
    \EP \ (X) \ := \ \frac{1}{N^2} \sum_{E \in \Pmb} \ E^{\otimes t} \ X \ E^{\otimes t}.
\end{equation}
\end{defn}

\begin{defn}[Ensemble of transvection Cliffords] 
\label{def:ensemble_transvection}
We construct an ensemble of transvection Cliffords as follows: for each $E \in \Pmb$, choose some $F \in \Pmbb$, such that $\left\{ E, F \right\} = 0$ (see Remark \ref{rem:transvection_convention}), and collect all such $U_{E,F}$ in a set. The set has $N^2$ elements, one element for each $E$. Each $U_{E,F}$ has a probability $\frac{1}{N^2}$. 

\begin{equation}
    \label{eq:ensemble_over_transvections}
    \mathcal{T} \ := \ \left\{ \ \left( \ \frac{1}{N^2},  \ U_{E,F} \ \right) \ \big| \ \forall \ E \in \Pmb, \, F \ \mathrm{fixed} \ \mathrm{but} \ \mathrm{arbitrary}.    \right\}.
\end{equation} The $t$-fold twirl by $\mathcal{T}$ will be denoted by $\ET$: $\forall \ X \in \BCg$
\begin{equation}
    \label{eq:def_ET}
    \ET \ (X) \ := \ \frac{1}{N^2} \sum_{U_{E,F} \in \mathcal{T}} \ U_{E,F}^{\otimes t} \ \left( X \right) \ \left( U_{E,F}^\dag \right)^{\otimes t}  \ .
\end{equation}
\end{defn}

\begin{schm}[Formal statement of scheme.]
\label{schm}

\noindent This scheme is implemented in $t$ copies of $m$ qubit systems. Let $\mathcal{P}$ and $\mathcal{T}$ be some Pauli and transvection ensembles for $m$-qubit systems as in Definition \ref{def:ensemble_Paulis} and Definition \ref{def:ensemble_transvection}. Then scheme is as follows.
\begin{itemize}
    \item Sample from $\mathcal{P}$ once, and rotate the input state $\rho$ by the sampled Pauli.
    \item Perform the following step $k$ times: sample once from $\mathcal{T}$, and rotate  the input state $\rho$ by the sampled transvection Clifford.
\end{itemize}
\end{schm}


\noindent We first present two theorems on the optimal rate of convergence for an asymptotic unitary $t$-designs, where $t=2$ and $t=3$.




\begin{thm}[Technical version of Theorem 1]
\label{thm:asymptotic_unitary_2_design}
The Scheme \ref{schm} implements an asymptotic unitary $2$-design for $m \ge 2$. The convergence rate of this scheme for $m \ge 3$  is as follows.
\begin{equation}
\label{eq:cvg_unitary_2_design}
\mathrm{for } \ \mathrm{any}  \  \epsilon > 0, \mathrm{when} \ k \ge \  6 + 5/4 \log 1/\epsilon \Longrightarrow \ \left\|  \, \ET^{k} \circ \EP \, - \,  \H \right\|_{\Diamond} \, < \, \epsilon,
\end{equation}
where \begin{enumerate}
    \item $\EP$ denotes the quantum channel which implements the $2$-fold twirl by the uniformly sampling Paulis in $\Pmb$.
    \item $\ET$ denotes the quantum channel which implements the $2$-fold twirl by uniformly sampling $N^2$ transvections in $\mathcal{T}$.
\end{enumerate}
\end{thm}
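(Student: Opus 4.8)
The plan is to fix $t=2$, work in the Pauli basis of $\BCt$ of \eqref{eq:vec1_t_copy}, and treat $\EP,\ET,\H$ as explicit Hilbert--Schmidt self-adjoint linear operators on $\BCt$. Two statements must be proved: that $\ET^{k}\circ\EP\to\H$ (the asymptotic $2$-design property, which follows once the non-unit eigenvalues of $\ET\circ\EP$ are shown to lie strictly inside the unit disc, valid for $m\ge2$), and the explicit rate, for which those eigenvalues must be computed exactly and then a spectral estimate must be upgraded to a $\Diamond$-norm estimate.

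\emph{Step 1 (eigendata of $\ET\circ\EP$).} From $E(c)^{\otimes2}\bigl(E(a)\otimes E(b)\bigr)E(c)^{\otimes2}=(-1)^{\s{c}{a}+\s{c}{b}}E(a)\otimes E(b)$ and $\tfrac1{N^{2}}\sum_{c\in\Fm}(-1)^{\s{c}{a+b}}=\delta_{a,b}$ one gets that $\EP$ is the orthogonal projection onto $W:=\mathrm{span}\{E(a)\otimes E(a)\mid a\in\Fm\}$; since the commutation phases are squared, the generators of $W$ pairwise commute, so $W$ is a maximal abelian subalgebra of $\BCt$ ($\dim W=N^{2}$) with an orthonormal basis $\{\Pi_{\chi}\}$ of rank-one projections, and $\EP$ is the pinching onto this basis, hence entanglement-breaking. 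Equation \eqref{eq:Cl_transvection_adjoint_action}, together with the fact that each $U_{E,F}$ with $\{E,F\}=0$ is a Hermitian involution, shows $\ET$ is Hilbert--Schmidt self-adjoint, leaves $W$ and $W^{\perp}$ invariant, commutes with $\EP$ and with $\H$, and satisfies $\ET\circ\H=\H\circ\ET=\H=\EP\circ\H=\H\circ\EP$. On $W$, $\ET$ fixes $\unity^{\otimes2}$ and acts by $E(a)^{\otimes2}\mapsto\tfrac12E(a)^{\otimes2}+\tfrac1{N^{2}}\sum_{\s{b}{a}=1}E(b)^{\otimes2}$ for $a\ne0$; expanding the indicator into symplectic characters diagonalises this, giving $\mathrm{spec}(\ET|_{W})=\{1,\ \tfrac12+\tfrac1{2N},\ \tfrac12-\tfrac1{2N}\}$, with eigenvalue $1$ on the two-dimensional $\mathrm{span}\{\unity^{\otimes2},\mathrm{SWAP}\}$ (note $\mathrm{SWAP}=\tfrac1N\sum_{a}E(a)^{\otimes2}\in W$). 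Since the commutant of $U^{\otimes2}$ is exactly $\mathrm{span}\{\unity^{\otimes2},\mathrm{SWAP}\}$, the $+1$-eigenspace of $\ET\circ\EP$ equals $\mathrm{Im}\,\H$ and $\H|_{W}$ is the orthogonal projection onto it; all other eigenvalues of $\ET\circ\EP$ have modulus $\tfrac12+\tfrac1{2N}<1$ for $m\ge2$, so $\lim_{k}\ET^{k}\circ\EP=\H$, which is the asymptotic $2$-design claim.

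\emph{Step 2 (rate).} The identities above give $(\ET\circ\EP)^{k}=\ET^{k}\circ\EP$ and $\ET^{k}\circ\EP-\H=(\ET\circ\EP-\H)^{k}=(\ET^{k}|_{W}-\H|_{W})\circ\EP$. Because $\EP$ pins onto the fixed orthonormal basis $\{\Pi_{\chi}\}$ of $W$, the map $\ET^{k}\circ\EP-\H$ is entanglement-breaking with a fixed measurement, so the reference system in the $\Diamond$-norm is useless and $\|\ET^{k}\circ\EP-\H\|_{\Diamond}=\max_{\chi}\bigl\|(\ET^{k}|_{W}-\H|_{W})(\Pi_{\chi})\bigr\|_{1}$, i.e.\ the largest $\ell_{1}$-norm of a column of $\ET^{k}|_{W}-\H|_{W}$ in the $\{\Pi_{\chi}\}$-basis. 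Writing $\gamma_{\pm}=\tfrac12\pm\tfrac1{2N}$ and using the spectral projections, one splits
\[
\ET^{k}|_{W}-\H|_{W}\;=\;\tfrac{\gamma_{+}^{k}+\gamma_{-}^{k}}{2}\bigl(\id_{W}-\H|_{W}\bigr)\;+\;N(\gamma_{+}^{k}-\gamma_{-}^{k})\bigl(\ET|_{W}-\tfrac12\id_{W}-\tfrac12\H|_{W}\bigr).
\]
In the $\{\Pi_{\chi}\}$-basis $\ET|_{W}$, $\id_{W}$ and $\H|_{W}$ are all doubly stochastic matrices, so both parenthesised matrices have columns of $\ell_{1}$-norm bounded by an absolute constant; and the scalar weights obey $\tfrac{\gamma_{+}^{k}+\gamma_{-}^{k}}{2}\le\gamma_{+}^{k}$ and, by the mean value theorem (since $\gamma_{+}-\gamma_{-}=1/N$), $N(\gamma_{+}^{k}-\gamma_{-}^{k})\le k\,\gamma_{+}^{k-1}$. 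Hence $\|\ET^{k}\circ\EP-\H\|_{\Diamond}=O\!\bigl(k\,(\tfrac12+\tfrac1{2N})^{k}\bigr)$; using $\tfrac12+\tfrac1{2N}\le\tfrac9{16}$ for $m\ge3$ (so that $1/\log_{2}(16/9)<5/4$) and absorbing the polynomial prefactor into the additive constant $6$, $k\ge6+\tfrac54\log 1/\epsilon$ gives $\|\ET^{k}\circ\EP-\H\|_{\Diamond}<\epsilon$.

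The main obstacle is keeping the rate free of $m$. A naive passage from the Hilbert--Schmidt bound $\|\ET^{k}\circ\EP-\H\|_{2}=(\tfrac12+\tfrac1{2N})^{k}$ to the $\Diamond$-norm via $\|\Phi\|_{\Diamond}\le(\dim)\,\|\Phi\|_{2\to2}$ would inject a spurious factor $4^{m}$. Avoiding it requires both (i) using that $W$ is commutative, so $\EP$ is entanglement-breaking and the $\Diamond$-norm collapses to the scalar column-$\ell_{1}$ quantity above, and (ii) the delicate bookkeeping in the split: the only piece of $\ET^{k}|_{W}-\H|_{W}$ whose natural norm would grow with the Hilbert-space dimension is the term $\ET|_{W}-\tfrac12\id_{W}-\tfrac12\H|_{W}$ scaled by $N$, and this growth is exactly offset by the smallness $\gamma_{+}^{k}-\gamma_{-}^{k}=O(k/N)\,\gamma_{+}^{k}$ of the gap between the two non-trivial eigenvalues. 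So the real content is the exact spectrum of $\ET|_{W}$ — that there are only two non-trivial eigenvalues, both within $O(1/N)$ of $\tfrac12$, and that the $+1$-eigenspace is precisely $\mathrm{Im}\,\H$ — and carrying out that Fourier-analytic computation correctly is where the work lies.
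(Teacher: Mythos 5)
Your proof is correct, and it uses essentially the same spectral data (the eigenvalues $\tfrac12\pm\tfrac1{2N}$ of $\ET\circ\EP$ on the orthogonal complement of $\mathrm{Im}\,\H$, which is Lemma \ref{lem:eigensystem_ET_EP_t_2}), but the passage from the spectral bound to the $\Diamond$-norm bound is genuinely different from the paper's. The paper manipulates $\ET^{k}\circ\EP-\H$ algebraically into the form $\tfrac1{2^{k}}\bigl((\nu-\mu)\EP+2\mu(\ET\circ\EP-\H)\bigr)$ with $\mu,\nu$ scalars of size $O(k(1+1/N)^{k})$, and then uses only that $\EP$ is a channel ($\|\EP\|_{\Diamond}=1$) and that $\ET\circ\EP-\H$ is a difference of channels ($\|\cdot\|_{\Diamond}\le2$); no structural property of $\EP$ beyond CPTP is invoked. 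You instead observe that $W=\mathrm{span}\{E(a)^{\otimes2}\}$ is a maximal abelian $*$-subalgebra of $\BCt$ (because the squared Pauli phases make the generators commute), that $\EP$ is therefore the pinching onto the associated basis of rank-one projections $\{\Pi_\chi\}$ and hence entanglement-breaking, and that consequently $\|\ET^{k}\circ\EP-\H\|_{\Diamond}=\max_\chi\|(\ET^{k}|_{W}-\H|_{W})(\Pi_\chi)\|_1$ collapses to a classical column-$\ell_1$ quantity for a doubly-stochastic Markov matrix; your split $\tfrac{\gamma_+^k+\gamma_-^k}{2}(\id_W-\H|_W)+N(\gamma_+^k-\gamma_-^k)(\ET|_W-\tfrac12\id_W-\tfrac12\H|_W)$ then plays the role of the paper's $(\nu-\mu)\EP+2\mu(\ET\circ\EP-\H)$, and both give an $O\bigl(k\gamma_+^{k}\bigr)$ bound and the same rate. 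Your approach buys a structural explanation for why the dimension factor is absent in the $t=2$ case (the $t=2$ pinching is onto a commutative algebra, so the diamond norm sees only a classical Markov chain) and would likely be the cleaner template to generalize; the paper's approach buys elementary self-containedness, needing only $\|\text{channel}\|_\Diamond=1$. One spot to be careful: the phrase ``expanding the indicator into symplectic characters diagonalises this'' understates the work — the matrix $\delta_{\s{a}{b},1}$ restricted to $\Fs$ is not a convolution operator in $a-b$, and the eigenvalues $\pm N/2$ on the zero-sum subspace require the same computation the paper delegates to Lemma 4 of \cite{Helsen2016} — but your overall argument does not depend on how that piece is established.
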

\begin{thm}
\label{thm:asymptotic_unitary_3_design}
The Scheme in \ref{schm} implements an asymptotic unitary $3$-design for $m\ge 3$. The convergence rate for this scheme is as follows: 
\begin{equation}
\label{eq:cvg_unitary_3_design}
\mathrm{for} \  \epsilon > 0, \mathrm{when} \  k >  \dfrac{ 3m \ +  \log 1/\epsilon}{1 - \log \left( \  1  \ + \frac{4 }{N} \ + \  \frac{1}{2N(N-2)} \  \right)} \ \Longrightarrow \ \left\|  \, \ET^{k} \circ \EP \, - \,  \H \right\|_{\Diamond} \, < \, \epsilon,
\end{equation}
where \begin{enumerate}
    \item $\EP$ denotes the quantum channel which implements the $3$-fold twirl by the uniformly sampling Paulis in $\Pmb$.
    \item $\ET$ denotes the quantum channel which implements the $3$-fold twirl by uniformly sampling $N^2$ transvections in $\mathcal{T}$.
\end{enumerate}
\end{thm}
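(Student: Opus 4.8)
The plan is to reduce the claimed $\Diamond$-norm estimate to a single spectral bound for an explicit linear map, and then to prove that bound by block-diagonalising the map along the three-copy adjoint representation of $\Cl$.

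\emph{Reduction.} First I would note that $\EP$ (with $t=3$) is the Hilbert--Schmidt orthogonal projection onto the Pauli-invariant subspace $\Pi:=\{X\in\BCg:E^{\otimes3}XE^{\otimes3}=X\ \forall E\in\Pmb\}$, which by \eqref{eq:Pm_composition_law} in the Pauli basis \eqref{eq:vec1_t_copy} equals $\mathrm{span}\{\ket{a,b,a+b}:a,b\in\Fm\}$. Since transvection Cliffords normalise $\Pmb$, conjugation by $U_{E,F}^{\otimes3}$ preserves $\Pi$, whence $\ET\circ\EP=\EP\circ\ET\circ\EP$ and $\ET^{k}\circ\EP=\ETT^{k}\circ\EP$ with $\ETT:=\ET|_{\Pi}$. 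The Haar twirl $\H$ is the orthogonal projection onto the commutant of $\U(N)^{\otimes3}$, which for $t=3$ is the six-dimensional span $V_{0}$ of the permutation operators $\W{\pi}$, $\pi\in\Sr$ (independent as $N\ge3$); as $V_{0}\subseteq\Pi$ one has $\H=\H\circ\EP$ and $\H|_{\Pi}=\hat{\H}$, the orthogonal projection of $\Pi$ onto $V_{0}$. Each $\W{\pi}$ commutes with every $U^{\otimes3}$, hence is fixed by $\ETT$ and by its HS-adjoint $\ETT^{\dagger}$; therefore $\ETT$ leaves both $V_{0}$ and $W:=\Pi\ominus V_{0}$ invariant, $\ETT=\unity_{V_{0}}\oplus C$ with $C:=\ETT|_{W}$, and $\ET^{k}\circ\EP-\H=(\ETT^{k}-\hat{\H})\circ\EP=(0\oplus C^{k})\circ\EP$. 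Invoking the elementary bound $\|\Delta\|_{\Diamond}\le N^{3}\|\Delta\|_{2\to2}$ valid for Hermiticity-preserving $\Delta$ on $\BCg$ (pass from the trace norm to the Hilbert--Schmidt norm on the $N^{6}$-dimensional bipartite input appearing in \eqref{eq:def_diamond_norm}) together with $\|C^{k}\|_{2\to2}\le\|C\|_{2\to2}^{k}$, the theorem reduces to the single estimate
\[
\bigl\|\,\ETT|_{W}\,\bigr\|_{2\to2}\;\le\;\lambda\;:=\;\tfrac12\Bigl(1+\tfrac4N+\tfrac1{2N(N-2)}\Bigr),
\]
because then $N^{3}\lambda^{k}<\epsilon$ rearranges to exactly the stated condition on $k$, and $m\ge3$ forces $\lambda<1$.

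\emph{The spectral estimate.} By \eqref{eq:Cl_transvection_adjoint_action}, conjugation by $U_{E,F}$ with $E=E(e)$ sends a Pauli label $a$ to $a$ when $\s{e}{a}=0$ and to $a+e$ when $\s{e}{a}=1$ (i.e.\ it applies the symplectic transvection $T_{e}$ associated with the form \eqref{eq:symp}), up to an overall sign coming from \eqref{eq:Cl_transvection}. Thus on $\Pi$ the map $\ETT$ is an average over $e\in\Fm$ of these signed substitutions applied slot-wise to triples $(a,b,a+b)$, and --- since $T_{e}$ preserves the symplectic form and the vanishing pattern of its arguments --- $\ETT$ leaves invariant each subspace spanned by the basis vectors of a fixed ``commutation type'': which of $a,b,a+b$ vanish, together with the bit $\s{a}{b}$. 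Moreover $\ETT$ commutes with the conjugation action of $\Sr$ permuting the three slots, so it also respects the $\Sr$-isotypic refinement of each such subspace. Concretely I would split $\Pi$ into (i) the line through $\ket{0,0,0}$, the identity operator, which lies in $V_{0}$; (ii) the sectors where exactly one of $a,b,a+b$ vanishes, whose $\Cl$-subrepresentations coincide, via explicit intertwiners, with those of the two-copy adjoint representation worked out in \cite{Helsen2016}; and (iii) the sector where all three components are nonzero, which decomposes --- by the value of $\s{a}{b}$ and by the $\Sr$-action --- into the pieces labelled $\mathrm{CS},\mathrm{AS},\mathrm{S},\omega,\omega^{*}$ in our notation and contains the genuinely new invariant subspaces. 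On each resulting block $\ETT$ is an explicit small matrix whose entries are ratios of cardinalities of subsets of $\Fm$ --- counting the $e$'s that fix, or that move, a given type to a given neighbouring type --- computed by elementary $\F$-linear algebra; bounding the singular values of these matrices yields $\|\ETT|_{W}\|_{2\to2}\le\lambda$, the extremal value being attained in one of them. Since $\|C\|_{2\to2}<1$, the only $+1$-eigenvectors of $\ET^{k}\circ\EP$ lie in $V_{0}$, so its $+1$-eigenspace equals that of $\H$, and $\ET^{k}\circ\EP\to\H$ as $k\to\infty$; hence the scheme is an asymptotic unitary $3$-design for $m\ge3$, with the stated rate.

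\emph{Main obstacle.} The crux is steps (ii)--(iii): producing the complete commutation-type decomposition of the three-copy Pauli-invariant space, writing down the correct transition matrices for $\ETT$ on each block --- in particular on the all-nonzero blocks, which have no two-copy analogue --- and, most delicately, extracting the \emph{tight} constant $\tfrac12\bigl(1+\tfrac4N+\tfrac1{2N(N-2)}\bigr)$ rather than a cruder $\tfrac12+\mathcal{O}(1/N)$ estimate; this is also where the representation-theoretic output (the new invariant subspaces, and conjectures about their irreducibility) is obtained. A secondary but essential point is that the permutation operators are fixed by $\ETT^{\dagger}$ as well as by $\ETT$, which is exactly what makes $W$ genuinely $\ETT$-invariant and allows the splitting $\ETT=\unity_{V_{0}}\oplus C$ to isolate the $+1$-eigenspace with no Jordan coupling into $W$.
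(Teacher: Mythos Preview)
Your proposal is correct and follows essentially the same route as the paper: the diamond-norm-to-operator-norm reduction with the $N^{3}$ factor, the identification of the $+1$-eigenspace with $\mathrm{range}(\H)$, and the block-diagonalisation of $\ETT$ along the ``commutation-type'' sectors ($\vzero$, the three diagonal sectors, $\vnc$, $\vc$) refined by the $\Sr$-isotypic decomposition are exactly how the paper proceeds (Corollary~\ref{corr:ET_EP_H_properties}, Lemmas~\ref{lem:GH_3_eigenspace}--\ref{lem:unitary_3_design_parent_lemma} and their appendix proofs). Two small points where the paper is slightly more direct than your sketch: (i) rather than arguing that $V_{0}$ is fixed by $\ETT^{\dagger}$, the paper observes once and for all that $\ET$ is HS-self-adjoint because its Kraus operators $U_{E,F}$ were chosen self-adjoint (Remark~\ref{rem:transvection_convention} and Corollary~\ref{corr:selfadjoint_kraus}), which immediately gives the orthogonal splitting and turns all singular-value bounds into eigenvalue bounds; (ii) the ``overall sign'' you flag is handled by a special choice of Pauli-basis representatives (Lemma~\ref{lem:Pauli_basis_choice}) so that on $\Pi$ the action is literally the unsigned substitution $\ket{a,b,a+b}\mapsto\ket{T_{e}a,T_{e}b,T_{e}(a+b)}$, making the transition coefficients pure cardinality ratios as you anticipate.
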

\begin{corr}
\label{corr:order_steps}
The Scheme $\ref{schm}$ will also work by reversing the order of the steps, or by twirling over $\mathcal{P}$ at some intermittent point in between two successive iterations of the $k$-time $\mathcal{T}$ twirl.
\end{corr}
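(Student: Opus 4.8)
The plan is to reduce the corollary to a single algebraic fact: on $\BCg$ the superoperators $\EP$ and $\ET$ commute. Granting this, every variant of Scheme~\ref{schm} mentioned in the statement --- running the transvection twirl before the Pauli twirl, or inserting the Pauli twirl after $k_1$ of the $k$ transvection iterations and before the remaining $k_2=k-k_1$ --- is literally the same quantum channel as $\ET^{k}\circ\EP$, because
\[
\ET^{k_1}\circ\EP\circ\ET^{k_2}\;=\;\ET^{k_1}\circ\ET^{k_2}\circ\EP\;=\;\ET^{k}\circ\EP,\qquad k_1+k_2=k,
\]
and, with $k_1=0$, $\EP\circ\ET^{k}=\ET^{k}\circ\EP$. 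Hence the convergence claims and the explicit rate bounds of Theorems~\ref{thm:asymptotic_unitary_2_design} and~\ref{thm:asymptotic_unitary_3_design} carry over verbatim, so no separate estimate is needed.

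To prove $[\EP,\ET]=0$ I would exhibit a block structure common to both, working in the orthonormal Pauli basis $\{\ket{a_1,\dots,a_t}\}$ of $\BCg$. By \eqref{eq:Pm_composition_law} and the self-adjointness of the chosen coset representatives, conjugation by $E(e)^{\otimes t}$ multiplies $\ket{a_1,\dots,a_t}$ by $(-1)^{\s{e}{a_1+\cdots+a_t}}$; averaging over $e\in\Fm$ and using non-degeneracy of $\Om$ then shows that $\EP$ is the orthogonal projection onto $W:=\mathrm{span}\left\{\,\ket{a_1,\dots,a_t}\ :\ a_1+\cdots+a_t=0\,\right\}$. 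Next, by \eqref{eq:Cl_transvection_adjoint_action}, conjugation by $U_{E,F}^{\otimes t}$ carries $\ket{a_1,\dots,a_t}$, up to an overall sign, to $\ket{a_1',\dots,a_t'}$ with $a_i'=a_i+\s{e}{a_i}\,e$, so that $\sum_i a_i'=c+\s{e}{c}\,e$ where $c:=\sum_i a_i$. If $c=0$ then $\sum_i a_i'=0$; if $c\neq 0$ then $\sum_i a_i'\in\{c,\,c+e\}$ and $c+e\neq 0$, since $c=e$ would force $\s{e}{c}=\s{e}{e}=0$, contradicting $\s{e}{c}=1$. Thus each summand of $\ET$, and hence $\ET$, maps $W$ into $W$ and $W^{\perp}$ into $W^{\perp}$; being block diagonal with respect to $W\oplus W^{\perp}$, $\ET$ commutes with the projection $\EP$.

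The real content is entirely in this commutation, and the only point requiring care is that the $\pm$ and $\pm i$ factors in \eqref{eq:Cl_transvection_adjoint_action} never move a basis vector between $W$ and $W^{\perp}$ --- clear, since they contribute only a global scalar to $\ET\ket{a_1,\dots,a_t}$. I therefore expect no genuine obstacle; the corollary is a bookkeeping consequence of a block decomposition that already underlies the proofs of Theorems~\ref{thm:asymptotic_unitary_2_design} and~\ref{thm:asymptotic_unitary_3_design}, and if that decomposition is recorded explicitly in Section~\ref{sec:pfs} the proof of the corollary is a two-line invocation of it.
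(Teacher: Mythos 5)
Your proposal is correct and takes essentially the same route as the paper: reduce the corollary to the commutativity $[\EP,\ET]=0$ and observe that every rearrangement of the factors yields the same channel $\ET^k\circ\EP$, so the rate bounds of Theorems~\ref{thm:asymptotic_unitary_2_design} and~\ref{thm:asymptotic_unitary_3_design} apply verbatim. The paper simply cites this commutativity from Corollary~\ref{corr:ET_EP_H_properties}, where it is established by the same block-invariance argument you reproduce (the paper phrases the invariance of $W^{\perp}$ via $T_h^2=\unity$ rather than your direct check that $c+\s{e}{c}e\neq 0$ for $c\neq 0$, but this is an inessential variation), so your self-contained re-derivation is correct though redundant given what is already on record.
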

\noindent For the proofs, see Subsection \ref{subsec:cgt_analysis}. We give the outline of the proofs here.

\begin{proof}[Outline of proofs for Theorems \ref{thm:asymptotic_unitary_2_design} and \ref{thm:asymptotic_unitary_3_design}:] 
We prove the following two things. \begin{itemize} \item[(i)] $\ET \circ \EP - \H$ is a diagonalizable linear operator acting on $\BCg$.
\item[(ii)] The largest eigenvalue of $\ET\circ \EP - \H$ is less than $1$. Also it scales as $\mathcal{O}(1)$ with $m$. \item[(iii)] For the $t=2$ case, the convergence (viz, $k = \mathcal{O}\left(\log 1/\epsilon\right)$) will be separately proved. For the $t=3$ case, the convergence (viz $k=\mathcal{O}\left(m + \log 1/\epsilon \right)$) is straightforward to evaluate.  \end{itemize}
\end{proof}

\begin{rem}
\label{rem:transvection_convention}
We defined transvection Cliffords as $U_{E,F} = \frac{1}{2}(\unity + i E)F$. While this is a Clifford unitary for either cases: $[E,F]=0$ and $\left\{ E, F \right\}=0$, it is self-adjoint only when $\left\{ E, F \right\}=0$. We will later on see that we require the $U_{E,F}$'s to be self-adjoint (or skew-adjoint) in order for $\ET$ to be self-adjoint as a super-operator (see Lemma \ref{lem:Phi_adjoint}). Note that $\ET$ has been defined with only $N^2$ transvection Cliffords. If one were to choose $[E,F]=0$ instead, then instead of having only $N^2$, we would need $2 N^2$ transvection Cliffords in order for $\ET$ to be a self-adjoint superoperator.
\end{rem}
\begin{rem}
\label{rem:asymptotic_Clifford_desing}
Since the Clifford group is a unitary $3$-design, Theorem \ref{thm:asymptotic_unitary_3_design} can also be viewed as the statement that Scheme \ref{schm} implements an asymptotic Clifford design. 
\end{rem}
\begin{rem}
\label{rem:thm_asymptotic_unitary_designs}
 Directly proving the convergence rates in Theorems \ref{thm:asymptotic_unitary_2_design} and \ref{thm:asymptotic_unitary_3_design} automatically proves that the Scheme gives us an asymptotic unitary $2$- and $3$-design.
\end{rem}
\begin{rem}
\label{rem:Cl_transvection1}
When $E= \unity$, $U_{E,F}  \ = \ \e^{\pm i \pi /4} \ F $. We choose to include the $U_{\unity,F}$ as transvection Cliffords as well. Not doing so doesn't affect the results.
\end{rem}
\begin{rem}
\label{rem:Cl_transvection2}
One can define the naive composition of ensembles $\mathcal{E}$ and $\mathcal{F}$ as follows: for $(p_i,U_i) \in \mathcal{E}$ and $(q_j,W_j) \in \mathcal{F}$, 
\begin{equation}
    \label{eq:def_composition_ensembles}
\mathcal{E} \circ \mathcal{F} := \left\{ \left(p_iq_j,U_i W_j \right) \ \big| \ \forall \ \mathrm{indices} \ i, j \right\}.
\end{equation}
Denote by $\mathcal{E}^k$ the $k$-time self-composed ensemble, i.e., $\underbrace{\mathcal{E} \circ \mathcal{E} \circ \cdots  \mathcal{E}}_{k \ \mathrm{times}}$. Then one sees that $\ET^k \circ \EP \ = \mathcal{G}_{\mathcal{T}^k\circ \mathcal{P}}$. Thus, $\ET^k\circ \mathcal{\EP}$ is also realizable as $t$-fold twirl over an ensemble - in this case $\mathcal{T}^k\circ \mathcal{P}$. And the scheme amounts to sampling once from this ensemble, and rotating the input state $\rho$ by the sampled Clifford. 
\end{rem}
\begin{rem}
\label{rem:optimal_convergence_rate}
The convergence rates in Theorems \ref{thm:asymptotic_unitary_2_design} and \ref{thm:asymptotic_unitary_3_design} may be anticipated from the following arguments. In the group algebra of $\C\Cl$ (for the concept of a group algebra, see \cite{wiki_group_ring}), one may represent the uniform ensemble of transvections $\mathcal{T}$ as follows,
\begin{equation}
    \label{eq:group_algebra_1}
    \mathcal{T}= \sum_{\tau \in \mathcal{T}} p(\tau) \tau,
\end{equation} where $\tau$ represents a transvection element in $\Cl$, and $p:\Cl \rightarrow \C$, is a probability distribution on $\Cl$, such that 
\begin{equation}
    \label{eq:rem_optimal_convergence_rate}
    p(g)  \ = \ \begin{cases} \frac{1}{N^2} \ \mathrm{when} \ $g$ \ \mathrm{is} \ \mathrm{a} \ \mathrm{transvection}, \\ 
    0, \ \mathrm{otherwise}. 
    \end{cases}
\end{equation}
The ensemble generated by composing $\mathcal{T}$ with itself, i.e., the ensemble corresponding to $\mathcal{T} \circ \mathcal{T}$ (see Remark \ref{rem:Cl_transvection2}) is then represented by
\begin{equation}
\label{eq:group_algebra_2} 
\mathcal{T}\circ \mathcal{T} \ = \ \sum_{g \in \Cl} \left( \  \left[p \star p \right] (g)  \right) g,
\end{equation} where $g$ represents a group element in the group algebra $\C\Cl$, $p \star p $ represents the group convolution of $p$ with itself, that is
\begin{equation}\label{eq:group_convolution}
\left( p \star p \right)(g) \ = \ \sum_{h \in G} p(h)p\left( h^{-1}g \right).
\end{equation}
While $p$ is a uniform probability on $N^2$ elements of $\Cl$, $p \star p$ is approximately a uniform distribution on $N^4$ elements of $\Cl$. Similarly, $p \star p \cdots \star p$ ($k$ times) is \emph{approximately} a uniform distribution on $N^{2k}$ elements of $\Cl$. This may be computed easily using the definition of transvections in Eq.~ \eqref{eq:Cl_transvection} (or even Eq. ~ \eqref{eq:Th_matrix_form}). Thus a finite number of iterations gives an approximately uniform distribution on $\SL$ which is a $2$-design, whereas $\mathcal{O}(m)$ iterations give an approximately uniform distribution on $\Cl$, which is a $3$-design. We refer the interested reader to \cite{Pitt_Coste_unpublished} for more along the same line of reasoning. \newline ~ \newline \textbf{On the optimality of this scheme:} If the cardinality of the support of the probability measure $p \star p \star \cdots \star p$ ($k$-times) would've grown as $o\left(N^{2k}\right)$, and the distribution on this support not been (approximately) uniform (in order of magnitude terms), the order of magnitude of iterations required for convergence would've been higher. In this sense, the scheme is optimal.
\end{rem}

\subsection{Subrepresentation of the three-copy adjoint representation of the Clifford group \texorpdfstring{$\Cl$}{}: partial decomposition}
\label{sec:3_copy_adjoint_Cm}
Here we give some of the subrepresenations carried by the $3$-copy adjoint representation of the Clifford group. For convenience we will use the Pauli basis to represent the subspaces which the subrepresentations act on (see Subsection \ref{subsec:Pauli}). We obtain subspaces which lie in the following subspace of $\BCr$.
\begin{equation}
\label{eq:Clifford_vspace}
    \mathrm{span} \left\{ \ \ket{a,b,c} \ | \ a,b,c \in \Fm, \text{such that } a+b+c = 0 \right\}.
\end{equation}
We begin with an important lemma. 
\begin{lem}
\label{lem:Clifford_irreps}
Of the subrepresentations in the single-copy and two-copy adjoint representations, the ones which act on the trivial representation of the Pauli group are the only ones which appear in the subspace in equation \eqref{eq:Clifford_vspace}. 
\end{lem}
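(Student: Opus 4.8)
We have the $m$-qubit Clifford group $\Cl$ acting on the operator space $\BCg$ for $t=1,2,3$ copies via the adjoint (conjugation) representation $U \mapsto U^{\otimes t}(\cdot)(U^\dagger)^{\otimes t}$. The Pauli basis for $\BCg$ is indexed by tuples $(a_1, \ldots, a_t) \in (\mathbb{F}_2^{2m})^t$. The lemma concerns the subspace of the 3-copy space spanned by $\ket{a,b,c}$ with $a+b+c = 0$ (equivalently $c = a+b$), and claims it contains copies only of those sub-representations of the 1-copy and 2-copy adjoint representations that act trivially under the Pauli group. So the plan is: first understand the $\Cl$-module structure of the $a+b+c=0$ subspace of the 3-copy space in terms of 1- and 2-copy modules, then use the Pauli-group action to filter.
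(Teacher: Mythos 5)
Your ``proposal'' stops at a restatement of the lemma and a plan; it does not actually carry out any step. The plan you sketch (``understand the $\Cl$-module structure of the $a+b+c=0$ subspace \ldots\ then use the Pauli-group action to filter'') is in the right spirit but is both incomplete and more elaborate than necessary. The missing key observation, which is the entire content of the paper's one-line proof, is that by equation~\eqref{eq:adjoint_transformation_Pauli} the Pauli adjoint action on a basis element $\ket{a_1,\dots,a_t}$ is multiplication by $(-1)^{\s{h}{s}}$ with $s=\sum_j a_j$, so on the subspace~\eqref{eq:Clifford_vspace} (where $s=0$) every $E(h)\in\Pmbb$ acts as the identity. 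That single fact already gives the lemma: if $V$ is any $\Cl$-invariant subspace of~\eqref{eq:Clifford_vspace} that is $\Cl$-equivalent to a subrepresentation $W$ from the single- or two-copy adjoint representation, then restricting the intertwiner to the subgroup $\Pmbb\subset\Cl$ shows $V|_{\Pmbb}\cong W|_{\Pmbb}$; but $V|_{\Pmbb}$ is trivial, so $W|_{\Pmbb}$ must be trivial too. In particular you do \emph{not} need to first decompose the $a+b+c=0$ space into $1$- and $2$-copy pieces -- the argument is a necessary condition that applies to every $\Cl$-subrepresentation sitting inside that space, whatever its origin. Supply the observation about trivial Pauli action and the restriction argument, and the proof is done.
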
 \noindent This is because the Pauli group also acts trivially on the subspace in equation \eqref{eq:Clifford_vspace}.

We now give the subrepresentations we found. The case where $a,b,c=0$ corresponds to the identity $\unity$ operator, and it is trivial to see that the Clifford group acts trivially on the subspace spanned by this. When $a=b$ and $c=0$ (or any permutation thereof), we get the following invariant subspace \begin{equation*} \mathrm{span} \ \left\{ \ket{a,a,0}  \ | \ a \in \Fs \ \right\}, \end{equation*} which is a subrepresentation of the Clifford group, and also easily seen to be isomorphic to the diagonal sector which was given in \cite{Helsen2016} (see Lemma 4 in \cite{Helsen2016}). The subspace which remains is such that $a,b,c \neq 0$. Some of the subrepresenations which appear in this subspace, have appeared before in the single copy adjoint representation of the Clifford group and the two-copy adjoint representation of the Clifford group \cite{Helsen2016}. The remaining subrepresentations which we obtain, haven't appeared so far, in works which we are aware of. Some of these subrepresentations are reducible, but we haven't been able to reduce them further, while we are not sure about the remaining ones. First, we need to define the action of the symmetry group $\Sr$ on $\BCr$. For any permutation $\sigma \in \Sr$, define its representation $W_\sigma$ to have the following action 
 \begin{equation}
     \label{eq:Wsigma}
     W_\sigma \ \ket{a_1,b_1,c_1} \ = \ \ket{a_{\sigma^{-1}(1)}, \ a_{\sigma^{-1}(2)}, \ a_{\sigma^{-1}(3)}},
 \end{equation}
where we have defined the action of $W_\sigma$ on arbitrary Pauli operators, for $a_1$, $a_2$ and $a_3$ in $\Fm$. 
\begin{thm}
 \label{thm:nc_inv_subsp}
 In the following, we give some invariant subspaces which lie in the following subspace: \begin{equation*} \mathrm{span}\left\{ \ \ket{a,b,a+b} \ |  \ a,b \in \Fs, \mathrm{such} \ \mathrm{that} \ \s{a}{b}=1 \  \right\}. \end{equation*} To define these invariant subspaces, we need to define the following. For any $a \in \Fs$, $b \in \Fm$ such that $\s{a}{b}=0$, we define
     \begin{equation}
         \label{eq:nc_a_b_0}
         \ket{\hat{a};b} \ :=  \ - \ket{\hat{a};a+b} \ := \ \sqrt{\dfrac{2}{N}} \ \sum_{h:\s{a}{b}=1} \ \left( -1 \right)^{\s{h}{b}} \ \ket{a,h,a+h} .
     \end{equation}
 \begin{enumerate}
     \item[(i)] Define $ V^{(\mathrm{nc})}_d$ as \begin{equation}
         \label{eq:nc_c_dd}
         V^{(\mathrm{nc})}_d \ := \  \mathrm{span} \left\{ \ket{\hat{a}} \ | \ a \in \Fs \right\}, \ \mathrm{where} \ \ket{\hat{a}} := \ket{\hat{a};0} \ \text{(See equation \eqref{eq:nc_a_b_0})}.
     \end{equation} \noindent This subspace carries a subrepresentation of $\Cl$, and it is equivalent to the subrepresentation carried by the diagonal sector subspace in \cite{Helsen2016} (see Lemma 4 in \cite{Helsen2016}), with the intertwiner\footnote{For an explanation of an intertwiner, please see equation \eqref{eq:def_intertwiner}.} given by (we restrict the action of the intertwiner to the diagonal sector)
     \begin{equation}
         \label{eq:nc_d_intertwiner}
         L^{\mathrm{d}\rightarrow (\mathrm{nc})} \ \ket{a,a} \ =  \ket{\hat{a}}, \ \forall \ a \in \Fs,
     \end{equation}
     where $\ket{a,a} \in \BCt$ and $\ket{\hat{a}}$ is given by equation \eqref{eq:nc_a_b_0} for the choice $b=0$. The subspace $V^{(\mathrm{nc})}_d$ carries three irreducible representations of the Clifford group.
     \item[(ii)] For $b=1,2$,  \begin{align} \label{eq:nc_V_b} 
     V ^{(\mathrm{nc})}_{b} \ :=  \  \mathrm{span} \left\{ \ket{\pm} \ = \ \sum_{x \in \Fs } \ \lambda_x \ \ket{A(x)} \ \Big| \ \sum_{\substack{y \in \Fs\\ \s{y}{x}=0}} \ \lambda_y \ = \ \pm \  \dfrac{N}{2} \ \lambda_x  , \sum_{x\in\Fs} \lambda_x=0.\right\} \end{align} where 
     \begin{align}
         \label{eq:Aab}
      &   \ket{A(a)} \ := \ \sum_{h: \substack{\s{h}{a}=0 \\ h \neq a,0}} \ \ket{\hat{h};a}.
     \end{align} 
     The above subspaces carry subrepresenations of $\Cl$ which are equivalent to the subrepresentations of $\Cl$ carried by the subspaces $V_1$ and $V_2$ defined in Lemma 4 of \cite{Helsen2016}. 
     \item[(iii)] Define the projector, $\ps$  as 
     \begin{equation}
         \label{eq:def:ps}
         \ps \ := \ \frac{1}{3} \left( \unity + W_{[(123)]} + W_{[132]} \right).
     \end{equation}
     Then the following is an invariant subspace of the adjoint representation of the Clifford group.
     \begin{equation}
         \label{eq:nc_c_null}
         V^{(\mathrm{nc})}_{\mathrm{null,S}} \ := \ \mathrm{span} \left\{ \ \ps \ \left( \ket{\hat{a};b} + \ket{\hat{b};a+b} + \ket{\hat{a+b};a} \right) \ | \ \forall \ a,b \in \Fs, \ a \neq b, \ \s{a}{b}=0  \right\}.
     \end{equation} \noindent  
     \item[(iv)] Define the projector $\mathcal{P}_1$ in as 
     \begin{equation}
         \label{eq:def:pg}
         \mathcal{P}_{1} \ := \ \frac{1}{3} \left(  2 \unity  - \  \left( W_{[(123)]} + W_{[132]} \right) \  \right).
     \end{equation}
     Then the following is an invariant subspace of the adjoint representation of the Clifford group.
     \begin{equation}
         \label{eq:nc_c_null_0}
         V^{(\mathrm{nc})}_{\mathrm{null,1}} \ := \ \mathrm{span} \left\{ \ \mathcal{P}_{1} \ \left( \ket{\hat{a};b} + \ket{\hat{b};a+b} + \ket{\hat{a+b};a} \right) \ | \ \forall \ a,b \in \Fs, \ a \neq b , \ \s{a}{b}=0   \right\}.
     \end{equation} \noindent 
     \item[(v)] The intersection of the orthogonal complements of all above subspaces within $\mathrm{span} \left\{ \ket{a,b,a+b} + \ket{a,a+b,b} \ | \ \s{a}{b}=1 \right\}$, i.e., 
     \begin{equation}
         \label{eq:orthogonal_complement}
         \mathrm{span} \left\{ \ket{a,b,a+b} + \ket{a,a+b,b} \ | \ \s{a}{b}=1 \right\} \ \cap \ \left(  V^{(\mathrm{nc})}_d  \ \oplus \ V^{(\mathrm{nc})}_{\mathrm{null}} \ \oplus  \ V^{(\mathrm{nc})}_{1} \  \oplus \ V^{(\mathrm{nc})}_{1}   \right)^\perp,
     \end{equation} where the superscript $\perp$ denotes the orthogonal complement of a subspace, and $V^{(\mathrm{nc})}_{\mathrm{null}} \coloneqq V^{(\mathrm{nc})}_{\mathrm{null},S} \oplus V^{(\mathrm{nc})}_{\mathrm{null},1}$
 \end{enumerate}
\end{thm}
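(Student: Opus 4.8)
The plan is to reduce the whole statement to the action of the symplectic group together with one transformation law for the vectors $\ket{\hat a;b}$. By Lemma~\ref{lem:Clifford_irreps} and the observation following it, the Pauli subgroup acts trivially on the subspace of equation~\eqref{eq:Clifford_vspace}, so the three-copy adjoint action of $\Cl$ on that subspace factors through the symplectic quotient: a Clifford $U$ with symplectic matrix $S\in\Sp$ sends $\ket{a,b,c}\mapsto\varepsilon_S(a,b)\,\ket{Sa,Sb,Sc}$ whenever $a+b+c=0$, where $\varepsilon_S(a,b)\in\{\pm 1\}$ is the phase defect coming from the non-self-adjointness of products of Pauli coset representatives. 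First I would show that this defect is bilinear, $\varepsilon_S(a,b)=(-1)^{B_S(a,b)}$ with $B_S$ a symmetric bilinear form satisfying $B_S(a,a)=0$; this drops out of evaluating $U E(a)E(b)U^\dagger$ in two ways and using $\s{Sa}{Sb}=\s{a}{b}$. Then, substituting $h\mapsto Sh$ in the defining sum~\eqref{eq:nc_a_b_0} and using bilinearity, I would derive the key identity: for such $U$,
\begin{equation*}
U\,\ket{\hat a;b}\;=\;\pm\,\ket{\hat{Sa};\,Sb+v_S(a)},\qquad\text{with}\qquad \s{Sa}{v_S(a)}=0,
\end{equation*}
where $v_S(a)$ is the unique vector obeying $\s{h}{v_S(a)}=B_S(a,S^{-1}h)$ for all $h$. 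Proving this identity, and in particular checking that the label $Sb+v_S(a)$ still belongs to the family defining each of the subspaces below, is where essentially all of the work lies.

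Granted the transformation law, parts (i) and (ii) follow by a closure check plus an intertwiner argument. For (i): the vectors $\ket{\hat a}=\ket{\hat a;0}$, $a\in\Fs$, have pairwise distinct first Pauli-labels, hence are orthogonal and in particular linearly independent, so $\dim V_d^{(\mathrm{nc})}=|\Fs|$ and the map $L^{\mathrm{d}\rightarrow(\mathrm{nc})}$ of equation~\eqref{eq:nc_d_intertwiner} is injective; from the $b=0$ case of the transformation law one checks that $L^{\mathrm{d}\rightarrow(\mathrm{nc})}$ intertwines the diagonal-sector action of \cite{Helsen2016} with the action on $V_d^{(\mathrm{nc})}$, so by Schur's lemma it is an isomorphism of $\Cl$-representations, and the splitting of the diagonal sector into three irreducibles (Lemma~4 of \cite{Helsen2016}) transports to $V_d^{(\mathrm{nc})}$. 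For (ii): applying the transformation law to each summand of~\eqref{eq:Aab}, $U$ maps $\mathrm{span}\{\ket{A(x)}\mid x\in\Fs\}$ into itself and, after the reindexing $h\mapsto Sh$, acts on it as $\ket{A(x)}\mapsto\pm\ket{A(Sx)}$; this action commutes with the symplectically defined operator $M:\ket{A(x)}\mapsto\sum_{y\in\Fs:\s{y}{x}=0}\ket{A(y)}$, so the eigenspaces of $M$ are $\Cl$-invariant, and $V_1^{(\mathrm{nc})},V_2^{(\mathrm{nc})}$ are precisely the intersections of the $\pm\tfrac N2$-eigenspaces of $M$ with $\{\sum_x\lambda_x=0\}$. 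The same eigenvalue equations identify these with $V_1,V_2$ of Lemma~4 of \cite{Helsen2016} through the evident intertwiner.

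For (iii)--(v) the argument is representation-theoretically soft once the span
\[
W\;:=\;\mathrm{span}\{\,\ket{\hat a;b}+\ket{\hat b;a+b}+\ket{\hat{a+b};a}\ \mid\ a,b\in\Fs,\ a\neq b,\ \s{a}{b}=0\,\}
\]
is shown to be $\Cl$-invariant: this again follows from the transformation law, since $U$ carries the cyclic sum over $(a,b,a+b)$ to the cyclic sum over $(Sa,Sb,Sa+Sb)$ up to one common sign, and the latter lies in $W$. The permutation operators $W_\sigma$ of equation~\eqref{eq:Wsigma} commute with $U^{\otimes 3}(\cdot)\,{U^\dagger}^{\otimes 3}$ for every Clifford $U$, hence so do the idempotents $\ps=\tfrac13(\unity+W_{[(123)]}+W_{[132]})$ and $\mathcal{P}_1=\unity-\ps$; therefore $\ps W$ and $\mathcal{P}_1 W$ are $\Cl$-invariant, which is exactly parts (iii) and (iv). Part (v) is then immediate: the three-copy adjoint action is by orthogonal operators for the inner product making the Pauli basis orthonormal, so the orthogonal complement of any $\Cl$-invariant subspace is again $\Cl$-invariant; the subspace in (v) is by construction the orthogonal complement, inside $\mathrm{span}\{\ket{a,b,a+b}+\ket{a,a+b,b}\mid\s{a}{b}=1\}$, of the sum of the previously established invariant subspaces, hence invariant.

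The main obstacle, as flagged, is the very first step: controlling the sign defect $\varepsilon_S$ of the Clifford action on the Pauli basis and establishing the transformation law with $v_S(a)$ pinned down exactly --- concretely, showing that the resummation over $h$ never moves the label $b$ outside the coset that the relevant family is built on, and that the residual sign is compatible with the cyclic sums used in (iii)--(iv), and with the commutation $[U,M]=0$ needed in (ii). Everything downstream --- the intertwiners with \cite{Helsen2016}, the spectral decomposition of $M$, and the $\Sr$-isotypic splitting used in (iii)--(v) --- is then routine.
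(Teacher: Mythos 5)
Your outline hinges entirely on the transformation law
$U\ket{\hat a;b}=\pm\ket{\hat{Sa};\,Sb+v_S(a)}$, which you flag as ``where essentially all of the work lies'' and then never actually prove. That is a genuine gap, and moreover the form you posit is more complicated than what holds. The paper disposes of the sign and the correction vector entirely via Lemma~\ref{lem:Pauli_basis_choice}: by rescaling each Pauli-product basis element $\ket{a_1,\dots,a_t}$ with $\sum_j a_j=0$ so that it represents a Hermitian operator, one obtains $\Ua{g}\ket{a_1,\dots,a_t}=\ket{S_g a_1,\dots,S_g a_t}$ with \emph{no} phase defect at all. With that convention in place, substituting $h\mapsto S_g^{-1}h$ in the sum \eqref{eq:nc_a_b_0} and using $\s{S_g x}{S_g y}=\s{x}{y}$ gives, in one line, $\Ua{g}\ket{\hat a;b}=\ket{\hat{S_g a};\,S_g b}$ — i.e.\ $\varepsilon_S\equiv 1$ and $v_S\equiv 0$, so your worries about the label $Sb+v_S(a)$ leaving the admissible family, about compatibility with the cyclic sums, and about $[U,M]=0$ are all moot. (The paper records exactly this as Lemma~\ref{lem:Uh_A_action}, proved transvection-by-transvection, and then uses that transvections plus Paulis generate $\Cl$.) Since you neither prove the transformation law nor spot the basis normalization that trivializes it, the proposal is incomplete at precisely the step you identify as central.

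Granted that law, your downstream arguments are sound and in places take a genuinely different route from the paper. For part (ii) the paper writes down the explicit intertwiner $K:\sum_a\lambda_a\ket{A(a)}\mapsto\sum_a\lambda_a\ket{a,a}$ and invokes the dimension count of Lemma~\ref{lem:dimA}; your variant instead diagonalizes the $\Cl$-commuting operator $M:\ket{A(x)}\mapsto\sum_{y:\s{y}{x}=0}\ket{A(y)}$ and identifies $V_1^{(\mathrm{nc})},V_2^{(\mathrm{nc})}$ with eigenspaces — equivalent, but you should be careful that $M$ is only well defined modulo the relation $\sum_x\ket{A(x)}=0$, which the constraint $\sum_x\lambda_x=0$ takes care of. For (iii)--(v), your direct observation that $\Ua{g}$ permutes the cyclic sums $\ket{\hat a;b}+\ket{\hat b;a+b}+\ket{\hat{a+b};a}$, so that their span is $\Cl$-invariant and the $\Sr$-isotypic projectors $\ps,\mathcal{P}_1$ (which commute with $\Ua{g}$) produce invariant subspaces, is cleaner than the paper's appeal to the $\tfrac14$-eigenspace of $\ET$ followed by a rather terse Schur-lemma remark.
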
 The proof is given in Appendix \ref{sec:appendix_nc_0}.
\begin{rem}
\label{rem:subreps_new}
The subrepresenations carried by the subspaces $V^{(\mathrm{nc})}_{\mathrm{null,S}}$ and $V^{(\mathrm{nc})}_{\mathrm{null,0}}$ haven't appeared among the subrepresenations in the single-copy and two-copy adjoint representations of the Clifford group. 
\end{rem}
Define $W$ as 
\begin{equation}
    \label{def:W}
    W  \ := W_{[(123)]} - W_{[(132)]}.
\end{equation}
 Then we have 
 \begin{thm}
 \label{thm:nc_1_decomposition} 
 Let $V^{(\mathrm{nc})}_{\mathrm{irrep}}$ be any irreducible subrepresentations contained in $V^{(\mathrm{nc})}\cap \mathrm{supp}W$, where $x=d,1,2,\mathrm{null}$\footnote{These irreducible subrepresentations are contained in the subrepresentations mentioned in Theorem \ref{thm:nc_inv_subsp}.} and also let $\Sr$ act non-trivially on $V^{(\mathrm{nc})}_{\mathrm{irrep}}$.  Then $W \ V^{(\mathrm{nc})}_{\mathrm{irrep}} $ carries an isomorphic irreducible representation.  
 \end{thm}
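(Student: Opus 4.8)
\emph{Proof proposal.} The plan is to recognize Theorem \ref{thm:nc_1_decomposition} as an application of Schur's lemma, the crucial input being that $W$ (defined in \eqref{def:W}) is an intertwiner for the three-copy adjoint representation of $\Cl$. Indeed, that representation sends $g\in\Cl$ to the map $X\mapsto U_g^{\otimes 3}\,X\,{U_g^\dagger}^{\otimes 3}$, which performs the \emph{same} conjugation on each of the three tensor slots, whereas each $W_\sigma$ merely permutes the slots; these two operations commute for every $\sigma\in\Sr$ and every $g$, so $W=W_{[(123)]}-W_{[(132)]}$ commutes with the entire three-copy adjoint action. Moreover, since every $U_g$ is unitary this representation is unitary with respect to the Hilbert--Schmidt inner product on $\BCr$ (the one for which the Pauli basis is orthogonal), so $\ker W$ is a subrepresentation and hence so is its orthogonal complement $\supp W=(\ker W)^\perp$; in particular $V^{(\mathrm{nc})}_x\cap\supp W$ is a genuine subrepresentation, and speaking of an irreducible constituent $V^{(\mathrm{nc})}_{\mathrm{irrep}}$ inside it is meaningful.

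The proof itself then has just two steps. First, $W$ is injective on $V^{(\mathrm{nc})}_{\mathrm{irrep}}$: if $v\in V^{(\mathrm{nc})}_{\mathrm{irrep}}$ with $Wv=0$ then $v\in\ker W$, and since $v$ also lies in $V^{(\mathrm{nc})}_{\mathrm{irrep}}\subseteq\supp W=(\ker W)^\perp$ we get $\s{v}{v}=0$, i.e.\ $v=0$. Second, because $W$ commutes with the $\Cl$-action, the restriction $W|_{V^{(\mathrm{nc})}_{\mathrm{irrep}}}\colon V^{(\mathrm{nc})}_{\mathrm{irrep}}\to W\,V^{(\mathrm{nc})}_{\mathrm{irrep}}$ is a $\Cl$-equivariant linear map, injective by the first step and surjective onto its image by construction, hence an isomorphism of $\Cl$-modules. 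Therefore $W\,V^{(\mathrm{nc})}_{\mathrm{irrep}}$ carries a representation isomorphic to the irreducible $V^{(\mathrm{nc})}_{\mathrm{irrep}}$, and in particular is itself irreducible, which is the assertion. As a consistency check one computes $W^{2}=W_{[(123)]}+W_{[(132)]}-2\unity$, which, the sum of the two three-cycles being central in $\C\Sr$, acts as the scalar $-3$ on $\supp W$; thus $W$ restricts to a bijection of $\supp W$ onto itself and $W\,W\,V^{(\mathrm{nc})}_{\mathrm{irrep}}=V^{(\mathrm{nc})}_{\mathrm{irrep}}$.

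I do not expect the Schur step to be the obstacle. The only part that takes real work — and where the hypothesis that $\Sr$ act non-trivially enters — is identifying $\supp W$. On any $\Sr$-subrepresentation isomorphic to the trivial or the sign representation both three-cycles act as $+1$, so $W$ annihilates those two isotypic components; hence $\supp W$ is exactly the isotypic component of the two-dimensional standard representation of $\Sr$ inside $\BCr$. Consequently the two hypotheses together — that $\Sr$ act non-trivially on $V^{(\mathrm{nc})}_{\mathrm{irrep}}$ and that $V^{(\mathrm{nc})}_{\mathrm{irrep}}\subseteq\supp W$ — amount to requiring that $V^{(\mathrm{nc})}_{\mathrm{irrep}}$ have nonzero standard-representation content, equivalently $W\,V^{(\mathrm{nc})}_{\mathrm{irrep}}\ne 0$, which is what keeps the statement non-vacuous. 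Determining which irreducible $\Cl$-constituents of $V^{(\mathrm{nc})}_d$, $V^{(\mathrm{nc})}_1$, $V^{(\mathrm{nc})}_2$ and $V^{(\mathrm{nc})}_{\mathrm{null}}$ from Theorem \ref{thm:nc_inv_subsp} actually land in this standard-representation isotypic component — i.e.\ working out the $\Sr$-module structure of each of those subspaces — is the bookkeeping carried out in the appendix, after which Theorem \ref{thm:nc_1_decomposition} follows from the two steps above.
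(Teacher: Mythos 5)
Your proof is correct and rests on the same two pillars as the paper's: $W$ commutes with the three-copy adjoint action of $\Cl$ (so it is a $\Cl$-intertwiner), and on an irreducible piece an intertwiner is either zero or an isomorphism onto its image. Where you diverge slightly is in how you rule out the zero case: you invoke the hypothesis $V^{(\mathrm{nc})}_{\mathrm{irrep}}\subseteq\supp W=(\ker W)^\perp$ directly, so $\ker W\cap V^{(\mathrm{nc})}_{\mathrm{irrep}}=\{0\}$ and $W$ restricts to an injection. The paper instead observes that all of the Theorem~\ref{thm:nc_inv_subsp} subspaces live in $\supp\frac{1}{2}(\mathrm{id}+\Wa{[(23)]})$ and that $W$ anti-commutes with $\Wa{[(23)]}$, so that $W$ carries the $+1$-eigenspace of $\Wa{[(23)]}$ into the $-1$-eigenspace; this buys, as a by-product, the fact that $W\,V^{(\mathrm{nc})}_{\mathrm{irrep}}$ is \emph{orthogonal} to $V^{(\mathrm{nc})}_{\mathrm{irrep}}$, i.e.\ that one genuinely obtains a fresh copy of the irrep. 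Your proof does not record that orthogonality, but the statement of the theorem does not actually assert it, so nothing is missing; if you wanted the sharper conclusion you could append the one-line anticommutation argument. Your side remarks — that $\supp W$ is exactly the standard-representation isotypic component of $\Sr$, and that $W^{2}=\Wa{[(123)]}+\Wa{[(132)]}-2\,\mathrm{id}$ acts as $-3$ on $\supp W$ — are both correct and give a clean sanity check. One notational nit: in your injectivity step you wrote $\s{v}{v}=0$, but $\s{\cdot}{\cdot}$ is reserved in this paper for the symplectic bilinear form on $\mathbb{F}_2^{2m}$; you mean the Frobenius inner product $\bk{v}{v}=0$ on $\BCr$.
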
 The proof is given in Appendix \ref{sec:appendix:nc_W}.
 
\begin{thm}
 \label{thm:c_inv_subsp}
 In the following, we give some invariant subspaces which lie in the following subspace: \begin{equation*} \mathrm{span}\left\{ \ \ket{a,b,a+b} \ |  \ a,b \in \Fs, \mathrm{such} \ \mathrm{that} \ \s{a}{b}=0, \ \mathrm{and} \ a \neq b \  \right\}. \end{equation*} To define these invariant subspaces, we need to define the following. For any $a \in \Fs$, we define 
     \begin{equation}
         \label{eq:c_a}
         \ket{\Bar{a}} \ := \  \sqrt{\dfrac{2}{N}} \ \sum_{\substack{h:\s{h}{a}=0 \\ h \neq 0,a}} \  \ket{a,h,a+h}.
     \end{equation} \noindent Using the above, we define the following subspace.
 \begin{equation}
         \label{eq:c_c_dd}
         V^{(\mathrm{c})}_d \ := \  \mathrm{span} \left\{ \ket{\Bar{a}} \ | \ a \in \Fs \right\}.
     \end{equation} \noindent This subrepresentation is isomorphic to the diagonal sector in \cite{Helsen2016} (see Lemma 4 in \cite{Helsen2016}), with the intertwiner given by
     \begin{equation}
         \label{eq:c_d_intertwiner}
         L^{\mathrm{d}\rightarrow (\mathrm{c})} \ \ket{a,a} \ =  \ket{\hat{a}}, \ \forall \ a \in \Fs,
     \end{equation}
     where $\ket{a,a} \in \BCt$. The subspace $V^{(\mathrm{c})}_d$ carries three irreducible representations of the Clifford group.
\end{thm}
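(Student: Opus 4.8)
The plan is to mirror the strategy used for the non-commuting sector (Theorem \ref{thm:nc_inv_subsp}, part (i)), adapting it to the case $\s{a}{b}=0$. First I would verify that the vectors $\ket{\Bar{a}}$ are well-defined and normalized: the sum in \eqref{eq:c_a} runs over the $h \in \Fs$ with $\s{h}{a}=0$ and $h \neq 0, a$, of which there are $N/2 - 2$ (since the centralizer of $a$ in $\Fm$ under the symplectic form has $N$ elements, two of which are $0$ and $a$), so with the prefactor $\sqrt{2/N}$ the norm is $\sqrt{(2/N)(N/2 - 2)} = \sqrt{1 - 4/N}$, which is fine for $m \ge 3$ (and can be absorbed into a normalization constant). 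Then I would show that $\mathrm{span}\{\ket{\Bar a}\}$ is invariant under the three-copy adjoint action of $\Cl$. This is where I would use that transvection Cliffords generate $\Cl$: it suffices to check invariance under each $U_{E,F}^{\otimes 3}(\cdot)(U_{E,F}^\dag)^{\otimes 3}$, which by \eqref{eq:Cl_transvection_adjoint_action} acts on a basis vector $\ket{x,y,z}$ (with $x+y+z=0$) by a signed permutation-free map sending each label $w$ to $w$ or to (the $\F_2$-vector corresponding to) $EwE$ depending on whether $w$ commutes or anticommutes with $E$ — i.e. $w \mapsto w$ or $w \mapsto w + e$ where $e$ is the vector of $E$. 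The key point is that this action is the triple tensor of the single-copy adjoint action, and the single-copy adjoint action already stabilizes the "diagonal sector" of $\BCt$; I would push that structure through the intertwiner.

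Concretely, the cleanest route is to exhibit the intertwiner \eqref{eq:c_d_intertwiner} directly and prove it is an intertwiner, which simultaneously gives invariance and the isomorphism claim. So the main step is: define $L^{\mathrm{d}\to(\mathrm c)} : \ket{a,a} \mapsto \ket{\Bar a}$ on the diagonal sector $\mathrm{span}\{\ket{a,a} : a \in \Fs\} \subset \BCt$ (this is the subspace carrying the subrepresentation of Lemma 4 of \cite{Helsen2016}), and check that for every transvection Clifford $U_{E,F}$,
\begin{equation*}
L^{\mathrm{d}\to(\mathrm c)} \circ \mathcal{G}^{(2)}_{U_{E,F}} \ = \ \mathcal{G}^{(3)}_{U_{E,F}} \circ L^{\mathrm{d}\to(\mathrm c)},
\end{equation*}
where $\mathcal{G}^{(t)}_{U}(X) = U^{\otimes t} X (U^\dag)^{\otimes t}$. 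Since the adjoint action on labels is $w \mapsto w$ or $w \mapsto w+e$ uniformly across all tensor slots, both sides send $\ket{a,a}$ (resp. $\ket{\Bar a}$) to the analogous object built from the image label $a'$ of $a$, up to a global sign $(\pm 1)$ that is the same on both sides because it only depends on how many of the (two, resp. three) slots anticommute with $E$ — and the image vector $\ket{\Bar{a'}}$ reindexes correctly because the map $h \mapsto h' := h$ or $h+e$ is a bijection of $\Fs$ preserving both the conditions $\s{h}{a}=0 \Leftrightarrow \s{h'}{a'}=0$ and $h \neq 0,a \Leftrightarrow h' \neq 0, a'$ (the symplectic form is preserved by the symplectic transvection associated with $E$). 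Once $L^{\mathrm{d}\to(\mathrm c)}$ is shown to be a nonzero intertwiner and injective on the diagonal sector, Schur's lemma applied irrep-by-irrep gives that $V^{(\mathrm c)}_d$ is isomorphic to the diagonal sector; the final sentence ("carries three irreducible representations") then follows from the known decomposition of the diagonal sector in Lemma 4 of \cite{Helsen2016}.

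The main obstacle I anticipate is bookkeeping the global signs and, more subtly, confirming that $L^{\mathrm{d}\to(\mathrm c)}$ is actually injective (equivalently, that the $\ket{\Bar a}$ are linearly independent and none collapse to zero). Injectivity is what makes the isomorphism — as opposed to merely a quotient — work; I would establish it either by a direct Gram-matrix computation (showing $\s{\Bar a}{\Bar b}$ depends only on whether $\s{a}{b}=0$ or $1$ and on $a=b$, giving a matrix of the form $\alpha I + \beta(\text{symplectic adjacency}) + \ldots$ that is invertible for $m \ge 3$), or by noting that the Clifford group acts transitively on $\Fs$ so all $\ket{\Bar a}$ have equal nonzero norm and the span has the same dimension as the diagonal sector. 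A secondary subtlety is the case $E = \unity$ (Remark \ref{rem:Cl_transvection1}), where $U_{\unity,F}$ acts as $e^{\pm i\pi/4}F$ and hence trivially in the adjoint representation — this is harmless but should be mentioned. The condition "$a \neq b$" in the ambient subspace and the restriction $\s{a}{b}=0$ are exactly what guarantee the $h$-sum is over a nonempty index set and that the resulting vectors land in the commuting sector, so I would flag at the outset that these hypotheses are used to keep the construction nondegenerate.
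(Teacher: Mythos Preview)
Your proposal is correct and follows essentially the same route as the paper, which simply states that the proof is identical to that of $V^{(\mathrm{nc})}_d$ in Theorem~\ref{thm:nc_inv_subsp}: exhibit the linear map $\ket{a,a}\mapsto\ket{\Bar a}$, verify it intertwines the transvection action (via the analogue of Lemma~\ref{lem:Uh_A_action}), and invoke that Paulis act trivially while transvection Cliffords together with Paulis generate $\Cl$. One minor slip: the number of $h$ in the sum \eqref{eq:c_a} is $N^2/2-2$, not $N/2-2$ (recall $|\Fm|=N^2$), and by Lemma~\ref{lem:Pauli_basis_choice} the sign bookkeeping you flag as an obstacle is in fact vacuous on $\supp\EP$.
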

The proof is the identical to the proof of the subspace $V^{(\mathrm{nc})}_d$ in Theorem \ref{thm:nc_inv_subsp}.

%


\section{Proofs}
\label{sec:pfs}

While the proofs of Theorems \ref{thm:asymptotic_unitary_2_design} and \ref{thm:asymptotic_unitary_3_design} are covered in subsection \ref{subsec:cgt_analysis}, prerequisites of the proofs are given in the next few subsections.  

\subsection{Basics: representations of finite groups on finite dimensional complex vector spaces}
\label{subsec:representation_groups}
\noindent Let $G$ be a finite group. A representation of $G$ on $\C^d$ is a map $\lam:G\longrightarrow \mathcal{B}(\C^d)$ which satisfies the following.
\begin{equation}
    \label{eq:representation_def}
    \lam(g) \ \lam(h) \ = \ \lam(g.h), \ \forall \ g,h \in G.
\end{equation}
If there exists a subspace $W$ in $\C^d$ such that for all $w \in W$, $\lam(g) \ w \in W$, then $\lam$ is said to be a reducible representation and $W$ is said to be an invariant subspace of $\C^d$ under the action of $\lam$, and $\lam$ restricted to $W$, i.e., $\lam_{W}:=\lam\big|_{W}$ is a sub-representation of $\Lambda$. If no such $W$ exists then $\lam$ is said to be irreducible. For two representations $\lam_1$ on $\C^{d_1}$ and $\lam_2$ on $\C^{d_2}$, if there exists a non-singular linear operator $X:\C^{d_1} \longrightarrow \C^{d_2}$, such that 
\begin{equation}
    \label{eq:def_intertwiner}
    \lam_2 (g) \ X \ \ket{v_1} \ = \ X \ \lam_1 (g) \ \ket{v_1}, \ \forall \ \ket{v_1}\in \C^{d_1}, \ \mathrm{and} \ \forall \ g \in G,
\end{equation}
then $\lam_1$ and $\lam_2$ are said to be equivalent representations, and $X$ is called the intertwiner between them. If $\lam_1$ and $\lam_2$ are such that the only solutions for $X$ are singular, then $\lam_1$ and $\lam_2$ are said to inequivalent representations. If $d_1 \neq d_2$, $\Lambda_1$ and $\Lambda_2$ are inequivalent. Given the Euclidean inner product on $\C^d$, and some representation $\lam$ on $\C^d$, there exists an equivalent unitary representation of $\lam$ on $\C^d$. We henceforth assume that all representations of all groups are unitary, i.e., $\lam: G \longrightarrow \U(d)$.  For finite dimensional groups, one can decompose a vector space $\C^d$ as a direct sum of orthogonal subspaces, $\C^d = \oplus_k \ W_k$, such that $W_k$ are invariant under $\lam$, and $\lam$ restricted to $W_k$, i.e., $\lam \big|_{W_k}$ is irreducible.

\subsection{Important lemmas in the Pauli group}
\label{subsec:Pauli_1}
\noindent  

Let $\Fs$ denote $\Fm \setminus \left\{ 0 \right\}$. We will need the following lemma repeatedly for this work. Their proofs are given in Appendix \ref{subsec:appendix_lem_1_pf}.
\begin{lem}
\label{lem:1}
Let $a \in {\F}_*^{2m}$. The number of $b \in \Fm$ which satisfy the equation $\s{a}{b}=1$ equals $\frac{N^2}{2}$, and the number of $b \in \Fm$ which satisfy the equation $\s{a}{b}=0$ equals $\frac{N^2}{2}$. This set includes $a,0$ since $\s{a}{0}=\s{a}{a}=0$. 
More generally, for $l \le 2m$, let $a_1$,$a_2$, $\cdots$, $a_l$ be a set of vectors in $\F^{2m}$, let $\dim \mathrm{span} \left\{a_j\right\}_{j=1}^l=d$. For some fixed choice of $b \in \F^{l}$, the set $ \ \left\{ \ h \in \Fm, \ \mathrm{s.t.}  \ \s{a_1}{h} = b_1, \ \s{a_2}{h}=b_2, \ \cdots \s{a_l}{h} = b_l \ \right\}$ has (i) no solutions of the equations $\s{a_j}{h}=b_j$ are inconsistent, (ii) $\dfrac{N^2}{2^d}$ solutions, if the equations are consistent. 
\end{lem}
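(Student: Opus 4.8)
The plan is to reduce the whole statement to elementary linear algebra over $\F_2$. First I would dispatch the special case ($l=1$). Fix $a \in \Fs$ and consider the functional $\phi_a : \Fm \to \F_2$ given by $\phi_a(b) = \s{a}{b} = a^{T}\Om\, b$. Because $\Om$ (equation \eqref{eq:Omega}) merely interchanges the first $m$ and last $m$ coordinates, the row vector $a^{T}\Om$ is nonzero whenever $a \neq 0$; hence $\phi_a$ is a nonzero $\F_2$-linear functional, so it is surjective onto $\F_2$ and its kernel has index $2$ in $\Fm$. Since $|\Fm| = N^{2}$, this gives $|\{b : \s{a}{b}=0\}| = |\ker\phi_a| = N^{2}/2$ and, for the unique nontrivial coset, $|\{b : \s{a}{b}=1\}| = N^{2}/2$. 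The elements $0$ and $a$ lie in $\ker\phi_a$ because $\s{a}{0}=\s{a}{a}=0$.

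For the general statement I would assemble the $l$ constraints into one linear map. Let $M$ be the $l \times 2m$ matrix over $\F_2$ whose $j$-th row is $a_j^{T}$, and define $T : \Fm \to \F_2^{l}$ by $T(h) = M\Om\, h$, so that $T(h) = \bigl(\s{a_1}{h},\dots,\s{a_l}{h}\bigr)^{T}$. The set we must count is exactly the preimage $T^{-1}(b)$. Standard linear algebra then yields the dichotomy: if $b \notin \range(T)$ the system is inconsistent and $T^{-1}(b) = \emptyset$ (case (i)); if $b \in \range(T)$ then $T^{-1}(b)$ is a coset of $\ker T$, hence of cardinality $|\ker T|$.

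It remains to compute $|\ker T|$. Since $\Om$ is invertible over $\F_2$ (in fact $\Om^{2} = \unity_{2m}$), right multiplication by $\Om$ is a bijection of $\Fm$, so $\mathrm{rank}(T) = \mathrm{rank}(M\Om) = \mathrm{rank}(M)$; and $\mathrm{rank}(M)$ equals the dimension of the row space of $M$, which is $\dim\,\mathrm{span}\{a_j\}_{j=1}^{l} = d$. By rank--nullity, $\dim\ker T = 2m - d$, so $|\ker T| = 2^{2m-d} = N^{2}/2^{d}$, which is case (ii); taking $l=1$, $a_1 = a$, $d=1$ recovers the special case.

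The argument is routine; the only points needing a little care are stating the consistency split correctly — inconsistency is precisely $b \notin \range(T)$, which really can occur once the $a_j$ are linearly dependent — and deriving $\mathrm{rank}(M\Om) = \mathrm{rank}(M)$ from the invertibility of $\Om$ rather than from any special feature of the symplectic form. I do not anticipate any deeper obstacle.
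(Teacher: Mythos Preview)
Your proof is correct and follows essentially the same approach as the paper: both define the linear map $h \mapsto (\s{a_1}{h},\ldots,\s{a_l}{h})$ from $\Fm$ to $\F_2^{l}$, invoke rank--nullity to get a kernel of size $N^2/2^d$, and observe that each preimage is either empty or a coset of that kernel. Your version is slightly more explicit in justifying why the map has rank exactly $d$ (via the invertibility of $\Om$), a point the paper states without elaboration.
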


\subsection{Adjoint representation and inner product on \texorpdfstring{$\BCg$}{}}
\label{eq:sp_linear_operators} 
We will use the well-known Frobenius inner product between two linear operators $A,B \in \BCg$: for $A,B \in \BCg$ represented as $\ket{A}$ and $\ket{B}$ respectively, it is defined as
\begin{equation}
    \label{eq:inner_product_bcg}
    \bk{A}{B} \ = \  \tr \  A^\dag \ B.
\end{equation} 
It is well-known that any quantum channel $\mathrm{\Phi}$ admits a Kraus representation, that is, some set of operators $F_k \in \BCg$, such that for any $X \in \BCg$, $\mathrm{\Phi}(X) \ = \ \sum_{k} \ F_k \ X \ F_k^\dag$, where $\sum_{k}F_k^\dag F_k = \unity$. The inner product on $\BCg$ gives us the adjoint of $\mathrm{\Phi}$: $\mathrm{\Phi}^\dag$ such that for all $A,B$, 
\begin{equation}
    \label{eq:super_operator_adjoint}
 \tr \  A^\dag \ \mathrm{\Phi} \left( B  \   \right) \ = \ \tr  \ \left(\mathrm{\Phi}^\dag ( A )\right)^\dag B,
\end{equation} hence we have the following lemma.
\begin{lem}
\label{lem:Phi_adjoint}
For any quantum channel $\mathrm{\Phi}$ with Kraus operators $F_k$, $\mathrm{\Phi}$ and $\mathrm{\Phi^\dag}$ are related in the following way.
\begin{equation}
    \label{eq:Phi_adjoint}
    \mathrm{\Phi}  (X) = \sum_{k} F_k \ X F_k^\dag \ \Longrightarrow \mathrm{\Phi}^\dag  (X) \ = \ \sum_k \ F_k^\dag \ X \ F_k.
\end{equation}
\end{lem}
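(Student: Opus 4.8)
The plan is to verify the identity directly from the defining property of the adjoint superoperator stated in equation \eqref{eq:super_operator_adjoint}, namely that $\mathrm{\Phi}^\dag$ is the unique linear map on $\BCg$ satisfying $\tr\, A^\dag\, \mathrm{\Phi}(B) = \tr\, \big(\mathrm{\Phi}^\dag(A)\big)^\dag\, B$ for all $A,B \in \BCg$. Uniqueness is what makes this a definition at all: the Frobenius pairing $\bk{A}{B} = \tr\, A^\dag B$ is non-degenerate, so if two linear maps $\Psi_1,\Psi_2$ both satisfy this relation then $\tr\, \big(\Psi_1(A)-\Psi_2(A)\big)^\dag B = 0$ for every $B$, which forces $\Psi_1(A)=\Psi_2(A)$ for every $A$. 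Hence it suffices to exhibit one map obeying the relation.

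Next I would substitute the Kraus form $\mathrm{\Phi}(B) = \sum_k F_k B F_k^\dag$ into the left-hand side and simplify using linearity and cyclicity of the trace:
\[
\tr\, A^\dag\, \mathrm{\Phi}(B) \ = \ \sum_k \tr\, A^\dag F_k B F_k^\dag \ = \ \sum_k \tr\, \big(F_k^\dag A^\dag F_k\big) B .
\]
Applying the involution property $(XYZ)^\dag = Z^\dag Y^\dag X^\dag$ gives $F_k^\dag A^\dag F_k = \big(F_k^\dag A F_k\big)^\dag$, so the expression becomes $\tr\, \big(\sum_k F_k^\dag A F_k\big)^\dag B$. Comparing with the defining relation for $\mathrm{\Phi}^\dag$ and invoking the uniqueness just established yields $\mathrm{\Phi}^\dag(A) = \sum_k F_k^\dag A F_k$, which is exactly the claimed formula.

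There is no substantive obstacle here; the computation is routine. The only points that deserve a line of care are (i) justifying that relation \eqref{eq:super_operator_adjoint} pins down $\mathrm{\Phi}^\dag$ uniquely, via non-degeneracy of the trace pairing, and (ii) tracking the order reversal under the dagger when moving $F_k^\dag$ and $F_k$ through the product and converting $A^\dag$ back to $A$. One may optionally note in passing that the resulting map $X \mapsto \sum_k F_k^\dag X F_k$ is again completely positive, and that it is trace preserving precisely when $\sum_k F_k F_k^\dag = \unity$ (i.e.\ when $\mathrm{\Phi}$ is unital); but only the adjoint identity is required for the statement as written.
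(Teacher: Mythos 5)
Your proof is correct and follows essentially the same route as the paper's: substitute the Kraus form, use cyclicity of the trace, and pull $F_k^\dag(\cdot)F_k$ inside the dagger via $F_k^\dag A^\dag F_k = (F_k^\dag A F_k)^\dag$. The only addition is the explicit uniqueness-via-non-degeneracy remark, which the paper leaves implicit; the substance of the computation is identical.
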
 \noindent The proof is in Subsection \ref{subsec:pf_lem_Phi_adjoint} of the Appendix.
\begin{corr}
\label{corr:selfadjoint_kraus}
A sufficient condition for a quantum channel to be self-adjoint is that the adjoint of the Kraus operators $F_k$ have the followin property: $F_k^\dag = \pm F_k$. Then the quantum channel admits a spectral decomposition into orthogonal eigenspaces.
\end{corr}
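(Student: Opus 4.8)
The plan is to derive this directly from Lemma \ref{lem:Phi_adjoint} together with the finite-dimensionality of $\BCg$ as a complex inner-product space, so the corollary is essentially a two-line consequence. First I would write $\mathrm{\Phi}(X) = \sum_k F_k \, X \, F_k^\dag$ and apply Lemma \ref{lem:Phi_adjoint} to obtain $\mathrm{\Phi}^\dag(X) = \sum_k F_k^\dag \, X \, F_k$. Substituting the hypothesis $F_k^\dag = s_k F_k$ with $s_k \in \{+1,-1\}$ into both expressions gives $\mathrm{\Phi}(X) = \sum_k s_k \, F_k \, X \, F_k$ and $\mathrm{\Phi}^\dag(X) = \sum_k s_k \, F_k \, X \, F_k$: each term picks up exactly one factor $s_k$ in either case, so the signs enter identically and $\mathrm{\Phi} = \mathrm{\Phi}^\dag$ as superoperators on $\BCg$. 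Note this works even when the family $\{F_k\}$ contains a mixture of self-adjoint and skew-adjoint operators.

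Second, I would recall that $\BCg$ equipped with the Frobenius inner product \eqref{eq:inner_product_bcg} is a finite-dimensional complex Hilbert space, and that by the previous step $\mathrm{\Phi}$ is a self-adjoint (Hermitian) linear operator on it. The finite-dimensional spectral theorem then furnishes an orthogonal direct-sum decomposition $\BCg = \bigoplus_\lambda E_\lambda$ into eigenspaces $E_\lambda$ of $\mathrm{\Phi}$, mutually orthogonal with respect to the Frobenius inner product and with real eigenvalues $\lambda$. This is precisely the claimed spectral decomposition into orthogonal eigenspaces.

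There is essentially no obstacle here; the only points worth stating carefully are (i) the sign bookkeeping showing that the $s_k$ do not obstruct $\mathrm{\Phi} = \mathrm{\Phi}^\dag$, and (ii) that self-adjointness is meant with respect to the Frobenius inner product on operator space rather than any operator-composition structure. With these observations the statement follows immediately from Lemma \ref{lem:Phi_adjoint} and the spectral theorem.
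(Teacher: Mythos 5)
Your proof is correct and is essentially the argument the paper intends (the paper states Corollary \ref{corr:selfadjoint_kraus} as an immediate consequence of Lemma \ref{lem:Phi_adjoint} without spelling out the details): the sign bookkeeping $s_k^2=1$ shows $\mathrm{\Phi}=\mathrm{\Phi}^\dag$, and the finite-dimensional spectral theorem on $\left(\BCg,\ \langle\cdot,\cdot\rangle_{\mathrm{Frob}}\right)$ then gives the orthogonal eigenspace decomposition.
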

\noindent For any $U \in \U(N)$, one can define the adjoint representation of $\U(N)$ on $\BCg$ as 
\begin{equation}
    \label{eq:adjoint_UN_BCG}
    \mathcal{U}_U \ket{X} \ =  \ U^{\otimes t} \ X {U^\dag}^{\otimes t}. 
\end{equation}
Note that for any $A, B \in \BCg$,
 \begin{equation}
     \label{eq:unitarity_adjoint}
     \left( \Ua{U} \ket{A}\right)^\dag \left(\Ua{U} \ket{B} \right) \ =  \ \tr \left( \left( U^{\otimes t} A {U^\dag}^{\otimes t} \right)^\dag \ \left(  U^{\otimes t} B {U^\dag}^{\otimes t}\right)  \right) \ = \ \tr \ A^\dag B \ = \bk{A}{B},
 \end{equation}
 which proves that the adjoint representation of $\U(N)$ defined by the action \eqref{eq:adjoint_UN_BCG} is unitary with respect to the inner product which was defined in \eqref{eq:inner_product_bcg}.
 
\subsection{Clifford group and transvections revisited}
\label{subsec:Clifford}
\noindent It is well known that the quotient group of the Clifford group by the Pauli group, i.e., $\Cl / \Pmbb$ is the symplectic group $\Sp$:
 \begin{equation}
     \label{eq:factor_group}
     \Sp \ \simeq \ \Cl / \hat{\mathcal{P}}_m,
 \end{equation}
 where $\Sp$ is the group of $2m \times 2m$ matrices, which satisfy the symplectic condition, i.e.,
 \begin{equation}
     \label{eq:Sp_condition}
     S \in \Sp \ \Longleftrightarrow \ S^T \ \Om \  S = \Om,
 \end{equation} \noindent and the homomorphism between $\Cl$ and $\Sp$ is clearly visible in the following well-known transformation law (see \cite{Gross2005} for more details).
 \begin{equation}
     \label{eq:Clifford_adjoint_action}
     U \ E(a) \ U^\dag \ = \ \pm \ E\left(S_U \ a \right), \ \forall \ a \in \Fm,
 \end{equation} where we have denoted $S_U$ to be the symplectic transformation corresponding to the Clifford unitary $U$. Thus the adjoint representation of the Clifford group on $\BCg$ takes the following form.
 \begin{equation}
     \label{eq:adjoint_Clifford_BCg}
     \Ua{g} \ \ket{a_1,a_2,\cdots, a_t} \ = \ \pm \  \ket{S_g \ a_1, \ S_g \ a_2, \ \cdots, \ S_g \ a_t}, \ \forall \ g \in \Cl.
 \end{equation} The adjoint transformation under some Pauli $E(h)$ is
 \begin{equation}
     \label{eq:adjoint_transformation_Pauli}
     \Ua{E(h)} \ \ket{a_1,a_2,\cdots, a_t} \ = \  \ (-1)^{\s{h}{s}} \   \ket{a_1,a_2,\cdots,a_t}, \ \mathrm{where} \  s = \sum_{j=1}^{t} \ a_j. \end{equation} This gives us the action of the channel $\EP$:
     \begin{align}
         \label{eq:adjoint_action_EP}
         \EP \ \ket{a_1,\cdots,a_t} \ = \ \begin{cases}
           \ \ket{a_1,\cdots,a_t}, \ \mathrm{when} \ \sum_{j=1}^t \ a_j = 0, \\
           \ 0, \, \, \,\, \, \, \, \, \, \,\, \, \, \, \, \, \,  \, \, \, \, \,  \, \, \, \, \,     \mathrm{otherwise},
         \end{cases}
     \end{align} where we used Lemma \ref{lem:1} to prove the equation at the bottom.  Recall from Subsection \ref{subsec:Clm_transvections} the definition of transvection Clifford unitaries. From equation \eqref{eq:Cl_transvection_adjoint_action}, we see that the symplectic matrices for $U_{E(h),F}$ for all $F \in \Pmbb$ are 
\begin{equation}
    \label{eq:Th_matrix_form}
    T_h \ = \ \unity_{2m}  \ + \ h. h^T \Om, \ \mathrm{and} \ T_h \ a \ = \ a + h \s{a}{h}, \ \forall \ a \in \Fm.
\end{equation} 
It is easily verified that $T_h^2 \ = \ \unity_{2m}$. For $h = 0$, one gets $T_0 \ = \ \unity_{2m}$. We also consider $T_0$ to be a transvection. Thus there are $N^2$ transvections.
\begin{lem}
\label{lem:transvections_conjugacy_class}
The set of all transvections $T_h$ for $h \in \Fs$ forms a conjugacy class.
\end{lem}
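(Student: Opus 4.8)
The plan is to show that the set $\{T_h : h \in \Fs\}$ is closed under conjugation by the symplectic group $\Sp$ and that $\Sp$ acts transitively on it; since conjugacy classes are exactly the orbits of the conjugation action, and since every conjugacy class of the Clifford group projects (via the homomorphism $\Cl \to \Sp$) onto a conjugacy class of $\Sp$ when restricted to non-Pauli elements, this suffices. First I would compute, for any $S \in \Sp$ and $h \in \Fs$, the conjugate $S T_h S^{-1}$ directly from the explicit form $T_h = \unity_{2m} + h\, h^T \Om$ given in equation \eqref{eq:Th_matrix_form}. Using $S \unity_{2m} S^{-1} = \unity_{2m}$, this reduces to evaluating $S\, h\, h^T \Om\, S^{-1}$. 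The key algebraic fact is that the symplectic condition $S^T \Om S = \Om$ (equation \eqref{eq:Sp_condition}) is equivalent to $\Om S^{-1} = S^T \Om$, hence $h^T \Om S^{-1} = h^T S^T \Om = (Sh)^T \Om$. Substituting, $S T_h S^{-1} = \unity_{2m} + (Sh)(Sh)^T \Om = T_{Sh}$. This single computation shows simultaneously that the set $\{T_h\}_{h \in \Fs}$ is closed under conjugation by $\Sp$ (since $Sh \in \Fs$ whenever $h \in \Fs$ and $S$ is invertible) and that the orbit of any fixed $T_{h_0}$ contains every $T_h$ for which $h$ lies in the $\Sp$-orbit of $h_0$.

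It then remains to verify transitivity: that $\Sp$ acts transitively on $\Fs = \Fm \setminus \{0\}$. This is the standard fact (Witt's theorem for symplectic forms over $\F$) that $\Sp$ is transitive on non-zero vectors — indeed, even a single transvection $T_{h_0}$ together with others already moves a given non-zero vector around, but the cleanest argument is just to cite transitivity of the symplectic group on non-zero vectors of the symplectic space. One can also exhibit it constructively: given $a, b \in \Fs$, if $\s{a}{b} = 1$ then $T_{a+b}$ swaps $a$ and $b$ (check: $T_{a+b} a = a + (a+b)\s{a}{a+b} = a + (a+b) = b$), and if $\s{a}{b} = 0$ one can find an intermediate $c$ with $\s{a}{c} = \s{c}{b} = 1$ (possible since $m \ge 1$ gives enough room, by Lemma \ref{lem:1}) and compose two such transvections. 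Combining closure under $\Sp$-conjugation with transitivity on $\Fs$ yields that $\{T_h : h \in \Fs\}$ is a single $\Sp$-orbit, i.e.\ a conjugacy class of $\Sp$.

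Finally, I would translate this back to a statement about the transvection Cliffords $U_{E,F}$ if desired, via the homomorphism $\Cl \to \Sp$ of equation \eqref{eq:factor_group}–\eqref{eq:Clifford_adjoint_action}: conjugating $U_{E(h),F}$ by a Clifford $U_g$ sends it (up to Pauli and phase) to a transvection Clifford with symplectic part $S_g T_h S_g^{-1} = T_{S_g h}$, so the transvection Cliffords likewise form (modulo the Pauli subgroup) a single conjugacy class. The main obstacle is essentially bookkeeping rather than a genuine difficulty: one must be careful that the symplectic identity is used in the correct form ($\Om S^{-1} = S^T \Om$ versus $S^{-1} \Om = \Om S^T$, etc.) when manipulating $h^T \Om S^{-1}$, and — if one wants the statement at the level of Cliffords rather than symplectics — one must check that the $\pm$ signs and the choice of $F$ with $\{E,F\}=0$ do not obstruct the conjugation, which follows because conjugation by a unitary preserves self-adjointness and anticommutation relations. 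I expect the whole argument to be short once the conjugation formula $S T_h S^{-1} = T_{Sh}$ is in hand.
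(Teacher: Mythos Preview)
Your proposal is correct and follows essentially the same route as the paper: both establish the conjugation formula $S T_h S^{-1} = T_{Sh}$ from the symplectic condition, then prove transitivity of $\Sp$ on $\Fs$ by using $T_{a+b}$ to send $a$ to $b$ when $\s{a}{b}=1$ and passing through an intermediate vector (found via Lemma~\ref{lem:1}) when $\s{a}{b}=0$. The only difference is cosmetic: the paper derives $S^{-1}=\Om S^T\Om$ first and does not spell out the Clifford-level translation you add at the end.
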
 \noindent The proof is in Subsection \ref{subsec:appendix_transvection_conjugacy_class}. The identity transvection will form a conjugacy class on its own. We will make use of the fact that the set of all transvections (identity included) is closed under conjugation by any $S \in \Sp$.  Another important fact about transvections is that they form a generating set for $\Sp$ (a simple way to see this: \footnote{Construct the symplectic matrices for the Hadamard, phase and CNOT gates. Since the Hadamard, phase and CNOT gates generate $\Cl$, and given the isomorphism in \eqref{eq:factor_group}, their symplectic matrix versions generate $\Sp$. Each Hadamard and phase gates are seen to correspond to some single-qubit transvection, whereas the CNOT can be decomposed into a product of at most two transvections, one a single-qubit and the other a two-qubit transvection.}). The action of $\ET$ is
\begin{equation}
    \label{eq:ET_adjoint_action}
    \ET \ \ket{a_1,a_2,\cdots,a_t} \ = \ \frac{1}{N^2} \  \sum_{h\in \Fm} \ \pm \  \ket{a_1+\s{a_1}{h} \ h, a_2+\s{a_2}{h} \ h, \cdots, a_t+\s{a_t}{h} \ h}.
\end{equation}
\begin{corr}
\label{corr:ET_EP_H_properties}
$\ET$, $\EP$ and $\H$ are self-adjoint with respect the definition of adjoint in Lemma  \eqref{lem:Phi_adjoint}. $\EP$ and $\H$ are projectors. In particular
\begin{align}
    \label{eq:ET_EP_commutativity}
    & \ \ET\circ\EP  \ = \ \EP \circ \ET, \ \\ 
    \label{eq:ET_EP_H}
    &  \ \H \circ \ET \ = \ \EP \circ \H \ = \H,
\end{align} and 
\begin{equation} 
\label{eq:commutation_EP_Clifford}
 \Ua{g}\circ \ET\circ\EP = \ET \circ\EP\circ \Ua{g}, \ \forall \ g \in \Cl.
 \end{equation}
 For $t=2$ and $t=3$, 
 \begin{equation}
 \label{eq:Haar_Clifford_relation}
 \H  \ = \  \mathcal{G}_{\Cl}. 
 \end{equation}
\end{corr}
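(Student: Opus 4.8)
The plan is to dispatch the assertions in the order self-adjointness, then the projector properties and the absorption identities \eqref{eq:ET_EP_H}, then the commutation relations \eqref{eq:ET_EP_commutativity} and \eqref{eq:commutation_EP_Clifford}, and finally \eqref{eq:Haar_Clifford_relation}; each reduces to the Kraus/twirl form of the three channels together with the facts that Cliffords normalise $\Pmb$ and that transvections form a conjugacy class in $\Sp$. For \emph{self-adjointness}: a Kraus family for $\EP$ is $\{\tfrac{1}{N} E^{\otimes t}\}_{E\in\Pmb}$, and since the chosen coset representatives of $\Pmb$ are Hermitian, Corollary~\ref{corr:selfadjoint_kraus} applies; likewise a Kraus family for $\ET$ is $\{\tfrac{1}{N} U_{E,F}^{\otimes t}\}$, and by the convention $\{E,F\}=0$ (Remark~\ref{rem:transvection_convention}) each $U_{E,F}$ is self-adjoint. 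For $\H$ the finite-Kraus hypothesis of Corollary~\ref{corr:selfadjoint_kraus} is not available, so I would verify \eqref{eq:super_operator_adjoint} directly: by cyclicity of the trace $\tr\!\big(A^\dag\H(B)\big)=\int d\mu(U)\,\tr\!\big((\Ua{U^{-1}}A)^\dag B\big)$, and substituting $U\mapsto U^{-1}$ (the Haar measure is inversion-invariant) turns this into $\tr\!\big(\H(A)^\dag B\big)$.

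\emph{Projectors and absorption.} From \eqref{eq:adjoint_action_EP}, $\EP$ is diagonal in the orthogonal Pauli basis of $\BCg$ with eigenvalues in $\{0,1\}$, hence an orthogonal projector. For $\H$ I would compute, using left-invariance of $\mu$, $\H\circ\H(X)=\int d\mu(U)\int d\mu(V)\,(UV)^{\otimes t}X(UV)^{\dag\otimes t}=\int d\mu(U)\,\H(X)=\H(X)$; with self-adjointness this makes $\H$ an orthogonal projector. The identities \eqref{eq:ET_EP_H} are of the same flavour: $\H(X)$ is fixed by every $\Ua{V}$, $V\in\U(N)$ (again invariance of $\mu$), in particular by $\Ua{E}$ for $E\in\Pmb$, so $\EP\circ\H=\H$; and $\H(U^{\otimes t}XU^{\dag\otimes t})=\H(X)$ for each fixed unitary $U$, so averaging over $\mathcal{T}$ gives $\H\circ\ET=\H$.

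\emph{Commutation relations.} The engine is the identity $\Ua{g}\circ\EP=\EP\circ\Ua{g}$ for every $g\in\Cl$: expanding $g^{\otimes t}E^{\otimes t}XE^{\otimes t}g^{\dag\otimes t}=(gEg^\dag)^{\otimes t}\big(g^{\otimes t}Xg^{\dag\otimes t}\big)(gEg^\dag)^{\otimes t}$ and using that $E\mapsto gEg^\dag$ permutes $\Pmb$ up to signs that cancel, the sum over $E$ reassembles into $N^2\,\EP(g^{\otimes t}Xg^{\dag\otimes t})$; averaging this over the transvection Cliffords in $\mathcal{T}$ yields \eqref{eq:ET_EP_commutativity}. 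For \eqref{eq:commutation_EP_Clifford} I would use Remark~\ref{rem:Cl_transvection2} to regard $\ET\circ\EP=\mathcal{G}_{\mathcal{T}\circ\mathcal{P}}$ as one twirl, and then show the ensemble $\mathcal{T}\circ\mathcal{P}$ is invariant, as a uniformly weighted set of unitaries modulo global phase, under conjugation by any $g\in\Cl$: conjugating $U_{E(h),F}$ by $g$ gives a Clifford with symplectic matrix $S_g T_h S_g^{-1}=T_{S_g h}$ (transvections are a conjugacy class, cf.\ Lemma~\ref{lem:transvections_conjugacy_class}), hence a unitary equal to $U_{E(S_g h),F}$ up to a Pauli factor; that residual Pauli, together with the conjugate of the $\Pmb$-part of the ensemble element, is absorbed into the $\Pmb$-slot of $\mathcal{T}\circ\mathcal{P}$, and the induced reindexing $(h,k)\mapsto(S_g h,k')$ is a bijection of $\Fm\times\Fm$ preserving the weights. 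A twirl over a conjugation-invariant ensemble commutes with those conjugations, which is \eqref{eq:commutation_EP_Clifford}.

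\emph{Clifford design and the main obstacle.} Equation \eqref{eq:Haar_Clifford_relation} for $t=2,3$ is the statement that the uniform ensemble on $\Cl$ is an exact unitary $3$-design, hence also a $2$-design; here I would cite \cite{Zhu15,Webb15} together with Definition~\ref{def:unitary_t_design} and \eqref{eq:defn_t_design_2}. The only step needing real care is the conjugation-invariance of $\mathcal{T}\circ\mathcal{P}$ used for \eqref{eq:commutation_EP_Clifford}: one must carefully track the Pauli (and phase) corrections produced when a transvection Clifford is conjugated by a general Clifford and check that, because the $\Pmb$-slot of $\mathcal{T}\circ\mathcal{P}$ already runs over all of $\Pmb$, these corrections merely permute the ensemble. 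It is essential to treat $\ET\circ\EP$ as a single object: $\Ua{g}\circ\ET=\ET\circ\Ua{g}$ is \emph{false} in general, the stray Pauli conjugations washing out only after composing with $\EP$.
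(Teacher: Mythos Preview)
Your plan is correct and covers all items. For self-adjointness, the projector properties, and the absorption identities \eqref{eq:ET_EP_H} you follow essentially the paper's route (Haar inversion/translation invariance, the explicit action \eqref{eq:adjoint_action_EP}, and Corollary~\ref{corr:selfadjoint_kraus}); likewise \eqref{eq:Haar_Clifford_relation} is dispatched by citation in both.

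There are two genuine, if minor, differences worth flagging. For \eqref{eq:ET_EP_commutativity} the paper argues by showing that both $\supp\EP$ and its orthogonal complement are $\ET$-invariant (using $T_h\bigl(\sum_j a_j\bigr)=0\Leftrightarrow\sum_j a_j=0$), whereas you derive it by averaging the identity $\Ua{g}\circ\EP=\EP\circ\Ua{g}$ over $g\in\mathcal{T}$. Both are clean; yours reuses the ``$\Cl$ normalises $\Pmb$'' engine rather than a direct Pauli-basis computation. For \eqref{eq:commutation_EP_Clifford} the paper asserts the stronger identity $\Ua{g}\circ\ET=\ET\circ\Ua{g}$ (eq.~\eqref{eq:ET_Ug_conjugation}), justified via $gU_hg^\dag=e^{i\phi}U_{S_gh}$. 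Your caution here is well placed: because $\mathcal{T}$ fixes a single $F$ for each $E$, conjugation by a generic Clifford produces $U_{E(S_gh),F_{S_gh}}\cdot P$ with a nontrivial Pauli tail $P$, so $\Ua{g}\circ\ET=\ET\circ\Ua{g}$ need not hold on all of $\BCg$. It \emph{does} hold on $\supp\EP$, where every $\Ua{P}$ acts trivially by \eqref{eq:adjoint_transformation_Pauli}, and that is all that is needed; your route via conjugation-invariance of the combined ensemble $\mathcal{T}\circ\mathcal{P}$ makes this absorption of the Pauli tail explicit and is the more robust argument.
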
 \noindent The proof is in Subsection \ref{subsec:pf_corr_ET_EP_H_properties}. Next we prove an important lemma.
     \begin{lem}
     \label{lem:Pauli_basis_choice}
     There exists a Paul-product basis $\ket{a_1,a_2,\cdots,a_t}$, for $\mathcal{B}\left( \ {\C^N}^{\otimes t} \ \right)$, so that whenever $\sum_{j=1}^t a_j = 0$, then
     \begin{equation}
         \label{eq:Clifford_action_special_basis}
         \Ua{g} \ket{a_1,a_2,\cdots,a_t} \ = \ \ket{S_g a_1,S_g a_2,\cdots,S_g a_t}, \ \forall \ g \in \Cl.
     \end{equation}
     \end{lem}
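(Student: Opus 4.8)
The plan is to exhibit an explicit Pauli-product basis and verify that the sign in \eqref{eq:adjoint_Clifford_BCg} can always be made $+1$ on the subspace where $\sum_j a_j = 0$. Recall from \eqref{eq:Clifford_adjoint_action} that for a fixed choice of self-adjoint Pauli representatives $E(a)$ one has $U\,E(a)\,U^\dagger = \chi_U(a)\,E(S_U a)$ for some sign $\chi_U(a) \in \{\pm 1\}$. Acting on a tensor product one gets $\Ua{g}\ket{a_1,\dots,a_t} = \prod_{j=1}^t \chi_g(a_j)\,\ket{S_g a_1,\dots,S_g a_t}$. So the content of the lemma is that, with a suitable re-choice of the global phase of each basis vector $\ket{a_1,\dots,a_t}$ (equivalently, a re-choice of Pauli representatives tensor-factor by tensor-factor, or an overall sign on each product), the product $\prod_j \chi_g(a_j)$ becomes $+1$ whenever $\sum_j a_j = 0$.

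First I would recall the standard fact about the sign cocycle $\chi_U$: it is the coboundary-type object governing how Clifford unitaries permute Pauli representatives, and the key structural property is that $\chi_U(a)\chi_U(b)\chi_U(a+b)^{-1}$ is determined purely by the symplectic data — concretely, $E(a)E(b) = (-1)^{\beta(a,b)} E(a+b)$ for a fixed $\mathbb{F}_2$-valued cocycle $\beta$ (coming from the choice of self-adjoint representatives), and applying $U(\cdot)U^\dagger$ to both sides shows $\chi_U(a)\chi_U(b)(-1)^{\beta(a,b)} = \chi_U(a+b)(-1)^{\beta(S_U a, S_U b)}$. Hence $a \mapsto \chi_U(a)$ is a "quadratic-like" function whose associated bilinear defect is $\beta(a,b) + \beta(S_U a, S_U b)$. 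The second step is to use this to evaluate $\prod_{j=1}^t \chi_g(a_j)$ when $a_1 + \cdots + a_t = 0$: telescoping the cocycle relation, $\prod_j \chi_g(a_j)$ differs from $\chi_g(\sum_j a_j) = \chi_g(0) = 1$ by a product of $\beta$-terms evaluated on partial sums $s_k = a_1 + \cdots + a_k$ and their images under $S_g$. So $\Ua{g}\ket{a_1,\dots,a_t} = (-1)^{\Theta_g(a_1,\dots,a_t)}\ket{S_g a_1,\dots,S_g a_t}$ where $\Theta_g$ is an explicit sum $\sum_{k} \bigl(\beta(s_{k},a_{k+1}) + \beta(S_g s_k, S_g a_{k+1})\bigr)$, and — crucially — it depends on $(a_1,\dots,a_t)$ only through the multiset of partial sums, all of which lie in $\supp$ of the span of the $a_j$'s.

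The third step is to absorb this sign by redefining the basis: set $\widetilde{\ket{a_1,\dots,a_t}} := \eta(a_1,\dots,a_t)\ket{a_1,\dots,a_t}$ with $\eta \in \{\pm 1\}$ chosen so that the transformation becomes sign-free. The consistency requirement is that $\eta$ be a "cocycle trivialization" compatible with the $\Sp$-action: one needs $\eta(a_1,\dots,a_t)\,\eta(S_g a_1,\dots,S_g a_t)^{-1} = (-1)^{\Theta_g(a_1,\dots,a_t)}$ for all $g$. Since $\Theta_g$ is built from $\beta$, and $\beta$ itself arises from the fixed choice of self-adjoint representatives, the natural candidate is to let $\eta(a_1,\dots,a_t)$ be the product of the individual representative-phases needed to "split" $E(a_1)\otimes\cdots\otimes E(a_t)$ relative to $E(\sum_j a_j) = E(0) = \unity$ — i.e., define the basis vector as (normalized) $E(a_1)\otimes \cdots \otimes E(a_{t-1}) \otimes E(a_t)$ but with $E(a_t)$ replaced by the specific operator $E(a_1)\cdots E(a_{t-1})$-adjusted representative so that the whole product has a canonical phase. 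I expect this to work because on the constrained subspace $\sum a_j = 0$ the "last" factor is redundant data, so there is genuine freedom to fix phases consistently. The main obstacle — and the part I would spend the most care on — is checking that this phase choice $\eta$ is \emph{$\Sp$-equivariantly consistent}: that no obstruction (a genuine cohomology class in $H^2$ of $\Sp$ with the relevant coefficients) prevents trivializing the sign simultaneously for all $g \in \Cl$. I would handle this by noting that the obstruction, if nonzero, would already obstruct the well-known fact that $\H = \mathcal{G}_{\Cl}$ acts by honest permutation of Pauli operators on the trace-zero sector, or alternatively by reducing to generators (Hadamard, phase, CNOT) via the footnote's argument and checking the finitely many cocycle identities explicitly on transvections, using $T_h^2 = \unity$ and Lemma \ref{lem:1} to control the combinatorics.
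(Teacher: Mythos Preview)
Your proposal identifies the same construction the paper uses: take the last tensor slot to be the literal operator product $E(a_1)\cdots E(a_{t-1})$ (adjusted by a global $i^y$ to restore hermiticity) rather than the chosen representative $E(a_t)$. Where you diverge from the paper is in the verification. You frame the consistency of $\eta$ as a potential cohomological obstruction in $H^2$ of $\Sp$ and propose to discharge it by checking generators or appealing to facts about $\mathcal{G}_{\Cl}$. This machinery is unnecessary: once the last factor is written as the actual product $E(a_1)\cdots E(a_{t-1})$, Clifford conjugation acts on it factor-by-factor, so if $g\,E(a_j)\,g^\dagger = (-1)^{z_j}E(S_g a_j)$ then the last slot picks up exactly $(-1)^{\sum_{j=1}^{t-1} z_j}$, which cancels against the $(-1)^{\sum_{j=1}^{t-1} z_j}$ from the first $t-1$ slots. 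No cocycle telescoping, no partial sums, and no case analysis on generators are needed --- the equivariance is automatic from the construction, not something to be checked afterward. Your $\Theta_g$ computation and the discussion of $\beta$-terms are correct but superfluous; the paper's argument is two lines.
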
 This is proved in Subsection \ref{subsec:appendix_Clifford_action_special_basis}. Using equation \eqref{eq:Clifford_action_special_basis}, we get the following whenever $\sum_{j=1}^t a_j = 0$ for the cases $t=2$ and $t=3$.
     \begin{equation}
   \label{eq:ET_EP_adjoint_action}
    \ET \circ \EP \ \ket{a_1,a_2,\cdots,a_t} \ = \ \begin{cases}
      \frac{1}{N^2} \ \sum_{h \in \F^{2m}} \ \ket{a_1 + \s{a_1}{h}h, \cdots, a_t + \s{a_t}{h}h}, \    \mathrm{when}   \\  \ \sum_{j=1}^t \ a_j = 0, \\ ~ \\
    0, \ \mathrm{when} \ \sum_{j=1}^t a_j \neq 0,
    \end{cases}
\end{equation}
 \begin{equation}
     \label{eq:action_H}
 \H \ket{a_1,a_2,\cdots, a_t} \ = \dfrac{1}{\left| \Sp \right|}  \ \sum_{S \in \Sp} \ \ket{S a_1, S a_2, \cdots, S a_t}.
 \end{equation}
 In the following two lemmas we give the eigenspaces of $\H$ for $t=2$ and $t=3$.
 \begin{lem}
 \label{lem:GH_2_eigenspace}
 For $t=2$, $\H$ is as follows.
 \begin{equation}
     \label{eq:H_for_two}
     \H \ = \ \kb{0,0}{0,0} \ + \ \kb{w}{w},
 \end{equation}
 where $\ket{0,0}$ represents $E(0) \otimes E(0)$ which is $\mathrm{Id}$, and $\ket{w} \ := \ \dfrac{1}{\sqrt{N^2-1}} \  \sum_{a \in \Fs} \  \ket{a,a}$.
 \end{lem}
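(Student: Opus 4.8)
The plan is to compute $\H$ directly from its defining expression \eqref{eq:action_H} for $t=2$, using the fact that the Clifford group is a $2$-design so that $\H = \mathcal{G}_{\Cl}$ (Eq.~\eqref{eq:Haar_Clifford_relation}), and that on the relevant subspace $\H$ is built from the symplectic action $S \mapsto Sa$ on pairs. First I would note that since $\EP$ is a projector commuting with $\H$ and $\H = \EP \circ \H$, the operator $\H$ is supported on $\supp \EP = \mathrm{span}\{\ket{a_1,a_2}\ |\ a_1 + a_2 = 0\} = \mathrm{span}\{\ket{a,a}\ |\ a \in \Fm\}$. Restricted to that subspace, $\H\ket{a,a} = \frac{1}{|\Sp|}\sum_{S \in \Sp}\ket{Sa,Sa}$, so everything reduces to understanding the permutation action of $\Sp$ on $\Fm$.

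The key combinatorial input is that $\Sp$ acts transitively on $\Fs = \Fm \setminus\{0\}$ (and fixes $0$); this is a standard fact about the symplectic group over $\F_2$, and it can be invoked or quickly proved via transvections, since $T_h a = a + h\s{a}{h}$ lets one move any nonzero vector to any other. Given transitivity, $\H\ket{0,0} = \ket{0,0}$ (the orbit of $0$ is a singleton), while for $a \in \Fs$, $\H\ket{a,a} = \frac{1}{N^2-1}\sum_{b \in \Fs}\ket{b,b}$ is independent of $a$: indeed $\frac{1}{|\Sp|}\sum_S \ket{Sa,Sa}$ counts each $b \in \Fs$ with multiplicity $|\mathrm{Stab}(a)| = |\Sp|/(N^2-1)$ by the orbit-stabilizer theorem. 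Writing $\ket{w} := \frac{1}{\sqrt{N^2-1}}\sum_{a\in\Fs}\ket{a,a}$, this says $\H\ket{a,a} = \frac{1}{\sqrt{N^2-1}}\ket{w}$ for every $a \in \Fs$, equivalently $\H\big|_{\mathrm{span}\{\ket{a,a}:a\in\Fs\}} = \kb{w}{w}$ since $\bk{w}{a,a} = \frac{1}{\sqrt{N^2-1}}$. Combining with $\H\ket{0,0} = \ket{0,0}$ and orthogonality of $\ket{0,0}$ to $\ket{w}$, and the fact that $\H$ annihilates $\supp\EP^\perp$, yields $\H = \kb{0,0}{0,0} + \kb{w}{w}$. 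It is also worth checking $\H^2 = \H$ (consistency with $\H$ being a projector, Corollary \ref{corr:ET_EP_H_properties}), which is immediate since $\ket{0,0}$ and $\ket{w}$ are orthonormal.

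The only real obstacle is establishing the transitivity of the $\Sp$-action on $\Fs$; once that is in hand, the rest is orbit-counting. I would either cite it as classical or give the one-line transvection argument: for $a, b \in \Fs$ distinct, if $\s{a}{b} = 1$ then $T_{a+b}$ sends $a \mapsto b$, and if $\s{a}{b} = 0$ one can pick $c$ with $\s{a}{c} = \s{b}{c} = 1$ (possible by Lemma \ref{lem:1}, since the two linear conditions are independent) and compose $T_{a+c}$ with $T_{b+c}$ to carry $a$ to $c$ to $b$. This shows the action is transitive on $\Fs$, completing the proof.
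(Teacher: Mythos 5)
Your proof is correct and follows essentially the same route as the paper: restrict $\H$ to $\supp\EP = \mathrm{span}\{\ket{a,a}\}$, invoke the transitivity of $\Sp$ on $\Fs$, and use orbit–stabilizer counting to get $\H\ket{a,a} = \frac{1}{N^2-1}\sum_b\ket{b,b}$. The one thing you add that the paper merely asserts is an explicit proof of transitivity via transvections (handling the $\s{a}{b}=0$ and $\s{a}{b}=1$ cases separately), which is a welcome extra detail.
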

 The proof is given in Subsection \ref{subsec:appendix:pf_lem_GH_eigenspaces}.
 For $t=3$, that  $\H$ is as follows
 \begin{lem} 
 \label{lem:GH_3_eigenspace} For $t=3$, $\H$ takes the following form.
 \begin{align}
     \label{eq:H_for_three}
     \H \ = &  \ \kb{0,0,0}{0,0,0} \notag \\  
     &  +  \ \kb{w,1}{w,1} \ + \ \kb{w,2}{w,2} \ + \ \kb{w,3}{w,3} \notag \\ 
     & + \kb{c}{c} \ + \ \kb{nc}{nc}, 
 \end{align} where $\ket{0,0,0}$ is $\mathrm{Id}$, $\ket{w,1} \ = \ \frac{1}{\sqrt{N^2-1}} \ \sum_{a \in \Fs} \ \ket{0,a,a}$,  $\ket{w,2} \ = \ \ \sum_{a \in \Fs} \  \frac{1}{\sqrt{N^2-1}} \ \ket{a,0,a}$,  $\ket{w,3} \ = \ \frac{1}{\sqrt{N^2-1}} \  \sum_{a \in \Fs} \  \ket{a,a,0}$,  $\ket{\mathrm{C}} \ := \ \dfrac{\sqrt{2}}{\sqrt{(N^2-1)(N^2-4)}} \sum_{\substack{a,b \in \Fs \\ \s{a}{b}=0\\ a,b,a+b \neq 0}} \ \ket{a,b,a+b}$ and  $\ket{\mathrm{NC}} \ := \dfrac{\sqrt{2}}{N\sqrt{N^2-1}} \ \sum_{\substack{a,b \in \Fs \\ \s{a}{b}=1}} \ \ket{a,b,a+b}$. \end{lem}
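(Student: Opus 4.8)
The plan is to identify $\H$ with the orthogonal projector onto its image and then pin that image down explicitly. By Corollary~\ref{corr:ET_EP_H_properties}, $\H$ is a self-adjoint projector, hence $\H=\sum_i\kb{e_i}{e_i}$ for any orthonormal basis $\{\ket{e_i}\}$ of $\mathrm{Im}\,\H$; so it suffices to produce an orthonormal basis of $\mathrm{Im}\,\H$ and verify it is the six-vector set in the statement. From $\EP\circ\H=\H$ (equation~\eqref{eq:ET_EP_H}) and self-adjointness of $\EP,\H$ one gets $\H\circ\EP=\H$ as well, so by \eqref{eq:adjoint_action_EP} every vector of $\mathrm{Im}\,\H$ lies in $\mathrm{span}\{\ket{a_1,a_2,a_3}\mid a_1+a_2+a_3=0\}$. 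On that subspace \eqref{eq:action_H} shows $\H$ acts as the average over the diagonal action $\ket{a_1,a_2,a_3}\mapsto\ket{Sa_1,Sa_2,Sa_3}$ of $S\in\Sp$; being idempotent, it is the orthogonal projector onto the $\Sp$-invariant vectors of this subspace. The lemma thus reduces to: (i) the six listed vectors are orthonormal and $\Sp$-invariant, and (ii) they span the whole $\Sp$-invariant subspace.

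For (i), orthonormality follows from the Pauli basis being orthonormal together with the fact that the six vectors have pairwise disjoint supports: $\ket{0,0,0}$ sits on the single label $(0,0,0)$; $\ket{w,1},\ket{w,2},\ket{w,3}$ on the labels with exactly the first, second, third entry zero; and $\ket{\mathrm{C}},\ket{\mathrm{NC}}$ on labels $(a,b,a+b)$ with all entries nonzero, separated by $\s{a}{b}=0$ versus $\s{a}{b}=1$ (note $\s{a}{b}=1$ automatically forces $a,b,a+b$ nonzero and distinct). The prefactors are checked by counting support sizes with Lemma~\ref{lem:1}: for fixed $a\in\Fs$ there are $N^2/2-2$ vectors $b$ with $\s{a}{b}=0$, $b\neq 0,a$, so $|\mathrm{supp}\,\ket{\mathrm{C}}|=(N^2-1)(N^2-4)/2$, and $N^2/2$ vectors $b$ with $\s{a}{b}=1$, so $|\mathrm{supp}\,\ket{\mathrm{NC}}|=N^2(N^2-1)/2$, while $|\mathrm{supp}\,\ket{w,i}|=N^2-1$; the stated coefficients are exactly the reciprocal square roots of these.

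For (ii), $\Sp$-invariance is immediate: the map $(a,b,a+b)\mapsto(Sa,Sb,Sa+Sb)$ preserves the symplectic form $\s{\cdot}{\cdot}$, preserves which entries vanish, and fixes $(0,0,0)$, so each support is a union of diagonal $\Sp$-orbits; since every listed vector is the uniform superposition over its support, $\Ua{g}$ fixes it for all $g\in\Cl$ by \eqref{eq:Clifford_action_special_basis} of Lemma~\ref{lem:Pauli_basis_choice} (applicable because all labels satisfy $a_1+a_2+a_3=0$). To rule out further invariants, classify the $\Sp$-orbits on triples $(a,b,a+b)$: $(0,0,0)$ alone; exactly one entry zero forces the other two equal and nonzero, giving three orbits (one each, since $\Sp$ is transitive on $\Fs$); and if all entries are nonzero then $a,b$ are linearly independent over $\F$ (else $a=b$ and $a+b=0$), so $(a,b)$ spans a $2$-plane whose Gram matrix under $\s{\cdot}{\cdot}$ is determined by $\s{a}{b}\in\{0,1\}$, and Witt's theorem for symplectic spaces over $\F$ makes $\Sp$ transitive on ordered pairs with each fixed Gram matrix — one isotropic orbit ($\s{a}{b}=0$, nonempty precisely because $m\ge 2$) carrying $\ket{\mathrm{C}}$ and one hyperbolic orbit ($\s{a}{b}=1$) carrying $\ket{\mathrm{NC}}$. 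That is six orbits, hence a six-dimensional invariant space with the six vectors as orthonormal basis, which gives the claimed form of $\H$.

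I expect the orbit classification — specifically the transitivity statements in the all-nonzero case — to be the only real content, the rest being bookkeeping. A shortcut avoids even this: by Schur--Weyl duality $\mathrm{Im}\,\H$ is the commutant of $\{U^{\otimes 3}:U\in\U(N)\}$, i.e.\ $\mathrm{span}\{W_\sigma:\sigma\in\Sr\}$, which has dimension $|\Sr|=6$ whenever $N\ge 3$ (so for $m\ge 2$); then part (i) alone exhibits six orthonormal vectors inside a six-dimensional space and the conclusion is immediate. I would present the orbit-counting argument as the self-contained route and record the Schur--Weyl shortcut as a remark.
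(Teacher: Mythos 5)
Your main argument is correct and takes essentially the same route as the paper: both reduce to the $\Sp$-invariants of the $a_1+a_2+a_3=0$ subspace and then classify the diagonal $\Sp$-orbits of triples $(a,b,a+b)$ via transitivity on pairs with a fixed symplectic pairing (the Witt-type fact the paper invokes through stabilizer cosets), landing on six orbits. Your version states the completeness step (six orbits $\Rightarrow$ rank six) more explicitly than the paper's terse appeal to the Lemma~\ref{lem:GH_2_eigenspace} argument, and the Schur--Weyl remark is a genuinely different, tidier shortcut that bounds $\dim\mathrm{Im}\,\H$ without any orbit count.
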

 
 The proof is given in Subsection \ref{subsec:appendix:pf_lem_GH_eigenspaces}.
 
\begin{lem}
\label{lem:eigensystem_ET_EP_t_2}
For $t=2$, the eigendecomposition of $\ET \circ \EP$ is given as follows.
\begin{align}
    \label{eq:w}
    \ET \circ \EP \ = \ \H \ + \ \frac{1}{2}\left(1+\frac{1}{N}\right) \  \mathcal{P}_{d,+} \ +  \frac{1}{2}\left(1-\frac{1}{N}\right) \  \mathcal{P}_{d,-},
\end{align}
where $\ket{0,0}$ is $\unity$, $\ket{w}$ is given in Lemma \ref{lem:GH_2_eigenspace}, $\mathcal{P}_{d,\pm}$ are projectors on the subspace spanned by vectors $\ket{v_\pm}$ with the following properties. 
\begin{equation}
    \label{eq:v_pm_t_2}
    \ket{v_\pm} \ = \ \sum_{a \in \Fs} \ \lambda_a  \ \ket{a,a},
\end{equation}
where $\sum_{a \in \Fs} \ \lambda_a \ = 0$ and for all $a \in \Fs$, we have  $\sum_{\substack{b \in \Fs \\ \s{a}{b}=1}} \ \lambda_b \ = \ \pm \dfrac{N}{2} \ \lambda_a$. $\mathcal{P}_{d,\pm}$ has the rank $\frac{N(N \mp 1)}{2}-1$.
\end{lem}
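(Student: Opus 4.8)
The plan is to observe that $\ET\circ\EP$ is supported entirely on the range of $\EP$ and to diagonalize it there by brute force. By \eqref{eq:adjoint_action_EP}, $\EP$ is the orthogonal projector onto the diagonal sector $\mathcal{D}:=\mathrm{span}\{\ket{a,a} : a\in\Fm\}$, which contains $\ket{0,0}=\unity\otimes\unity$; moreover $\ET$ maps $\mathcal{D}$ into itself by \eqref{eq:ET_adjoint_action}, and $\ET\circ\EP$ is self-adjoint and annihilates $\mathcal{D}^\perp$ (Corollary \ref{corr:ET_EP_H_properties}). So it suffices to analyze $M:=\ET\circ\EP\big|_{\mathcal{D}}$. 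Since $\ket{0,0}$ is fixed by both $\ET$ and $\EP$, split $\mathcal{D}=\mathrm{span}\{\ket{0,0}\}\oplus\mathcal{D}_0$ with $\mathcal{D}_0:=\mathrm{span}\{\ket{a,a} : a\in\Fs\}$, and focus on the restriction of $M$ to $\mathcal{D}_0$.

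First I would compute $M\ket{a,a}$ for $a\in\Fs$ in the special Pauli basis of Lemma \ref{lem:Pauli_basis_choice}, for which \eqref{eq:ET_EP_adjoint_action} holds with all signs equal to $+1$. Splitting $\sum_{h\in\Fm}\ket{a+\s{a}{h}h,a+\s{a}{h}h}$ according to $\s{a}{h}=0$ or $\s{a}{h}=1$, using Lemma \ref{lem:1} (each alternative has $N^2/2$ solutions) and the substitution $b=a+h$ (so that $\s{a}{h}=\s{a}{b}$), one gets $M\ket{a,a}=\tfrac12\ket{a,a}+\tfrac1{N^2}\sum_{b:\s{a}{b}=1}\ket{b,b}$. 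Hence on $\mathcal{D}_0$ we may write $M=\tfrac12\unity+\tfrac1{N^2}A$, where $A$ is the symmetric matrix (the adjacency matrix of the symplectic graph on $\Fs$) defined by $\bra{b,b}A\ket{a,a}=1$ if $\s{a}{b}=1$ and $0$ otherwise; note $A$ has zero diagonal since $\s{a}{a}=0$.

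Next I would diagonalize $A$. By Lemma \ref{lem:1} every row of $A$ has exactly $N^2/2$ ones, so $\ket{w}=\tfrac1{\sqrt{N^2-1}}\sum_{a\in\Fs}\ket{a,a}$ satisfies $A\ket{w}=\tfrac{N^2}2\ket{w}$, i.e., $M\ket{w}=\ket{w}$; together with the fixed vector $\ket{0,0}$ this recovers $\H=\kb{0,0}{0,0}+\kb{w}{w}$ of Lemma \ref{lem:GH_2_eigenspace}. For the complement, I would compute $A^2$: by Lemma \ref{lem:1}, $\bra{c,c}A^2\ket{a,a}=\#\{h\in\Fs:\s{a}{h}=\s{c}{h}=1\}$ equals $N^2/2$ when $c=a$ and $N^2/4$ when $c\neq a$ (any two distinct elements of $\Fs$ are $\F_2$-independent, and $\s{a}{h}=1$ already forces $h\neq0$). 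Thus $A^2=\tfrac{N^2}4(\unity+J)$ on $\mathcal{D}_0$, with $J$ the all-ones matrix; on $\ket{w}^\perp\cap\mathcal{D}_0$ we have $J=0$, so $A^2=\tfrac{N^2}4\unity$ there, forcing $A$ to have only the eigenvalues $\pm N/2$ on that space, i.e., $M$ to have eigenvalues $\tfrac12(1\pm\tfrac1N)$. Taking $\mathcal{P}_{d,\pm}$ to be the orthogonal projectors onto these two eigenspaces yields \eqref{eq:w}, and the eigenvalue equations $A\ket{v_\pm}=\pm\tfrac N2\ket{v_\pm}$ together with $\ket{v_\pm}\perp\ket{w}$ are precisely the conditions in \eqref{eq:v_pm_t_2}.

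For the ranks I would take traces. Since $A$ has zero diagonal, $\tr A=0$; writing the spectrum of $A$ on $\mathcal{D}_0$ as $N^2/2$ (multiplicity one, on $\ket{w}$), $+N/2$ (multiplicity $r_+=\mathrm{rank}\,\mathcal{P}_{d,+}$) and $-N/2$ (multiplicity $r_-=\mathrm{rank}\,\mathcal{P}_{d,-}$), the two equations $\tfrac{N^2}2+\tfrac N2(r_+-r_-)=0$ and $r_++r_-=\dim\mathcal{D}_0-1=N^2-2$ give $\mathrm{rank}\,\mathcal{P}_{d,\pm}=\tfrac{N(N\mp1)}2-1$. The one step I expect to need care is the $A^2$ computation — in particular invoking Lemma \ref{lem:1} correctly with $\dim\mathrm{span}\{a,c\}=2$ and verifying that no spurious $h=0$ term contributes; everything else is routine linear algebra, and the self-adjointness of $\ET\circ\EP$ guarantees that the resulting eigenspace decomposition is orthogonal.
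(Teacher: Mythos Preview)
Your proposal is correct and follows the same core computation as the paper --- both reduce to the action of $\ET$ on the diagonal sector and obtain $M\ket{a,a}=\tfrac12\ket{a,a}+\tfrac{1}{N^2}\sum_{b:\s{a}{b}=1}\ket{b,b}$ --- but your route to the eigenspaces is genuinely more self-contained. The paper takes the decomposition of the diagonal sector into $\ket{w}$, $v_+$, $v_-$ (together with the dimensions $\tfrac{N(N\mp1)}{2}-1$) as known from Helsen et al.\ \cite{Helsen2016} and simply verifies the eigenvalue equations by plugging the defining relations for the $\lambda_a$'s back in. You instead diagonalize the adjacency matrix $A$ of the symplectic graph directly: computing $A^2=\tfrac{N^2}{4}(\unity+J)$ forces the eigenvalues on $\ket{w}^\perp$ to be $\pm N/2$, and the trace argument then pins down the multiplicities without appealing to the external reference. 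This buys you a proof that does not depend on \cite{Helsen2016}; the paper's version is shorter precisely because it outsources that structure. Your caution about the $A^2$ step is well placed but the computation is fine: any two distinct nonzero vectors in $\F_2^{2m}$ are automatically $\F_2$-independent, so Lemma~\ref{lem:1} applies with $d=2$, and the constraint $\s{a}{h}=1$ already excludes $h=0$.
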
 The proof is given in Subsection \ref{lem:pf_eigensystem_ET_EP_t_2} of the Appendix. Note that we obtained the full decomposition of $\ET\circ \EP$ because it will be needed in the proof of Theorem \ref{thm:asymptotic_unitary_2_design}.
 
\begin{lem}
\label{lem:unitary_3_design_parent_lemma}
When $t=3$, the largest eigenvalue of $\ET \circ \ET - \H$ is less than $\frac{1}{2}\left( 1  \ + \frac{4 }{N} \ + \  \frac{1}{2N(N-2)}\right)$.
\end{lem}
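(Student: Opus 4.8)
The plan is to pin down the spectrum of $\ET\circ\EP-\H$ by restricting to the Clifford-invariant subspace $\mathcal{W}\coloneqq\mathrm{span}\{\ket{a,b,c}\mid a+b+c=0\}$ and diagonalising $\ET\circ\EP$ there sector by sector. By Corollary \ref{corr:ET_EP_H_properties}, $\ET\circ\EP$ and $\H$ commute and are self-adjoint, $\H$ is a projector, and from \eqref{eq:ET_EP_H} we get $\H=\EP\circ\H=\H\circ\EP$; since \eqref{eq:adjoint_action_EP} shows $\EP$ is the orthogonal projector onto $\mathcal{W}$, both $\H$ and $\ET\circ\EP$ annihilate $\mathcal{W}^{\perp}=\ker\EP$, while they preserve $\mathcal{W}$. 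Hence $\ET\circ\EP-\H$ has the zero eigenvalue on $\mathcal{W}^{\perp}$ and, on $\mathcal{W}$, it equals $\ET\big|_{\mathcal{W}}-\H\big|_{\mathcal{W}}$. In the special Pauli basis of Lemma \ref{lem:Pauli_basis_choice}, equation \eqref{eq:ET_EP_adjoint_action} tells us that $\ET\big|_{\mathcal{W}}$ is the signless average $\frac{1}{N^{2}}\sum_{h\in\Fm}P_{h}$, where $P_{h}\ket{a,b,c}=\ket{T_{h}a,T_{h}b,T_{h}c}$ merely permutes basis vectors; splitting off $h=0$, $\ET\big|_{\mathcal{W}}=\frac{1}{N^{2}}\unity_{\mathcal{W}}+\frac{N^{2}-1}{N^{2}}M$ with $M\coloneqq\frac{1}{N^{2}-1}\sum_{0\neq h\in\Fm}P_{h}$.

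The key structural observation is that $\Ua{g}\big|_{\mathcal{W}}$ depends only on the symplectic image $S_{g}$ (Lemma \ref{lem:Pauli_basis_choice}), so $\mathcal{W}$ carries a representation of $\Sp$, and the transvections $\{T_{h}\mid h\neq0\}$ form a single conjugacy class (Lemma \ref{lem:transvections_conjugacy_class}); therefore $\sum_{h\neq0}P_{h}$ is the image of a central element of $\C[\Sp]$, so $M$ acts as a scalar $\mu_{\pi}$ — a character ratio $\chi_{\pi}(T_{h})/\dim\pi$ — on every irreducible Clifford-subrepresentation $\pi$ occurring in $\mathcal{W}$, depending only on the isomorphism class of $\pi$. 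Thus the eigenvalues of $\ET\circ\EP$ on $\mathcal{W}$ are $\frac{1}{N^{2}}+\frac{N^{2}-1}{N^{2}}\mu_{\pi}$, the value $1$ being attained exactly on the trivial subrepresentations, which are spanned by the $\Sp$-invariant vectors, i.e. by the six vectors of Lemma \ref{lem:GH_3_eigenspace}, i.e. $\mathrm{range}(\H)$. Subtracting $\H$ removes precisely these unit eigenvalues, so the lemma is equivalent to bounding $\frac{1}{N^{2}}+\frac{N^{2}-1}{N^{2}}\mu_{\pi}$ over all non-trivial $\pi$ appearing in $\mathcal{W}$.

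To carry this out I would use the decomposition of $\mathcal{W}$ recorded in the discussion preceding Theorem \ref{thm:nc_inv_subsp} and in Theorems \ref{thm:nc_inv_subsp}, \ref{thm:nc_1_decomposition} and \ref{thm:c_inv_subsp}: the identity line, the three ``$\ket{a,a,0}$-type'' sectors, the classical sector ($\s{a}{b}=0$) and the non-classical sector ($\s{a}{b}=1$). Every sector equipped there with an intertwiner to a two-copy subrepresentation — $V^{(\mathrm{c})}_{d}$, $V^{(\mathrm{nc})}_{d}$, $V^{(\mathrm{nc})}_{1}$, $V^{(\mathrm{nc})}_{2}$ and the ``$\ket{a,a,0}$-type'' pieces — carries an isomorphic $\pi$, so $\mu_{\pi}$ matches its two-copy value and Lemma \ref{lem:eigensystem_ET_EP_t_2} gives the eigenvalue $\frac{1}{2}(1\pm\frac{1}{N})$, comfortably below the asserted bound. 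For the genuinely new sectors — $V^{(\mathrm{nc})}_{\mathrm{null,S}}$, $V^{(\mathrm{nc})}_{\mathrm{null,1}}$, the leftover subspace (v) of Theorem \ref{thm:nc_inv_subsp}, their $W$-twisted images (with $W$ as in \eqref{def:W}) from Theorem \ref{thm:nc_1_decomposition}, and the analogous classical leftover — I would compute $\ET\circ\EP$ directly on the explicit spanning vectors $\ket{\hat{a};b}$, $\ket{A(a)}$, $\ket{\bar{a}}$, etc., using \eqref{eq:ET_EP_adjoint_action} and evaluating the $h$-sums with the counting identities of Lemma \ref{lem:1}; taking the maximum over all the eigenvalues so obtained should yield exactly $\frac{1}{2}\bigl(1+\frac{4}{N}+\frac{1}{2N(N-2)}\bigr)$.

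The hard part will be this last computation for the new non-classical sectors, especially the leftover subspace (v), for which no intertwiner to a known representation is available: applying $\ET\circ\EP$ to triple-Pauli states with $\s{a}{b}=1$ generates nested symplectic conditions on the summation variable, and pinning the top eigenvalue down to the precise rational function $\frac{1}{2}(1+\frac{4}{N}+\frac{1}{2N(N-2)})$ — rather than a crude $O(1)$ bound — demands careful bookkeeping with Lemma \ref{lem:1} and with the $\Sr$-module structure used to split off the $W$-twisted pieces in Theorem \ref{thm:nc_1_decomposition}. A secondary point to settle is that no $\mu_{\pi}$ is negative enough to make $\ET\circ\EP-\H$ acquire an eigenvalue below $-\frac{1}{2}(1+\frac{4}{N}+\frac{1}{2N(N-2)})$, so that the bound on the ``largest eigenvalue'' also controls the operator norm $\|\ET\circ\EP-\H\|$ needed in the proof of Theorem \ref{thm:asymptotic_unitary_3_design}; this should fall out of the same sector-by-sector computation once each $\mu_{\pi}$ is seen to be non-negative.
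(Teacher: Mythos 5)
Your high-level structure matches the paper's — restrict to $\supp\EP$, diagonalize $\ET$ sector by sector — and your central-element observation is a genuine improvement in packaging: because the transvections $\{T_h\}_{h\neq 0}$ are a single $\Sp$-conjugacy class and the Clifford action on $\supp\EP$ factors through $\Sp$, the operator $M=\frac{1}{N^2-1}\sum_{h\neq0}P_h$ is the image of a central element of $\C[\Sp]$, hence acts as a character ratio $\chi_\pi(T_h)/\dim\pi$ on each irreducible. That is clean, and it is consistent with what the paper does implicitly when it invokes Schur's lemma to say $\ET$ is a linear combination of irrep projectors.

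The gap is that the character-ratio programme requires a \emph{complete} list of the irreducibles $\pi$ occurring in $\supp\EP$ (or equivalently the relevant $\Sp$-characters and branching rules), and the paper does not have one — the subspace you call ``leftover piece (v)'' of Theorem \ref{thm:nc_inv_subsp}, and its analogue in $\vc$, are left undecomposed, which the discussion section flags explicitly as an open problem. So the step ``take the maximum over all the eigenvalues so obtained'' cannot be executed as written: on those subspaces the vectors you name ($\ket{\hat a;b}$, $\ket{A(a)}$, $\ket{\bar a}$) are not eigenvectors of $\ET$, and applying $\ET$ to them mixes them nontrivially. The paper's workaround is a different mechanism and is the missing idea here: on each of $\vnc$ and $\vc$ it splits $\ET = \frac{1}{4}\,\id + \ETT_+ + \ETT_-$ where $\ETT_+$ is self-adjoint and \emph{explicitly} diagonalizable via a reduction to the single operator $\tnconep$ (resp. $\tconep$), while $\ETT_-$ is a sum of three pieces each handled by a singular-value estimate $\op{\tnconem}\le\frac{1}{2N}$, giving $\op{\ETT_-}\le\frac{3}{2N}$. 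This operator-norm bound on $\ETT_-$ is what lets the paper avoid the full irrep decomposition entirely — the eigenvalues of the ``hard'' piece are never computed, only its size. Your plan, as stated, cannot reach the quantitative bound without either supplying that missing decomposition or replacing it by some comparable perturbative device. (Your secondary point about negative eigenvalues is a real issue, and the paper's route via $\op{\ETT_-}$ handles it automatically since an operator-norm bound is two-sided; a pure character-ratio argument would still need to bound $|\mu_\pi|$, not just $\mu_\pi$, for the same reason.)
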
 The proof is given in Section \ref{sec:appendix:pf_lem:unitary_3_design_parent_lemma} of the Appendix.
\noindent
 
\subsection{Convergence Analysis}
\label{subsec:cgt_analysis}
    
\begin{proof}[Proof of Theorem \ref{thm:asymptotic_unitary_2_design}]
Note that 
\begin{align}
    & \ET^k \ \circ \EP - \H  \notag \\
    =  &  \  \left(  \ET^k \ \circ \EP - \H  \right) \ \circ \left( \id - \H + \H \right) \  \notag \\
    = & \  \left(  \ET^k \ \circ \EP - \H  \right) \circ \H +  \left(  \ET^k \ \circ \EP - \H  \right) \circ \left( \id - \H \right) \  \notag \\
    \label{eq:pf_thm_unitary_2_design_exp1}
     = & \  \ET^k \ \circ \EP  \circ \left( \id - \H \right), \\
     \label{eq:pf_thm_unitary_2_design_exp1_5}
     = & \  \frac{1}{2^k} \  \left( \ \left( 1+ \frac{1}{N}\right)^k \ \pdp  +   \left( 1 - \frac{1}{N}\right)^k \ \pdm \ \right) \circ \left( \id - \H \right)  \\
     \label{eq:pf_thm_unitary_2_design_exp2}
     = & \  \frac{1}{2^k} \left( \ \nuv \left( \pdp + \pdm \right) + \muv \  \frac{\pdp - \pdm}{N} \   \right) \circ \left( \id - \H \right)  \\
     \label{eq:pf_thm_unitary_2_design_exp3}
     = & \  \frac{1}{2^k} \left( \begin{array}{c} 
     (\nuv-\muv) \left(\H + \pdp + \pdm \right)    \\ 
     +  2\muv  \left( \H + \frac{1}{2} \left(\pdp + \pdm\right) + \dfrac{1}{2N} \left( \pdp - \pdm\right)  \right) \end{array}    \right) \circ \left( \id - \H \right)  \\ 
     \label{eq:pf_thm_unitary_2_design_exp4}
     = & \  \frac{1}{2^k} \left(
     (\nuv-\muv) \EP    
     +  2\muv  \ET\circ \EP\circ \left(\id - \H \right)  \right),
     \end{align}
     where we used the following:
     \begin{itemize}
         \item[Step \eqref{eq:pf_thm_unitary_2_design_exp1}:]  $ \ET\circ \EP \circ \H = \H $ as proved in Corollary \ref{corr:ET_EP_H_properties},
         \item[Step \eqref{eq:pf_thm_unitary_2_design_exp1_5}:] the eigendecomposition of $\ET \circ \EP $ as given in Lemma \ref{lem:eigensystem_ET_EP_t_2}. Here we also use the fact that $\left( \unity - \H \right) \ket{0,0} = \left( \unity - \H \right) \ket{w} = 0$ (see Lemmas \ref{lem:GH_2_eigenspace} and \ref{lem:eigensystem_ET_EP_t_2}).  
         \item[Step \eqref{eq:pf_thm_unitary_2_design_exp2}:] \begin{align}
             \label{def:nuv}
         &    \nuv \  := \ \frac{1}{2} \left(  \left( 1 + \frac{1}{N} \right)^k + \left( 1- \frac{1}{N} \right)^k \right) \\ 
             \label{def:muv}
         &    \muv \  := \ \sum_{l=0}^{k-1} \left( 1+ \frac{1}{N}\right)^l\left(1-\frac{1}{N} \right)^{k-1-l},
         \end{align}
         \item[Step \ref{eq:pf_thm_unitary_2_design_exp3}:] we add and subtract $\H$ because this $\H$ will cancel with $\id - \H$ on the right, and since $\H^2 = \H$ (see Lemma \ref{lem:eigensystem_ET_EP_t_2}).
         \item[Step \ref{eq:pf_thm_unitary_2_design_exp4}:] we used Lemma \ref{lem:eigensystem_ET_EP_t_2}.
     \end{itemize}
     Then we get 
     \begin{align}
   &    \left| \left| \ET^k \ \circ \EP - \H   \right|\right|_{\Diamond}   \notag \\ 
      = \ & \left| \left|  \frac{1}{2^k} \left(
     (\nuv-\muv) \EP    
     +  2\muv  \ET\circ \EP\circ \left(\id - \H \right)  \right)\right|\right|_{\Diamond}  \ \mathrm{(using} \ \mathrm{step} \ \mathrm{\eqref{eq:pf_thm_unitary_2_design_exp4})} \notag\\
     \label{eq:pf_thm_unitary_2_design_diamond_norm_step1}
     \le \ & \ \frac{|\muv - \nuv|}{2^k} \ \left| \left| \EP \right| \right|_{\Diamond} + \frac{ \muv}{2^{k-1}} \left| \left| \H - \ET\circ \EP \right|\right|_{\Diamond}  \\ 
     \label{eq:pf_thm_unitary_2_design_diamond_norm_step2}
     \le & \ \frac{|\muv - \nuv|}{2^k} + \frac{\mu}{2^{k-2}} \\
     \le & \  \frac{5\muv+\nuv}{2^k}, \notag \\ 
     \label{eq:pf_thm_unitary_2_design_diamond_norm_step3}
     < & \ \left(5k + 1\right)\left( \frac{1+\frac{1}{N}}{2}\right)^k,
     \end{align}
where we use the following. \begin{itemize}
    \item[Step \ref{eq:pf_thm_unitary_2_design_diamond_norm_step1}:] 
 proved using norm property of diamond norm on quantum channels. 
 \item[Step \ref{eq:pf_thm_unitary_2_design_diamond_norm_step2}:] For any quantum channel, $\mathrm{\Phi}$, $\left| \left| \mathrm{\Phi} \right| \right|_{\Diamond}=1$, and for any two quantum channels $\mathrm{\Phi}$ and $\mathrm{\Psi}$, $\left| \left| \mathrm{\Phi}-\mathrm{\Psi}\right|\right|_{\Diamond}\le 2$.
 \item[Step \ref{eq:pf_thm_unitary_2_design_diamond_norm_step3}:] Using
 \begin{align}
     \label{eq:pf_thm_unitary_2_design_diamond_norm_step3_1}
     \muv \ < k \ \left(1+\frac{1}{N} \right)^k, \; \; \nuv \ < \left(1+\frac{1}{N}\right)^k.
 \end{align}
 \end{itemize}
 Next, for given $\epsilon$, where $\epsilon > 0$, if 
 \begin{equation}
     \label{eq:pf_thm_unitary_2_design_diamond_norm_epsilon}
     \left( 5k+1\right) \ \left( \dfrac{1+\frac{1}{N} }{2}\right)^k \ \le \epsilon, 
 \end{equation}
 then 
 \begin{align}
     \label{eq:pf_thm_unitary_2_design_diamond_norm_epsilon_step1}
   \log_2 1 /\epsilon \ \le \ \left( 1- \log_2\left(1+\frac{1}{N} \right)\right) k \ - \ \log_2\left(5k+1\right).
 \end{align}
 For $m=3$ one sees that the RHS in the inequality \eqref{eq:pf_thm_unitary_2_design_diamond_norm_epsilon_step1} is positive for $k=6$. Graphically, one sees that 
 \begin{equation}
 \label{eq:pf_thm_unitary_2_design_diamond_norm_graph}
 \left( 1- \log_2\left(1+\frac{1}{N} \right)\right) k \ - \ \log_2\left(5k+1\right) \ > 0.8 \ \left( \ k - 6 \right), \, \mathrm{k} \ \ge \ 6,
 \end{equation}
 which suggests that  if 
 \begin{equation} 
 \label{eq:pf_thm_unitary_2_design_diamond_norm_linear_approx}
k \ \ge \  6 \ + \  5/4 \ \log_2 1/\epsilon,
 \end{equation}
 then that automatically ensures that the inequality \eqref{eq:pf_thm_unitary_2_design_diamond_norm_epsilon_step1} is satisfied and we have that $\left| \left| \ET^k \circ \EP - \H \right| \right|_{\Diamond} < \epsilon$ when $m\ge 3$. Hence proved.
\end{proof}

\begin{proof}[Proof of Theorem \ref{thm:asymptotic_unitary_3_design}]
Let $\lambda$ be the second largest eigenvalue of $\ET \circ \EP - \H$. Then we get
\begin{align}
    \left| \left| \  \ET^k \circ \EP - \H \  \right| \right|_{\Diamond} \ = & \ \underset{\ \rho \ }{ \mathrm{sup} } \ \ \left| \left| \ \ \  \left[ \left( \ \ET^k \circ \EP - \H  \right) \otimes \id \right] \ \left( \rho \right) \ \ \  \right| \right|_1 \notag \\ 
    \label{eq:pf_unitary_3_design_diamond_norm_step1}
    \le & \ N^3 \  \left| \left| \ \ \  \left[ \left( \ \ET \circ \EP - \H  \right)^k \otimes \id \right] \ \left( \rho \right) \ \ \  \right| \right|_2     \\ 
        \label{eq:pf_unitary_3_design_diamond_norm_step2}
    \le & \ N^3 \  \op{ \ET \circ \EP - \H  }^k  \\ 
    \label{eq:pf_unitary_3_design_diamond_norm_step3}
    = & \ N^3 \lambda^k,
\end{align}
where we used the following.
\begin{enumerate}
    \item[Step \ref{eq:pf_unitary_3_design_diamond_norm_step1}:] Proved using a well-known inequality between the $||.||_1$ and $||.||_2$ norms on linear operators on finite dimensional spaces, and also the fact that $\ET$, $\EP$ and $\H$ commute with each other, and that $\EP$ and $\H$ are projectors (see Corollary \ref{corr:ET_EP_H_properties}. 
    \item[Step \ref{eq:pf_unitary_3_design_diamond_norm_step2}:] For $\mathrm{\Delta} = \left( \ \ET \circ \EP - \H  \right) \otimes \id$ we use the fact that 
    \begin{equation}
        \label{eq:op_norm_inequality}
         \left| \left| \   \mathrm{\Delta}^k \ \left( \rho \right) \  \right| \right|_2 \ \le \ \op{\mathrm{\Delta}^k} \ . ||\rho||_2  \ \le \ \op{\mathrm{\Delta}}^k \ . ||\rho||_2 ,
    \end{equation}
    and the fact that $||\rho||_2=1$.
    \item[Step \ref{eq:pf_unitary_3_design_diamond_norm_step3}:] That $\lambda = \op{ \left( \ET \circ \EP - \H \right)\otimes \mathrm{id} } = \op{\ET \circ \EP - \H} $.
\end{enumerate} \noindent From inequality \eqref{eq:pf_unitary_3_design_diamond_norm_step3} we see that when, for any $\epsilon >0$, we have that 
\begin{equation}
    \label{eq:pf_unitary_3_design_diamond_norm_epsilon}
    N^3 \ \lambda^k \ \le \epsilon, 
\end{equation} then 
\begin{align}
    & 3m + \log 1/\epsilon \  \le \ k \log_2 \lambda^{-1}, \notag \\ 
    \mathrm{i.e.,} \ & 3m + \log 1/\epsilon \ \le  \ \left( 1 - \log \left(\text{ $1$ $+$ {\scriptsize $\frac{4}{N}$} + {\scriptsize $\frac{1}{2 N (N-2)}$} \normalsize }\right) \right) k ,
\end{align}
where we used Lemma \ref{lem:unitary_3_design_parent_lemma} for $m \ge 3$, and we see that the highest eigenvalue of $\ET\circ\EP-\H$ is less than $\frac{1}{2}\left(1+ \frac{4}{N} + \frac{1}{2N(N-2)}\right)$. 
\end{proof}

\begin{proof}[Proof of Corollary \ref{corr:order_steps}]
Since $\EP$ and $\ET$ commute (as seen from Corollary \ref{corr:ET_EP_H_properties}), reversing the order between $\EP$ and $\ET^k$ or implementing $\EP$ at some intermittent point between successive iterations of the $k$ time $\ET$ iteration, doesn't change the proofs for Theorem \ref{thm:asymptotic_unitary_2_design} and Theorem \ref{thm:asymptotic_unitary_3_design}.
\end{proof}

\section{Discussion and Conclusion}
\label{sec:conclusion}


We have proposed a scheme to implement an asymptotic unitary $2$-design and a unitary $3$-design. In Remark \ref{rem:optimal_convergence_rate}, we argue that this scheme is optimal, in terms of the number of times it has to be implemented for both, approximating a unitary $2$- and a unitary $3$-design within $\epsilon$-distance in diamond norm from the corresponding exact designs. The proof relies on the Frobenius-Perron theorem for irreducible Markov chains (which the transvections channel is), and the fact that the second largest eigenvalue of the scheme's channel is below a constant which is independent of $m$. We anticipate that this scheme would be of interest for certain architectures,  like those of ion traps, which can support multiple qubit interactions. In the process we also obtain some of the eigenspaces of the channel $\ET \circ \EP$, and discovered that these eigenspaces are, in fact, invariant subspaces of the action of the three-copy of the adjoint representation of the Clifford group. Thus, as a corollary of this work, we obtain some of the invariant subspaces of the three-copy adjoint representation of the Clifford group. Unsurprisingly, some of the irreps which are also found in the two-copy adjoint representation of the Clifford group in \cite{Helsen2016} occurred multiple times in different subspaces of the three-copy case. Finally we also obtained some new invariant subspaces of the Clifford group adjoint action in the three-copy case. While some are clearly reducible, we are still not sure about the others. Going forward, we identify the following questions, which we would like to seek answers to. The first three of these questions stem from our inability to give a complete decomposition into irreducible subspaces of the three-copy adjoint action of the Clifford group, when it acts on the $\mathrm{span} \left\{ \ket{a,b,c} \ |  \ a,b,c \in \Fm, \ \mathrm{such} \ \mathrm{that} \ : \ a+b+c =0 \right\}$.
\begin{enumerate}
    \item[(i)] The irreducibility of invariant subspace which were obtained: How do we know if the invariant subspaces which we obtained are irreducible or not? For comparison, the proof of the irreducibility of subspaces in \cite{Helsen2016} relied on a result in \cite{Zhu15} (see equations (8) to (11) and also, equation (12)). Computing the frame potentials for the three copy adjoint representation, one would need to compute $\Phi_6$. This is quite a difficult task. Thus, one would like to know if there is a simpler alternative to proving the irreducibility or not, of the corresponding invariant subspaces.
    \item[(ii)] The invariant eigenspaces in Theorem \ref{thm:nc_inv_subsp} and \ref{thm:nc_1_decomposition} lie in the subspace $\mathrm{span}\left\{ \ket{a,b,a+b} \right\}$ where $\s{a}{b}=1$. The case of $\mathrm{span}\left\{ \ket{a,b,a+b} \right\}$ where $\s{a}{b}=0$ with $a,b,a+b \neq 0$ remains to be done. 
    
    \item[(iii)] Since $\EP$ annihilates any Pauli basis elements of the form $\ket{a,b,c}$ where $a,b,c$ are not related by $a+b+c=0$, the decomposition of a large subspace of operators remains to be explored. We suspect that the structure for the $a+b+c=0$, though, is significantly richer (i.e., there are more subspaces in the $a+b+c=0$ case, for the size of its dimension). 
    \item[(iv)] The irreducible subrepresentations of the Clifford group for $t=4$ and higher $t$ is also something one would like to explore. 
    \item[(v)] Applications of the representations of the Clifford group in various carrier spaces.
\end{enumerate}
\noindent 




  \appendix
  
\section{Proofs in Section \ref{sec:pfs}}
  
  \subsection{Proof of Theorem \texorpdfstring{\ref{thm:pmb_fm_iso}}{5}}
  \label{subsec:ph_thm_pmb_fm_iso}

\begin{proof}[Proof of Theorem \ref{thm:pmb_fm_iso}] We prove the result by constructing the isomorphism. This construction maps a \emph{symplectic basis} for $\Fm$ to some choice of generators of $\Pmb$. A convenient choice is as follows: $e_j$ denotes a vector in $\Fm$ whose only non-zero component is the $j$-th component. Thus the $e_j$'s give a standard basis for $\Fm$, and any $a\in \Fm$ can be expanded in terms of this basis. Let $X_j$ and $Z_j$ be the single-qubit $X$ and $Z$ operator for all $j=1$ to $m$, i.e., 

\begin{equation}
    \label{eq:Xj}
    X_j  \ = \ \underbrace{\unity_2 \otimes \unity_2 \cdots \otimes \unity_2}_{j-1} \otimes X \otimes \underbrace{\unity_2 \otimes \unity_2 \cdots \otimes \unity_2}_{m-j},
\end{equation}
and similarly for $Z_j$. Note that $X_j$'s and $Z_j$'s all belong to $2m$ distinct cosets of $\Pmb$. We choose them to be the representatives of their respective cosets in $\Pmb$. The $X_j$'s themselves are a set of $m$ independent generators for an abelian subgroup of $\Pmb$. Let $\Fl = \ \mathrm{span} \left\{ e_j \right\}_{j=1}^m$. Then for any $a \in \Fl$, for the expansion $a=\sum_{j=1}^{2m} a_j e_j$, we find that $a_{l}=0$, for $l=m+1,m+2,\cdots,2m$. $\Fl$ is $m$ dimensional. We map any $a \in \Fl$ to a Pauli generated only by the $X_j$'s in the following way:
\begin{equation}
    \label{eq:Xa}
    X(a) \ := \ X_1^{a_1} X_2^{a_2} \cdots X_m^{a_m},
\end{equation} \noindent Similarly, define $\Fr := \mathrm{span} \left\{ e_j \right\}_{j=m+1}^{2m}$. So when $b \in \Fr$, for the expansion $b = \sum_{j=1}^{2m} b_j e_j$, $b_j=0$ for $j=1,2,\cdots, m$, and $\dim \Fr=m$. For any $b \in \Fr$, define a Pauli generated only by the $Z_j$'s as follows.
\begin{equation}
    \label{eq:Zb}
    Z(b) \ := Z_1^{b_{m+1}} Z_2^{b_{m+2}} \cdots Z_m^{b_{2m}},
\end{equation} Next, note that $\Fm = \Fl \oplus \Fr$, hence for any $a \in \Fm$, there exists a unique $a_X \in \Fl$ and $a_Z \in \Fr$, such that $a=a_X + a_Z$. Then for any $a$, with the decomposition $a= a_X+a_Z$ as above, define the following:
\begin{equation}
\label{eq:Pauli_F2_isomorphism}
E(a) \ := \ i^{a_X.a_Z} \ X(a_X) \ Z(a_Z),
\end{equation}
where $a_X.a_Z = \sum_{j=1}^{m} \  \left(a_X\right)_j \left(a_Z\right)_{m+j}$, but the sum is taken in $\mathbb{Z}$ and not in $\F$. $E(a)$ is always self-adjoint, since $a_X.a_Z$ is even if and only if $\s{a_X}{a_Z}=0$, i.e., if and only if $X(a_X)$ and $Z(a_Z)$ commute. As linear operators in $\mathcal{B}\left(\C^N\right)$, we find that $E(a)$ and $E(b)$ in $\Pmb$ compose in the following way.
\begin{align}
    \label{eq:composition_law_pmbb}
    E(a) \ E(b) \ & = \  i^{a_X.a_Z+b_X.b_Z} \  X(a_X) Z(a_Z) \ X(b_X) Z(b_Z)  \notag \\ 
     & = \  i^{a_X.a_Z+b_X.b_Z + 2 a_Z.b_X} \  X(a_X+b_X) Z(a_Z+b_Z)  \notag \\ 
    & = i^{a_Z.b_X - a_X.b_Z} \ E(a +b ).
\end{align} \noindent This proves that the map in equation \eqref{eq:Pauli_F2_isomorphism} is an isomorphism between $\Fm$ and $\Pmb$.
 \end{proof}

\subsection{Proofs of Lemma \ref{lem:1}}
\label{subsec:appendix_lem_1_pf}
\noindent We revise the terminology before we do the proof. For any linear operator $A:\Fm \longrightarrow \F^l$, for some positive integer $l$:  $\mathrm{rank}A := \dim \range{A}$, and $\mathrm{nullity}A:=\dim \mathrm{kernel}A$. The rank-nullity theorem tells us that $2m= \mathrm{rank}A + \mathrm{kernel}A$.

\begin{proof}
Consider the following linear map: $a^*: \ \Fm \ \longrightarrow \F$,
\begin{equation}
    \label{eq:1form}
    a^*( x) = \s{a}{x}.
\end{equation}
Since $a\neq 0$, it is seen that $a^*$ is a rank-one linear operator (it is a one-form) on $\Fm$. The rank-nullity theorem implies that $\mathrm{nullity} \ a^* \ = \ 2m-1$ dimensional. So $\mathrm{kernel} \ a^*$ has $N^2/2$ distinct $h$ in $\Fm$ such that $\s{a}{h}=0$, and all remaining $h$ in $\Fm$ are such that $\s{a}{h}=1$. 

\noindent For the more general case: define the following linear map $\left( a_1^*,a_2^*,\cdots,a_l^* \right): \Fm \longrightarrow \F^l$ with the action
 \begin{equation}
    \label{eq:2form}
    \left( a_1^*,a_2^*,\cdots,a_l^* \right) (x) \ = \ \left(a_1^*(x), a_2^*(x), \cdots, a_l^*(x) \right).
 \end{equation}
\noindent Since $\dim \mathrm{span}\left\{ a_1,a_2,\cdots, a_l \right\} =d$, $\mathrm{rank}\left( a_1^*,a_2^*,\cdots,a_l^* \right)=d$ and $\mathrm{nullity}\left(a_1^*,a_2^*,\cdots,a_l^* \right)=2m-d$. When $b\notin \range{\left(a_1^*,\cdots,a_l^*\right)}$, then there do not exist solutions $h\in\Fm$ such that $a_j^*(h)=b_j$ for $j=1,\cdots,l$. Else if $b \in \range{\left(a_1^*,\cdots,a_l^*\right)}$, then the number of solutions for each such $h$ in $\Fm$, which satisfies $a_j^*(h)=b_j$ is equal to the number of elements in the  kernel of $\left(a_1^*,\cdots,a_l^*\right)$, which is $N^2/2^d$. Hence proved.
\end{proof}

\subsection{Proof of Lemma \ref{lem:Phi_adjoint}}
\label{subsec:pf_lem_Phi_adjoint}
\begin{proof}
Let $Y,X \in \BCg$, then we have 
\begin{align}
    \label{eq:Phi_adjoint_pf}
 \bk{Y}{\mathrm{\Phi}(X)} \ =  \  &  \     \tr \left( Y^\dag \ \mathrm{\Phi}(X) \right) \notag \\
    =  \  &  \  \tr \ \left(   Y^\dag \  \  \sum_{k} F_k \ X F_k^\dag  \ \right) \notag \\ \ = \ & \tr \ \left( \ \left( \sum_k \ F_k^\dag \  Y^\dag \  F_k \ \right) \  X \ \right) \notag \\ 
    = \ & \tr \ \left( \ \left( \  \sum_k \ F_k^\dag \ Y \ F_k \right)^\dag \ X \  \right) \notag \\
    = \ & \tr \ \left( \ \left( \mathrm{\Phi}^\dag (Y) \right)^\dag \ X \  \right) \notag \\ 
    = \ & \ \bk{\mathrm{\Phi}^\dag (Y)}{X}.
\end{align}
\end{proof}

\subsection{Proof of Lemma \ref{lem:transvections_conjugacy_class}}
\label{subsec:appendix_transvection_conjugacy_class}
\begin{proof}
\noindent From equation \eqref{eq:Sp_condition} one can easily compute the inverse of any $S \in \Sp$,
 \begin{equation}
     \label{eq:Sinv}
     S^{-1} \ = \ \Om S^T \Om .
 \end{equation} Using equation \eqref{eq:Sinv}, it is seen that for any $S \in \Sp$, 
\begin{equation}
\label{eq:transvection_conjugation}
S \ T_h \ S^{-1} \, = \, T_{Sh}.
\end{equation}
Choose some $l \in \Fs$ such that $\s{l}{h}=1$. Then $T_{l+h}(h)=l$, choose $S=T_{l+h}$ on the LHS of equation \eqref{eq:transvection_conjugation}, so that the output is $T_{l}$. Thus for all $l \in \Fs$ such that $\s{h}{l}=1$, $T_l$ and $T_h$ lie in the same conjugacy class. Let $l' \in \Fs \setminus \left\{ h \right\}$ be such that $\s{h}{l'}=0$. The set of $l \in \Fs$ such that $\s{h}{l}=1$, $\s{l'}{l}=1$ is $N^2/4$ (from Lemma \ref{lem:1}). Thus we see that $T_{l'}$ too lies in the same conjugacy class. Hence proved.
\end{proof}

\subsection{Proof of Corollary \ref{corr:ET_EP_H_properties}}
\label{subsec:pf_corr_ET_EP_H_properties}
That $\H$ is self-adjoint follows by the invariance of the Haar measure with respect to inverse, i.e., $, \forall X \in \BCg$
\begin{align}
\label{eq:Haar_invariance_inverse}
\H^\dag \ket{X}  = & \ \int_{\U(N)} \ d\mu(U) \ \Ua{U^\dag} \ket{X} \notag  \\
= & \ \int_{\U(N)} \ d\mu\left( U^\dag \right) \ \Ua{U^\dag} \ \ket{X}  \notag \\
= & \  \int_{\U(N)} \ d\mu(U) \ \Ua{U} \ \ket{X} \notag \\ 
= &  \ \H \ \ket{X}.
\end{align}
Since the Haar-measure is also invariant under translation we have for all $W \in \U(N)$ and $X \in \BCg$:
\begin{equation}
    \label{eq:Haar_measure_translation}
    \Ua{W} \ \H  \ \ket{X} \ = \ \int_{\U(N)} d\mu(U) \Ua{W U} \ \ket{X} \ = \ \int_{\U(N)} d\mu(WU) \ \Ua{WU} \ \ket{X} \ = \ \H \ket{X}.
\end{equation} One can similarly prove $\H \Ua{W} = \H$ using the right translational invariance. This immediately proves $\mathcal{G}_\mathcal{E} \circ \H=\H$ for \emph{any} ensemble of unitary operators $\mathcal{E}$. In particular, when $\mathcal{E}$ is the Haar ensemble, then we get $\H^2 = \H$. Hence $\H$ is a self-adjoint projector. Equation \eqref{eq:ET_EP_H} is also proved by choosing $\mathcal{E}$ to be $\mathcal{P}$ and $\mathcal{T}$. Equation \eqref{eq:adjoint_action_EP} tells us that $\EP$ is a projector, and since it's Kraus operators are self-adjoint, Corollary \ref{corr:selfadjoint_kraus} tells us that $\EP$ is a self-adjoint projector. For the adjointness of $\ET$, it is sufficient that the Kraus operators are either self-adjoint or skew-adjoint as per Corollary \ref{corr:selfadjoint_kraus}, and we chose the transvection Cliffords to be self-adjoint in Definition \ref{def:ensemble_transvection}. To prove Eq. ~ \eqref{eq:commutation_EP_Clifford} we examine the action of $\ET \circ \EP$ on two orthogonal subspaces: $\mathrm{span} \ \left\{ \ket{a_1,a_2,\cdots,a_t} \ \vert \ \sum_{j=1}^t a_j =0 \right\}$, and $\mathrm{span} \ \left\{ \ket{a_1,a_2,\cdots,a_t} \ \vert \ \sum_{j=1}^t a_j =0 \right\}$. First for any non-zero $h \in \Fm$, and any $\ket{a_1,a_2,\cdots,a_t}$ in the former subspace, we see that  $\Ua{h} \ket{a_1,a_2,\cdots, a_t}$ belongs to the same former subspace. This is because if $\sum_{j=1}^t a_j=0$, then $T_h\left(\sum_{j=1}^t a_j\right)$. Also note that $T_h^2=\unity$, so if $T_h \left( \sum_{j=1}^t a_j \right)=0$, then that must imply that $\sum_{j=1}^t a_j=0$ as well. Similarly if $\ket{a_1,a_2,\cdots,a_t}$ belongs to the second subspace, then $\Ua{h} \ket{a_1,a_2,\cdots,a_t}$ also belongs to the second subspace. This implies that the both subspaces remain invariant under the action of $\ET$. From Eq. ~ \eqref{eq:adjoint_action_EP} we see that $\EP$ projects onto the first space (and annihilates the second). Thus for any $\ket{a_1,a_2,\cdots,a_t}$ in the first space,we see that $\EP \circ \ET \ket{a_1,\cdots,a_t}  \ \ET \ket{a_1,\cdots,a_t} = \ \ET \circ \EP \ket{a_1,\cdots, a_t} $. Also for $\sum_{j=1}^t a_j \neq 0$, we see that $\EP \circ \ET \ket{a_1,\cdots,a_t}  = 0 = \ \ET \circ \EP \ket{a_1,\cdots, a_t}$. Thus $\EP \circ \ET = \ET \circ \EP$. Using the homomorphism from $\Cl$ to $\Sp$ and the adjoint action of the Clifford group on the Paulis, which is shown in equation \eqref{eq:Clifford_adjoint_action}, we see that for any $g \in \Cl$, equation \eqref{eq:transvection_conjugation} tells us that $\Ua{g}\Ua{h}\Ua{g^{-1}} \ = \  \Ua{S_g h}$, where $S_g \in \Sp$ is the symplectic matrix corresponding to $g \in \Cl$. This is true because if $g U_h g^\dag = e^{i \phi} U_{S_g h}$, then \[ g \left(  U_h X U_h^\dag \right) g^\dag \ = \  \left (g U_h g^\dag\right) g \ X \ g^\dag  \left(g U_h^\dag g^\dag\right) \ = \ U_{S_g h} \  \left(  g X g^\dag \right) \  U_{S_g h}^\dag. \] Since $\mathrm{rank}S_g=2m$, $S_g$ is a bijection from $\Fm \rightarrow \Fm$, we see that
\begin{equation}
    \label{eq:ET_Ug_conjugation}
    \ET \circ \Ua{g} \ = \ \Ua{g} \circ \ET, 
\end{equation}
since $\ET$ is the uniform sum of all transvections. Using the fact that $\ET \circ \EP = \EP \circ \ET$, we prove equation \eqref{eq:commutation_EP_Clifford}. Finally, equation \eqref{eq:Haar_Clifford_relation} is proved in previous works, for instance \cite{Zhu15, Webb15}. For the case of $t=2$, the proof was given earlier in \cite{Dankert2009}.

\subsection{Proof of Lemma \ref{lem:Pauli_basis_choice}}
\label{subsec:appendix_Clifford_action_special_basis}
\noindent Let $a_1,a_2,\cdots,a_t$ be such that $a_t = \sum_{j=1}^{t-1} a_j$. This tells us that $E(a_t) = i^x \ E(a_1) E(a_2)\cdots E(a_{t-1})$ for some $x = 0,1,2,3$. Instead of choosing our basis element representing $\ket{a_1,a_2,\cdots,a_t}$ to be $E(a_1)\otimes E(a_2) \otimes \cdots \otimes E(a_t)$, we choose it to be 
\begin{equation}
    \label{eq:Pauli_special_basis_cases}
    \ket{a_1,a_2,\cdots,a_t} \ = \
       i^{y} \ 
     \left(   \bigotimes_{j=1}^{t-1}  E(a_j) \right)  \otimes \left( i^{-x} E(a_t) \right).
\end{equation}
Assume that we adopt this convention for all choices of $a_1, a_2, \cdots, a_{t-1}$. $a_t = \sum_{j=1}^{t-1} a_j$, so it's already decided once $a_1$, $a_2$, $\cdots$, $a_{t_1}$ are decided. We use the phase factor $i^{y}$ to ensure that $\ket{a_1,a_2,\cdots,a_t}$ corresponds to a hermitian operator: so $y=0$ when $ \left(   \bigotimes_{j=1}^{t-1}  E(a_j) \right)  \otimes \left( i^{-x} E(a_t) \right)$ is hermitian and $y=1$ when $ \left(   \bigotimes_{j=1}^{t-1}  E(a_j) \right)  \otimes \left( i^{-x} E(a_t) \right)$ is anti-hermitian. Let $g \in \Cl$ and let $g E(a_j)g^\dag = (-1)^{z_j} E\left( S_g \ a_j \right)$ for $j=1,2,\cdots, t-1$. Then we see that 
\begin{align}
    \label{eq:no_phase}
 & \Ua{g} \ \ket{a_1,a_2, \cdots, a_t} \notag \\  = & \ i^{y} \ \left( \bigotimes_{j=1}^{t-1} g \ E\left(a_j\right) g^\dag \right) \cdots  \ \left(  i^{-x} g E\left(a_t\right) g^\dag \right) \notag \\
 = & \ (i)^{y} \ \left( (-1)^{\sum_{j=1}^{t-1} z_j} \bigotimes_{j=1}^{t-1}  \ E\left(S_g a_j\right)  \right) \cdots  \ \left( (-1)^{\sum_{j=1}^{t-1} z_j} E(S_g a_1) \cdots E(S_g a_{t-1}) \right) \notag \\
     \ = & \ i^{y} \ \left( \bigotimes_{j=1}^{t-1} \ E\left(S_g a_j\right) \right)  \otimes  \left(  E \left(S_g a_1\right)  \cdots E\left(S_g a_{t-1}\right)\right) \\
     \ = & \ket{S_g a_1, S_g a_2, \cdots, S_g a_t },
\end{align}
where we note that \[  \ket{S_g a_1, S_g a_2, \cdots, S_g a_t } =  i^{y} \ \left( \bigotimes_{j=1}^{t-1} \ E\left(S_g a_j\right) \right)  \otimes  \left(  E \left(S_g a_1\right)  \cdots E\left(S_g a_{t-1}\right)\right).\]
This is because of our choice of convention adopted in Eq. ~  \eqref{eq:Pauli_special_basis_cases}, and also since hermiticity (and skew-hermiticity) is preserved under conjugation, which implies that \[ i^{y} \ \left( \bigotimes_{j=1}^{t-1} \ E\left(S_g a_j\right) \right)  \otimes  \left(  E \left(S_g a_1\right)  \cdots E\left(S_g a_{t-1}\right)\right)\] will always be hermitian since $i^{y} \ \left( \bigotimes_{j=1}^{t-1} g \ E\left(a_j\right) g^\dag \right) \cdots  \ \left(  i^{-x} g E\left(a_t\right) g^\dag \right)$ was hermitian to begin with.
Thus finally,
\begin{equation}
    \label{eq:no_phase_2}
    \Ua{g} \ \ket{a_1,a_2,\cdots,a_t} \ = \ \ket{S_g a_1, \ S_g a_2, \ \cdots, S_g a_t }. 
\end{equation}
Hence proved.

\subsection{Proofs of Lemmas \ref{lem:GH_2_eigenspace} and \ref{lem:GH_3_eigenspace}}
\label{subsec:appendix:pf_lem_GH_eigenspaces}
Proof of Lemma \ref{lem:GH_2_eigenspace}
\begin{proof}
 It is straightforward to verify that $\H \left(\unity\right) = \unity$. 
 \noindent Next, note that $\bk{0,0}{w}=0$. To prove that $\H \ket{w} = \ket{w}$, we use equation \eqref{eq:action_H}. Note that for any $a \in \Fs$, the set $S^{(a)}:=\left\{ S \in \Sp \ | \ S a = a \right\}$ is a subgroup of $\Sp$. Note that this subgroup contains the identity, hence it is not empty. Furthermore, for any $b \in \Fs$, there is a distinct left coset of $S^{(a)}$ in $\Sp$, such that if any $S$ in the left coset, $Sa=b$. Hence there are $N^2-1$ distinct left cosets of $S^{(a)}$ in $\Sp$, and symplectic matrices in each distinct left coset take $a$ to some distinct $b \in \Fs$.  Thus for $t=2$, equation \eqref{eq:action_H} becomes 
 \begin{equation}
     \label{eq:action_H_t_2}
     \H \ket{a,a} \ = \ \dfrac{1}{N^2-1} \sum_{b \in \Fs} \ket{b,b},
 \end{equation}
 and summing over all $a$'s in $\Fs$ (there are $N^2-1)$ of them, and multiplying by the normalising factor $\frac{1}{\sqrt{N^2-1}}$ gives us that $\H \ket{w} = \ket{w}$.
 \end{proof}
 Proof of Lemma \ref{lem:GH_3_eigenspace}:
\begin{proof}
 Firstly, note that the vectors $\ket{0,0,0}$, $\ket{w_j}$ for $j=1,2,3$, $\ket{\mathrm{C}}$ and $\ket{\mathrm{NC}}$ are mutually orthogonal, and each is normalized. In view of the proof of Lemma \ref{lem:GH_2_eigenspace}, we have only to prove that $\H \ket{\mathrm{C}} = \ket{\mathrm{C}}$ and $\H \ket{\mathrm{NC}} = \ket{\mathrm{NC}}$. We follow the same logic as in proof of Lemma \ref{lem:GH_2_eigenspace}. For any $a,b \in \Fs$ such that $a\neq b$, define a set $S^{(a,b)} := \left\{ S \in Sp \, | \, Sa=a \ \mathrm{and} \ Sb=b \right\}$. This is a subgroup of $\Sp$. Note that this subgroup contains the identity, hence it is not empty. For any $c,d \in \Fs$ such that $c \neq d$, and $\s{c}{d}=\s{a}{b}$, there is a distinct left coset of $S^{(a,b)} $ in $\Sp$ such that $Sa=c$ and $Sb=d$ for any $S$ in this left coset. For $\s{a}{b}=1$ there are $\frac{N^2(N^2-1)}{2}$ pairs of $c,d \in \Fs$ such that $c \neq d$ and $\s{c}{d}=1$. For $\s{a}{b}=0$ there are $\frac{(N^2-4)(N^2-1)}{2}$ pairs of $c,d \in \Fs$ such that $c \neq d$ and $\s{c}{d}=0$. Hence, from equation \eqref{eq:action_H} we get 
 \begin{equation}
     \label{eq:action_H_3}
     \H \ket{a,b,a+b} \ = \begin{cases}
      \frac{2}{N^2(N^2-1)} \sum_{\substack{c,d \in \Fs \\ c\neq d \\ \s{c}{d}=1}} \ket{c,d,c+d}, \ \mathrm{when} \ \s{a}{b}=1,   \notag \\ 
      \frac{2}{(N^2-4)(N^2-1)} \sum_{\substack{c,d \in \Fs \\ c\neq d \\ \s{c}{d}=0}} \ket{c,d,c+d}, \ \mathrm{when} \ \s{a}{b}=0.  
      \end{cases}
      \end{equation}
      from which we can easily obtain $\H \ket{\mathrm{C}}$ and $\H \ket{\mathrm{NC}}$. Hence proved.
 \end{proof}
 \subsection{Proof of Lemma \ref{lem:eigensystem_ET_EP_t_2}}
 \label{lem:pf_eigensystem_ET_EP_t_2}
\begin{proof}
Note that $\mathrm{span} \ \left( \left\{\ket{w}\right\} \cup \mathrm{supp} \pdm \ \cup \ \mathrm{supp} \ \pdp \right) \ = \ \mathrm{span} \left\{ \ket{a,a} \ | \ \mathrm{for} \ \mathrm{all} \ a \in \Fs \right\}$. Hence we have
\begin{equation}
    \label{eq:EP_t_2}
    \EP \ = \ \H \ + \ \pdp \ + \pdm.
\end{equation}
Also it is easy to see that $\pdp \H = \pdm \H =0$. Next we need to prove that
\begin{equation}
    \label{eq:ET_vpm_t_2}
    \ET \ \ket{v_{\pm}} \ = \ \frac{1}{2} \left( 1 \pm \frac{1}{N} \right) \ \ket{v_{\pm}}.
\end{equation} The proof of this is given in  $t=3$ case (see equation \eqref{eq:ET_action_vpm0}.
\end{proof}
\section{Proof of Lemma \ref{lem:unitary_3_design_parent_lemma}}
\label{sec:appendix:pf_lem:unitary_3_design_parent_lemma}

It will be important for us to introduce a little bit about the symmetry group $\mathrm{S}_3$. 

\subsection{Symmetry group \texorpdfstring{$\mathrm{S}_3$}{on three objects}}
\label{subsec:Sr}

$\mathrm{S}_3$ denotes the symmetry group on three objects. We list its properties below. 

\begin{itemize}
    \item $\left| \mathrm{S}_3 \right| = 6$.
    \item $\mathcal{A}_3 = \left\langle [(123)] \right\rangle = \left\langle [(132)] \right\rangle$ generates the only invariant subgroup of $\mathrm{S}_3$. It contains $3$ elements, all of which are even permutations. Since it is a cyclic group, it's abelian. 
    \item $\mathrm{S}_3 \setminus \mathcal{A}_3$ comprises of the set of transpositions, $\left\{[(12)],[(23)],[(13)]\right\}$. These are the odd permutations.
    \item There are three irreducible representations of $\mathrm{S}_3:$ 
    \begin{enumerate}
        \item[(i)] the completely symmetric representation of dimension one,
        \item[(ii)] the completely anti-symmetric representation of dimension one, and
        \item[(iii)]  the standard representation of dimension two.
    \end{enumerate} 
\end{itemize}

Let $\W{{S}_3}$ be a complex linear representation of $\Sr$ on the Hilbert space ${\C^N}^{\otimes 3}$. Its action is then defined as follows.
\begin{equation}
    \label{eq:S3_action}
    \W{\pi} \ \ket{\psi_1} \otimes \ket{\psi_2} \otimes \ket{\psi_3} \  = \ \ket{\psi_{\pi^{-1}(1)}} \otimes \ket{\psi_{\pi^{-1}(2)}} \otimes \ket{\psi_{\pi^{-1}(3)}}.
\end{equation}

The adjoint action of $\Sr$ on $\BCr$ is then given as \begin{align}
    \label{eq:adj_S3_action}
    ~ & \ \Wa{\pi} \left( X_1 \otimes X_2 \otimes X_3 \right) \ \notag \\
 =\  & \W{\pi} \ X_1 \otimes X_2 \otimes X_3 \ \W{\pi}^{-1} \notag \\ 
    =\ & \ X_{\pi^{-1}(1)} \otimes X_{\pi^{-1}(2)} \otimes X_{\pi^{-1}(3)}.
\end{align}
This is a linear representation of $\Sr$ on $\BCr$, and is easily seen to be unitary with respect to the inner product defined in \eqref{eq:inner_product_bcg}. 
\noindent Next, note the following for all $g \in \Cl$.
\begin{equation}
    \label{eq:schur_lemma}
    \Wa{\pi}  \circ \Ua{g} \ \ \left(  X \right) \ = \ \Ua{g} \ \circ \Wa{\pi}  \ \ \left(  X \right),
\end{equation}
for arbitrary $X \in \BCr$. By Schur's lemma we see that for each $\pi \in \Sr$, $\Wa{\pi}$ is linear combination of projectors on the irreducible representations of $\Cl$ on $\BCr$, and for each $g \in \Cl$, $\Ua{g}$ is a linear combination of projectors onto the irreducible representations of $\Sr$, whose action is defined by equation \eqref{eq:adj_S3_action}. We define the following projectors on this space as follows.
\begin{itemize}
    \item[(i)] 
    \begin{align} 
        \label{eq:pcs}
         \pcs \ := \ &  \ \dfrac{1}{\left| \Sr \right|} \  \sum_{\pi \in \mathrm{S}_3}  \Wa{\pi} \notag \\ 
        ~ = \  &  \ \dfrac{1}{6} \ \left( \id + \Wa{[(12)]} + \Wa{[(23)]} + \Wa{[(13)]} + \Wa{[(123)]} + \Wa{[(132)]}  \right). \notag \\ 
        ~ = \  &  \ \dfrac{1}{6} \ \left( \ \sum_{\sigma \in \mathcal{A}_3} \ \Wa{\sigma} + \sum_{\tau: {\small \mathrm{transposition}}} \ \Wa{\tau} \right).
    \end{align}
    \item[(ii)] 
    \begin{align} 
        \label{eq:pas}
         \pas \ := \ &  \  \dfrac{1}{\left| \Sr \right|} \  \sum_{\pi \in \mathrm{S}_3} \sgn{\pi}   \Wa{\pi} \notag \\ 
      ~ = \  &  \ \dfrac{1}{6} \ \left( \id - \Wa{[(12)]} - \Wa{[(23)]} - \Wa{[(13)]} + \Wa{[(123)]|} + \Wa{[(132)]}  \right) \  \notag \\ 
        ~ = \  &  \ \dfrac{1}{6} \ \left( \ \sum_{\sigma \in \mathcal{A}_3} \ \Wa{\sigma} \  -  \  \sum_{\tau: {\small \mathrm{transposition}}} \ \Wa{\tau} \right). 
    \end{align}
    \item [(iii)]
    \begin{align} 
        \label{eq:pw}
         \pw \  := &  \ \dfrac{1}{\left| \mathcal{A}_3 \right|} \  \left( \ \id \ + \ \omega \ \Wa{[(123)} \ + \ \omega^2 \Wa{[(132)]} \  \right),
    \end{align}
    \item[(iv)] 
    \begin{align} 
        \label{eq:pws}
         \pws \ := \ &  \ \dfrac{1}{\left| \mathcal{A}_3 \right|} \  \left( \ \id \ + \ \omega \ \Wa{[(132)} \ + \ \omega^2 \Wa{[(123)]} \  \right), 
    \end{align}
    where
    \begin{equation}
    \label{eq:def_omega}
    \omega \ := \ \mathrm{exp} \ \left( \frac{2 \pi i \ }{3} \right),
    \end{equation}
    is a third-root of unity (we could have chosen $ \mathrm{exp} \ \left( 4 \pi i / 3 \right) $ as well).
\end{itemize} 
\noindent That $\pcs$, $\pas$, $\pw$ and $\pws$ are indeed projectors is easily verified. We list some important properties of these projectors below. 
\begin{itemize}
    \item[Orthogonal decomposition:] The projectors defined in equations \eqref{eq:pcs}, \eqref{eq:pas}, \eqref{eq:pw} and \eqref{eq:pws} are all seen to be self-adjoint. Hence their spectral decomposition is into orthogonal eigenspaces. 
    \item[Irreducible representations of $\Sr$:]
    \begin{itemize}
    \item[Completely symmetric:] By the action defined in \eqref{eq:S3_action}, $\Sr$ acts trivially on $\range{\pcs}$. Thus, these vectors transform according to the trivial representation of $\Sr$. 
    \item[Completely antisymmetric space:] The representation of $\Sr$ on $\range{\pas}$ is the one-dimensional sign-representation. 
    \item[Standard representation:] The action of $\Sr$ on $\range{\pw}$ and on $\range{\pws}$ is the two-dimensional standard representation of $\Sr$. 
    \end{itemize}
\end{itemize}

\noindent Additionally define 
\begin{equation}
    \label{eq:ps}
    \ps \ := \ \pcs + \pas \ = \ \dfrac{1}{\left| \mathcal{A}_3 \right|} \ \sum_{\sigma \in \mathcal{A}_3} \ \W{\sigma},
\end{equation}
\noindent we see that
\begin{align}
    \label{eq:pcs_ps}
    \pcs \ = \ \dfrac{1}{2}\left( \id + \W{\tau} \right) \ \ps \ =  \ \dfrac{1}{2} \  \ps \ \left( \id + \W{\tau} \right),  \\
    \label{eq:pas_ps}
    \pas \ = \ \dfrac{1}{2}\left( \id - \W{\tau} \right) \ \ps \ =  \ \dfrac{1}{2} \  \ps \ \left( \id - \W{\tau} \right),
\end{align}
 where $\tau$ is a transposition in $\Sr$, and where the commutativity between $\ps$ and $\id \ \pm \Wa{\tau}$ is because $\ps$ is a uniform sum over a normal subgroup, which implies that its left cosets with respect to $\tau$ are the same as its right cosets. And because all transposition of $\Sr$ will within the same coset of $\mathcal{A}_3$, we also have 
 \begin{equation}
     \label{eq:ps_tauone_tautwo}
     \left( \id - \Wa{\tau} \right) \ \ps \ \left( \id + \Wa{\tau'} \right) \ = \
     \left( \id - \Wa{\tau} \right) \ \ps \ \left( \id + \Wa{\tau'} \right) \ = \ 0,
 \end{equation}
 for any transpositions $\tau, \tau' \in \Sr$.  \newline 
\noindent  The following completeness relation will also be significant.
\begin{equation}
    \label{eq:S3_id_decomposition}
    \id  \ = \ \ps + \pw + \pws, 
\end{equation}
and also
\begin{align}
    \label{eq:p_orthogonality}
     \ps  \  \pw = \ps  \  \pws \ = \pw \ \pws = 0,
\end{align}
 Equations \eqref{eq:pcs_ps} and \eqref{eq:p_orthogonality} also tell us that \begin{equation}
 \label{eq:p_orthogonality2}
 \pcs \ \pw = \pcs \ \pws = \pas \ \pw = \pas \ \pws \  = 0.
 \end{equation}
 
 For convenience, define $\pj{k}$ as follows.
\begin{equation}
    \label{eq:def_pj}
    \pj{k} \ : = \ \dfrac{1}{3} \left( \ \id  \ + \ \omega^{k} \Wa{[(123)]} \ + \ \omega^{2k} \Wa{[(132)]} \ \right).
\end{equation}
We note that $\ps$, $\pw$, and $\pws$ equal $\pj{0}, \pj{1}, \pj{2}$ respectively. We use the above notation in Lemmas \ref{lem:ETTp_eigenvalues} and \ref{vc_lem:ETTp_eigenvalues}.

\noindent This implies that we study the eigenspaces of $\ET$ after the action of $\EP$.

\noindent Define the following subspaces in $\supp{\EP}$.
\begin{description}
\item[\textlabel{identity sector}{vsp:unity}] ~\\
\begin{equation}
    \label{eq:vzero_vector}
    \unity \longrightarrow \ket{0,0,0},
\end{equation}
thus
\begin{equation}
    \label{eq:vec_vzero}
    \vzero \ = \ \mathrm{span} \ \left\{ \ \ket{0,0,0} \right\}.
\end{equation}
\item[\textlabel{first diagonal sector}{vsp:aan}] ~\\ 
\begin{equation}
    \label{eq:vec_aan}
    E(a)\otimes E(a) \otimes \unity_N \longrightarrow \ket{a,a,0},
\end{equation}
thus 
\begin{equation}
    \label{eq:vec_vdthree}
    \vdthree \ = \ \mathrm{span} \ \left\{ \ket{a,a,0} \ | \ a\in \Fs \right\}.
\end{equation}
\item[\textlabel{second diagonal sector}{vsp:ana}] ~ \\
\begin{equation}
\label{eq:vec_ana}
E(a) \otimes \unity_N \otimes E(a) \longrightarrow \ket{a,0,a},
\end{equation}
thus,
\begin{equation}
    \label{eq:vec_vdtwo}
    \vdtwo \ = \ \mathrm{span} \ \left\{ \ \ket{a,0,a} \ | \ a \in \Fs \right\}.
\end{equation}
\item[\textlabel{third diagonal sector}{vsp:naa}] ~ \\
\begin{equation}
\label{eq:vec_naa}
\unity_N \otimes E(a) \otimes E(a) \longrightarrow \ket{0,a,a},
\end{equation}
thus,
\begin{equation}
    \label{eq:vec_vdone}
    \vdone \ = \ \mathrm{span} \ \left\{ \ \ket{0,a,a} \ | \ a \in \Fs \right\}.
\end{equation}
\item[\textlabel{anti-commuting sector}{vsp:C}]~ \\
\begin{equation}
\label{eq:vec_nc}
E(a)\otimes E(b) \otimes E(a+b) \ \longrightarrow \  \ket{a,b,a+b}, \ \mathrm{where} \ \s{a}{b}=1.
\end{equation}
Thus,
\begin{equation}
    \label{eq:vec_vnc}
    \vnc \ = \ \mathrm{span} \ \left\{ \ \ket{a,b,a+b} \ | \ a,b,a+b \in \Fs, \ \mathrm{s.t.} \ \s{a}{b}=1 \right\}.
\end{equation} 
\item[\textlabel{commuting sector}{vsp:D}]  ~ \\
\begin{equation}
\label{eq:vec_c}
E(a)\otimes E(b) \otimes E(a+b) \ \longrightarrow \  \ket{a,b,a+b}, \ \mathrm{where} \ \s{a}{b}=0, \  \mathrm{for} \ a,b,a+b \in \Fs.
\end{equation}
Thus,
\begin{equation}
    \label{eq:vec_vc}
    \vc \ = \ \mathrm{span} \ \left\{ \ \ket{a,b,a+b} \ | \ a,b, a+b \in \Fs, \ \mathrm{s.t.} \ \s{a}{b}=0. \right\}
\end{equation} 
\end{description}
Since the above vectors are stabilized by $\EP$, we can directly apply $\ET$ to them. Let $\ket{a,b,c}\in \range{\EP}$. Then $\ket{a,b,c}$ is one of the following: $\ket{0,0,0}$, $\ket{a,a,0}$, $\ket{a,0,a}$, $\ket{0,a,a}$, $\ket{a,b,a+b}$ for $\s{a}{b}=1$ or $\ket{a,b,a+b}$ for $\s{a}{b}$ such that $a,b,a+b\neq 0$. For any $h \in \Fm$, we get then that 
\begin{equation}
\label{eq:Uh_action_vectorized}
\Ua{h} \ \ket{a,b,c} \ = \ \kt{a+\s{a}{h}h}{b+\s{b}{h}h}{c+\s{c}{h}h}
\end{equation}
Then 
\begin{align}
    \label{eq:ET_action}
    \ET \ \ket{a,b,c}  \ =  & \ \dfrac{1}{N^2} \ \sum_{h \in \Fm} \ \Ua{h} \ket{a, \; \; \; b, \; \; \;c} \notag \\ 
    \ = & \ \dfrac{1}{N^2} \ \sum_{h \in \Fm} \kt{a+\s{a}{h}h}{b+\s{b}{h}h}{c+\s{c}{h}h} 
\end{align}

\subsubsection{Action of \texorpdfstring{$\ET$}{randomizing transvections} on \texorpdfstring{$\vzero$}{identity}}
\label{subsubsec:ET_000}
This subspace is of dimension $1$. Since it is invariant under the action of the Clifford group one straightforwardly gets the following result.
\begin{equation}
    \label{eq:ET_000}
    \ET \ \ket{0,0,0} \ = \ \ket{0,0,0}.
\end{equation}
Equation \eqref{eq:ET_000} tells us that $\ET$ has eigenvalue $+1$, and that $\vzero$ is a subspace within this $+1$-eigenspace. This is already anticipated due to the fact that $\ET$ is a unital channel. 
\subsubsection{Action of \texorpdfstring{$\ET$}{randomizing transvections} on \texorpdfstring{$\vdthree$}{identity}}
\label{subsubsec:ET_aan}
When $\ket{a,b,c} = \ket{a,a,0}$ for $a \in \Fs$, the space is of dimension $N^2-1$.  The transformation in equation \eqref{eq:ET_action} takes the form
\begin{align}
\label{eq:ET_action_aan_1}
\ET \ \ket{a,a,0} \  = & \ \ \   \dfrac{1}{N^2} \ \sum_{h: \s{a}{h}=0} \ \ket{a,a,0} \notag \\ 
 \ &  \ + \ \dfrac{1}{N^2} \ \sum_{h: \s{a}{h}=1} \ \kt{a+\s{a}{h}}{a+\s{a}{h}}{0}.
\end{align}
The term $ \dfrac{1}{N^2} \ \sum_{h: \s{a}{h}=0} \ \ket{a,a,0}$ equals $\frac{1}{2} \ \ket{a,a,0}$ using Lemma \ref{lem:1}. For the second term note that 
\begin{equation}
    \label{eq:a_aph}
    \left\{h \in \Fm, | \s{a}{h}=1\right\} =  \left\{ a+h \in \Fm   \mathrm{s.t.}  \  \s{a+h}{h}=1, \ a: \ \mathrm{fixed}\right\}.
\end{equation}
Thus we can re-label the dummy variable $h$ as $a+h$, and by doing so we obtain the expression
\begin{equation}
    \label{eq:ET_action_aan_2}
    \left( \ET \ - \dfrac{1}{2} \id \right)  \ \ket{a,a,0} \  = \ \dfrac{1}{N^2} \ \sum_{h: \s{a}{h}=1} \ \ket{h,h,0}.
\end{equation}

\noindent Note that the third copy of the $m$ qubit system doesn't play an active role in equation \eqref{eq:ET_action_aan_2}. On this space $\ET$ can be written as a linear combination of projectors on some of the irreducible representations occuring in the two-copy adjoint representation of the the Clifford group, we will need Lemma 4 in \cite{Helsen2016} \footnote{Helsen et al \cite{Helsen2016} refer to this subspace as the diagonal sector}. 

\begin{lem}
\label{eq:lem_aan}
$\vdthree$ supports three irreducible representations of the Clifford group. These are listed below. 
\begin{itemize}
    \item The one-dimensional space which is spanned by the following vector. 
    \begin{equation}
        \label{eq:diagonal_trivial}
        v_\mathrm{tr} \ = \ \sum_{a=\Fs} \ \ket{a,a,0}.
    \end{equation}
    This one dimensional space supports the trivial irreducible representation of the Clifford group. 
    \begin{lem}
    \label{lem:diagaonl_invariant}
 The space spanned by $v_{\mathrm{tr}}$ in equation \eqref{eq:diagonal_trivial} lies in the $+1$ eigenspace of $\ET$.
    \end{lem}
    \item The $\frac{N(N+1)}{2}-1$ dimensional space spanned by 
    \begin{equation}
        \label{eq:diagonal_minus}
        v_- \ = \ \sum_{a \in \Fs} \ \lambda_a \ \ket{a,a,0}, 
    \end{equation}
    where $\lambda_a$ satisfy the following conditions.
    \begin{subequations}
    \begin{equation}
        \label{eq:condition_lambda_minus_a_1}
        \sum_{a \in \Fs} \ \lambda_a  \ = \ 0,
        \end{equation}
        which guarantees orthogonality to $v_\mathrm{tr}$ in equation 
        \eqref{eq:diagonal_trivial}. The second condition is true for any $\lambda_a$ in equation \eqref{eq:diagonal_minus}.
        \begin{equation}
            \label{eq:condition_lambda_minus_a_2}
            \lambda_a \ = \  -  \ \dfrac{2}{N} \ \sum_{b:\s{a}{b}=1} \ \lambda_b. 
        \end{equation}
        \end{subequations}
        Vectors of the form $v_-$ in equation \eqref{eq:diagonal_minus} support an $\frac{N(N+1)}{2}-1$ dimensional irreducible adjoint representation.
 \item The $\frac{N(N-1)}{2}-1$ dimensional space spanned by 
    \begin{equation}
        \label{eq:diagonal_plus}
        v_+ \ = \ \sum_{a \in \Fs} \ \lambda_a \ \ket{a,a,0},
    \end{equation}
    where $\lambda_a$ satisfy the following conditions.
    \begin{subequations}
    \begin{equation}
        \label{eq:condition_lambda_a_plus_1}
        \sum_{a \in \Fs} \ \lambda_a  \ = \ 0,
        \end{equation}
        which guarantees orthogonality to $v_\mathrm{tr}$ in equation 
        \eqref{eq:diagonal_trivial}. The second condition is true for any $\lambda_a$ in equation \eqref{eq:diagonal_plus}.
        \begin{equation}
            \label{eq:condition_lambda_a_plus_2}
            \lambda_a \ = \ \ \  \dfrac{2}{N} \ \sum_{b:\s{a}{b}=1} \ \lambda_b.. 
        \end{equation}
        \end{subequations}
        Vectors of the form $v_-$ in equation \eqref{eq:diagonal_minus} support an $\frac{N(N-1)}{2}-1$ dimensional irreducible adjoint representation.
        Equations \eqref{eq:condition_lambda_minus_a_2} and \eqref{eq:condition_lambda_a_plus_2} ensure that $v_+$ and $v_-$ from equations \eqref{eq:diagonal_plus} and \eqref{eq:diagonal_minus} are orthogonal.
\end{itemize}
\end{lem}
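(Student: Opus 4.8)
The plan is to reduce the statement to an already-understood representation and then, for the two nontrivial pieces, invoke the known decomposition from \cite{Helsen2016} (or, for self-containment, carry out a short orbital-operator computation).

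First I would observe that the third tensor slot is inert: since $a+a+0=0$, Lemma \ref{lem:Pauli_basis_choice} gives $\Ua{g}\ket{a,a,0}=\ket{S_g a,\,S_g a,\,0}$ for every $g\in\Cl$, so the Clifford action on $\vdthree$ factors through the symplectic quotient $\Cl/\Pmbb\simeq\Sp$ (equation \eqref{eq:factor_group}) and is precisely the permutation representation of $\Sp$ on the set $\Fs$ of nonzero vectors, i.e.\ the ``diagonal sector'' of the two-copy adjoint representation with a trivial third factor appended. The map $\ket{a,a,0}\mapsto\ket{a,a}$ is an explicit Clifford intertwiner onto that sector, so Lemma~4 of \cite{Helsen2016} applies verbatim and yields the three stated constituents: the line spanned by $v_{\mathrm{tr}}=\sum_{a\in\Fs}\ket{a,a,0}$ carrying the trivial representation, and the spaces of $v_{-}$ and $v_{+}$ cut out by the conditions \eqref{eq:condition_lambda_minus_a_1}--\eqref{eq:condition_lambda_a_plus_2}, with the dimension check $1+\bigl(\tfrac{N(N+1)}{2}-1\bigr)+\bigl(\tfrac{N(N-1)}{2}-1\bigr)=N^{2}-1=|\Fs|$ confirming that nothing is missed.

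For a self-contained treatment I would instead introduce the orbital (adjacency) operator $A$ on $\vdthree$ defined by $A\ket{a,a,0}=\sum_{b:\s{a}{b}=1}\ket{b,b,0}$; by equation \eqref{eq:ET_action_aan_2} this is just $A=N^{2}\bigl(\ET-\tfrac{1}{2}\id\bigr)$ restricted to $\vdthree$, hence $A$ is self-adjoint and commutes with every $\Ua{g}$. Using Lemma \ref{lem:1} one computes $A\,v_{\mathrm{tr}}=\tfrac{N^{2}}{2}\,v_{\mathrm{tr}}$ and, counting the $h$ with $\s{a}{h}=\s{c}{h}=1$ (equal to $N^{2}/2$ when $c=a$ and to $N^{2}/4$ when $a,c$ are independent, i.e.\ $c\neq 0,a$), the operator identity $A^{2}=\tfrac{N^{2}}{4}\bigl(\id+J\bigr)$ with $J$ the all-ones matrix in the $\ket{a,a,0}$ basis. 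Hence on the hyperplane $v_{\mathrm{tr}}^{\perp}$ (where $J$ vanishes) $A$ has only the eigenvalues $\pm N/2$, its eigenspaces there are exactly the spaces of $v_{\pm}$ in \eqref{eq:diagonal_plus}--\eqref{eq:diagonal_minus} (the eigenequations $A v_{\pm}=\pm\tfrac{N}{2}v_{\pm}$ being rewritings of \eqref{eq:condition_lambda_a_plus_2} and \eqref{eq:condition_lambda_minus_a_2}), and their dimensions come out to $\tfrac{N(N-1)}{2}-1$ and $\tfrac{N(N+1)}{2}-1$ from $\tr A=0$ together with $\dim\vdthree=N^{2}-1$. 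Irreducibility of these two eigenspaces would then follow from $\dim\mathrm{End}_{\Cl}(\vdthree)=\dim\mathrm{End}_{\Sp}\bigl(\C[\Fs]\bigr)=3$: since $\Sp$ has exactly three orbits on $\Fs\times\Fs$ (the diagonal; the ordered pairs spanning a totally isotropic plane; the hyperbolic pairs) by Witt's extension theorem over $\F$, the only way to write $3$ as $\sum_i m_i^{2}$ is $1+1+1$, so $\C[\Fs]$ is multiplicity-free with exactly three irreducible constituents; one of them is $\langle v_{\mathrm{tr}}\rangle$ and the remaining two exhaust $v_{\mathrm{tr}}^{\perp}=V_{+}\oplus V_{-}$, forcing each of $V_{+},V_{-}$ to be irreducible.

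Finally, the embedded Lemma \ref{lem:diagaonl_invariant} is immediate from $\ET|_{\vdthree}=\tfrac{1}{2}\id+\tfrac{1}{N^{2}}A$: combined with $A\,v_{\mathrm{tr}}=\tfrac{N^{2}}{2}v_{\mathrm{tr}}$ this gives $\ET\,v_{\mathrm{tr}}=v_{\mathrm{tr}}$, so $v_{\mathrm{tr}}$ lies in the $+1$ eigenspace of $\ET$, and as a byproduct $\ET\,v_{\pm}=\tfrac{1}{2}\bigl(1\pm\tfrac{1}{N}\bigr)v_{\pm}$, matching Lemma \ref{lem:eigensystem_ET_EP_t_2}. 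I expect the \emph{irreducibility} of the $v_{+}$ and $v_{-}$ sectors to be the main obstacle: the quickest honest route is simply to cite \cite{Helsen2016}, while the self-contained route above trades it for the standard but non-trivial input that $\Sp$ acts with permutation rank $3$ on $\Fs$, which itself rests on Witt's theorem for symplectic spaces over $\F$.
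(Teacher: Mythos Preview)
Your proposal is correct and its primary route---identifying $\vdthree$ with the diagonal sector of the two-copy representation via the intertwiner $\ket{a,a,0}\mapsto\ket{a,a}$ and then invoking Lemma~4 of \cite{Helsen2016}---is exactly the paper's proof, which consists of the single sentence ``The proofs for each of the above is given in Lemma 4 in \cite{Helsen2016}.'' Your self-contained orbital-operator argument (computing $A^{2}=\tfrac{N^{2}}{4}(\id+J)$ and using the rank-$3$ permutation action of $\Sp$ on $\Fs$) is a correct and more informative alternative that the paper does not supply.
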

\begin{proof}
The proofs for each of the above is given in Lemma 4 in \cite{Helsen2016}. 
\end{proof}

\begin{lem}
    \label{lem:ET_aan_eigenvalues} ~ \newline
    \begin{itemize}
        \item The space spanned by $v_{\mathrm{tr}}$ in equation \eqref{eq:diagonal_trivial} lies in the $+1$ eigenspace of $\ET$.
        \item The space spanned by $v_-$ lies within an eigenspace of $\ET$ with the eigenvalue $\frac{1}{2}\left(1-\frac{1}{N}\right)$.
        \item The space spanned by $v_+$ lies within an eigenspace of $\ET$ with the eigenvalue $\frac{1}{2}\left(1+\frac{1}{N}\right)$.
    \end{itemize}
 \end{lem}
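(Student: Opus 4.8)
The plan is to reduce the whole lemma to the explicit one-operator formula \eqref{eq:ET_action_aan_2}, which already isolates $\ET$ on $\vdthree$ as a single linear map on the $(N^2-1)$-dimensional space $\mathrm{span}\{\ket{a,a,0}\mid a\in\Fs\}$. Introduce the map $M$ defined by $M\ket{a,a,0}:=\frac{1}{N^2}\sum_{h:\s{a}{h}=1}\ket{h,h,0}$; then \eqref{eq:ET_action_aan_2} is exactly the identity $\ET\big|_{\vdthree}=\tfrac12\,\id+M$. Hence it suffices to show that each of $v_{\mathrm{tr}}$, $v_-$, $v_+$ from \eqref{eq:diagonal_trivial}, \eqref{eq:diagonal_minus} and \eqref{eq:diagonal_plus} is an eigenvector of $M$, with eigenvalues $\tfrac12$, $-\tfrac{1}{2N}$ and $+\tfrac{1}{2N}$ respectively; adding $\tfrac12$ then gives the three eigenvalues $1$, $\tfrac12(1-\tfrac1N)$ and $\tfrac12(1+\tfrac1N)$ claimed in the lemma.

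For a generic $v=\sum_{a\in\Fs}\lambda_a\ket{a,a,0}$ the computation is simply to exchange the order of summation,
\begin{equation}
M v \;=\; \frac{1}{N^2}\sum_{h\in\Fs}\Bigl(\,\sum_{a\in\Fs:\,\s{a}{h}=1}\lambda_a\Bigr)\ket{h,h,0},
\end{equation}
where restricting the inner index $a$ to $\Fs$ rather than $\Fm$ is harmless since $\s{0}{h}=0$ always. For $v_{\mathrm{tr}}$ one has $\lambda_a\equiv1$, and Lemma \ref{lem:1} gives $|\{a\in\Fm:\s{a}{h}=1\}|=N^2/2$, so $M v_{\mathrm{tr}}=\tfrac12 v_{\mathrm{tr}}$. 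For $v_-$, the defining condition \eqref{eq:condition_lambda_minus_a_2} rearranges to $\sum_{b:\s{a}{b}=1}\lambda_b=-\tfrac{N}{2}\lambda_a$, which substituted into the inner sum yields $M v_-=-\tfrac{1}{2N}v_-$; for $v_+$, condition \eqref{eq:condition_lambda_a_plus_2} gives $\sum_{b:\s{a}{b}=1}\lambda_b=+\tfrac{N}{2}\lambda_a$ and hence $M v_+=+\tfrac{1}{2N}v_+$. That $M$ (equivalently $\ET$) is well-defined on these subspaces and that they have the stated dimensions is already part of Lemma \ref{eq:lem_aan} (Lemma 4 of \cite{Helsen2016}), so nothing more is needed there.

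I do not expect any genuine obstacle in this lemma once \eqref{eq:ET_action_aan_2} is available: the conditions \eqref{eq:condition_lambda_minus_a_2}--\eqref{eq:condition_lambda_a_plus_2} defining $v_\pm$ are literally the eigenvalue equations for $M$, and the $v_{\mathrm{tr}}$ case is the counting identity of Lemma \ref{lem:1}. The only steps demanding a bit of care are the swap of summation order above and, upstream, the relabeling $h\mapsto a+h$ used to derive \eqref{eq:ET_action_aan_2} — one must check that it preserves the constraint $\s{a}{h}=1$, which it does since $\s{a}{a+h}=\s{a}{a}+\s{a}{h}=\s{a}{h}$. As a consistency check I would also note that $\ET$ is self-adjoint (Corollary \ref{corr:ET_EP_H_properties}), so the three eigenvalues being pairwise distinct automatically forces the three subspaces spanned by $v_{\mathrm{tr}}$, $v_-$, $v_+$ to be mutually orthogonal, in agreement with the orthogonality relations recorded in Lemma \ref{eq:lem_aan}.
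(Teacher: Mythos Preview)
Your proposal is correct and follows essentially the same approach as the paper: both reduce to the operator $\ET-\tfrac12\id$ on $\vdthree$ via \eqref{eq:ET_action_aan_2}, swap the order of summation, invoke Lemma~\ref{lem:1} for $v_{\mathrm{tr}}$, and read off the $v_\pm$ eigenvalues directly from the defining conditions \eqref{eq:condition_lambda_minus_a_2} and \eqref{eq:condition_lambda_a_plus_2}. Your additional remarks on self-adjointness and orthogonality are sound consistency checks but not needed for the argument.
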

 \begin{proof}
From Lemma \ref{corr:ET_EP_H_properties}, we establish that $\ET$ is a linear combination of the projectors of the irreducible spaces of the Clifford group action. It remains to compute the corresponding eigenvalues.  Instead of computing $\ET \ v_{\mathrm{tr}}$ we compute $\left( \ET - \frac{1}{2} \id \right) \ v_{\mathrm{tr}}$ using equation \eqref{eq:ET_action_aan_2}.
 \begin{itemize}
     \item \begin{align}
         \label{eq:eigenvalues_ET_aan_trivial1}
         \left( \ET-\frac{1}{2}\id \right) \ v_{\mathrm{tr}} \ & = \  \dfrac{1}{N^2} \ \sum_{a \in \Fs}  \ \left( \left( \ET - \frac{1}{2} \ \id \right)  \ \ket{a,a,0} \right)  \\
         \label{eq:eigenvalues_ET_aan_trivial2}
         & = \ \dfrac{1}{N^2} \  \sum_{\substack{a, h \in \Fs \\ \ \mathrm{s.t.} \  \s{a}{h}= 1}} \ \ket{h,h,0}   \\ 
         \label{eq:eigenvalues_ET_aan_trivial3}
         & = \ \dfrac{1}{N^2} \ \sum_{h \in \Fs} \ \left( \sum_{a:\s{a}{h}=1} \ 1  \ \right) \ \ket{h,h,0}  \\
         \label{eq:eigenvalues_ET_aan_trivial4}
         & = \dfrac{1}{2} \ v_{\mathrm{tr}},
     \end{align}
     where \begin{enumerate}
         \item[] we applied equation \eqref{eq:ET_action_aan_2} in \eqref{eq:eigenvalues_ET_aan_trivial1} to get to \eqref{eq:eigenvalues_ET_aan_trivial2} and
         \item[] we employed Lemma \ref{lem:1} to compute $\left| \ \left\{ a \in \Fs \ : \s{h}{a}=1 \right\} \right|$, which is equal to $\frac{N^2}{2}$, and is independent of $h$ for $h \in \Fs$.
     \end{enumerate}
     \item We prove for $v_-$ and $v_+$ together. 
     \begin{align}
         \label{eq:ET_action_vpm0}
         \left( \  \ET - \frac{1}{2} \ \mathrm{id} \right)  \ v_{\pm} \ & = \ \sum_{a \in \Fs} \ \lambda_a \left( \  \ET - \frac{1}{2} \ \mathrm{id} \right) \ \ket{a,a,0}  \\ 
         \label{eq:ET_action_vpm1}
         & = \ \sum_{a \in \Fs} \ \lambda_a \ \left(  \  \dfrac{1}{N^2} \  \sum_{h:\s{a}{h}=1} \  \ket{h,h,0} \right) \\ 
         \label{eq:ET_action_vpm2}
         & = \dfrac{1}{N^2} \ \sum_{h \in \Fs} \ \left( \ \sum_{a:\s{a}{h}=1} \  \lambda_a \right) \  \ket{h,h,0}  \\ 
         \label{eq:ET_action_vpm3}
         & = \ \dfrac{1}{N^2} \ \sum_{h \in \Fs}  \ \left( \ \pm \ \frac{N}{2} \ \lambda_h  \ \ket{h,h,0} \right)  \\ 
         & = \ \pm \dfrac{1}{2N} \ v_{\pm},
     \end{align}
     where we used equation \eqref{eq:condition_lambda_minus_a_2} or \eqref{eq:condition_lambda_a_plus_2} in equation \eqref{eq:ET_action_vpm2} to get to equation \eqref{eq:ET_action_vpm3}.
 \end{itemize}
 \end{proof}
\subsubsection{Action of \texorpdfstring{$\ET$}{randomizing transvections} on \texorpdfstring{$\vdtwo$}{identity} and \texorpdfstring{$\vdone$}{identity}}
\label{subsubsec:ET_ana_naa}
\noindent We employ the Schur's lemma (see equation \eqref{eq:schur_lemma}) for the eigendecomposition of $\ET$ over the $\vdtwo$ and $\vdone$.

\noindent Let $v$ be any of the eigenvectors $v_\mathrm{tr}$, $v_-$ or $v_+$ of $\ET$, which were obtained in Subsubsection \ref{subsubsec:ET_aan}, and $\lambda$ the corresponding eigenvalue.   \begin{equation} \label{eq:ET_schur_lemma_ana_naa_1} \ET  \ \left( \ \Wa{\pi} \ v \right)  \   = \ \Wa{{\pi}} \ \left(  \ \ET \   v \ \right) \ =   \ \lambda \ \left( \Wa{{\pi}}  \    v \right). \end{equation} Note that  \begin{equation}  \label{eq:S3_aan_ana} \ket{a,0,a}  = \Wa{[(23)]} \ \ket{a,a,0}, \; \; \; \;  \; \; \; \; \;     \ket{0,a,a}  = \Wa{[(13)]} \ \ket{a,a,0}. \end{equation} \noindent Thus $\Wa{[(23)]} v_{\mathrm{tr}}$ and $\Wa{[(13)]} v_{\mathrm{tr}}$ lie in the $+1$ eigenspace of $\ET$, $\Wa{[(23)]} v_{-}$ and $\Wa{[(13)]} v_{-}$ lie in the $\frac{1}{2}\left( 1 - \frac{1}{N} \right)$ eigenspace of $\ET$, and $\Wa{[(23)]} v_{+}$ and $\Wa{[(13)]} v_{+}$ lie in the $\frac{1}{2}\left( 1 + \frac{1}{N} \right)$ eigenspace of $\ET$.

\subsubsection{Action of \texorpdfstring{$\ET$}{random transvections} on \texorpdfstring{$\vnc$}{Non-commuting sector} } 
\label{subsec:eigenvalues_ET_vnc}
\begin{lem}
\label{lem:eig_ET_vnc}
On $\vnc$, $\ET$'s highest eigenvalue is $+1$ with a one-dimensional eigenspace. It is given by
\begin{equation}
 \label{eq:def_wtr}
     \ \wtr \ := \ \mathrm{span} \  \left\{ \  \ket{w_\mathrm{tr}} \ :=  \sum_{\substack{a,b \\ 
    \s{a}{b}=1}} \ \ket{a,b, a+b}  \right\}. \end{equation}  
The second highest eigenvalue of $\ET$ on $\vnc$ is upper bounded by $\frac{1}{2} \left( 1 + \frac{7}{4N} \right)$.
\end{lem}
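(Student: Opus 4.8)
The plan is to treat $\ET$ restricted to $\vnc$ as the transition matrix of a reversible Markov chain, to dispatch the Perron eigenvalue by a connectivity argument, and to bound the rest of the spectrum by exploiting the commuting actions of $\Sr$ and of the Clifford group. Concretely, I would work in the special Pauli basis of Lemma~\ref{lem:Pauli_basis_choice}: on $\{\,\ket{a_1,a_2,a_3} : a_1+a_2+a_3=0\,\}\supseteq\vnc$ every $\Ua{h}$, $h\in\Fm$, acts without a sign, and since each transvection $T_h$ is a symplectic bijection fixing $0$ it preserves the defining conditions of $\vnc$ ($a,b,a+b$ all nonzero, $\s{a}{b}=1$). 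Hence $\ET$ maps $\vnc$ into itself; by \eqref{eq:ET_action} its matrix there is entrywise nonnegative, and by Corollary~\ref{corr:ET_EP_H_properties} it is self-adjoint. Thus $\ET|_{\vnc}$ is a symmetric stochastic matrix on the state set $\{(a,b):a,b\in\Fs,\ \s{a}{b}=1\}$; its eigenvalues lie in $[-1,1]$, and $\ket{w_{\mathrm{tr}}}$ of \eqref{eq:def_wtr} --- the constant vector --- is a $+1$ eigenvector. That this eigenvalue is simple follows from irreducibility of the chain: any pair $(a,b)$ with $\s{a}{b}=1$ extends to a symplectic basis, so some $S\in\Sp$ carries one such pair to any other; decomposing $S$ into transvections (they generate $\Sp$) exhibits a path of single-transvection moves, each of probability $\ge 1/N^2$. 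Perron--Frobenius then yields the first claim.

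For the eigenvalue bound I would rewrite $\ET|_{\vnc}$ by splitting the sum in \eqref{eq:ET_action} according to whether $\s{a}{h}$ and $\s{b}{h}$ vanish. Lemma~\ref{lem:1} shows each of the four cases is realised by $N^2/4$ values of $h$, and the case $\s{a}{h}=\s{b}{h}=0$ fixes $(a,b,a+b)$, so $\ET|_{\vnc}=\tfrac{1}{4}\,\id+\tfrac{1}{N^2}(A+B+C)$, where $A,B,C$ are the nonnegative operators with all row and column sums $N^2/4$ that perform a transvection move on two of the three tensor factors while fixing the remaining one. By the $\Sr$-symmetry of $\ET$ these are conjugate, $B=\Wa{[(12)]}\,A\,\Wa{[(12)]}$ and $C=\Wa{[(23)]}\,A\,\Wa{[(23)]}$, so $A+B+C$ is a symmetrisation of $A$. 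The second largest eigenvalue of $\ET|_{\vnc}$ is therefore $\tfrac{1}{4}+\tfrac{1}{N^2}\mu$, where $\mu$ is the top eigenvalue of the symmetric operator $A+B+C$ on the orthogonal complement of $\ket{w_{\mathrm{tr}}}$ inside $\vnc$, and it is enough to prove $\mu\le\tfrac{N^2}{4}+\tfrac{7N}{8}$. (The triangle bound $\mu\le\|A\|+\|B\|+\|C\|$ is too lossy here, since it does not see the cancellation away from $\ket{w_{\mathrm{tr}}}$.)

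To bound $\mu$ I would block-diagonalise $A+B+C$ using its commuting symmetries: it commutes with $\Wa{\pi}$, $\pi\in\Sr$, hence preserves $\ps\vnc$, $\pw\vnc$, $\pws\vnc$, and with $\Ua{g}$, $g\in\Cl$ (by \eqref{eq:ET_Ug_conjugation}), hence on each Clifford-isotypic component acts as a scalar on the irrep factor and as a matrix on the multiplicity space. The subrepresentations of $\vnc$ already present in the one- and two-copy adjoint representations --- the copies of the diagonal sector $V^{(\mathrm{nc})}_d$ and of $V^{(\mathrm{nc})}_1,V^{(\mathrm{nc})}_2$ from Theorem~\ref{thm:nc_inv_subsp}, together with their images under $W$ of \eqref{def:W} and under the $\Wa{\pi}$ --- carry, after a short direct check on generators such as $\ket{\hat a}$, the same $\ET$-eigenvalues as the two-copy diagonal sector of Lemma~\ref{lem:ET_aan_eigenvalues}: only $1$ (realised on $\mathrm{span}\{\ket{w_{\mathrm{tr}}}\}$, the trivial irrep inside $V^{(\mathrm{nc})}_d$) and $\tfrac{1}{2}(1\pm\tfrac{1}{N})$, so this part of the spectrum is at most $\tfrac{1}{2}(1+\tfrac{1}{N})<\tfrac{1}{2}(1+\tfrac{7}{4N})$. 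It remains to treat the new sectors $V^{(\mathrm{nc})}_{\mathrm{null,S}}$, $V^{(\mathrm{nc})}_{\mathrm{null,1}}$ and the remainder of item (v) of Theorem~\ref{thm:nc_inv_subsp}: for these I would apply $\ET$ to the explicit generators $\ps(\ket{\hat{a};b}+\ket{\hat{b};a+b}+\ket{\hat{a+b};a})$ and $\mathcal{P}_{1}(\cdots)$, expand via \eqref{eq:ET_action}, and count the resulting Pauli strings with Lemma~\ref{lem:1}; the eigenvalues that emerge are of the form $\tfrac{1}{2}+O(1/N)$, and the bookkeeping produces precisely the $\tfrac{7}{4N}$ correction. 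Taking the maximum over all blocks other than $\mathrm{span}\{\ket{w_{\mathrm{tr}}}\}$ completes the proof.

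The main obstacle is this last step. The new subrepresentations occur with multiplicity greater than one inside a single Clifford-isotypic component --- the images of a generator under $W$ and under the $\Wa{\pi}$ are further copies, as in Theorem~\ref{thm:nc_1_decomposition} --- so $\ET$ acts there as a genuine matrix whose off-diagonal entries, the overlaps between a generator and its permuted images, are themselves $O(1/N)$; one must diagonalise that small matrix and bound its top eigenvalue, and that is exactly where the slack between $\tfrac{1}{N}$ and $\tfrac{7}{4N}$ is consumed. Organising the count by the $\Sr$-projectors $\ps,\pw,\pws$ and by the value of $\s{a}{b}$ for the auxiliary label is what keeps this tractable.
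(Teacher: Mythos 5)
Your opening setup matches the paper's: you write $\ET|_{\vnc} = \tfrac{1}{4}\,\id + \tfrac{1}{N^2}(A+B+C)$, which is the paper's $\ET = \tfrac{1}{4}\,\id + \tncone + \tnctwo + \tncthree$ (equation~\eqref{eq:ET_vnc_restriction}, with $\tncone = A/N^2$ and so on), and you correctly observe that the three pieces are $\Sr$-conjugates and that $A+B+C$ commutes with $\Wa{\pi}$ and with the Clifford action. One small slip: the paper's Lemma~\ref{lem:tnc_rela} gives $\tncthree = \Wa{[(13)]}\tncone\Wa{[(13)]}$, not $\Wa{[(23)]}\tncone\Wa{[(23)]}$; $[(23)]$ is rather the transposition that $\tncone$ itself commutes with. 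You are also right that the naive bound $\|A\|+\|B\|+\|C\|$ is too lossy.

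The gap is in the planned completion. Block-diagonalising $A+B+C$ against the full joint $\Cl$--$\Sr$ isotypic decomposition of $\vnc$ is precisely what is not available: Theorem~\ref{thm:nc_inv_subsp}(v) leaves the complement of the known invariant subspaces undetermined, and the Discussion lists completing that decomposition as an open problem. Your plan to ``apply $\ET$ to the explicit generators and count'' does handle the sectors of items (i)--(iv) --- Lemma~\ref{lem:ET_action_Vnull} even shows the $\mathrm{null}$ sectors sit in the $\tfrac{1}{4}$-eigenspace, which is harmless --- but says nothing about the leftover sector (v), so the argument does not close as sketched.

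The paper's device, absent from your proposal, is a \emph{further} split of each summand along the transposition it commutes with. Since $\tncone$ commutes with $\Wa{[(23)]}$, one sets $\tnconepm = \tfrac{1}{2}(\id\pm\Wa{[(23)]})\,\tncone\,\tfrac{1}{2}(\id\pm\Wa{[(23)]})$ and similarly for $\tnctwo,\tncthree$, giving $\ETT = \ETT_+ + \ETT_-$. Two facts then substitute for the unavailable irrep decomposition. First, $\ETT_+$ commutes with \emph{all} of $\Sr$ and reduces to $3\,\ps\tnconep\ps + 3\,\pw\tnconep\pw + 3\,\pws\tnconep\pws$, while $\tnconep$ has rank $N^2-1$ with the explicit spectral form $\tnconep = \tfrac{1}{4}\sum_{a\in\Fs} \kb{\hat a}{\hat a}$ (Lemma~\ref{lem:tnconep_spectral_decomposition}); a Gram-matrix computation with Lemma~\ref{lem:1} then yields the exact eigenvalues of $\ETT_+$ --- $\tfrac{3}{4}$ on $\wtr$ and at most $\tfrac{1}{4}(1+\tfrac{2}{N})$ elsewhere. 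Second, each $\tnconem$ has an explicit singular value decomposition with \emph{all} singular values equal to $\tfrac{1}{2N}$ (Lemma~\ref{lem:tnconem_upperbound}), so the triangle inequality --- now applied only to the small part --- gives $\op{\ETT_-}\le\tfrac{3}{2N}$ with support orthogonal to $\wtr$. The second-largest eigenvalue of $\ET$ is then bounded by $\tfrac{1}{4} + \|\ETT_+|_{\wtr^\perp}\| + \op{\ETT_-}$ without ever determining the remaining irreps. That $\pm$ split is the load-bearing idea, and it is the step your proposal is missing.
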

\begin{proof}
Structure of proof: we first prove that $\ET$ can be written as a sum of the following three linear operators.
\begin{equation}
    \label{eq:ET_decomposition}
    \ET \ = \ \frac{1}{4} \ \id + \ \ETT_{+} + \ETT_-,
\end{equation}
with the properties
\begin{enumerate}
\item $\ETT_+$ is a self-adjoint linear operator.  $\wtr$ is an eigenspace of $\ETT_+$ with eigenvalue $\frac{3}{4}$. The non-zero eigenvalues of $\ETT_+$ are $\dfrac{1}{4} \left( 1 \pm \frac{2}{N} \right)$, and $\dfrac{1}{4} \left( 1 \pm \frac{1}{N} \right)$.
\item The largest singular value of $\ETT_-$ is lower or equal to $\frac{3}{2N}$, i.e., 
\begin{equation}
    \label{eq:op_bound_ETTm}
    \op{\ETT_- } \ \le \dfrac{3}{2N}.
\end{equation}
$\supp{\ETT_-}$ lies in the orthogonal complement of $\wtr$.
\end{enumerate}
We construct the target decomposition of equation \eqref{eq:ET_decomposition} in Lemma \ref{lem:ET_decomposition}. We obtain the eigenvalues of $\ETT_+$ in Lemma \ref{lem:ETTp_eigenvalues} along with their eigenspaces. In Lemma \ref{lem:ETTm_eigenvalues} we prove the inequality  \eqref{eq:op_bound_ETTm}.

\end{proof}

\begin{lem}
\label{lem:ET_decomposition}
In this lemma, we construct the decomposition given in equation \eqref{eq:ET_decomposition}.
\end{lem}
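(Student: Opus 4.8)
The plan is to first put $\ET$ into fully explicit combinatorial form on $\vnc$ and then reorganise it into three pieces. Since every $\ket{a,b,a+b}\in\vnc$ obeys $a+b+(a+b)=0$, Lemma~\ref{lem:Pauli_basis_choice} lets us drop all the signs, so that by \eqref{eq:Uh_action_vectorized}--\eqref{eq:ET_action} we have $\ET\ket{a,b,a+b}=\tfrac1{N^2}\sum_{h\in\Fm}\kt{a+\s{a}{h}h}{b+\s{b}{h}h}{(a+b)+(\s{a}{h}+\s{b}{h})h}$. On $\vnc$ one has $\s{a}{b}=1$, so $a,b$ are independent and, by Lemma~\ref{lem:1}, $\Fm$ splits into the four sets $\{h:(\s{a}{h},\s{b}{h})=(i,j)\}$, each of size $N^2/4$. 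The $(0,0)$ block fixes $\ket{a,b,a+b}$ and contributes the $\tfrac14\,\id$ term. In each of the other three blocks exactly the two elements among $a,b,a+b$ that pair nontrivially with $h$ get shifted by $h$; after relabelling the summation variable ($a'=a+h$ in the $(1,0)$ and $(1,1)$ blocks, $b'=b+h$ in the $(0,1)$ block) each block becomes an operator preserving one of the three tensor slots. Write $\ET^{(1)},\ET^{(2)},\ET^{(3)}$ for these pieces (from the $(0,1),(1,0),(1,1)$ blocks, preserving slots $1,2,3$), so $\ET=\tfrac14\,\id+\ET^{(1)}+\ET^{(2)}+\ET^{(3)}$ on $\vnc$; the $W_\sigma$ of \eqref{eq:Wsigma} permute them, $W_\sigma\,\ET^{(j)}\,W_\sigma^{-1}=\ET^{(\sigma(j))}$.

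Next I would record two cheap facts. Each $\ET^{(j)}$ is self-adjoint: in $\bra{a',b',c'}\ET^{(j)}\ket{a,b,c}$ the only surviving shift is $h=a+a'$ (or $h=b+b'$), and the conditions selecting it have the form $\s{x}{h}=0$ or $\s{x}{h}=1$ with $x\in\{a,b,a+b\}$; since the $\F_2$ symplectic form is symmetric and $\s{x+h}{h}=\s{x}{h}$, these conditions are invariant under $(a,b,c)\leftrightarrow(a',b',c')$, whence $\ET-\tfrac14\,\id=\sum_j\ET^{(j)}$ is self-adjoint. A short count with Lemma~\ref{lem:1} then gives $\ET^{(j)}\ket{w_\mathrm{tr}}=\tfrac14\ket{w_\mathrm{tr}}$ for each $j$, so $\wtr$ of \eqref{eq:def_wtr} is a common eigenspace of the three pieces at eigenvalue $\tfrac14$, consistent with $\ET\ket{w_\mathrm{tr}}=\ket{w_\mathrm{tr}}$.

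Then I would define the splitting. Let $\mathcal{U}\subseteq\vnc$ be the smallest subspace that contains the fibre-constant vectors $\ket{\hat a}$ of \eqref{eq:nc_c_dd} (the diagonal-type sector $V^{(\mathrm{nc})}_d$), contains $\wtr$, and is stable under $\ET^{(1)},\ET^{(2)},\ET^{(3)}$, under $\Cl$, and under $\Sr$; with $\Pi$ the orthogonal projector onto $\mathcal{U}$, set $\ETT_+:=\Pi\bigl(\textstyle\sum_j\ET^{(j)}\bigr)\Pi$ and $\ETT_-:=\sum_j\ET^{(j)}-\ETT_+$. Then $\ET=\tfrac14\,\id+\ETT_++\ETT_-$ by construction; $\ETT_+$ is self-adjoint because $\sum_j\ET^{(j)}$ is and $\Pi=\Pi^\dagger=\Pi^2$, hence $\ETT_-$ is self-adjoint as well; and since $\wtr\subseteq\mathcal{U}$ with $\bigl(\sum_j\ET^{(j)}\bigr)\wtr\subseteq\mathcal{U}$ (it is an eigenspace), we get $\ETT_-\,\wtr=0$, hence $\supp\ETT_-\perp\wtr$.

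The main obstacle is calibrating $\mathcal{U}$: it must be large enough that $\sum_j\ET^{(j)}$ genuinely closes on it (note $\ET^{(2)},\ET^{(3)}$ do \emph{not} preserve $V^{(\mathrm{nc})}_d$ by itself, so the right companion vectors have to be adjoined), yet small enough that $\ETT_+$ has exactly the spectrum $\{\tfrac34\}\cup\{\tfrac14(1\pm\tfrac2N),\tfrac14(1\pm\tfrac1N)\}$ and the remainder $\ETT_-$, supported on $\mathcal{U}^\perp\cap\vnc$, obeys $\op{\ETT_-}\le\tfrac3{2N}$ as in \eqref{eq:op_bound_ETTm}. Establishing these last two points is the genuine work, and it is exactly what Lemmas~\ref{lem:ETTp_eigenvalues} and~\ref{lem:ETTm_eigenvalues} do: the first by diagonalising $\sum_j\ET^{(j)}$ on $\mathcal{U}$ via Schur's lemma together with the intertwiner $L^{\mathrm{d}\rightarrow(\mathrm{nc})}$ of \eqref{eq:nc_d_intertwiner} carrying the diagonal-sector eigendata of Lemma~\ref{lem:ET_aan_eigenvalues} into $V^{(\mathrm{nc})}_d$, the second by a direct $\F_2$-counting estimate (Lemma~\ref{lem:1} again) on $\mathcal{U}^\perp$. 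So the content of the present lemma is just the bookkeeping above: the partition $\ET=\tfrac14\,\id+\sum_j\ET^{(j)}$, the self-adjointness checks, and pinning down $\mathcal{U}$.
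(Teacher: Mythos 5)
Your derivation of the form $\ET=\tfrac14\,\id+\tncone+\tnctwo+\tncthree$ on $\vnc$ (your $\ET^{(1)},\ET^{(2)},\ET^{(3)}$), the observation that they are permuted by $\Wa{\sigma}$, and the self-adjointness checks all match the paper and are fine. The gap is in how you split $\tncone+\tnctwo+\tncthree$ into $\ETT_+$ and $\ETT_-$. You posit a single orthogonal projector $\Pi$ onto some invariant subspace $\mathcal{U}$ and set $\ETT_\pm$ to be the block-diagonal compression and its complement. That is not what the paper does, and it cannot reproduce the paper's operators. The paper's construction (equations \eqref{eq:def_tnconem}--\eqref{eq:def_tncthreep}, Lemma \ref{lem:tnc_rela}, and equations \eqref{eq:ETTp}--\eqref{eq:ETTm}) splits \emph{each} $\mathrm{T}^{(\mathrm{nc})}_j$ separately along the $\pm1$ eigenspaces of the transposition with which it commutes: $\tncone$ with $\Wa{[(23)]}$, $\tnctwo$ with $\Wa{[(13)]}$, $\tncthree$ with $\Wa{[(12)]}$. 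These three transpositions are different, so the supports of $\tnconep$, $\tnctwop$, $\tncthreep$ are \emph{not} contained in a common subspace, and $\ETT_+ := \tnconep+\tnctwop+\tncthreep$ is not a compression $\Pi\bigl(\sum_j\mathrm{T}^{(\mathrm{nc})}_j\bigr)\Pi$ by any single projector $\Pi$ (if it were, all three supports would lie in $\range\Pi$, which then cannot simultaneously act as $\tfrac12(\id+\Wa{[(23)]})$, $\tfrac12(\id+\Wa{[(13)]})$, and $\tfrac12(\id+\Wa{[(12)]})$ on the respective supports). In particular, $\ETT_+$ and $\ETT_-$ in the paper are not supported on orthogonal subspaces, contrary to what your construction forces.

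This matters for the downstream bounds. The paper gets $\op{\ETT_-}\le\tfrac{3}{2N}$ by the triangle inequality, using that each summand $\mathrm{T}^{(\mathrm{nc})}_{j,-}$ has all singular values equal to $\tfrac{1}{2N}$ (Lemma \ref{lem:tnconem_upperbound}); this is exactly why the per-$j$ splitting along the commuting transposition is the right move. Your proposal leaves $\mathcal{U}$ as ``the smallest subspace such that \ldots'' without exhibiting it, and then appeals to Lemmas \ref{lem:ETTp_eigenvalues} and \ref{lem:ETTm_eigenvalues} to finish. But those lemmas are statements about the paper's $\ETT_\pm$, which are different operators from yours, so the appeal is circular: you would still have to prove the spectral facts for your own $\ETT_\pm$, and nothing in your construction gives a handle on $\op{\ETT_-}$ comparable to the per-$j$ singular-value computation. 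The fix is to abandon the single-$\Pi$ picture and split each $\mathrm{T}^{(\mathrm{nc})}_j$ along the $\pm1$ eigenspace of its own commuting transposition, which is what Lemma \ref{lem:tnc_rela} licenses.
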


Using equation \eqref{eq:ET_action}, one sees that the action of $\ET$ on $\ket{a,b,a+b}$ when $a,b$ are such that $\s{a}{b}=1$ is given by
\begin{align}
    \label{eq:ET_action_non_commuting_sector}
    \ET \  \ket{a,b,a+b}  \  =  \ & \ \ \ \ \ \ \ \dfrac{1}{N^2} \ \sum_{\substack{h: \\ 
    \s{a}{h}=0 \\ \s{b}{h}=0 }} \ \ket{a,b,a+b}  \\ 
    \label{eq:T}
    & \  + \ \dfrac{1}{N^2} \  \ \sum_{\substack{h: \\
    \s{a}{h}=0 \\ \s{b}{h}=1 }} \ \kt{a}{b+h}{a+b+h}  \\
    \label{eq:S}
    &\  +  \ \dfrac{1}{N^2} \ \sum_{\substack{h: \\ 
    \s{a}{h}=1 \\ \s{b}{h}=0 }} \ \kt{a+h}{b}{a+b+h} \\ 
    \label{eq:R}
    & \  +  \ \dfrac{1}{N^2} \ \sum_{\substack{h: \\ 
    \s{a}{h}=1 \\ \s{b}{h}=1 }} \ \kt{a+h}{b+h}{a+b}.
\end{align}
\noindent In Section \ref{sec:appendix:ET_vnc}, we prove from equation \eqref{eq:ET_action_non_commuting_sector} (and the expressions \eqref{eq:T}, \eqref{eq:S} and \eqref{eq:R}) that the action of $\ET$, when restricted to $\vnc$ can be re-written in the following form 
\begin{equation} 
\label{eq:ET_vnc_restriction}
\ET \  = \ \dfrac{1}{4} \id \ + \ \tncone \ + \ \tnctwo \ + \ \tncthree,
\end{equation} where by $\id$ we mean the identity restricted to $\vnc$, and we define the linear operators $\tncone$, $\tnctwo$ and $\tncthree$ on $\vnc$ by providing their action on the basis vectors $\ket{a,b,a+b}$.
\begin{align}
        \label{eq:def_tncone}
        \tncone \ \ket{a,b,a+b} \ := \ \dfrac{1}{N^2} \ \sum_{\substack{h: \\
    \s{a}{h}=1 \\ \s{b}{h}=1 }} \ \ket{a,h,a+h},
\\
        \label{eq:def_tnctwo}
                \tnctwo \ \ket{b,a,a+b} \ := \ \dfrac{1}{N^2} \ \sum_{\substack{h: \\
    \s{a}{h}=1 \\ \s{b}{h}=1 }} \ \ket{h,a,a+h}, 
    \\
        \label{eq:def_tncthree}
                \tncthree \ \ket{a+b,b,a} \ := \ \dfrac{1}{N^2} \ \sum_{\substack{h: \\
    \s{a}{h}=1 \\ \s{b}{h}=1 }} \ \ket{a+h,h,a}. 
    \end{align}
    
We define $\ETT$ as 
\begin{equation}
    \label{eq:defn_ETT}
    \ETT \ := \ \ET - \dfrac{1}{4} \ \id. 
\end{equation}
Then from equation \eqref{eq:ET_vnc_restriction} we see that 
\begin{equation}
    \label{eq:ETT}
    \ETT \ = \ \tncone \ + \ \tnctwo \ + \ \tncthree.
\end{equation}
Define 
\begin{align}
    \label{eq:def_tnconem}
    \tnconem \ := \ \frac{1}{2}\left( \id - \Wa{[(23)]} \right) \ \tncone \ \frac{1}{2}\left( \id - \Wa{[(23)]} \right). \\
\label{eq:def_tnconep}
    \tnconep \ := \ \frac{1}{2}\left( \id + \Wa{[(23)]} \right) \ \tncone \ \frac{1}{2}\left( \id + \Wa{[(23)]} \right),  \\ 
       \label{eq:def_tnctwom}
     \tnctwom \ := \  \ \frac{1}{2}\left( \id - \Wa{[(13)]} \right) \ \tnctwo \ \frac{1}{2}\left( \id - \Wa{[(13)]} \right)  \\ 
\label{eq:def_tnctwop}
    \tnctwop \ := \  \ \frac{1}{2}\left( \id + \Wa{[(13)]} \right) \ \tnctwo \ \frac{1}{2}\left( \id + \Wa{[(13)]} \right) \\
    \label{eq:def_tncthreem}
     \tncthreem \  := \  \ \frac{1}{2}\left( \id - \Wa{[(12)]} \right) \ \tncthree \ \frac{1}{2}\left( \id - \Wa{[(12)]} \right) \\ 
\label{eq:def_tncthreep}
     \tncthreep \ := \  \ \frac{1}{2}\left( \id + \Wa{[(12)]} \right) \ \tncthree \ \frac{1}{2}\left( \id + \Wa{[(12)]} \right). 
\end{align} In Lemma \ref{lem:tnc_rela}, we prove that $\tncone$, $\tnctwo$ and $\tncthree$ commute with $\Wa{[(23)]}$, $\Wa{[(13)]}$ and $\Wa{[(12)]}$ respectively. Hence the following equations give an orthogonal decomposition of $\tncone$, $\tnctwo$ and $\tncthree$.
\begin{align}
    \label{eq:tncone_decomposition}
    \tncone \ = \ \tnconep \ + \ \tnconem. \\
    \label{eq:tnctwo_decomposition}
    \tnctwo \ = \ \tnctwop \ + \ \tnctwom. \\
    \label{eq:tncthree_decomposition}
    \tncthree \ = \ \tncthreep \ + \ \tncthreem. 
\end{align} 
Define
 \begin{align}
     \label{eq:ETTp}
     \ETT_+ \ := \ \tnconep \  + \  \tnctwop \  + \  \tncthreep, \\
     \label{eq:ETTm}
     \ETT_- \ := \ \tnconem \  + \  \tnctwom \  + \  \tncthreem,
 \end{align} from which we see that
 \begin{align}
\label{eq:ETT_decomposition}
 \ETT \ 
 = \  \ \ETT_+ \ + \ \ETT_-,
 \end{align}
which is the decomposition proposed in equation \eqref{eq:ET_decomposition}. 
\begin{lem}
    \label{lem:tnc_rela}
   $\tncone$, $\tnctwo$ and $\tncthree$ have the following properties.
    \begin{itemize}
    \item $\tncone$, $\tnctwo$ and $\tncthree$ commute with $\Wa{[(23)]}$, $\Wa{[(13)]}$, and $\Wa{[(12)]}$ respectively. \begin{align}
    \label{eq:tncone_axis_of_symmetry}
    \Wa{[(23)]} \ \tncone \ \Wa{[(23)]} \ = \ \tncone,\\
    \label{eq:tnctwo_axis_of_symmetry}
    \Wa{[(13)]} \ \tnctwo \ \Wa{[(13)]} \ = \ \tnctwo, \\
    \label{eq:tncthree_axis_of_symmetry}
    \Wa{[(12)]} \ \tncthree \ \Wa{[(12)]} \ = \ \tncthree. 
\end{align}
\item $\tncone$, $\tnctwo$ and $\tncthree$ are permutations of each other:  \begin{align}
\label{eq:tnc_rela_3_1}
\tncthree \ = & \    \Wa{[(13)]} \  \tncone \ \Wa{[(13)]} \\ 
\label{eq:tnc_rela_3_2}
= & \  \  \Wa{[(132)]} \ \tncone \Wa{[(123)]}  \\
= & \  \Wa{[(123)]} \ \tnctwo \Wa{[(132)]} \notag \\
= & \   \Wa{[(23)]} \  \tnctwo \ \Wa{[(23)]}, \notag  \\
\label{eq:tnc_rela_2}
\ \tnctwo \ = \ & \ \Wa{[(123)]} \ \tncone \Wa{[(132)]} \\ 
= & \  \Wa{[(12)]} \  \tncone \ \Wa{[(12)]}  \notag \\ 
=  & \  \Wa{[(132)]} \ \tncthree \Wa{[(123)]} \notag \\
= & \ \Wa{[(23)]} \  \tncthree \ \Wa{[(23)]} \notag \\ 
\label{eq:tnc_rela_1}
\tncone \ = & \ \Wa{[(123)]} \ \tncthree \Wa{[(132)]} \\
= & \   \Wa{[(13)]} \  \tncthree \ \Wa{[(13)]} \notag  \\
=  &  \  \Wa{[(132)]} \ \tnctwo \Wa{[(123)]} \notag \\
= \ &   \Wa{[(12)]} \  \tnctwo \ \Wa{[(12)]}, \notag 
\end{align}
\end{itemize}
\end{lem}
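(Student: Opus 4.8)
The lemma is proved by unwinding the definitions \eqref{eq:def_tncone}--\eqref{eq:def_tncthree} together with the slot-permutation action $\Wa{\pi}\ket{a_1,a_2,a_3}=\ket{a_{\pi^{-1}(1)},a_{\pi^{-1}(2)},a_{\pi^{-1}(3)}}$ from \eqref{eq:Wsigma} and \eqref{eq:adj_S3_action}, using throughout that $\pi\mapsto\Wa{\pi}$ is a representation (so $\Wa{\pi}\Wa{\sigma}=\Wa{\pi\sigma}$ and $\Wa{\tau}^{2}=\id$ for a transposition $\tau$). The one point demanding care is that the three operators are specified on different ``canonical'' argument vectors of $\vnc$: $\tncone$ on $\ket{a,b,a+b}$, $\tnctwo$ on $\ket{b,a,a+b}$, $\tncthree$ on $\ket{a+b,b,a}$. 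So in each identity one first rewrites $\Wa{\pi}$ applied to a canonical argument in the canonical form appropriate to the operator being applied next, and at the end re-reads the output as the defining expression of the target operator.

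First I would prove the two ``transposition-conjugation'' identities $\tnctwo=\Wa{[(12)]}\,\tncone\,\Wa{[(12)]}$ and $\tncthree=\Wa{[(13)]}\,\tncone\,\Wa{[(13)]}$ by direct label tracking. For instance $\Wa{[(13)]}\ket{a+b,b,a}=\ket{a,b,a+b}$, which is in canonical form for $\tncone$; applying $\tncone$ gives $\tfrac1{N^2}\sum_{h:\,\s{a}{h}=1,\,\s{b}{h}=1}\ket{a,h,a+h}$; applying $\Wa{[(13)]}$ once more gives $\tfrac1{N^2}\sum_{h:\,\s{a}{h}=1,\,\s{b}{h}=1}\ket{a+h,h,a}$, which is exactly $\tncthree\ket{a+b,b,a}$ by \eqref{eq:def_tncthree}; no change of summation variable is needed. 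Conjugating these two identities by $\Wa{[(12)]}$, resp.\ $\Wa{[(13)]}$, gives $\tncone=\Wa{[(12)]}\,\tnctwo\,\Wa{[(12)]}$ and $\tncone=\Wa{[(13)]}\,\tncthree\,\Wa{[(13)]}$; the remaining two transposition-conjugation equalities $\tncthree=\Wa{[(23)]}\,\tnctwo\,\Wa{[(23)]}$ and $\tnctwo=\Wa{[(23)]}\,\tncthree\,\Wa{[(23)]}$ are obtained by the same label tracking. This covers all the transposition-conjugation equalities appearing in \eqref{eq:tnc_rela_3_1}--\eqref{eq:tnc_rela_1}.

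Next I would prove the three axis-of-symmetry relations \eqref{eq:tncone_axis_of_symmetry}--\eqref{eq:tncthree_axis_of_symmetry}; they are one computation up to relabeling of slots, so I sketch only \eqref{eq:tncone_axis_of_symmetry}. We have $\Wa{[(23)]}\ket{a,b,a+b}=\ket{a,a+b,b}=\ket{\alpha,\beta,\alpha+\beta}$ with $\alpha=a$, $\beta=a+b$; applying $\tncone$ and then $\Wa{[(23)]}$ gives $\tfrac1{N^2}\sum_{h:\,\s{a}{h}=1,\,\s{a+b}{h}=1}\ket{a,a+h,h}$. Substituting $g=a+h$ and using bilinearity, $\s{x}{y}=\s{y}{x}$ over $\F$, $\s{a}{a}=0$, and $\s{a+b}{a}=\s{a}{b}=1$, one checks that $\{h:\s{a}{h}=1,\s{a+b}{h}=1\}$ is carried bijectively onto $\{g:\s{a}{g}=1,\s{b}{g}=1\}$ and that $\ket{a,a+h,h}=\ket{a,g,a+g}$, so the expression equals $\tncone\ket{a,b,a+b}$. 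The relations for $\tnctwo$ and $\tncthree$ are obtained identically, with the slot-swaps $[(13)]$ and $[(12)]$ respectively.

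Finally, the remaining ``$3$-cycle-conjugation'' equalities in \eqref{eq:tnc_rela_3_1}--\eqref{eq:tnc_rela_1}, such as $\tncthree=\Wa{[(132)]}\,\tncone\,\Wa{[(123)]}$, follow formally: writing each $3$-cycle as a product of two transpositions and using the homomorphism property of $\Wa{\cdot}$, the expression reduces to a transposition-conjugate of one of $\tncone,\tnctwo,\tncthree$ followed by an axis-of-symmetry relation, e.g.\ $\Wa{[(132)]}\,\tncone\,\Wa{[(123)]}=\Wa{[(12)]}\bigl(\Wa{[(13)]}\,\tncone\,\Wa{[(13)]}\bigr)\Wa{[(12)]}=\Wa{[(12)]}\,\tncthree\,\Wa{[(12)]}=\tncthree$, the last step by \eqref{eq:tncthree_axis_of_symmetry}. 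I anticipate no genuine obstacle: the whole argument is bookkeeping, and the only place an error could slip in is in matching the canonical argument forms and in the $\F$-arithmetic of the symplectic constraint sets under the substitution $h\mapsto a+h$.
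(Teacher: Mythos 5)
Your proposal is correct and follows essentially the same strategy as the paper: prove the transposition–conjugation identity $\tncthree=\Wa{[(13)]}\,\tncone\,\Wa{[(13)]}$ and the axis-of-symmetry relation $\Wa{[(23)]}\,\tncone\,\Wa{[(23)]}=\tncone$ by explicit label tracking and a change of summation variable $h\mapsto a+h$ on the symplectic constraint sets, then derive the remaining ($3$-cycle) relations formally from the homomorphism property of $\Wa{\cdot}$. The only difference is cosmetic ordering (the paper proves the axis-of-symmetry relation first, you prove the transposition conjugations first), and both use the same mechanism throughout.
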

\begin{proof}
Structure of proof:first we prove equation \eqref{eq:tncone_axis_of_symmetry}. Then we prove the first (top-most) relation in equation \eqref{eq:tnc_rela_3_1}. Equation \eqref{eq:tnc_rela_3_2} is proved by first conjugating both sides of equation \eqref{eq:tncone_axis_of_symmetry} by $\Wa{[(13)]}$, and noting that $\Wa{[(123)} = \Wa{[(13)]} \Wa{[(23)]}$. The rest of the relations can be proved using identical steps.    
\noindent Proof of equation \eqref{eq:tncone_axis_of_symmetry}: in equation \eqref{eq:def_tncone} we see that 
\begin{align}
 \ \tncone \ \ket{a,a+b,b} & \ = \   \dfrac{1}{N^2} \ \, \, \sum_{\substack{k: \\
    \s{a}{k}=1 \\ \s{a+b}{k}=1 }} \ \ket{a,k,a+k} \notag \\ 
    & \ = \   \dfrac{1}{N^2} \ \, \, \sum_{\substack{k: \\
    \s{a}{k}=1 \\ \s{b}{k}=0 }} \ \ket{a,k,a+k}  \notag  \\ 
    \label{eq:tncone_2}
    & \ = \ \dfrac{1}{N^2} \ \sum_{\substack{h: \\ \s{a}{h}=1 \\ \s{b}{h}=1}} \ \ket{a,a+h,h},
\end{align}
where we go from the first line to the second using
\begin{equation}
    \label{eq:tncone_2_1_2}
    \left\{ k \in \Fs \  \big| \ 
    \s{a}{k}= \s{a+b}{k}=1 \right\} \ = \ \left\{ k \in \Fs \  \big| \ \s{a}{k}=1, \  \s{b}{k}=0 \right\}.
\end{equation} To go from the second line to the third, we change variables from $k$ to $h=a+k$, by noting that 
\begin{equation}
    \label{eq:tncone_2_2_3}
     \left\{ \ h \ \big| \  \s{a}{h}= \s{b}{h}=1 \right\} \ = \ a \ + \  \left\{ k \in \Fs \  \big| \ \s{a}{k}=1, \  \s{b}{k}=0 \right\}, \, \, \, \left(\mathrm{using} \ \s{a}{b}=1\right).
\end{equation} Noting that the LHS of equation \eqref{eq:tncone_2} can be re-written as  $ \tncone  \Wa{[(23)]} \  \ket{a,b,a+b}$, and the RHS in \eqref{eq:tncone_2} can be re-written as the RHS of \eqref{eq:tncone_3}, we get the following.
\begin{align}
    \label{eq:tncone_3}
 \ \tncone \ \  \Bigg( \ \Wa{[(23)]} \  \ket{a,b,a+b} \Bigg) & \ = \  \ \Wa{[(23)]}  \ \left(  \ \dfrac{1}{N^2} \ \, \, \sum_{\substack{h: \\
    \s{a}{h}=1 \\ \s{b}{h}=1 }} \ \ket{a,h,a+h} \right).
\end{align} Comparing equation \eqref{eq:tncone_3} with equation \eqref{eq:def_tncone} we see that the action of $\tncone$ on $\ket{a,b,a+b}$ commutes with the action of $\Wa{[(23)]}$ on $\ket{a,b,a+b}$. Since this is true for all basis vectors $\ket{a,b,a+b}$ in $\vnc$, we obtain equation \eqref{eq:tncone_axis_of_symmetry}. 

\noindent Proof of equation \eqref{eq:tnc_rela_3_1}: apply $\Wa{[(13)]}$ to both sides of equation \eqref{eq:def_tncone}
\begin{align}
    \label{eq:tncone_tncthree_step1}
  \  \Wa{[(13)]} \ \tncone \ \Wa{[(13)]} \  \ket{a+b ,b,a}  \ = &  \   \dfrac{1}{N^2} \ \, \, \sum_{\substack{h: \\
    \s{a}{h}=1 \\ \s{b}{h}=1 }} \ \ket{a+h,h,a} \notag \\
    = &  \ \tncthree \ \ket{a+b,b,a}, 
\end{align}
and equation \eqref{eq:tnc_rela_3_1} is proved by noting that equation \eqref{eq:tncone_tncthree_step1} holds for all basis vectors $\ket{a+b,b,a}$.
\end{proof}

\begin{lem}
\label{lem:ETTp_eigenvalues} 
$\ETT_+$ (defined in equation \eqref{eq:ETTp}) is a self-adjoint linear operator. It's largest eigenvalue is $\frac{3}{4}$, and $\wtr$ (defined in equation \eqref{eq:def_wtr}) is the corresponding eigenspace. All remaining eigenvalues are less than or equal to $\frac{1}{4} \left( 1 \pm \frac{2}{N} \right)$.
\end{lem}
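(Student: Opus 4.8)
The plan is to show that each symmetrised piece $\tnconep,\tnctwop,\tncthreep$ is in fact $\tfrac14$ times an orthogonal projection of rank $N^2-1$, and then to read off the whole spectrum of $\ETT_+$ from a singular-value identity that turns the (non-commuting) three-projection problem into the spectral problem for the symplectic graph on $\Fs$. Self-adjointness is immediate: in the Pauli basis of $\vnc$ one has $\bra{a,b',a+b'}\tncone\ket{a,b,a+b}=\tfrac1{N^2}[\s{b}{b'}=1]$, with all other entries zero, a symmetric matrix, so $\tncone$ (hence $\tnctwo,\tncthree$) is self-adjoint on $\vnc$; since $\tncone$ commutes with the self-adjoint projector $\tfrac12(\id+\Wa{[(23)]})$ by Lemma \ref{lem:tnc_rela}, $\tnconep=\tncone\cdot\tfrac12(\id+\Wa{[(23)]})$ is a product of commuting self-adjoint operators, hence self-adjoint, and likewise for $\tnctwop,\tncthreep$ and for $\ETT_+$.

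\noindent\textbf{Collapse to projections.} Decompose $\vnc=\bigoplus_{a\in\Fs}V_a$ with $V_a=\mathrm{span}\{\ket{a,b,a+b}:\s{a}{b}=1\}$; both $\tncone$ and $\Wa{[(23)]}$ preserve each $V_a$. Identifying $V_a$ with $\C^{S_a}$, $S_a=\{x:\s{a}{x}=1\}$, $|S_a|=N^2/2$, we have $N^2\,\tncone|_{V_a}=M_a$ with $(M_a)_{hb}=[\s{h}{b}=1]$, and $\Wa{[(23)]}|_{V_a}$ is the fixed-point-free involution $\Pi_a:b\mapsto a+b$. Splitting the count $\#\{k:\s{a}{k}=\s{h}{k}=\s{b}{k}=1\}$ into the cases $h=b$, $\{a,h,b\}$ independent, and $b=a+h$ (the last contributing $0$, since then $\s{b}{k}=\s{a}{k}+\s{h}{k}$ cannot be $1$), Lemma \ref{lem:1} gives $M_a^2=\tfrac{N^2}{8}(\id+J_a-\Pi_a)$ with $J_a$ the all-ones matrix. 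On the $(+1)$-eigenspace of $\Pi_a$ this is $\tfrac{N^2}{8}J_a$, so $M_a$ has eigenvalue $\tfrac{N^2}{4}$ on the single vector $\ket{u_a}:=\sum_{b\in S_a}\ket{a,b,a+b}$ and $0$ on its orthogonal complement there, while the symmetrisation kills the $(-1)$-eigenspace of $\Pi_a$. Hence $\tnconep=\tfrac14 P_1$, $P_1$ the orthogonal projector onto $\mathrm{span}\{\ket{u_a}:a\in\Fs\}$ (the $\ket{u_a}$ are mutually orthogonal, $\|u_a\|^2=N^2/2$, so $\mathrm{rank}\,P_1=N^2-1$). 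Conjugating by $\Wa{[(12)]},\Wa{[(13)]}$ and using the relations of Lemma \ref{lem:tnc_rela} together with the $\Sr$ identities that match the symmetrisers gives $\tnctwop=\tfrac14 P_2$, $\tncthreep=\tfrac14 P_3$ with $P_2=\Wa{[(12)]}P_1\Wa{[(12)]}$, $P_3=\Wa{[(13)]}P_1\Wa{[(13)]}$; moreover $\sum_a\ket{u_a}=\Wa{[(12)]}\sum_a\ket{u_a}=\Wa{[(13)]}\sum_a\ket{u_a}=\ket{w_\mathrm{tr}}$.

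\noindent\textbf{Spectrum.} Write $P_i=\Phi_i\Phi_i^\dagger$ with $\Phi_i:\C^{\Fs}\to\vnc$ the isometry onto $\mathrm{range}\,P_i$ that sends the standard basis to the normalised generators. A one-line overlap computation gives $\bk{u_a}{\Wa{[(12)]}u_b}=\bk{u_a}{\Wa{[(13)]}u_b}=\bk{\Wa{[(12)]}u_a}{\Wa{[(13)]}u_b}=[\s{a}{b}=1]$, so $\Phi_i^\dagger\Phi_j=\tfrac2{N^2}A$ for $i\ne j$, where $A$ is the adjacency matrix of the symplectic graph on $\Fs$, $A_{ab}=[\s{a}{b}=1]$. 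Then $4\,\ETT_+=P_1+P_2+P_3=\Psi\Psi^\dagger$ for $\Psi=[\,\Phi_1\mid\Phi_2\mid\Phi_3\,]$, and $\Psi^\dagger\Psi=\id_{\Fs}\otimes\id_3+\tfrac2{N^2}A\otimes(J_3-\id_3)$ on $\C^{\Fs}\otimes\C^3$, which has the same nonzero spectrum (with multiplicities) as $\Psi\Psi^\dagger$. Diagonalising $A=\tfrac12(\tilde J-\tilde C)$, $\tilde C_{ab}=(-1)^{\s{a}{b}}$, from $\tilde C^2=N^2\id-\tilde J$, $\tilde C\mathbf{1}=-\mathbf{1}$ and $\mathrm{tr}\,\tilde C=N^2-1$ yields $\mathrm{spec}(A)=\{\tfrac{N^2}2\ (\text{once, on }\mathbf{1}),\ -\tfrac N2\ (\times\tfrac{(N-1)(N+2)}2),\ \tfrac N2\ (\times\tfrac{(N+1)(N-2)}2)\}$; tensoring with $\mathrm{spec}(J_3-\id_3)=\{2,-1,-1\}$ gives the nonzero eigenvalues of $\ETT_+$ as $\tfrac34$ with multiplicity $1$ (from $\mathbf{1}$ of $A$ against the symmetric mode of $\C^3$) together with $\tfrac14(1\pm\tfrac2N)$ and $\tfrac14(1\pm\tfrac1N)$. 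Since $\Psi(\mathbf{1}\otimes(1,1,1))$ is proportional to $\ket{w_\mathrm{tr}}$, the $\tfrac34$-eigenspace is exactly $\wtr$; and as $\tfrac14(1+\tfrac2N)<\tfrac34$ for all $m\ge1$, every remaining eigenvalue is $\le\tfrac14(1+\tfrac2N)$. This proves the lemma (and the sharper eigenvalue list needed later in Lemma \ref{lem:eig_ET_vnc}).

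\noindent\textbf{Main obstacle.} The crux is the collapse to projections: performing the $M_a^2$ count with the correct vanishing of the $b=a+h$ term — which is precisely what forces $M_a$ to be rank one on each symmetric fibre — and then bookkeeping the three symmetrisers under $\Sr$-conjugation so that $\tnctwop,\tncthreep$ genuinely become scalar multiples of projections with all three pairwise Gram blocks equal to $\tfrac2{N^2}A$. Once $4\,\ETT_+$ is exhibited as $\Psi\Psi^\dagger$, the rest is the routine spectral analysis of the symplectic graph.
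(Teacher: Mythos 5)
Your proof is correct, and the second half genuinely departs from the paper's route. Your first step — showing each of $\tnconep,\tnctwop,\tncthreep$ equals $\tfrac14$ times a rank-$(N^2-1)$ orthogonal projector — recovers the content of the paper's Lemma \ref{lem:tnconep_spectral_decomposition}, though you obtain it by computing $M_a^2=\tfrac{N^2}{8}(\id+J_a-\Pi_a)$ per fibre $V_a$ rather than by exhibiting the $\w{a}{b}$-eigenbasis directly; both arguments hinge on the same vanishing of the $b=a+h$ term. After that the paths diverge. The paper decomposes $\ETT_+=\ETTps+\ETTpw+\ETTpws$ along the $\mathrm{S}_3$-isotypic projectors $\ps,\pw,\pws$, reduces each block to $3\,\pj{k}\tnconep\pj{k}$, and computes overlaps $\bra{\hat a}\pj{k}\q{b}$ to pull out the eigenvalues one block at a time (Lemmas \ref{lem:ETT_ortho_decomposition}, \ref{lem:eig_ETTpcs}, \ref{lem:eig_ETTpw_ETTpws}). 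You instead factor $4\,\ETT_+=P_1+P_2+P_3=\Psi\Psi^\dagger$ and transfer the nonzero spectrum to $\Psi^\dagger\Psi=\id\otimes\id_3+\tfrac{2}{N^2}A\otimes(J_3-\id_3)$, so that the whole problem reduces to the spectrum of the symplectic-graph adjacency matrix $A$ tensored with $J_3-\id_3$. Your computation of $\mathrm{spec}(A)$ via $\tilde C^2=N^2\id-\tilde J$ and $\operatorname{tr}\tilde C=N^2-1$ is correct, and the tensor picture then yields the five eigenvalues $\tfrac34,\ \tfrac14(1\pm\tfrac2N),\ \tfrac14(1\pm\tfrac1N)$ with their multiplicities in one stroke, plus the identification of the $\tfrac34$-eigenvector as $\Psi(\mathbf{1}\otimes(1,1,1))\propto\ket{w_{\mathrm{tr}}}$ since $\lambda\mu=2$ forces $(\lambda,\mu)=(1,2)$. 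Your route is arguably leaner — it never needs the $\omega$-projectors or the complex overlap identities \eqref{eq:inner_a_pw_or_pws_b}, and the multiplicities come for free from the tensor factorisation — whereas the paper's route keeps the $\mathrm{S}_3$-isotypic structure explicit, which it later uses to organise the Clifford subrepresentations in Theorems \ref{thm:nc_inv_subsp} and \ref{thm:nc_1_decomposition}. Both give the same spectrum and prove the lemma.
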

\begin{proof}
Structure of proof: we use the following orthogonal decomposition of $\ETT_+$, which is proved in Lemma \ref{lem:ETT_ortho_decomposition}.
 \begin{align}
     \label{eq:ETTp_decomposition}
     {\ETT}_{+} \  = & \ \ETTps \  +  \ETTpw \ + \ \ETTpws,
 \end{align}
where 
 \begin{align}
   \label{eq:def_ETTps}
    & \ \ETTps \ := \ \ps \ \ETT_{+} \ \ps \\
     \label{eq:def_ETTpw}
    & \  \ETTpw \ := \ \pw \ \ETT_{+} \ \pw \\
     \label{eq:def_ETTpws}
     & \  \ETTpws \ := \ \pws \ \ETT_{+} \ \pws.
 \end{align}
 $\ps$, $\pw$ and $\pws$ were defined in equations \eqref{eq:ps}, \eqref{eq:pw} and \eqref{eq:pws} in Subsection \ref{subsec:Sr}. Hence the eigendecomposition of $\ETT_+$ is given by the eigendecomposition of $\ETTps$, $\ETTpw$ and $\ETTpws$. To obtain the eigendecomposition of $\ETTps$, $\ETTpw$, $\ETTpws$, note the following. \begin{align}
    \label{eq:ETCCpcs_ps_tconep_rela}
    & \ETTps \ = \ 3 \ \ps \tnconep \ps,  \\
    & \ETTpw \ = \ 3 \ \pw \tnconep \pw,  \\
    & \ETTpws \ = \ 3 \ \pws \tnconep \pws,
\end{align} 
which is proved by using the following in equations \eqref{eq:def_ETTps}, \eqref{eq:def_ETTpw}, and \eqref{eq:def_ETTpws}: $\tnctwo = \Wa{[(123)]} \tncone \Wa{[(132)]}$ and $\tncthree = \Wa{[(132)]} \tncone \Wa{[(123)]}$ (see Section \ref{sec:appendix_tncpm_rela} for the proofs). Thus one needs to obtain the eigendecomposition of $\ps \tnconep \ps$, $\pw \tnconep \pw$ and $\pws \tnconep \pws$. In Lemma \ref{lem:eig_ETTpcs} , we obtain the complete spectral decomposition of $\ps \tncone \ps$. It is noted that it is self-adjoint. Hence, $\ETTps$ is self-adjoint. Also it tells us that $\wtr$ is an eigenspace of $\ps \tncone \ps$, with eigenvalue $\frac{1}{4}$, and that the only other non-zero eigenvalues are $\frac{1}{12}\left( 1 \pm \frac{2}{N} \right)$, and that all eigenspaces are orthogonal to each other. From Lemma \ref{lem:eig_ETTpw_ETTpws} we obtain the complete spectral decomposition of $\pw \tncone \pw$ and $\pws \tncone \pws$, both of which are also seen to be self-adjoint. Also, it is seen that $\wtr$ lies in the kernel of both, and that the only remaining non-zero eigenvalues are $\frac{1}{12}\left( 1 \pm \frac{1}{N} \right)$. This proves the lemma. 
\end{proof}

 \begin{lem}
 \label{lem:ETT_ortho_decomposition}
 Proof that equation \eqref{eq:ETTp_decomposition} is an orthogonal decomposition of $\ETT_+$.
 \end{lem}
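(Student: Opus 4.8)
The plan is to show that $\ETT_+$ commutes with the representation of the cyclic group $\mathcal{A}_3 = \langle [(123)] \rangle$ carried on $\vnc$, and then to read off \eqref{eq:ETTp_decomposition} as the Schur/isotypic block decomposition for this abelian group. First I would establish the commutation $\Wa{[(123)]}\,\ETT_+\,\Wa{[(132)]} = \ETT_+$. Using the definitions \eqref{eq:def_tnconep}, \eqref{eq:def_tnctwop}, \eqref{eq:def_tncthreep}, conjugation of $\tnconep$ by the $3$-cycle gives $\tfrac{1}{2}\bigl(\id + \Wa{[(123)]}\Wa{[(23)]}\Wa{[(132)]}\bigr)\,\Wa{[(123)]}\tncone\Wa{[(132)]}\,\tfrac{1}{2}\bigl(\id + \Wa{[(123)]}\Wa{[(23)]}\Wa{[(132)]}\bigr)$; since $[(123)][(23)][(132)] = [(13)]$ in $\Sr$ and $\Wa{[(123)]}\tncone\Wa{[(132)]} = \tnctwo$ by \eqref{eq:tnc_rela_2}, this equals $\tnctwop$. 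The analogous computations with the other two transposition-axes, using the identities in \eqref{eq:tnc_rela_3_1}--\eqref{eq:tnc_rela_1}, give $\Wa{[(123)]}\tnctwop\Wa{[(132)]} = \tncthreep$ and $\Wa{[(123)]}\tncthreep\Wa{[(132)]} = \tnconep$. Summing the three relations yields $\Wa{[(123)]}\,\ETT_+\,\Wa{[(132)]} = \ETT_+$, i.e., $\ETT_+$ commutes with $\Wa{[(123)]}$ (hence with $\Wa{[(132)]}$ and with every $\Wa{\sigma}$, $\sigma \in \mathcal{A}_3$).

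Next, since $\ps$, $\pw$, $\pws$ equal $\pj{0}, \pj{1}, \pj{2}$ and are by \eqref{eq:def_pj} polynomials in $\Wa{[(123)]}$, they commute with $\ETT_+$. Using the completeness relation $\id = \ps + \pw + \pws$ of \eqref{eq:S3_id_decomposition} restricted to $\vnc$, I would write
\[
  \ETT_+ \ = \ \id \, \ETT_+ \, \id \ = \ \sum_{j,k \in \{0,1,2\}} \pj{j} \, \ETT_+ \, \pj{k},
\]
and observe that for $j \neq k$ the cross term satisfies $\pj{j}\,\ETT_+\,\pj{k} = \ETT_+\,\pj{j}\,\pj{k} = 0$ by the orthogonality \eqref{eq:p_orthogonality}. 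Only the three diagonal terms $\pj{j}\,\ETT_+\,\pj{j}$ survive, and these are precisely $\ETTps$, $\ETTpw$, $\ETTpws$ of \eqref{eq:def_ETTps}--\eqref{eq:def_ETTpws}, which is exactly \eqref{eq:ETTp_decomposition}. That the decomposition is orthogonal in the required sense then follows at once: $\ps$, $\pw$, $\pws$ being self-adjoint projectors, the support and range of $\ETTps$ lie inside $\range \ps$ and similarly for the other two summands, while $\range \ps$, $\range \pw$, $\range \pws$ are pairwise orthogonal by \eqref{eq:p_orthogonality}; hence $\ETTps$, $\ETTpw$, $\ETTpws$ act on mutually orthogonal subspaces of $\vnc$.

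The step I expect to be the main obstacle is the first one --- the $\Sr$-conjugation bookkeeping. One has to match the conjugation rule for the transposition sitting inside each symmetrizer $\tfrac{1}{2}(\id \pm \Wa{\tau})$ against the cyclic permutation of $\tncone, \tnctwo, \tncthree$ supplied by Lemma \ref{lem:tnc_rela}, and verify that these two permutations are compatible for all three cyclic shifts, so that the orbit $\tnconep \to \tnctwop \to \tncthreep \to \tnconep$ under $\Wa{[(123)]}(\cdot)\Wa{[(132)]}$ closes up correctly. Once that orbit is pinned down, everything else is routine projector algebra.
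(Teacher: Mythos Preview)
Your proposal is correct and follows essentially the same route as the paper. The paper's own proof likewise reduces to showing that $\ETT_+$ commutes with $\Wa{\pi}$ (delegated there to Appendix~\ref{sec:appendix_tncpm_rela}, which establishes the cyclic permutation $\tnconep\to\tnctwop\to\tncthreep$ under conjugation by $\Wa{[(123)]}$ exactly as you do), then inserts the completeness relation \eqref{eq:S3_id_decomposition}, uses $\pj{k}^2=\pj{k}$ and mutual orthogonality of the $\pj{k}$ to kill the cross terms, and concludes.
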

\begin{proof}
In equation \eqref{eq:S3_id_decomposition}, we multiply on the right (or left) by $\ETT_+$.
\begin{equation}
    \label{vc_eq:S3_id_decomposition_ETTp}
     \ETT_+ \ = \ \left( \ps + \pw + \pws \ \right) \ \ \ETT_+.
\end{equation} Next, note that $\ETT_+$ commutes with $\ps$, $\pw$ and $\pws$ since $\ETT_+$ commutes with all $\Wa{\pi}$'s as proved in Appendix \ref{sec:appendix_tncpm_rela}. Since $\ps$, $\pw$ and $\pws$ square to themselves, one gets
\begin{align}
    \label{eq:ETT_ps_1}
\ps \ETT_+ \  = \  \ps \  \ETT_+ \ \ps, \\
\label{eq:ETT_pw_1}
\pw \ETT_+ \  = \  \pw \  \ETT_+ \ \pw, \\
\label{eq:ETT_pws_1}
\pw \ETT_+ \  = \  \ps \  \ETT_+ \ \ps.
\end{align}
Noting that $\supp{\ps}$, $\supp{\pw}$, $\supp{\pws}$ are orthogonal, the orthogonal decomposition is proved. \end{proof}

\begin{lem}
\label{lem:ETTm_eigenvalues}
 $\ETT_{-}$ (defined in equation \eqref{eq:ETTm}) is lower or equal to $\frac{3}{2N}$. $\supp{\ETT_-}$ lies in the orthogonal complement of $\wtr$.
\end{lem}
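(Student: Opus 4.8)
The statement has two parts: the norm bound $\op{\ETT_{-}}\le\frac{3}{2N}$ and the inclusion $\supp{\ETT_{-}}\subseteq\wtr^{\perp}$. The plan is to exploit the $\Sr$-symmetry to reduce everything to the single operator $\tnconem$, and then to diagonalize $\tnconem$ block by block over the first Pauli index.

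\emph{Orthogonality to $\wtr$.} Since $\Wa{[(23)]}$ is an involution and $\tncone$ commutes with it (Lemma~\ref{lem:tnc_rela}), the factor $\frac12(\id-\Wa{[(23)]})$ appearing in \eqref{eq:def_tnconem} is the orthogonal projector onto the $(-1)$-eigenspace of $\Wa{[(23)]}$, and $\tnconem=\tfrac12(\id-\Wa{[(23)]})\,\tncone$; hence $\range\tnconem$ lies in that $(-1)$-eigenspace, and likewise $\range\tnctwom$ and $\range\tncthreem$ lie in the $(-1)$-eigenspaces of $\Wa{[(13)]}$ and $\Wa{[(12)]}$. The vector $\ket{w_\mathrm{tr}}$ spanning $\wtr$ is $\Sr$-invariant, hence fixed by $\Wa{[(23)]},\Wa{[(13)]},\Wa{[(12)]}$, so it is orthogonal to all three ranges and therefore to $\supp{\ETT_{-}}$.

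\emph{Reduction of the norm bound.} By the triangle inequality $\op{\ETT_{-}}\le\op{\tnconem}+\op{\tnctwom}+\op{\tncthreem}$, and by Lemma~\ref{lem:tnc_rela} the operators $\tnctwo,\tncthree$ are unitary conjugates of $\tncone$ by $\Wa{[(12)]},\Wa{[(13)]}$, which simultaneously conjugate $\Wa{[(23)]}$ onto $\Wa{[(13)]},\Wa{[(12)]}$; hence $\tnctwom,\tncthreem$ are unitary conjugates of $\tnconem$ and $\op{\tnctwom}=\op{\tncthreem}=\op{\tnconem}$. It therefore suffices to prove $\op{\tnconem}\le\frac{1}{2N}$.

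\emph{Diagonalizing $\tnconem$.} Fix $a\in\Fs$ and set $e_b:=\ket{a,b,a+b}-\ket{a,a+b,b}$ for $b$ with $\s{a}{b}=1$, noting $e_{a+b}=-e_b$. Starting from \eqref{eq:def_tncone}, using $\s{a+b}{h}=\s{a}{h}+\s{b}{h}$ to rewrite the index set, applying $\frac12(\id-\Wa{[(23)]})$ on the left, and relabelling $h\mapsto a+h$ so as to fold the sum onto a set $R_a$ of representatives of the pairs $\{h,a+h\}$ with $\s{a}{h}=1$, one gets
\begin{equation*}
\tnconem\,e_b \;=\; -\,\frac{1}{N^{2}}\sum_{h\in R_a}(-1)^{\s{b}{h}}\,e_h .
\end{equation*}
In particular $\tnconem$ preserves each of the mutually orthogonal subspaces $W_a:=\mathrm{span}\{e_b:\s{a}{b}=1\}$ and annihilates the $(+1)$-eigenspace of $\Wa{[(23)]}$, so $\op{\tnconem}=\max_{a\in\Fs}\op{\tnconem|_{W_a}}$. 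For fixed $a$ the vectors $f_b:=e_b/\sqrt2$ ($b\in R_a$) form an orthonormal basis of $W_a$ — the kets occurring in distinct $e_b$ are distinct and orthonormal — and in this basis $\tnconem|_{W_a}$ is the matrix $M_{hb}=-N^{-2}(-1)^{\s{b}{h}}$. Then $(MM^{\dagger})_{hh'}=N^{-4}\sum_{b\in R_a}(-1)^{\s{b}{h+h'}}$; using $\s{a}{h}=\s{a}{h'}=1$, this sum over $R_a$ is half the sum over $\{b:\s{a}{b}=1\}$, which by Lemma~\ref{lem:1} equals $N^{2}/4$ when $h=h'$ and vanishes when $h\ne h'$ (then $a$ and $h+h'$ are independent, so each system $\s{a}{b}=1,\ \s{h+h'}{b}=c$ has $N^{2}/4$ solutions). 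Hence $MM^{\dagger}=\frac{1}{4N^{2}}\,\id_{W_a}$, every singular value of $\tnconem|_{W_a}$ equals $\frac{1}{2N}$, so $\op{\tnconem}=\frac{1}{2N}$ and $\op{\ETT_{-}}\le\frac{3}{2N}$.

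\emph{Main obstacle.} Apart from the displayed formula for $\tnconem e_b$ and the identity $MM^{\dagger}\propto\id_{W_a}$, everything is a routine application of Lemma~\ref{lem:1} and elementary operator-norm inequalities. The delicate step is the $\mathbb{F}_2$ sign bookkeeping: the relation $\s{b}{a+h}=1+\s{b}{h}$, the antisymmetry $e_{a+b}=-e_b$, and the passage from the sum over $\{h:\s{a}{h}=1\}$ to the sum over representatives $R_a$ must all be tracked consistently, since an error there destroys the clean block form and with it the sharp value $\frac{1}{2N}$.
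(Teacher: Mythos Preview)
Your proof is correct and follows essentially the same approach as the paper. Both arguments reduce to $\tnconem$ via the triangle inequality and the unitary conjugacy relations of Lemma~\ref{lem:tnc_rela}, then show that all singular values of $\tnconem$ equal $\tfrac{1}{2N}$ by working in the block indexed by the first Pauli label $a$; the paper phrases this as an explicit SVD (mapping the orthonormal vectors $\tfrac{1}{\sqrt2}e_b$ to the orthonormal vectors $\g{a}{b}$ scaled by $\tfrac{1}{2N}$), while you verify the equivalent identity $MM^\dagger=\tfrac{1}{4N^2}\id_{W_a}$ directly via Lemma~\ref{lem:1}.
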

\begin{proof}
Structure of proof: in Lemma \ref{lem:tnconem_upperbound} we prove that the singular values of $\tnconem$ are equal to $\frac{1}{2N}$. Since $\tnctwom=\Wa{[(123)]}\tnconem\Wa{[(132)]}$ and $\tncthreem=\Wa{[(132)]}\tnconem\Wa{[(123)]}$ (proved in Section \ref{sec:appendix_tncpm_rela} of the Appendix), $\tnctwom$ and $\tncthreem$ have the same singular values as $\tnconem$. Thus
\begin{align}
    \label{eq:ETTm_tnconem_rela}
    \op{ \ {\ETT}_- } \ = & \  \op{\tnconem + \tnctwom + \tncthreem}    \notag \\  \le \  & \   \op{\tnconem} + \op{\tnctwom} + \op{\tncthreem} \notag \\ & \ \le \ \dfrac{3}{2N}.
\end{align}
To prove that $\supp{\ETT_-}$ lies in the orthogonal complement of $\wtr$, note that $\ETT_-$ is spanned by the union of $\supp{\tnconem}$, $\supp{\tnctwom}$ and $\supp{\tncthreem}$. Note that $\supp{\tnconem}\subseteq \supp{\left(\id - \Wa{[(23)]} \right)}$,
$\supp{\tnctwom}\subseteq \supp{\left(\id - \Wa{[(13)]} \right)}$ and $\supp{\tncthreem}\subseteq \supp{\left(\id - \Wa{[(12)]} \right)}$. In the proof of Lemma \ref{lem:eig_ETTpcs}, we establish that $\wtr \in \subset \supp{\ps}$. Also $\wtr \subset  \supp{\left(\id + \Wa{{(23)}}\right)}$ using the form of $\ket{w_{\mathrm{tr}}}$ in equation \eqref{eq:wtr2}. Thus, in fact $\pcs \ \ket{w_\mathrm{tr}} = \ket{w_\mathrm{tr}}$, using the form of $\pcs$ in equation \eqref{eq:pcs_ps}. From equation \eqref{eq:pcs_ps}, we also see that $\supp{\left(\id - \Wa{\tau} \right)}$ is orthogonal to $\supp{\pcs}$ for all transpositions $\tau \in \Sr$.
\end{proof}

\begin{lem}
\label{lem:tnconem_upperbound}
The largest singular value of $\tnconem$ is $\dfrac{1}{2N}$.
\end{lem}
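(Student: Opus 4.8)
The plan is to determine $\tnconem$ by squaring it. First I would record two facts. By Lemma~\ref{lem:tnc_rela} (equation~\eqref{eq:tncone_axis_of_symmetry}), $\tncone$ commutes with $\Wa{[(23)]}$, hence with the antisymmetrizing projector $P := \tfrac12(\id - \Wa{[(23)]})$, so that $\tnconem = P\,\tncone\,P = \tncone\,P$. Moreover, in the Pauli-product basis the only nonzero entries of $\tncone$ on $\vnc$ are $\bra{a,k,a+k}\tncone\ket{a,h,a+h} = \tfrac1{N^2}\,[\s{h}{k}=1]$ (for $h,k$ with $\s{a}{h}=\s{a}{k}=1$), which are symmetric in $h\leftrightarrow k$ since $\s{\cdot}{\cdot}$ is symmetric; thus $\tncone$ (and therefore $\tnconem$) is self-adjoint, its singular values are the moduli of its eigenvalues, and $\tnconem^2 = \tncone^2\,P$.

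Next I would use the block decomposition $\vnc = \bigoplus_{a\in\Fs}\mathcal{W}_a$, where $\mathcal{W}_a := \mathrm{span}\{\ket{a,h,a+h}\mid \s{a}{h}=1\}$ (note $\s{a}{h}=1$ forces $h\notin\{0,a\}$, so these are genuine basis vectors, and $|\{h:\s{a}{h}=1\}|=N^2/2$ by Lemma~\ref{lem:1}). Both $\tncone$, via~\eqref{eq:def_tncone}, and $\Wa{[(23)]}$ preserve each $\mathcal{W}_a$, and on $\mathcal{W}_a$ the map $\Wa{[(23)]}$ is exactly the fixed-point-free involution $\sigma_a\colon h\mapsto a+h$ of the index set $S_a := \{h:\s{a}{h}=1\}$. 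The key step is to compute $\tncone^2$ on $\mathcal{W}_a$: iterating~\eqref{eq:def_tncone} gives
\[
(\tncone^2)_{k,h}\;=\;\frac1{N^4}\,\bigl|\{\,j\in S_a : \s{h}{j}=1,\ \s{k}{j}=1\,\}\bigr|,
\]
and Lemma~\ref{lem:1} evaluates this cardinality from $\dim\mathrm{span}\{a,h,k\}$ and the consistency of $\s{a}{j}=\s{h}{j}=\s{k}{j}=1$: it equals $N^2/4$ if $k=h$; it equals $0$ if $k=a+h$ (because $\s{k}{j}=\s{a}{j}+\s{h}{j}$ would then be $0$, contradicting $\s{k}{j}=1$); and it equals $N^2/8$ otherwise, when $\{a,h,k\}$ is linearly independent. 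Hence on $\mathcal{W}_a$,
\[
\tncone^2\;=\;\frac1{8N^2}\bigl(J_a + \id - \sigma_a\bigr),
\]
where $J_a$ is the all-ones matrix in the basis $\{\ket{a,h,a+h}\}_{h\in S_a}$ (equivalently $J_a = \kb{u_a}{u_a}$ with $\ket{u_a} := \sum_{h\in S_a}\ket{a,h,a+h}$).

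Finally I would restrict to $\range P$. Since $\sigma_a$ permutes the basis of $\mathcal{W}_a$ we have $J_a\sigma_a = J_a$, so $J_a P = 0$; and $\sigma_a P = -P$. Therefore $\tnconem^2 = \tncone^2 P = \tfrac1{8N^2}(0 + P + P) = \tfrac1{4N^2}P$. As $P$ is nonzero (its restriction to $\mathcal{W}_a$ has rank $N^2/4$, one dimension per $2$-cycle of $\sigma_a$), the eigenvalues of $\tnconem^2$ are $\tfrac1{4N^2}$ and $0$; thus $\tnconem$ has singular values $\tfrac1{2N}$ and $0$, i.e. $\op{\tnconem}=\tfrac1{2N}$. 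The main obstacle is the case distinction in evaluating $(\tncone^2)_{k,h}$ through Lemma~\ref{lem:1} — in particular recognizing that the degenerate configuration $k=a+h$ contributes $0$, not $N^2/8$; it is exactly this $-\sigma_a$ contribution that, together with $\id$, survives projection onto the antisymmetric subspace and produces the clean value $\tfrac1{4N^2}$.
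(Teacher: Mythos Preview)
Your proof is correct and takes a genuinely different route from the paper's. The paper constructs explicit Fourier-type vectors $\g{a}{b} := \tfrac{\sqrt{2}}{N}\sum_{h:\s{a}{h}=1}(-1)^{\s{b}{h}}\ket{a,h,a+h}$ for $\s{a}{b}=1$, shows they form an orthonormal basis of $\supp P$ (after identifying $\g{a}{b}=-\g{a}{a+b}$), and then checks that $\tnconem$ maps the standard antisymmetric basis $\tfrac{1}{\sqrt{2}}(\ket{a,b,a+b}-\ket{a,a+b,b})$ to $-\tfrac{1}{2N}\g{a}{b}$, giving the SVD directly. Your approach sidesteps this basis construction: using self-adjointness you reduce to computing $\tncone^2$, which via Lemma~\ref{lem:1} collapses to the closed form $\tfrac{1}{8N^2}(J_a+\id-\sigma_a)$ on each $\mathcal{W}_a$, and then projecting by $P$ kills $J_a$ and doubles $\id$ to give $\tnconem^2=\tfrac{1}{4N^2}P$. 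Your argument is slicker and more self-contained for the bare singular-value statement; the paper's explicit basis $\g{a}{b}$ (and its symmetric analogue $\w{a}{b}$) is reused elsewhere, e.g.\ in Lemma~\ref{lem:tnconep_spectral_decomposition} and in the representation-theoretic decomposition of Section~\ref{sec:3_copy_adjoint_Cm}, so that approach pays off structurally even if it is heavier here.
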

\begin{proof}
For $\s{a}{b}=1$, define
\begin{equation}
    \label{eq:g}
    \g{a}{b} \ := \ \dfrac{\sqrt{2}}{N} \ \sum_{\substack{h:\\ \s{a}{h}=1}} \ \left( -1 \right)^{\s{b}{h}} \ \ket{a,h,a+h}.
\end{equation}
Firstly, note that $\g{a}{b} \in \supp{\tnconem}$, since  $\bk{a,h,a+h}{\Bar{a};{b}} =  \bk{a,a+h,h}{\Bar{a};{b}} $ for all $h \in \Fs$. Next, note that for any $a,b$ such that $\s{a}{b}=1$, \  $\g{a}{b} = - \g{a}{a+b}$. Hence at most $\frac{\left(N^2-1\right)\left(N^2\right)}{2}$ of the $\g{a}{b}$ can be linearly dependent. Next, for a given $a,b$ and $c,d$ such that $\s{a}{b}=\s{c}{d}=1$, we see that 
\begin{equation}
    \label{eq:inner_prod}
    \bk{\Bar{a};b}{\Bar{c};d} \ = \ \delta_{a,c} \ \left( \ \delta_{b,c} - \delta_{b+c,a} \right),
\end{equation}
such that one can obtain an orthonormal basis of $\frac{\left(N^2-1\right)\left(N^2\right)}{2}$ vectors from the $\g{a}{b}$, which then spans $\supp{\left( \id - \Wa{[(23)]}\right)}$. We show how to construct such a basis: note that the subset $\left\{ 0, a \right\}$ is an abelian subgroup of $\F^{2m}$, and has $\frac{\left(N^2-1\right)\left(N^2\right)}{4}$ cosets in $\F^{2m}$. Label the coset from $1$ to $\frac{\left(N^2-1\right)\left(N^2\right)}{4}$, and choose coset representatives $b_j$. Then the subset $\left\{ \g{a}{b_j} \right\}$ is an orthonormal basis for $\supp{\left(\id - \Wa{[(23)]}\right)}$. Next, one sees that 
\begin{equation}
    \label{eq:tnconem_SVD}
    \tnconem  \ \dfrac{1}{\sqrt{2}} \ \left( \ \ket{a,b_j,a+b_j} \ - \ket{a,a+b_j,b_j}  \right) \ = \ - \dfrac{1}{2N} \ \g{a}{b_j}. 
\end{equation}
 Thus $\tnconem$ takes us from one orthonormal basis for $\supp{\left(\id - \Wa{[(23)]}\right)}$ to another orthonormal basis for $\supp{\left(\id - \Wa{[(23)]}\right)}$, and equation \eqref{eq:tnconem_SVD} is essentially the singular-value decomposition of $\tnconem$, with the singular values $\frac{1}{2N}$.
\end{proof}

\begin{lem}
\label{lem:tnconep_spectral_decomposition}
The spectral decomposition of $\tnconep$ is 
\begin{equation}
\label{eq:tnconep_spectral_decomposition}
\tnconep  \ = \ \dfrac{1}{4} \  \sum_{a\in\Fs} \ \kb{\hat{a}}{\hat{a}},
\end{equation}
where for any $a,b \in \Fs$,
\begin{align}
    \label{eq:def_z}
&\q{a} \ := \ \dfrac{\sqrt{2}}{N} \ \sum_{\substack{h: \\ \s{a}{h}=1}} \  \ \ket{a,h,a+h}, \\ 
& \ \bk{\hat{a}}{\hat{b}} \ = \ \delta_{a,b}.
\end{align}
\end{lem}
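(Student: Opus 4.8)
The plan is to verify \eqref{eq:tnconep_spectral_decomposition} directly by comparing the action of the two sides on the Pauli basis $\{\ket{a,b,a+b} : a,b\in\Fs,\ \s{a}{b}=1\}$ of $\vnc$. First I would record two elementary facts about the vectors $\q{a}$ of \eqref{eq:def_z}. \emph{(i) They lie in $\vnc$ and form an orthonormal set:} whenever $\s{a}{h}=1$ one has $h\neq 0$ and $h\neq a$, hence $a+h\neq 0$, so $\ket{a,h,a+h}\in\vnc$; for $a\neq a'$ the basis kets occurring in $\q{a}$ and in $\q{a'}$ differ in their first tensor factor, so $\bk{\hat a}{\hat{a'}}=0$, while $\bk{\hat a}{\hat a}=\tfrac{2}{N^2}\,\bigl|\{h:\s{a}{h}=1\}\bigr|=1$ by Lemma \ref{lem:1} and orthonormality of the Pauli basis. \emph{(ii) Each $\q{a}$ is $\Wa{[(23)]}$-invariant:} applying $\Wa{[(23)]}$ swaps the second and third slots, giving $\tfrac{\sqrt2}{N}\sum_{h:\s{a}{h}=1}\ket{a,a+h,h}$, and the substitution $h'=a+h$ preserves the index set (since $\s{a}{a+h}=\s{a}{h}$) and turns $\ket{a,a+h,h}$ into $\ket{a,h',a+h'}$, so $\Wa{[(23)]}\q{a}=\q{a}$ and therefore $\tfrac12(\id+\Wa{[(23)]})\q{a}=\q{a}$.

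Next I would compute $\tnconep\ket{a,b,a+b}$ from the definition \eqref{eq:def_tnconep}. Since $\Wa{[(23)]}\ket{a,b,a+b}=\ket{a,a+b,b}$, and writing $\ket{a,a+b,b}$ in the ``$\ket{x,y,x+y}$'' form with $x=a$, $y=a+b$ (using $a+(a+b)=b$ over $\F_2$), the defining formula \eqref{eq:def_tncone} gives $\tncone\ket{a,a+b,b}=\tfrac{1}{N^2}\sum_{k:\,\s{a}{k}=1,\ \s{a+b}{k}=1}\ket{a,k,a+k}$, i.e.\ the sum over $k$ with $\s{a}{k}=1$ and $\s{b}{k}=0$. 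Added to $\tncone\ket{a,b,a+b}$, which is the sum over $k$ with $\s{a}{k}=1$ and $\s{b}{k}=1$, the two index sets partition $\{k:\s{a}{k}=1\}$, hence
\[
\tncone\,\tfrac12\bigl(\id+\Wa{[(23)]}\bigr)\ket{a,b,a+b}\;=\;\tfrac{1}{2N^2}\sum_{k:\,\s{a}{k}=1}\ket{a,k,a+k}\;=\;\tfrac{1}{2\sqrt2\,N}\,\q{a},
\]
which is independent of $b$. Applying $\tfrac12(\id+\Wa{[(23)]})$ on the left and using fact (ii) gives $\tnconep\ket{a,b,a+b}=\tfrac{1}{2\sqrt2\,N}\q{a}=\tfrac{\sqrt2}{4N}\q{a}$. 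On the right-hand side, $\bk{\hat{a'}}{a,b,a+b}$ picks out from $\q{a'}$ the term with $a'=a$ and $h'=b$ (the third slot then matching automatically), which is present exactly because $\s{a}{b}=1$, so $\bk{\hat{a'}}{a,b,a+b}=\tfrac{\sqrt2}{N}\delta_{a',a}$ and $\tfrac14\sum_{a'\in\Fs}\kb{\hat{a'}}{\hat{a'}}\ket{a,b,a+b}=\tfrac{\sqrt2}{4N}\q{a}$. The two sides thus agree on every basis vector of $\vnc$, which proves \eqref{eq:tnconep_spectral_decomposition}; self-adjointness of $\tnconep$ and the fact that its only nonzero eigenvalue is $\tfrac14$ with eigenspace $\mathrm{span}\{\q{a}:a\in\Fs\}$ are then immediate.

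I do not expect a genuine obstacle: the statement is a bounded direct computation. The only points requiring care are correctly instantiating the defining formula \eqref{eq:def_tncone} at the permuted argument $\ket{a,a+b,b}$ (recognising the ``$x,y,x+y$'' pattern under $\F_2$ addition), and the two pieces of index-set bookkeeping — the partition $\{h:\s{a}{h}=1\}=\{h:\s{a}{h}=1,\ \s{b}{h}=0\}\,\sqcup\,\{h:\s{a}{h}=1,\ \s{b}{h}=1\}$, and the relabelling $h\mapsto a+h$ used to establish $\Wa{[(23)]}$-invariance of $\q{a}$. Everything else follows from Lemma \ref{lem:1} and orthonormality of the Pauli basis.
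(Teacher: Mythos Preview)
Your proof is correct and slightly more direct than the paper's. You verify the operator identity \eqref{eq:tnconep_spectral_decomposition} by evaluating both sides on each Pauli basis vector $\ket{a,b,a+b}$ of $\vnc$: the key observation that the index sets $\{k:\s{a}{k}=1,\s{b}{k}=1\}$ and $\{k:\s{a}{k}=1,\s{b}{k}=0\}$ partition $\{k:\s{a}{k}=1\}$ immediately collapses $\tncone\,\tfrac12(\id+\Wa{[(23)]})\ket{a,b,a+b}$ to a multiple of $\q{a}$, and the rest is bookkeeping. The paper instead constructs a full orthonormal eigenbasis $\{\w{a}{b}:a\in\Fs,\ b\in\Fm,\ \s{a}{b}=0\}$ of $\supp\bigl(\tfrac12(\id+\Wa{[(23)]})\bigr)$ (the ``Fourier'' vectors of equation \eqref{eq:def_w}) and computes $\tnconep\w{a}{b}$ for each, finding eigenvalue $\tfrac14$ only at $b\in\{0,a\}$ and $0$ otherwise. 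Your route is shorter for the lemma as stated; the paper's route pays for itself later, since the vectors $\w{a}{b}$ with $b\neq 0,a$ (which explicitly span $\ker\tnconep$ inside the symmetric subspace) are exactly the $\ket{\hat a;b}$ of \eqref{eq:nc_a_b_0} used throughout Theorem \ref{thm:nc_inv_subsp}.
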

\begin{proof}
From equation \eqref{eq:def_tnconep}, we note that $\supp{\tnconep} \subset \supp{\left(\id+\Wa{[(23)]}\right)}$. Define the following vectors in $\supp{\left(\id+\Wa{[(23)]}\right)}$ for any $a\in \Fs$ and $b \in \F^{2m}$, and such that $\s{a}{b}=0$.
\begin{align}
    \label{eq:def_w}
    \w{a}{b} \ := \ \dfrac{\sqrt{2}}{N} \ \sum_{\substack{h: \\ \s{a}{h}=1}} \ (-1)^{\s{b}{h}} \ \ket{a,h,a+h}.
\end{align}
Note that $\q{a}$ really is just $\w{a}{0}$, if we allowed the second register to take the value $0$ (which we temporarily allow throughout the body of this proof). This differentiation between both types of vectors will prove to be convenient. That $\w{a}{b} \in \supp{\left(\id+\Wa{[(23)]}\right)}$ is proved next:
\begin{align}
\label{eq:w_23_symmetry}
    \Wa{[(23)]} \ \w{a}{b} \ \propto  & \  \sum_{\substack{h: \\ \s{a}{h}=1}} \ (-1)^{\s{b}{h}} \ \ket{a,a+h,h}. \notag \\
    = &  \ \sum_{\substack{h: \\ \s{a}{a+h}=1}} \ (-1)^{\s{b}{a+h}} \ \ket{a,a+h,h} \notag \\ 
    = &  \ \sum_{\substack{h: \\ \s{a}{h}=1}} \ (-1)^{\s{b}{h}} \ \ket{a,h,a+h}.
\end{align}
The second line uses  $\s{a}{b}=0$, and to go from the second line to the third, define $h'=a+h$, and note that $\s{a}{h}=1 \Leftrightarrow \s{a}{h'}=1$. Since $h'$ is a dummy variable, the prime symbol is then removed. Invariance is established by noting that the RHS of equations \eqref{eq:w_23_symmetry} and \eqref{eq:def_w} are proportional to one another. Next, for $a,c \in \Fs$ and $b,d \in \F^{2m}$ such that $\s{a}{b}=\s{c}{d}=0$, we get
\begin{align}
    \label{eq:w_inner_product}
    & \ \bk{\hat{a};b}{\hat{c};{d}} \notag \\ 
  = & \ \dfrac{2}{N^2} \ \delta_{a,c} \ \sum_{\substack{h:\\ \s{a}{h}=1}} \ \left(-1\right)^{\s{h}{b+d}}   \notag \\
  = & \  \delta_{a,c} \ \left( \left( \delta_{b,d} - \delta_{b+d,a}\right) + \frac{2}{N^2}\ \left(1-\delta_{b,d} + \delta_{b+d,a}  \right) \left( \  \mathrm{S}_1 - \ \mathrm{S}_2  \  \right)  \right) \notag \\
 = & \ \delta_{a,c} \ \left( \delta_{b,d} \ - \delta_{b+d,a} \right), 
\end{align}
where \begin{equation} \label{eq:def_sum1_sum2} \mathrm{S}_1 \  := \   \sum_{\substack{h:\\ \s{a}{h}=1 \\ \s{b+d}{h}=0}}   \ 1, \ \ \ \ \mathrm{S}_2 \ := \ \sum_{\substack{h:\\ \s{a}{h}=1 \\ \s{b+d}{h}=1}} \ 1. \end{equation} One can use Lemma \ref{lem:1} to show that when $a$ and $b+d$ are linearly independent $\mathrm{S}_1 \ - \mathrm{S}_2 = 0$. 

\noindent Equation \eqref{eq:w_inner_product} informs us that 
\begin{equation}
\label{eq:wab_waa+b_rela}
\w{a}{b} \ = \ \w{a}{a+b},
\end{equation}
which is also seen by directly plugging in $a+b$ instead of $b$ in equation \eqref{eq:def_w}. Thus the vectors $\w{a}{b}$, upto the identification between $\w{a}{b}$ and $\w{a}{a+b}$, form an orthonormal basis for $\supp{\left( \id + \Wa{[(23)]} \right)}$. Next, we show that they're also the eigenvectors of $\tnconep$
Note that $\tnconep \ \w{a}{b} = \tncone \ \w{a}{b}$.
\begin{align}
    \label{eq:tnconep_eigenvalue_equation_1} 
  \ & \ \frac{N}{\sqrt{2}} \times  \ \tnconep \   \w{a}{b}  \notag \\
= & \   \left( \ \sum_{\substack{h:\\ \s{a}{h}=1}}  \ (-1)^{\s{b}{h}} \ \tnconep \ket{a,h,a+h} \ \right). \notag \\
= & \   \left( \ \sum_{\substack{h:\\ \s{a}{h}=1}}  \ (-1)^{\s{b}{h}} \ \left( \ \dfrac{1}{N^2} \ \sum_{\substack{k:\\ \s{a}{k}=1 \\ \s{h}{k}=1 } } \ \ket{a,k,a+k} \  \right)  \ \right). \notag \\
= & \   \dfrac{1}{N^2} \  \ \sum_{\substack{k:\\ \s{a}{k}=1}}  \left( \mathrm{S}_1 \ - \mathrm{S}_2 \right)  \  \ket{a,k,a+k},
\end{align}
where we define \begin{equation} \label{eq:def_S1_S2}  \ \mathrm{S}_1 \ := \  \sum_{\substack{h:  \\ \s{h}{a}=1 \\ \s{h}{k}=1 \\ \s{h}{b}=0 }}  \ 1, \; \; \; \; \mathrm{S}_2 \ := \ \sum_{\substack{h:\\ \s{h}{a}=1 \\  \s{h}{k}=1 \\ \s{h}{b}=1 }} \ 1.  \end{equation} First, let's note that $a,k$ are always linearly independent since $\s{a}{k}=1$. When $b=0$, then $\mathrm{S}_2=0$, and Lemma \ref{lem:1} tell us that $S_1=\frac{N^2}{4}$.When $b=a$, $\mathrm{S}_1=0$ and $S_2=\frac{N^2}{4}$. For all other values of $b$, Lemma \ref{lem:1} tells us that both sums equal and cancel out. When $b=0$, Lemma \ref{lem:1} tell us that the first sum if $\frac{N^2}{4}$, whereas the second sum is $0$, while for $b=a$ the reverse is true. Hence we get.  
\begin{align}
    \label{eq:tnconep_eigenvalue_equation_2}
    & \ \tnconep \  \w{a}{b} \ = \ \dfrac{1}{4} \ \left( \delta_{b,0} \ + \delta_{b,a} \right) \w{a}{b}.
\end{align}
The equation proves that $\q{a}$ are eigenvectors of $\tnconep$ with eigenvalue $\frac{1}{4}$, and that it's orthogonal complement (which is spanned by $\w{a}{b}$'s for $b\neq 0,a$) is the eigenspace corresponding to the $0$ eigenvalue. 
\end{proof}


\begin{lem}
\label{lem:eig_ETTpcs}
$\ps \ \tncone \ \ps$ has three eigenvalues, which are as follows: $\frac{3}{4}$, $\frac{1}{4} \left( 1 \pm \frac{2}{N} \right)$. All its eigenspaces are orthogonal. The eigenspace corresponding to $\frac{3}{4}$ is $\wtr$.
\end{lem}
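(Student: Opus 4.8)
The plan is to reduce $\ps\tncone\ps$ to a single $(N^2-1)\times(N^2-1)$ Gram matrix indexed by $\Fs$, and then read off its spectrum from that of the symplectic anticommutation graph.

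\emph{Step 1 --- reduce $\tncone$ to $\tnconep$.} Using $\ps\Wa{\sigma}=\Wa{\sigma}\ps=\ps$ for $\sigma\in\mathcal{A}_3$ together with the permutation identities of Lemma \ref{lem:tnc_rela}, one has $\ps\tncone\ps=\ps\tnctwo\ps=\ps\tncthree\ps=\tfrac13\ps\ETT\ps$. Writing $\tncone=\tnconep+\tnconem$ and recalling $\supp\tnconem\subseteq\supp(\id-\Wa{[(23)]})$ while $\pcs$ is symmetric under $\Wa{[(23)]}$ (in particular $\pcs\ket{w_{\mathrm{tr}}}=\ket{w_{\mathrm{tr}}}$), the $\tnconem$-contribution to $\ps\tncone\ps$ lands in the orthogonal complement of $\wtr$ and has operator norm $O(1/N)$ by Lemma \ref{lem:tnconem_upperbound}; the block carrying $\wtr$ together with the three claimed eigenvalues is exactly $3\ps\tnconep\ps=\ETTps$ (the identity $\ETTps=3\ps\tnconep\ps$ is the one already exploited in the proof of Lemma \ref{lem:ETTp_eigenvalues}). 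Thus it suffices to diagonalize $\ps\tnconep\ps$ and scale its eigenvalues by $3$.

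\emph{Step 2 --- reduce to a Gram matrix.} Substituting the spectral form $\tnconep=\tfrac14\sum_{a\in\Fs}\kb{\hat a}{\hat a}$ of Lemma \ref{lem:tnconep_spectral_decomposition} gives $\ps\tnconep\ps=\tfrac14\sum_{a\in\Fs}\ps\kb{\hat a}{\hat a}\ps$, a sum of rank-one positive operators $\bigl(\ps\ket{\hat a}\bigr)\bigl(\ps\ket{\hat a}\bigr)^{\dagger}$. Its nonzero spectrum coincides with the spectrum of the Gram matrix $G$, $G_{ab}=\bra{\hat a}\ps\ket{\hat b}$, so it remains only to compute and diagonalize $G$.

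\emph{Step 3 --- compute $G$ and conclude.} A direct computation from $\q a=\tfrac{\sqrt2}{N}\sum_{h:\s ah=1}\ket{a,h,a+h}$ and the basis action of $\Wa{[(123)]}$ and $\Wa{[(132)]}$ gives $\bra{\hat a}\Wa{[(123)]}\ket{\hat b}=\bra{\hat a}\Wa{[(132)]}\ket{\hat b}$, equal to $\tfrac{2}{N^2}$ when $\s ab=1$ and to $0$ otherwise; hence $G=\tfrac13\!\left(I+\tfrac{4}{N^2}A\right)$, where $A$ is the adjacency matrix of the graph on $\Fs$ with $a\sim b\iff\s ab=1$. By Lemma \ref{lem:1} this graph is regular of degree $N^2/2$ and any two distinct vertices have $N^2/4$ common neighbours, so it is strongly regular with parameters $(N^2-1,\,N^2/2,\,N^2/4,\,N^2/4)$ and its eigenvalues are $N^2/2$ (simple, with all-ones eigenvector) and $\pm N/2$ --- alternatively one obtains this directly from Lemma \ref{lem:1} via the sign matrix $C_{ab}=(-1)^{\s ab}$. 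Consequently $G$ has eigenvalues $1$ and $\tfrac13(1\pm\tfrac2N)$, so $\ps\tnconep\ps$ has eigenvalues $\tfrac14$ and $\tfrac1{12}(1\pm\tfrac2N)$, i.e.\ $\ETTps=3\ps\tnconep\ps$ has eigenvalues $\tfrac34$ and $\tfrac14(1\pm\tfrac2N)$. Self-adjointness is immediate from $G=G^{\dagger}$, so all eigenspaces are orthogonal, and the top eigenvalue corresponds to the all-ones eigenvector of $A$, which pulls back to $\sum_{a}\ps\q a=\tfrac{\sqrt2}{N}\,\ps\ket{w_{\mathrm{tr}}}=\tfrac{\sqrt2}{N}\ket{w_{\mathrm{tr}}}$; hence its eigenspace is $\wtr$.

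\emph{Main obstacle.} The delicate part is Step 1 --- bookkeeping how the symmetrizers $\ps$, $\pcs$, $\pas$ interact with the split $\tncone=\tnconep+\tnconem$ and confirming that exactly the symmetric piece $\tnconep$ yields $\wtr$ and the three stated eigenvalues, the antisymmetric piece being confined to $\wtr^{\perp}$ and of size $O(1/N)$. Everything afterwards is a short character computation for $G_{ab}$ together with the (standard, and re-derivable from Lemma \ref{lem:1}) spectrum of the symplectic graph.
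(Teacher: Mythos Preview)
Your Steps 2 and 3 are correct and are essentially the paper's argument: both use Lemma \ref{lem:tnconep_spectral_decomposition} to write $\ps\tnconep\ps=\tfrac14\sum_a\ps\kb{\hat a}{\hat a}\ps$ and then reduce to the inner products $\bra{\hat a}\ps\ket{\hat b}=\tfrac13\bigl(\delta_{ab}+\tfrac{4}{N^2}\delta_{\s ab,1}\bigr)$. The paper then explicitly exhibits the eigenspaces $\wtr,\wpp,\wm$ (borrowing the $\lambda$-conditions from the diagonal-sector decomposition of \cite{Helsen2016}) and checks the eigenvalue equations directly; you repackage the same data as the Gram matrix $G=\tfrac13(I+\tfrac{4}{N^2}A)$ and read off its spectrum from that of the symplectic adjacency matrix $A$. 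Your route is a bit slicker (one gets the multiplicities for free from the strongly-regular-graph parameters), but it is the same computation.

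Your Step 1 is the weak point. Saying the $\tnconem$-contribution ``has operator norm $O(1/N)$'' does \emph{not} show that $\ps\tncone\ps$ has exactly the stated eigenvalues---a norm-$O(1/N)$ perturbation shifts eigenvalues, it does not preserve them. The clean statement is structural: since $\pcs=\tfrac12(\id+\Wa{[(23)]})\ps$ and $\pas=\tfrac12(\id-\Wa{[(23)]})\ps$, one has $\pcs\tnconem=\pas\tnconep=0$, so $\ps\tncone\ps=\pcs\tnconep\pcs\oplus\pas\tnconem\pas$ block-diagonally, and $\ps\tnconep\ps=\pcs\tnconep\pcs$ is exactly the first block. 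The paper's proof, like yours, only ever computes this first block (despite the lemma heading saying $\tncone$), because that is precisely what is needed for $\ETTps=3\ps\tnconep\ps$ in Lemma \ref{lem:ETTp_eigenvalues}. So drop the perturbation language and simply note that the passage from $\tncone$ to $\tnconep$ is exact on the $\pcs$-block.
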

\begin{proof}
We obtain the complete spectral decomposition of $\tncone$ in Lemma \ref{lem:tnconep_spectral_decomposition}. Note that the kernel of $\tncone$ is contained in the kernel of $\ps \tncone \ps$. Hence the support of $\ps \tncone \ps$ is spanned by $\ps \q{a}$, for all $a \in \Fs$ ($\q{a}$'s are defined in equation \eqref{eq:def_z}). Hence, using the spectral decomposition of $\tncone$, which is obtained in Lemma \ref{lem:tnconep_spectral_decomposition}, we get
\begin{equation}
\label{eq:ps_tnconep}
\ \ps \ \tnconep \ \ps \ =  \ \dfrac{1}{4} \  \sum_{a\in\Fs} \ \ps \kb{\hat{a}}{\hat{a}} \ps.
\end{equation} From equation \eqref{eq:ps_tnconep} one clearly sees that $\ps \tncone \ps$ is self-adjoint. Hence all its eigenspaces are mutually orthogonal. The inner product among the vectors $\ps \q{a}$ is detailed in Section \ref{sec:appendix_inner} in the Appendix. 
\begin{align}
\label{eq:inner_a_ps_b}
 & \ \bra{\hat{a}} \ps \q{b} \ = \ \frac{1}{3} \ \left( \delta_{a,b} \ + \ \dfrac{4}{N^2} \ \delta_{\s{a}{b},1} \right). 
 \end{align} Hence $\ps \q{a}$ aren't orthonormal.
 
 \noindent We decompose $\supp{\left(\id + \Wa{[(23)]} \right)}$ as follows.
\begin{align}
    \label{eq:wtr2}
    &  \ \wtr \ = \ \mathrm{span} \  \left\{ \  \ket{w_\mathrm{tr}} \ = \frac{N}{\sqrt{2}} \sum_{a \in \Fs} \ \q{a}  \right\},   \\ 
    \label{eq:wpp}
    & \ \wpp \ := \ \mathrm{span} \ \left\{ \ \ket{w_+} \ = \  \sum_{a \in \Fs}\  \lambda_a \ \q{a}, \ \sum_{a \in \Fs} \ \lambda_a \ = 0, \ \sum_{\substack{h: \\ \s{h}{a}}=1} \ \lambda_h \ = \ \frac{N}{2} \ \lambda_a \right\}, \\ 
    \label{eq:wm}
    & \ \wm \ := \ \mathrm{span} \ \left\{ \ \ket{w_-} \ = \ \sum_{a \in \Fs}\  \lambda_a \ \q{a}, \ \sum_{a \in \Fs} \ \lambda_a \ = 0, \ \sum_{\substack{h: \\ \s{h}{a}}=1} \ \lambda_h \ = \ - \frac{N}{2} \ \lambda_a \right\}.
\end{align}
We note that $\ket{w_\mathrm{tr}}$ is the same as was defined in equation \eqref{eq:def_wtr}. Note that the equations in the definition of $\wpp$ and $\wm$ are consistent and have solutions. We know this because the same equations are used to decompose the irreducible representations in the diagonal sector in Subsubsections \ref{subsubsec:ET_aan} and \ref{subsubsec:ET_ana_naa}, and they were also introduced in the diagonal sector in Helsen et al \cite{Helsen2016}. We also know that $\dim \wpp = \frac{N(N-1)}{2}-1$ and $\dim\wm= \frac{N(N+1)}{2}-1$. And that the three spaces $\wtr$, $\wpp$ and $\wm$ are orthogonal to each other, collectively spanning the entire $N^2-1$ dimensional support of $\id + \Wa{[(23)]}$. Using $\ket{w}_\mathrm{tr}$ defined in equation \eqref{eq:def_wtr}, 
one can directly verify that $\ps \ket{w_\mathrm{tr}} = \ket{w_\mathrm{tr}}$\footnote{$\wtr$ is the sum of all basis vectors $\ket{a,b,a+b}$ with $\s{a}{b}=1$, and hence has to invariant under $\ps$.}. Also we have that 
\begin{equation}
\label{eq:wtr_eigenvalue}
\ps \  \tnconep \ \ps  \ket{w_\mathrm{tr}} \ = \ {\small \text{ $\frac{N}{\sqrt{2}} \times \ $}} \ps \ \left(  \sum_{a \in \Fs} \ \tnconep  \ \q{a} \right) \ = \frac{1}{4} \ket{w_\mathrm{tr}},
\end{equation}
where we used Lemma \eqref{lem:tnconep_spectral_decomposition}. Next, for $\ps \ \ket{w_\pm} \in \mathcal{W}_\pm$
\begin{align}
    \label{eq:wpp_eigenvalue}
    & \ \ps \ \tnconep \ps \ \ \ps \ket{w_\pm} \ \notag \\ 
    = & \ \ps \ \times \  \frac{1}{4} \ \sum_{a,b\in\Fs} \ \lambda_b \ \bra{\hat{a}}\ps \q{b} \ \ \q{a} \notag \\ 
    = & \ \frac{1}{4} \ \sum_{a,b \in \Fs} \  \lambda_b \ \left( \frac{1}{3}\left(  \delta_{a,b} + \ \dfrac{4}{N^2} \ \delta_{\s{a}{b},1} \right)\right) \  \q{a}, \, \, \left( \mathrm{using} \ \mathrm{equation} \ \mathrm{\eqref{eq:inner_a_ps_b}}\right) \notag \\
    = & \ \frac{1}{12} \left(1 \pm \dfrac{2}{N} \right) \ \ps \  \ket{w_{\pm}},
\end{align}
Thus the remaining eigenvalues of $\ps \tncone \ps$ are $\frac{1}{12} \left( 1 \pm \frac{2}{N} \right)$, and the corresponding eigenspaces are given by $\ps \mathcal{W}_{\pm}$. That $\ket{w_{tr}}$ is orthogonal to $\ps \ket{w_{\pm}}$ seen from the fact that $\bra{w_\mathrm{tr}}\ps\ket{w_{\pm}} = \bk{w_\mathrm{tr}}{w_{\pm}}=0$.
\end{proof}
\begin{rem}
\label{rem:irrep_tr}
The eigenspace $\wtr$ corresponds to the uniform sum of all basis vector $\ket{a,b,a+b}$ (where $\s{a}{b}=1$). The same eigenspace is also identified in \cite{Tan20} (Section 5.3). 
\end{rem}
\begin{lem}
\label{lem:eig_ETTpw_ETTpws}
The eigenvalues of $\pw \tncone \pw$ and $\pws \tncone \pws $ are $\frac{1}{12}\left(1 \pm \frac{1}{N} \right)$ and $0$. 
\end{lem}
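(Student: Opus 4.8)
The plan is to transcribe the proof of Lemma~\ref{lem:eig_ETTpcs} almost verbatim, replacing the projector $\ps$ by $\pw$ (and then, separately, by $\pws$). First I would reduce $\pw\,\tncone\,\pw$ to $\pw\,\tnconep\,\pw$: write $\tncone=\tnconep+\tnconem$ (Lemma~\ref{lem:tnc_rela}), and use $\pw\,\Wa{[(123)]}=\omega^{2}\,\pw$, $\pw\,\Wa{[(132)]}=\omega\,\pw$ together with the relations $\tnctwo=\Wa{[(123)]}\,\tncone\,\Wa{[(132)]}$, $\tncthree=\Wa{[(132)]}\,\tncone\,\Wa{[(123)]}$ to obtain $\pw\,\ETT\,\pw=3\,\pw\,\tncone\,\pw$, exactly paralleling the identity $\ETTpw=3\,\pw\,\tnconep\,\pw$ established in the proof of Lemma~\ref{lem:ETTp_eigenvalues}; the $\tnconem$-contribution then drops out of the $\pw$-sandwich, i.e.\ $\pw\,\tncone\,\pw=\pw\,\tnconep\,\pw$. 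Granting this, Lemma~\ref{lem:tnconep_spectral_decomposition} gives $\pw\,\tnconep\,\pw=\tfrac14\sum_{a\in\Fs}\pw\,\kb{\hat a}{\hat a}\,\pw$, a self-adjoint positive operator whose support is $\mathrm{span}\{\pw\,\q{a}\mid a\in\Fs\}$, so its eigenproblem is controlled entirely by the Gram-type matrix $\bra{\hat a}\,\pw\,\q{b}$.

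Next I would compute that matrix, which is the one genuinely new calculation. Using $\pw=\tfrac13(\id+\omega\,\Wa{[(123)]}+\omega^{2}\,\Wa{[(132)]})$ and the cyclic action of $\Wa{[(123)]},\Wa{[(132)]}$ on the vectors $\q{b}$ of equation~\eqref{eq:def_z} --- the same elementary relabelling of the summation index used throughout Section~\ref{subsec:eigenvalues_ET_vnc} --- one finds $\bra{\hat a}\,\Wa{[(123)]}\,\q{b}=\bra{\hat a}\,\Wa{[(132)]}\,\q{b}=\tfrac{2}{N^{2}}\,\delta_{\s{a}{b},1}$, hence, since $\omega+\omega^{2}=-1$,
\[
  \bra{\hat a}\,\pw\,\q{b}\;=\;\tfrac{1}{3}\Bigl(\delta_{a,b}\,-\,\tfrac{2}{N^{2}}\,\delta_{\s{a}{b},1}\Bigr).
\]
This is exactly equation~\eqref{eq:inner_a_ps_b} for $\ps$, with the off-diagonal coefficient changed from $+\tfrac{4}{N^{2}}$ to $-\tfrac{2}{N^{2}}$; the identical formula holds for $\pws$, since $\omega$ and $\omega^{2}$ enter symmetrically.

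With this in hand the rest is as in Lemma~\ref{lem:eig_ETTpcs}. I would decompose $\mathrm{span}\{\q{a}\}$ as $\wtr\oplus\wpp\oplus\wm$ (equations~\eqref{eq:wtr2}--\eqref{eq:wm}). Since $\ket{w_{\mathrm{tr}}}$ is fully symmetric it is fixed by $\Wa{[(123)]}$, so $\pw\,\ket{w_{\mathrm{tr}}}=\tfrac13(1+\omega+\omega^{2})\ket{w_{\mathrm{tr}}}=0$ and $\wtr$ lies in the kernel. For $\ket{w_{\pm}}=\sum_{a}\lambda_{a}\,\q{a}$ with $\sum_{a}\lambda_{a}=0$ and $\sum_{h:\,\s{h}{a}=1}\lambda_{h}=\pm\tfrac{N}{2}\lambda_{a}$, the displayed formula gives $\sum_{b}\lambda_{b}\,\bra{\hat a}\,\pw\,\q{b}=\tfrac13(1\mp\tfrac1N)\lambda_{a}$, whence $\pw\,\tnconep\,\pw\,(\pw\,\ket{w_{\pm}})=\tfrac{1}{12}(1\mp\tfrac1N)\,\pw\,\ket{w_{\pm}}$, just as in equation~\eqref{eq:wpp_eigenvalue} (using, as there, the solvability of these defining equations cited from the diagonal sector of \cite{Helsen2016}). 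Together with the $0$-eigenspace this accounts for all of the spectrum, so the eigenvalues of $\pw\,\tnconep\,\pw$ are $\tfrac{1}{12}(1\pm\tfrac1N)$ and $0$; the $\pws$ case is word-for-word the same.

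The step I expect to be the main obstacle is the reduction $\pw\,\tnconem\,\pw=0$ in the first paragraph: $\tnconem$ has both support and range inside the $(-1)$-eigenspace of $\Wa{[(23)]}$, and --- unlike $\tncone$ itself, which commutes with $\Wa{[(23)]}$ --- its behaviour under $\pw$ requires genuinely using the interplay of $\pw$, $\pws$ and the transpositions (via $\Wa{[(23)]}\,\pw\,\Wa{[(23)]}=\pws$ and $\pw\,\pws=0$), rather than being a routine transcription. Should it happen that $\pw\,\tnconem\,\pw\neq0$, the statement is to be read as concerning $\pw\,\tnconep\,\pw$, the operator that actually feeds into $\ETT_{+}$ through $\ETTpw$, and the conclusion on the eigenvalues is unchanged. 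A secondary point --- needed only to get the dimensions of the eigenspaces, which the statement does not ask for --- is to check that $\pw$ restricts injectively to $\wpp$ and $\wm$; this should follow from the representation-theoretic multiplicities but needs a short separate argument.
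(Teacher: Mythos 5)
Your proposal is correct and follows essentially the same route as the paper: decompose $\mathrm{span}\{\q{a}\}$ into $\wtr\oplus\wpp\oplus\wm$, compute the Gram matrix $\bra{\hat a}\pw\q{b}$, kill $\ket{w_\mathrm{tr}}$, and read off the eigenvalues on $\pw\wpm$. Your Gram matrix value $\bra{\hat a}\pw\q{b}=\tfrac{1}{3}\bigl(\delta_{a,b}-\tfrac{2}{N^2}\delta_{\s{a}{b},1}\bigr)$ agrees with the paper's Appendix D (the paper's displayed line in this proof actually drops the factor $\tfrac{2}{N^2}$, though the final eigenvalue shows the correct formula was used), and your symmetry argument $\pw\ket{w_\mathrm{tr}}=\tfrac13(1+\omega+\omega^2)\ket{w_\mathrm{tr}}=0$ is a cleaner one-line route to the same fact the paper gets via $\wtr\subset\supp\ps\perp\supp\pw$.

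The caveat you flag about the reduction $\pw\tnconem\pw=0$ is a genuine and sharp observation, and it is in fact \emph{not} true in general. For $m=1$ one computes directly that $\tncone=\tfrac14\Wa{[(23)]}$ on $\vnc$, so $\tnconem=-\tfrac18(\id-\Wa{[(23)]})$ and, since $\pw\Wa{[(23)]}\pw=\Wa{[(23)]}\pws\pw=0$, one gets $\pw\tnconem\pw=-\tfrac18\pw\neq0$ and indeed $\pw\tncone\pw=0$ while $\pw\tnconep\pw=\tfrac18\pw$. So your first-paragraph argument that "the $\tnconem$-contribution then drops out of the $\pw$-sandwich" does not hold as written --- the chain $\pw\ETT\pw=3\pw\tncone\pw$ versus $\ETTpw=3\pw\tnconep\pw$ compares \emph{different} left-hand sides and gives no cancellation. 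But your fallback is exactly the right reading of the paper: the lemma statement has a typo ($\tncone$ should be $\tnconep$), the paper's own computation in equation~\eqref{eq:tncone_pw_wp_eigenvalue} is manifestly about $\pw\tnconep\pw$, and the operator actually used downstream in Lemma~\ref{lem:ETTp_eigenvalues} is $\ETTpw=3\pw\tnconep\pw$. So your proof, read as a proof about $\pw\tnconep\pw$, is correct and complete in the same sense as the paper's. Your secondary point about injectivity of $\pw$ on $\wpm$ (needed for the eigenvalues to actually be attained) is also a real gap that the paper leaves implicit, but as you say it does not affect the displayed eigenvalue formulas.
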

\begin{proof}
We prove this following the same steps as in Lemma \ref{lem:eig_ETTpcs}. We explicitly prove the lemma for $\ETTpw$, and the proof for $\ETTpws$ follows with identical steps. In fact, by taking the complex conjugate of all the steps in the proof body below, one obtains equations, which together constitute the proof of $\ETTpws$.
\noindent $\supp{\left( \id + \Wa{[(23)]} \right)}$  is decomposed into an orthogonal direct sum of $\wtr$, $\wpp$ and $\wm$ as in equation \eqref{eq:wtr2}, \eqref{eq:wpp} and \eqref{eq:wm}. In the body of the proof of Lemma \ref{lem:eig_ETTpcs}, we noted that $\wtr \subset \supp{\ps}$, and since $\ps$ and $\pw$ are orthogonal projectors, $\pw \tncone \pw \ \ket{w_\mathrm{tr}} = 0$. Next, let $\ket{w_\pm} \in \pw \ \mathcal{W}_{\pm}$, then we have 
\begin{align}
    \label{eq:tncone_pw_wp_eigenvalue}
    & \ \pw \ \tnconep \pw \ \ \pw \ \ket{w_\pm} \ \notag \\ 
    = & \ \pw \times  \frac{1}{4} \ \sum_{a,b\in\Fs} \ \lambda_b \ \bra{\hat{a}}\pw \q{b} \ \q{a} \notag \\ 
    = & \ \pw \times \ \frac{1}{4} \ \sum_{a,b \in \Fs} \  \lambda_b \ \left( \frac{1}{3}\left(  \delta_{a,b} - \ \delta_{\s{a}{b},1} \right)\right) \  \q{a}, \, \, \left( \mathrm{using} \ \mathrm{equation} \ \mathrm{\eqref{eq:inner_a_pw_or_pws_b}}\right) \notag \\
    = & \ \frac{1}{12} \left(1 \mp \dfrac{1}{N} \right) \ \pw \ \ket{w_{\pm}},
\end{align}
where $\bra{\hat{a}}\pw \q{b}$ is evaluated in Section \ref{sec:appendix_inner}.
\noindent Thus the remaining eigenvalues are $\frac{1}{12} \left( 1 \pm \frac{1}{N} \right)$.
\end{proof}

\subsubsection{Action of \texorpdfstring{$\ET$}{random transvections} on \texorpdfstring{$\vc$}{commuting sector} } 
\label{subsec:eigenvalues_ET_vc}
\begin{lem}
\label{vc_lem:eig_ET}
On $\vc$, $\ET$'s highest eigenvalue is $+1$ with a one-dimensional eigenspace. It is given by
\begin{equation}
 \label{vc_eq:def_wtr}
     \ \wtr \ := \ \mathrm{span} \  \left\{ \  \ket{w_\mathrm{tr}} \ :=  \sum_{\substack{a,b \\ 
    \s{a}{b}=0 \\ a,b,a+b \neq 0}} \ \ket{a,b, a+b}  \right\}. \end{equation}  
The second highest eigenvalue of $\ET$ on $\vc$ is upper bounded by $\frac{1}{2}  \left( 1  + \frac{4}{N} + \frac{2}{N(N-2)}\right)$.
\end{lem}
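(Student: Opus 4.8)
The plan is to run, for the commuting sector $\vc$, the argument used for the anti-commuting sector in Lemmas~\ref{lem:eig_ET_vnc}--\ref{lem:eig_ETTpw_ETTpws}. For the top eigenvalue: since $a+b+(a+b)=0$ for every basis vector of $\vc$, Lemma~\ref{lem:Pauli_basis_choice} gives $\Ua{h}\ket{a,b,a+b}=\ket{T_h a,T_h b,T_h(a+b)}$ with no sign, and because $T_h^2=\id$ this is an involutive permutation of the Pauli basis of $\vc$; hence $\ET\big|_{\vc}=\tfrac{1}{N^2}\sum_h\Ua{h}$ is doubly stochastic in that basis and fixes the all-ones vector $\ket{w_\mathrm{tr}}$ of \eqref{vc_eq:def_wtr}, with eigenvalue $+1$.

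Next I would decompose $\ET\big|_{\vc}$ by splitting the sum over $h$ in \eqref{eq:ET_action} according to the pair $(\s{a}{h},\s{b}{h})\in\F^2$, which takes each of its four values on $N^2/4$ elements $h$ by Lemma~\ref{lem:1} (the vectors $a,b$ being independent in $\vc$): the $(0,0)$-block fixes all three registers and contributes $\tfrac14\id$, while the other three blocks contribute operators $\tcone,\tctwo,\tcthree$ that are cyclic $\Wa{\pi}$-conjugates of one another and each commute with the transposition fixing their ``special'' register, as in Lemma~\ref{lem:tnc_rela}. The one genuinely new feature---and the reason $\vc$ is \emph{easier} than $\vnc$---is that, since $\s{a}{b}=0$, the relabelling that in \eqref{eq:tncone_3} turned $\tncone\Wa{[(23)]}$ into $\Wa{[(23)]}\tncone$ now turns $\tcone\Wa{[(23)]}$ into $\tcone$ itself (a one-line check on the index sets, or an explicit $m=2$ instance, verifies $\tcone\ket{a,a+b,b}=\tcone\ket{a,b,a+b}$). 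Hence $\tcone(\id-\Wa{[(23)]})=0$, the ``minus'' pieces $\tconem,\tctwom,\tcthreem$ all vanish, and
\begin{equation*}
\ET\big|_{\vc}\ =\ \tfrac14\,\id\ +\ \ETT^{(\mathrm c)}_{+},\qquad \ETT^{(\mathrm c)}_{+}\ :=\ \tconep+\tctwop+\tcthreep ,
\end{equation*}
with $\ETT^{(\mathrm c)}_{+}$ self-adjoint; so the spectrum of $\ET\big|_{\vc}$ is exactly $\tfrac14$ plus that of $\ETT^{(\mathrm c)}_{+}$ and no operator-norm bound on a residual term is needed.

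It remains to diagonalise $\ETT^{(\mathrm c)}_{+}$, which I would do by transcribing Lemmas~\ref{lem:ETTp_eigenvalues}, \ref{lem:tnconep_spectral_decomposition}, \ref{lem:eig_ETTpcs} and \ref{lem:eig_ETTpw_ETTpws}: use $\id=\ps+\pw+\pws$ and Schur's lemma \eqref{eq:schur_lemma} to get $\ETT^{(\mathrm c)}_{+}=3\,\ps\tconep\ps\oplus 3\,\pw\tconep\pw\oplus 3\,\pws\tconep\pws$; compute the spectral decomposition of $\tconep$, whose only nonzero eigenvalue is $\tfrac14$, realised on the vectors $\ket{\Bar{a}}$ of \eqref{eq:c_a}; and then diagonalise the three compressed operators from the inner products $\bra{\Bar{a}}\ps\ket{\Bar{b}}$, $\bra{\Bar{a}}\pw\ket{\Bar{b}}$, $\bra{\Bar{a}}\pws\ket{\Bar{b}}$ together with the relations defining $\wpp,\wm$ (the same relations used for the diagonal sector and in \cite{Helsen2016}). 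This gives eigenvalue $\tfrac34$ on $\wtr$, hence eigenvalue $+1$ for $\ET\big|_{\vc}$ there and nowhere else, and eigenvalues of the form $\tfrac14\bigl(1+O(1/N)\bigr)$ on the orthogonal complement, comfortably below $\tfrac12\bigl(1+\tfrac4N+\tfrac{2}{N(N-2)}\bigr)$. \textbf{The main obstacle} is exactly the bookkeeping of this last step: unlike in $\vnc$, the vectors $\ket{\Bar{a}}$ of \eqref{eq:c_a} are \emph{not} normalised---each runs over only $\tfrac{N^2}{2}-2$ values of $h$, the values $h=0,a$ being excluded, so $\bk{\Bar{a}}{\Bar{a}}=\tfrac{N^2-4}{N}$---and their off-diagonal overlaps under $\ps,\pw,\pws$ are supported on $\s{a}{b}=0$ rather than on $\s{a}{b}=1$. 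After normalisation these two facts replace every $\tfrac1N$ correction occurring in \eqref{eq:inner_a_ps_b}-type identities by a correction of order $\tfrac1{N\pm2}$, and they force the eigenvalue computation to use \emph{both} defining relations of $\wpp$ and $\wm$, namely $\sum_a\lambda_a=0$ as well as $\sum_{h:\s{h}{a}=1}\lambda_h=\pm\tfrac N2\lambda_a$. Carrying these $N-2$ denominators correctly through Lemmas~\ref{lem:eig_ETTpcs} and \ref{lem:eig_ETTpw_ETTpws}, and verifying that the worst of the $\ps$, $\pw$, $\pws$ blocks stays below $\tfrac12\bigl(1+\tfrac4N+\tfrac{2}{N(N-2)}\bigr)$, is what produces the stated bound; the remainder is a transcription of the $\vnc$ argument.
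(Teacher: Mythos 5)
Your observation that the $(23)$-minus piece of $\tcone$ vanishes in the commuting sector is correct (the paper proves $\frac{1}{2}(\id-\Wa{[(23)]})\tcone=0$ precisely as you describe, via the $h\leftrightarrow a+h$ relabelling). But you draw the wrong conclusion from it. The step that fails is the claim that, consequently, ``$\tconep$'s only nonzero eigenvalue is $\frac14$, realised on the vectors $\ket{\Bar{a}}$'' and that there is ``no operator-norm bound on a residual term''. In the anti-commuting sector the $\pm\frac{1}{2N}$ eigenvalues of $\tncone$ live in the $(23)$-minus subspace and get bundled into the residual $\tnconem$; in the commuting sector they do not go away just because the $(23)$-minus subspace is killed — they move into the $(23)$-plus subspace. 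Concretely, $\tr\tcone=0$ (equation \eqref{vc_eq:tncone_adjoint} makes $\tcone$ self-adjoint with zero trace), so $\tcone$ cannot have $\frac14$ as its only nonzero eigenvalue; the paper shows its spectrum is $\{\frac14,\pm\frac{1}{2N},0\}$, with the $\frac14$ eigenspace being the $N^2-1$ dimensional $\vdcone=\mathrm{span}\{\ket{\Bar a}\}$ and $\pm\frac{1}{2N}$ carried by $\vdcone^\perp\subset\vcone$.

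Because of this, the paper's splitting $\tcone=\tconep+\tconem$ in $\vc$ is \emph{not} the $(23)$-symmetry splitting you use (which would make $\tconem=0$); it is $\tconep:=\pdcone\tcone$ and $\tconem:=\pdcone^\perp\tcone$, where $\pdcone$ projects onto $\vdcone$. The residual $\ETTCS=\tconem+\tctwom+\tcthreem$ does not vanish: it is bounded by $\op{\ETTCS}\le\frac{3}{2N}$, and it contributes essentially the whole $\frac{4}{N}$ term in the stated bound. Your transcription of Lemmas~\ref{lem:eig_ETTpcs} and \ref{lem:eig_ETTpw_ETTpws} (compressing $\tcone$ to $\mathrm{span}\{\pj{k}\ket{\Bar a}\}$) would only capture the $\tconep$ part, so it cannot see the $\pm\frac{1}{2N}$ contributions and would not reproduce the bound $\frac12\bigl(1+\frac{4}{N}+\frac{2}{N(N-2)}\bigr)$; the ``main obstacle'' you flag (normalisation of $\ket{\Bar a}$, the $N\pm2$ denominators) is real but secondary, since it only shifts the $\ETTd$ eigenvalues from $\frac14(1\pm\frac{2}{N})$ to $\frac14(1\pm\frac{2}{N\mp2})$. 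The genuine new difficulty you have not accounted for is this redistribution of the small eigenvalues, which is the entire reason the commuting sector still needs a two-piece decomposition $\frac14\id+\ETTd+\ETTCS$ with an operator-norm bound on $\ETTCS$.
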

\begin{proof}
Structure of proof: we first prove that $\ET$ can be written as a sum of the following three linear operators.
\begin{equation}
    \label{vc_eq:ET_decomposition}
    \ET \ = \ \frac{1}{4} \ \id + \ \ETTd + \ETTCS,
\end{equation}
with the properties
\begin{enumerate}
\item $\ETTd$ is a self-adjoint linear operator. It's highest eigenvalue is $\frac{3}{4}$, and $\wtr$ is the corresponding eigenspace. The remaining non-zero eigenvalues of $\ETTd$ are $\frac{1}{4} \left( 1 + \frac{2}{N-2} \right)$, $\frac{1}{4} \left( 1 + \frac{2}{N+2} \right)$, $\frac{1}{4} \left( 1 - \frac{2}{N-2} \right)$ and $\frac{1}{4} \left( 1 - \frac{2}{N+2} \right)$.
\item $\ETTCS$ is also self-adjoint. It's largest eigenvalue is lower or equal to $\frac{3}{2N}$, i.e., 
\begin{equation}
    \label{vc_eq:op_bound_ETTm}
    \op{\ETTCS } \ \le \dfrac{3}{2N}.
\end{equation}
$\supp{\ETTCS}$ lies in the orthogonal complement of $\wtr$.
\end{enumerate}
We construct the target decomposition of equation \eqref{vc_eq:ET_decomposition} in Lemma \ref{vc_lem:ET_decomposition}. We obtain the eigenvalues of $\ETTd$ in Lemma \ref{vc_lem:ETTp_eigenvalues} along with their eigenspaces. In Lemma \ref{vc_lem:ET_decomposition} we prove the inequality  \eqref{vc_eq:op_bound_ETTm}.

\end{proof}

\begin{lem}
\label{vc_lem:ET_decomposition}
In this lemma, we construct the decomposition given in equation \eqref{vc_eq:ET_decomposition}.
\end{lem}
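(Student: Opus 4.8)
The plan is to carry over, almost verbatim, the argument used for the non-commuting sector in Lemma~\ref{lem:ET_decomposition}, being careful about the few places where the extra non-degeneracy constraints defining $\vc$ enter. First I would record two elementary facts. (a) $\vc$ is invariant under $\ET$: each transvection $T_h$ preserves the symplectic form and is invertible, so it maps a triple $(a,b,a+b)$ with $\s{a}{b}=0$ and $a,b,a+b\neq 0$ to another such triple. (b) For any $\ket{a,b,a+b}\in\vc$ the vectors $a,b$ are linearly independent over $\F$: if they were dependent then, both being non-zero, we would have $b=a$, forcing $a+b=0$, which is excluded; hence Lemma~\ref{lem:1} applies with $d=2$.

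Next I would apply equation~\eqref{eq:ET_action} to $\ket{a,b,a+b}\in\vc$ and split the sum over $h\in\Fm$ into the four blocks indexed by the pair $\left(\s{a}{h},\s{b}{h}\right)\in\{0,1\}^2$, using $\s{a+b}{h}=\s{a}{h}+\s{b}{h}$; this gives a decomposition of $\ET\ket{a,b,a+b}$ entirely analogous to equations~\eqref{eq:ET_action_non_commuting_sector}--\eqref{eq:R}. On the $(0,0)$ block $T_h$ fixes $a$, $b$ and $a+b$, and by Lemma~\ref{lem:1} there are $N^2/4$ such $h$, so this block contributes $\frac{1}{4}\,\id$, with $\id$ the identity on $\vc$. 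Each of the remaining three blocks, after a change of the dummy variable of the form $h\mapsto a+h$, $h\mapsto b+h$ or $h\mapsto (a+b)+h$ (legitimate because $\s{a}{b}=0$), is brought to the canonical one-frozen-slot form and thereby defines operators $\tcone,\tctwo,\tcthree$ on $\vc$, with explicit formulas mirroring equations~\eqref{eq:def_tncone}--\eqref{eq:def_tncthree}. In doing this one must verify that each block maps $\vc$ into $\vc$, i.e.\ that no output triple collapses to a diagonal-sector vector $\ket{a,a,0}$ or acquires a zero entry; this follows from $\s{a}{b}=0$, $a,b,a+b\neq 0$ and the sign condition ($\s{a}{h}=1$ or $\s{b}{h}=1$) valid inside the relevant block, and is essentially the only bookkeeping that is new relative to the $\vnc$ case. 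The conclusion of this step is $\ET\big|_{\vc}=\frac{1}{4}\,\id+\tcone+\tctwo+\tcthree$.

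Finally I would decompose each of $\tcone,\tctwo,\tcthree$ into its $+1$ and $-1$ eigenparts under the transposition of its two non-frozen slots, namely $\Wa{[(23)]}$, $\Wa{[(13)]}$ and $\Wa{[(12)]}$ respectively: by the same computation as in Lemma~\ref{lem:tnc_rela}, $\tcone$ commutes with $\Wa{[(23)]}$, $\tctwo$ with $\Wa{[(13)]}$, and $\tcthree$ with $\Wa{[(12)]}$, so that $\tcone=\tconep+\tconem$, $\tctwo=\tctwop+\tctwom$ and $\tcthree=\tcthreep+\tcthreem$ are orthogonal decompositions, the first summand in each living in the $+1$ eigenspace of the relevant transposition and the second in its $-1$ eigenspace. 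Putting $\ETTd:=\tconep+\tctwop+\tcthreep$ and $\ETTCS:=\tconem+\tctwom+\tcthreem$ then yields $\ET=\frac{1}{4}\,\id+\ETTd+\ETTCS$, which is equation~\eqref{vc_eq:ET_decomposition}. Self-adjointness of $\ETTd$ and $\ETTCS$, the identification of $\wtr$ as the top eigenspace of $\ETTd$, and the bound $\op{\ETTCS}\le 3/(2N)$ are then proved in Lemma~\ref{vc_lem:ETTp_eigenvalues} and the lemmas following it, in direct parallel with the corresponding statements about $\ETT_+$ and $\ETT_-$ in the non-commuting sector.

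I expect the principal obstacle to be exactly the bookkeeping flagged above: ensuring that the four-block partition and the ensuing changes of dummy variable remain inside $\vc$. Unlike in $\vnc$, where $\s{a}{b}=1$ automatically forces $a,b,a+b$ to be distinct and non-zero, here the conditions $a,b,a+b\neq 0$ are side constraints whose preservation under $T_h$ and under the relabellings must be checked by hand; a careless relabelling could merge two slots or create a zero entry and so wreck the identification of the three mixed blocks with well-defined operators on $\vc$. Beyond this, the argument is a faithful copy of the proof of Lemma~\ref{lem:ET_decomposition}.
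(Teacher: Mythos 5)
Your plan to transplant the $\vnc$ argument breaks at the final decomposition step, and this is a genuine gap, not merely bookkeeping. In the non-commuting sector, $\tncone$ \emph{commutes} with $\Wa{[(23)]}$ but has nontrivial components in both its $+1$ and $-1$ eigenspaces, so the decomposition $\tncone=\tnconep+\tnconem$ into those eigenparts is meaningful and yields a small $\tnconem$. In the commuting sector, $\tcone$ is not only $\Wa{[(23)]}$-covariant but is \emph{annihilated} on the $-1$ eigenspace: because $\s{a}{b}=0$ implies that the index set $\{h:\s{a}{h}=0,\s{b}{h}=1\}$ is unchanged under $b\mapsto a+b$, one gets the stronger identity $\tcone\ket{a,b,a+b}=\tcone\ket{a,a+b,b}$, i.e.\ $\tcone\Wa{[(23)]}=\tcone$, and with commutativity this forces $\left(\id-\Wa{[(23)]}\right)\tcone=0$ (this is precisely equations \eqref{vc_eq:ker_tncone} and \eqref{vc_eq:supp_tncone}). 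Consequently your proposed $\tconem$, $\tctwom$, $\tcthreem$ are all identically zero, your $\ETTCS$ is $0$, and your $\ETTd$ is just all of $\ETT$. That is a trivially true decomposition, but it does not isolate a low-rank piece with a few large eigenvalues from a piece of small operator norm, which is the whole point of the construction; the subsequent lemmas (the identification of $\frac{3}{4}$ as the top eigenvalue of $\ETTd$, the list of remaining eigenvalues $\frac{1}{4}\left(1\pm\frac{2}{N\mp2}\right)$, and the bound $\op{\ETTCS}\le\frac{3}{2N}$) would not parallel the $\vnc$ case at all under your definitions.

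The paper's actual decomposition in the commuting sector is along different lines. Having established that $\supp(\tcone)\subseteq\supp\left(\id+\Wa{[(23)]}\right)$ and that $\tcone$ is self-adjoint (equation \eqref{vc_eq:tncone_adjoint}), the paper first computes the full spectral decomposition of $\tcone$ itself (Lemma \ref{vc_lem:tncone_spectral_decomposition}): it has a distinguished $(N^2-1)$-dimensional eigenspace $\vdcone$ with eigenvalue $\frac{1}{4}$, spanned by the vectors $\q{a}$, while on the orthogonal complement $\vdcone^\perp$ inside $\supp(\tcone)$ its eigenvalues are only $\pm\frac{1}{2N}$. The split is then $\tconep:=\pdcone\,\tcone$ and $\tconem:=\pdcone^\perp\,\tcone$, with $\tctwopm$, $\tcthreepm$ obtained by conjugation under the relevant permutations, and $\ETTd=\tconep+\tctwop+\tcthreep$, $\ETTCS=\tconem+\tctwom+\tcthreem$. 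This is a spectral split of $\tcone$ itself, not a permutation-symmetry split, and it is the reason one obtains $\op{\ETTCS}\le\frac{3}{2N}$ nontrivially and a clean $O(1)$-dimensional eigenstructure for $\ETTd$. Your first two paragraphs (the four-block partition, the $\frac{1}{4}\id$ contribution, the relabelling to get $\tcone,\tctwo,\tcthree$, and the commutation properties of Lemma \ref{vc_lem:tnc_rela}) are fine and match the paper; the gap is that you would need to notice the degeneracy $\left(\id-\Wa{[(23)]}\right)\tcone=0$ and then find the replacement split along $\vdcone$ and $\vdcone^\perp$, which requires the spectral analysis of $\tcone$ rather than mimicking the $\pm$ eigenspace split of the transpositions.
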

\begin{proof}
Using equation \eqref{eq:ET_action}, one sees that the action of $\ET$ on $\ket{a,b,a+b}$ when $a,b$ are such that $\s{a}{b}=0$ is 
\begin{align}
    \label{vc_eq:ET_action_commuting_sector}
    \ET \  \ket{a,b,a+b}  \  = \ & \ \ \ \ \ \ \ \dfrac{1}{N^2} \ \sum_{\substack{h: \\ 
    \s{a}{h}=0 \\ \s{b}{h}=0 }} \ \ket{a,b,a+b}  \\ 
    \label{vc_eq:T}
    & \  + \ \dfrac{1}{N^2} \  \ \sum_{\substack{h: \\
    \s{a}{h}=0 \\ \s{b}{h}=1 }} \ \kt{a}{b+h}{a+b+h}  \\
    \label{vc_eq:S}
    &\  +  \ \dfrac{1}{N^2} \ \sum_{\substack{h: \\ 
    \s{a}{h}=1 \\ \s{b}{h}=0 }} \ \kt{a+h}{b}{a+b+h} \\ 
    \label{vc_eq:R}
    & \  +  \ \dfrac{1}{N^2} \ \sum_{\substack{h: \\ 
    \s{a}{h}=1 \\ \s{b}{h}=1 }} \ \kt{a+h}{b+h}{a+b}.
\end{align}
\noindent In Section \ref{vc_sec:appendix:ET_vnc}, we prove from equation \eqref{vc_eq:ET_action_commuting_sector} (and the expressions \eqref{vc_eq:T}, \eqref{vc_eq:S} and \eqref{vc_eq:R}) that the action of $\ET$, when restricted to $\vc$ can be re-written in the following form 
\begin{equation} 
\label{vc_eq:ET_vnc_restriction}
\ET \  = \ \dfrac{1}{4} \id \ + \ \tcone \ + \ \tctwo \ + \ \tcthree,
\end{equation} where by $\id$ we mean the identity restricted to $\vc$, and we define the linear operators $\tcone$, $\tctwo$ and $\tcthree$ on $\vc$ by providing their action on the basis vectors $ \ \ket{a,b,a+b}$, for $\s{a}{b}=0$, and $a,b,a+b \neq 0$.
\begin{align}
        \label{vc_eq:def_tncone}
        \tcone \ \ket{a,b,a+b} \ := \ \dfrac{1}{N^2} \ \sum_{\substack{h: \\
    \s{a}{h}=0 \\ \s{b}{h}=1 }} \ \ket{a,h,a+h},
\\
        \label{vc_eq:def_tnctwo}
                \tctwo \ \ket{b,a,a+b} \ := \ \dfrac{1}{N^2} \ \sum_{\substack{h: \\
    \s{a}{h}=0 \\ \s{b}{h}=1 }} \ \ket{h,a,a+h}, 
    \\
        \label{vc_eq:def_tncthree}
                \tcthree \ \ket{a+b,b,a} \ := \ \dfrac{1}{N^2} \ \sum_{\substack{h: \\
    \s{a}{h}=0 \\ \s{b}{h}=1 }} \ \ket{a+h,h,a}. 
    \end{align}
    
We define $\ETT$ as 
\begin{equation}
    \label{vc_eq:defn_ETT}
    \ETT \ := \ \ET - \dfrac{1}{4} \ \id. 
\end{equation}
Then from equation \eqref{vc_eq:ET_vnc_restriction} we see that 
\begin{equation}
    \label{vc_eq:ETT}
    \ETT \ = \ \tcone \ + \ \tctwo \ + \ \tcthree.
\end{equation}

\noindent Below we give some important properties of $\tcone$, $\tctwo$ and $\tcthree$, and also some relations among them. 
\begin{enumerate}
    \item From equation \eqref{vc_eq:def_tncone} it is seen that 
    \begin{equation}
        \label{vc_eq:tncone_adjoint}
        \bra{a,b,a+b}  \ T  \ \ket{a,c,a+c} \ = \ \frac{2}{N^2} \ \delta_{\s{b}{c},1} \ = \ \bra{a,c,a+c}  \ T  \ \ket{a,b,a+b}.
    \end{equation}
    Thus, $\tcone$ is self-adjoint. Similarly, $\tctwo$ and $\tcthree$ are also self-adjoint, and all of them admit a spectral decomposition into orthogonal eigenspaces.
    \item In Lemma \ref{vc_lem:tnc_rela}, we prove that $\tcone$, $\tctwo$ and $\tcthree$ commute with $\Wa{[(23)]}$, $\Wa{[(13)]}$ and $\Wa{[(12)]}$, respectively. Thus $\tcone$ can be diagonalized along the $+1$ and $-1$ eigenspaces of $\Wa{[(23)]}$, and similarly for $\tctwo$ and $\tcthree$.
    \item Lemma \ref{vc_lem:tnc_rela} also proves that $\tctwo = \Wa{[(123)]} \tcone \Wa{[(132)]} = \Wa{[(12)]} \tcone \Wa{[(12)]}$, and $\tcthree = \Wa{[(132)]} \tcone \Wa{[(123)]} =  \Wa{[(13)]} \tcone \Wa{[(13)]}$. Hence, obtaining the spectral decomposition of $\tcone$ also gives us the spectral decomposition of $\tctwo$ and $\tcthree$.
    \item In Lemma \ref{vc_lem:tncone_spectral_decomposition} we obtain the spectral decomposition for $\tcone$. We see the following.
    \begin{enumerate}
        \item \begin{align}
        \label{vc_eq:ker_tncone}
       & \frac{1}{2}\left( \id - \Wa{[(23)}\right) \ \tcone \ = 0.  \\
        \label{vc_eq:supp_tncone}
    &    \frac{1}{2}\left( \id + \Wa{[(23)}\right) \ \tcone \ = \tcone.  
        \end{align}
        \item Define 
        \begin{equation}
            \label{vc_eq:def_vcone}
            \vcone \ := \ \supp{\left( \id + \Wa{[(23)]} \right)}.
        \end{equation} Then, in other words, the $\supp(\tcone) = \vcone$. Note that $\vcone$ is $\frac{\left(N^2-1\right)\left(N^2-4 \right)}{4}$ dimensional.
        \item $\vcone$ has an orthogonal decomposition along two subspaces: $\vdcone$ which is $N^2-1$ dimensional and its orthogonal complement, $\vdcone^\perp$, which is $\frac{\left(N^2-1 \right) \left( N^2-8 \right)}{4}$ dimensional, with the following properties: let $\pdcone$ and $\pdcone^\perp$ be the projectors on $\vdcone$ and $\vdcone^\perp$ respectively. Then the following is true. Then,
        \begin{align}
            \label{vc_eq:tncone_spectraldecomposition_1}
           \pdcone\ \tcone \ = \ & \frac{1}{4} \ \pdcone,  \\ 
            \label{vc_eq:tncone_spectraldecomposition_2} 
            \pdcone^\perp \ \left(\tcone\right)^2 \ = \ & \frac{1}{4N^2} \ \pdcone^\perp,
            \end{align}
            Hence the eigenvalues of $\tcone$ are $\frac{1}{4}$ and $\pm \frac{1}{2N}$. We will use the following orthogonal decomposition of $\tcone$.
            \begin{equation}
                \label{vc_eq:tncone_decomposition}
                \tcone \ = \ \tconep \ + \tconem,
            \end{equation}
            where 
            \begin{align}
                \label{vc_eq:def_tconep}
                & \tconep \ := \ \pdcone \ \tcone \\
                \label{vc_eq:def_tconem}
                & \tconem \ := \ \pdcone^\perp \ \tcone.
            \end{align}
    \end{enumerate}
    \item Since $\tctwo$ and $\tcthree$ are unitary transformations of $\tcone$, their eigendecompositions are identical to $\tcone$'s. Hence we have the following.
    \begin{align}
        \label{vc_eq:tctwo_decomposition}
        & \tctwo \ = \ \tctwop \ + \ \tctwom, \\
        \label{vc_eq:tcthree_decomposition}
        & \tcthree \ = \ \tcthreep \ + \ \tcthreem,
    \end{align}
    where, for $\pdctwo = \Wa{[(123)]} \pdcone \Wa{[(132)]} = \Wa{[(12)]} \pdcone \Wa{[(12)]}$ , and $\pdcthree = \Wa{[(132)]} \pdcone \Wa{[(123)]} = \Wa{[(13)]} \pdcone \Wa{[(13)]} $, we get \begin{align}
                \label{vc_eq:def_tctwop}
                & \tctwop \ := \ \pdctwo \ \tctwo \ = \ \Wa{[(123)]} \ \tconep \ \Wa{[(132)]} \ = \ \Wa{[(12)]} \tconep \Wa{[(12)]} \\ 
                \label{vc_eq:def_tctwom}
                & \tctwom \ := \ \pdctwo^\perp \ \tctwo \ = \ \Wa{[(123)]} \ \tconem \ \Wa{[(132)]}  \ = \ \Wa{[(12)]} \tconem \Wa{[(12)]} \\
                \label{vc_eq:def_tcthreep}
                & \tcthreep \ := \ \pdcthree \ \tcthree \  = \ \Wa{[(132)]} \ \tconep \ \Wa{[(123)]}  \ = \ \Wa{[(13)]} \tconep \Wa{[(13)]} ,\\
                \label{vc_eq:def_tcthreem}
                & \tcthreem \ := \ \pdcthree^\perp \ \tcthree  \  = \ \Wa{[(132)]} \ \tconem \ \Wa{[(123)]}  \ = \ \Wa{[(13)]} \tconem \Wa{[(13)]}.
            \end{align}
            \end{enumerate} 
Define
 \begin{align}
     \label{vc_eq:ETTp}
     \ETTd \ & \:= \ \tconep \  + \  \tctwop \  + \  \tcthreep, \\
     \label{vc_eq:ETTm}
     \ETTCS \ & \ := \ \tconem \  + \  \tctwom \  + \  \tcthreem, 
 \end{align} from which we see that
 \begin{align}
\label{vc_eq:ETT_decomposition}
 \ETT \ 
 = \  \ \ETTd \ + \ \ETTCS.
 \end{align}
 Since $\op{\tconem}$, $\op{\tctwom}$ and $\op{\tcthreem}$ $\le $ $\frac{1}{2N}$, from equation \eqref{vc_eq:ETTm} we have that 
 \begin{equation}
 \label{vc_eq:ETTm_op}
 \op{\ETTCS} \ \le \ \dfrac{3}{2N}.
 \end{equation}
 Finally, Lemma \ref{vc_lem:ETTp_eigenvalues} gives the full eigendecomposition of $\ETTd$, which is as was specified in Lemma \ref{vc_lem:ET_decomposition}. Hence the construction of the decomposition is as was promised in the proof of Lemma \ref{vc_lem:ET_decomposition}.
 \end{proof}
 
\begin{lem}
    \label{vc_lem:tnc_rela}
   $\tcone$, $\tctwo$ and $\tcthree$ have the following properties.
    \begin{itemize}
    \item $\tcone$, $\tctwo$ and $\tcthree$ commute with $\Wa{[(23)]}$, $\Wa{[(13)]}$, and $\Wa{[(12)]}$ respectively. \begin{align}
    \label{vc_eq:tncone_axis_of_symmetry}
    \Wa{[(23)]} \ \tcone \ \Wa{[(23)]} \ = \ \tcone,\\
    \label{vc_eq:tnctwo_axis_of_symmetry}
    \Wa{[(13)]} \ \tctwo \ \Wa{[(13)]} \ = \ \tctwo, \\
    \label{vc_eq:tncthree_axis_of_symmetry}
    \Wa{[(12)]} \ \tcthree \ \Wa{[(12)]} \ = \ \tcthree. 
\end{align}
\item $\tcone$, $\tctwo$ and $\tcthree$ are permutations of each other:  \begin{align}
\label{vc_eq:tnc_rela_3_1}
\tcthree \ = & \    \Wa{[(13)]} \  \tcone \ \Wa{[(13)]} \\ 
\label{vc_eq:tnc_rela_3_2}
= & \  \  \Wa{[(132)]} \ \tcone \Wa{[(123)]}  \\
= & \  \Wa{[(123)]} \ \tctwo \Wa{[(132)]} \notag \\
= & \   \Wa{[(23)]} \  \tctwo \ \Wa{[(23)]}, \notag  \\
\label{vc_eq:tnc_rela_2}
\ \tctwo \ = \ & \ \Wa{[(123)]} \ \tcone \Wa{[(132)]} \\ 
= & \  \Wa{[(12)]} \  \tcone \ \Wa{[(12)]}  \notag \\ 
=  & \  \Wa{[(132)]} \ \tcthree \Wa{[(123)]} \notag \\
= & \ \Wa{[(23)]} \  \tcthree \ \Wa{[(23)]} \notag \\ 
\label{vc_eq:tnc_rela_1}
\tcone \ = & \ \Wa{[(123)]} \ \tcthree \Wa{[(132)]} \\
= & \   \Wa{[(13)]} \  \tcthree \ \Wa{[(13)]} \notag  \\
=  &  \  \Wa{[(132)]} \ \tctwo \Wa{[(123)]} \notag \\
= \ &   \Wa{[(12)]} \  \tctwo \ \Wa{[(12)]}, \notag 
\end{align}
\end{itemize}
\end{lem}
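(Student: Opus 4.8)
The plan is to prove Lemma \ref{vc_lem:tnc_rela} by a direct computation on the Pauli basis of $\vc$, mirroring the proof of Lemma \ref{lem:tnc_rela} for the non-commuting sector. Two elementary facts do all the work: bilinearity of the symplectic form, which gives $\s{a+b}{h} = \s{a}{h} + \s{b}{h}$, so that the pair of conditions $\{\s{a}{h}=0,\ \s{b}{h}=1\}$ can be freely re-expressed; and the affine shifts $h\mapsto a+h$, $h\mapsto b+h$, $h\mapsto (a+b)+h$ of the summation variable. Before starting I would also record that for $h$ in the constraint set of \eqref{vc_eq:def_tncone} one automatically has $h\neq 0$ and $h\neq a$ (since $\s{b}{0}=\s{b}{a}=0\neq 1$, using $\s{a}{b}=0$), so that every ket $\ket{a,h,a+h}$ occurring on the right of \eqref{vc_eq:def_tncone} genuinely lies in $\vc$; hence $\tcone$, $\tctwo$, $\tcthree$ are well-defined linear operators on $\vc$.

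First I would establish the symmetry relation \eqref{vc_eq:tncone_axis_of_symmetry}, $\Wa{[(23)]}\,\tcone\,\Wa{[(23)]} = \tcone$, on an arbitrary basis vector $\ket{a,b,a+b}$ with $\s{a}{b}=0$. Since $\Wa{[(23)]}\ket{a,b,a+b} = \ket{a,a+b,b}$ and $\s{a}{a+b}=\s{a}{a}+\s{a}{b}=0$, the latter is again a $\vc$-basis vector, now with labels $a'=a$, $b'=a+b$; expanding $\tcone\ket{a,a+b,b}$ via \eqref{vc_eq:def_tncone}, simplifying the condition $\s{a+b}{h}=1$ to $\s{b}{h}=1$ (using $\s{a}{h}=0$), and applying the outer $\Wa{[(23)]}$ gives $\tfrac1{N^2}\sum_{h:\,\s{a}{h}=0,\,\s{b}{h}=1}\ket{a,a+h,h}$. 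The substitution $h\mapsto a+h$ — which preserves the constraint set because $\s{a}{a}=0$ and $\s{b}{a}=0$ — turns this into $\tfrac1{N^2}\sum\ket{a,h,a+h}=\tcone\ket{a,b,a+b}$, proving \eqref{vc_eq:tncone_axis_of_symmetry}. Relations \eqref{vc_eq:tnctwo_axis_of_symmetry} and \eqref{vc_eq:tncthree_axis_of_symmetry} follow by the identical computation with $\Wa{[(13)]}$, $\Wa{[(12)]}$.

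Next I would prove the top relation $\tcthree = \Wa{[(13)]}\,\tcone\,\Wa{[(13)]}$ in \eqref{vc_eq:tnc_rela_3_1}: evaluating the right side on $\ket{a+b,b,a}$ and using $\Wa{[(13)]}\ket{a+b,b,a} = \ket{a,b,a+b}$, expanding $\tcone\ket{a,b,a+b}$ by \eqref{vc_eq:def_tncone}, and applying the outer $\Wa{[(13)]}$ term by term, one obtains $\tfrac1{N^2}\sum_{h:\,\s{a}{h}=0,\,\s{b}{h}=1}\ket{a+h,h,a}$, which is precisely the defining expression \eqref{vc_eq:def_tncthree} for $\tcthree\ket{a+b,b,a}$. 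The remaining equalities in \eqref{vc_eq:tnc_rela_3_1}–\eqref{vc_eq:tnc_rela_1} are then obtained by conjugating this identity and the symmetry relations by the appropriate permutation operators and using the group law of $\Sr$ to collapse the resulting products — exactly the bookkeeping already carried out in the proof of Lemma \ref{lem:tnc_rela}.

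I do not anticipate a genuine obstacle here; the one point requiring care — and the only place the argument deviates from the non-commuting case — is that the two conditions on $h$ are now asymmetric ($\s{a}{h}=0$ but $\s{b}{h}=1$), so when performing a shift such as $h\mapsto a+h$ one must verify \emph{each} condition separately, which works precisely because $\s{a}{b}=0$ on $\vc$ (whereas in Lemma \ref{lem:tnc_rela} one had $\s{a}{b}=1$ and the shifts toggled a constraint rather than fixing it). Once this observation is in place all the substitutions go through verbatim and the lemma follows; a clean writeup can largely point to the structural parallel with the proof of Lemma \ref{lem:tnc_rela}, flagging only this constraint difference.
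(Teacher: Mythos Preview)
Your proposal is correct and follows essentially the same approach as the paper: both prove \eqref{vc_eq:tncone_axis_of_symmetry} by a direct basis computation using the simplification $\s{a+b}{h}=1\Leftrightarrow\s{b}{h}=1$ (under $\s{a}{h}=0$) followed by the shift $h\mapsto a+h$, then obtain \eqref{vc_eq:tnc_rela_3_1} by conjugating the defining relation with $\Wa{[(13)]}$, and derive the remaining identities from the $\Sr$ group law. Your explicit remark that the shift now \emph{preserves} both constraints (because $\s{a}{b}=0$), as opposed to toggling one in the non-commuting case, is exactly the point the paper records in equation \eqref{vc_eq:tncone_2_2_3}.
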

\begin{proof}
Structure of proof:first we prove equation \eqref{vc_eq:tncone_axis_of_symmetry}. Then we prove the first (top-most) relation in equation \eqref{vc_eq:tnc_rela_3_1}. Equation \eqref{vc_eq:tnc_rela_3_2} is proved by conjugating both sides of equation \eqref{vc_eq:tncone_axis_of_symmetry} by $\Wa{[(13)]}$, and noting that $\Wa{[(123)} = \Wa{[(13)]} \Wa{[(23)]}$. The rest of the relations can be proved using identical steps.    
\noindent Proof of equation \eqref{vc_eq:tncone_axis_of_symmetry}: in equation \eqref{vc_eq:def_tncone} we see that 
\begin{align}
 \ \tcone \ \ket{a,a+b,b} & \ = \   \dfrac{1}{N^2} \ \, \, \sum_{\substack{k: \\
    \s{a}{k}=0 \\ \s{a+b}{k}=1 }} \ \ket{a,k,a+k} \notag \\ 
    & \ = \   \dfrac{1}{N^2} \ \, \, \sum_{\substack{k: \\
    \s{a}{k}=0 \\ \s{b}{k}=1 }} \ \ket{a,k,a+k}  \notag  \\ 
    \label{vc_eq:tncone_2}
    & \ = \ \dfrac{1}{N^2} \ \sum_{\substack{h: \\ \s{a}{h}=0 \\ \s{b}{h}=1}} \ \ket{a,a+h,h},
\end{align}
where we go from the first line to the second using
\begin{equation}
    \label{vc_eq:tncone_2_1_2}
    \left\{ k \in \Fs \  \big| \ 
    \s{a}{k}=0, \  \s{a+b}{k}=1 \right\} \ = \ \left\{ k \in \Fs \  \big| \ \s{a}{k}=0, \  \s{b}{k}=1 \right\}.
\end{equation} To go from the second line to the third, we change variables from $k$ to $h=a+k$, by noting that 
\begin{equation}
    \label{vc_eq:tncone_2_2_3}
     \left\{ \ h \ \big| \  \s{a}{h}=0, \  \s{b}{h}=1 \right\} \ = \ a \ + \  \left\{ k \in \Fs \  \big| \ \s{a}{k}=0, \  \s{b}{k}=1 \right\}, \, \, \, \left(\mathrm{using} \ \s{a}{b}=0\right).
\end{equation} Noting that the LHS of equation \eqref{vc_eq:tncone_2} can be re-written as  $ \tcone  \Wa{[(23)]} \  \ket{a,b,a+b}$, and the RHS in \eqref{vc_eq:tncone_2} can be re-written as the RHS of \eqref{vc_eq:tncone_3}, we get the following.
\begin{align}
    \label{vc_eq:tncone_3}
 \ \tcone \ \  \Bigg( \ \Wa{[(23)]} \  \ket{a,b,a+b} \Bigg) & \ = \  \ \Wa{[(23)]}  \ \left(  \ \dfrac{1}{N^2} \ \, \, \sum_{\substack{h: \\
    \s{a}{h}=0 \\ \s{b}{h}=1 }} \ \ket{a,h,a+h} \right).
\end{align} Comparing equation \eqref{vc_eq:tncone_3} with equation \eqref{vc_eq:def_tncone} we see that the action of $\tcone$ on $\ket{a,b,a+b}$ commutes with the action of $\Wa{[(23)]}$ on $\ket{a,b,a+b}$. Since this is true for all basis vectors $\ket{a,b,a+b}$ in $\vc$, we obtain equation \eqref{vc_eq:tncone_axis_of_symmetry}. 

\noindent Proof of equation \eqref{vc_eq:tnc_rela_3_1}: apply $\Wa{[(13)]}$ to both sides of equation \eqref{vc_eq:def_tncone}
\begin{align}
    \label{vc_eq:tncone_tncthree_step1}
  \  \Wa{[(13)]} \ \tcone \ \Wa{[(13)]} \  \ket{a+b ,b,a}  \ = &  \   \dfrac{1}{N^2} \ \, \, \sum_{\substack{h: \\
    \s{a}{h}=0 \\ \s{b}{h}=1 }} \ \ket{a+h,h,a} \notag \\
    = &  \ \tcthree \ \ket{a+b,b,a}, 
\end{align}
and equation \eqref{vc_eq:tnc_rela_3_1} is proved by noting that equation \eqref{vc_eq:tncone_tncthree_step1} holds for all basis vectors $\ket{a+b,b,a}$.
\end{proof}

\begin{lem}
\label{vc_lem:tncone_spectral_decomposition}
The spectral decomposition of $\tcone$ (defined in equation \eqref{vc_eq:def_tncone}) is 
\begin{equation}
\label{vc_eq:tncone_spectral_decomposition_1}
\tconep  \ = \ \dfrac{1}{4} \  \pdcone \  + \ \dfrac{1}{2N} \pqcone \ - \dfrac{1}{2N} \prcone, 
\end{equation}
where $\pdcone$, $\pqcone$ and $\prcone$ are orthogonal projectors defined on the following orthogonal spaces. 
\begin{align} 
    \label{vc_eq:def_vdcone}
& \vdcone \ := \ \mathrm{span} \ \left\{ \ \q{a} \ \forall \ a \in \Fs \ \big| \  \q{a} \ := \ \dfrac{\sqrt{2}}{ \sqrt{ N^2-4 } }  \ \sum_{\substack{h: \\ \s{a}{h}=0\\ h \neq a,0}} \  \ \ket{a,h,a+h} \right\}, \\ 
\label{vc_eq:def_vqcone}
& \vqcone \ := \ \mathrm{span} \ \left\{ \ \ket{a_+} \ = \sum_{\substack{b: \\ \s{a}{b}=0 \\ b \neq 0,a}} \lambda_{b} \ \ket{a,b,a+b}, \ \forall \  a\in \Fs \ \big| \ \lambda_b \text{ satisfy equations in \eqref{vc_eq:set_plus}.} \right\},\\ 
\label{vc_eq:def_vrcone}
&\vrcone \ := \ \mathrm{span} \ \left\{ \ \ket{a_-} \ = \sum_{\substack{b: \\ \s{a}{b}=0 \\ b \neq 0,a}} \lambda_{b} \ \ket{a,b,a+b}, \ \forall \  a\in \Fs \ \big| \ \lambda_b \text{ satisfy equations in \eqref{vc_eq:set_minus}.} \right\},
\end{align} 
where 
\begin{align}
    \label{vc_eq:set_plus}
    & \sum_{\substack{b: \\ \s{a}{b}=0 \\ b \neq 0,a}} \lambda_b = 0,   \ \ \ \ \ \sum_{\substack{h: \\ \s{h}{a}=0 \\ \s{h}{b}=1}} \lambda_h \ = \ \dfrac{1}{2N}  \lambda_b ,
\end{align}
and 
\begin{align}
    \label{vc_eq:set_minus}
    & \sum_{\substack{b: \\ \s{a}{b}=0 \\ b \neq 0,a}} \lambda_b = 0,  \ \ \ \ \ \sum_{\substack{h: \\ \s{h}{a}=0 \\ \s{h}{b}=1}} \lambda_h \ = \ -\dfrac{1}{2N}  \lambda_b.
\end{align}
$\dim \vdcone = N^2-1$, $\dim \vqcone = \frac{\left(N^2-1 \right) \left( N^2 - 2N -8 \right)}{8}$, and $\dim \vrcone = \frac{\left(N^2-1 \right) \left( N^2 + 2N -8 \right)}{8}$
\end{lem}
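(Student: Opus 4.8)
\emph{Plan.} The idea is to strip $\tcone$ down to one elementary matrix on a symplectic $\F$-space and diagonalise it by a character computation, exactly as in the two‑copy diagonal sector.

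\emph{Step 1 (reduce to one block).} By \eqref{vc_eq:ker_tncone}--\eqref{vc_eq:supp_tncone}, $\tcone$ kills $\supp(\id-\Wa{[(23)]})$ and satisfies $\tcone=\tfrac12(\id+\Wa{[(23)]})\tcone$, so $\supp\tcone\subseteq\vcone=\supp(\id+\Wa{[(23)]})$, and by \eqref{vc_eq:tncone_adjoint} it is self‑adjoint. Since $\tcone\ket{a,b,a+b}$ only involves basis vectors whose first register is $a$, $\tcone$ is block diagonal along $\vc=\bigoplus_{a\in\Fs}V_a$ with $V_a=\mathrm{span}\{\ket{a,b,a+b}\mid\s{a}{b}=0,\ b\neq 0,a\}$. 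So I would diagonalise $\tcone$ on $W_a:=V_a\cap\supp(\id+\Wa{[(23)]})$ for one fixed $a$ and then take the direct sum of eigenspaces over $a\in\Fs$.

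\emph{Step 2 (identify the block with a symplectic incidence matrix).} For fixed $a\in\Fs$ the index set is the symplectic complement $a^\perp=\{b:\s{a}{b}=0\}$; since $\s{a}{a}=0$ the form descends to the quotient $\overline{a^\perp}:=a^\perp/\langle a\rangle$, on which it is nondegenerate (the radical of $\s{\cdot}{\cdot}$ on $a^\perp$ is $a^\perp\cap\langle a\rangle=\langle a\rangle$), so $\overline{a^\perp}\cong\F^{2(m-1)}$ has cardinality $K:=N^2/4$. Via $\bar b\mapsto\tfrac1{\sqrt2}(\ket{a,b,a+b}+\ket{a,a+b,b})$ the $K-1$ nonzero classes label an orthonormal basis $\ket{\bar b}$ of $W_a$, and using \eqref{vc_eq:tncone_adjoint} one gets $\tcone|_{W_a}=\tfrac2{N^2}M$, where $M_{\bar b\bar c}=[\,\s{\bar b}{\bar c}=1\,]$ is the incidence matrix of the symplectic form on the nonzero vectors of $\F^{2(m-1)}$; this is the same shape of matrix as the one governing the diagonal sector in Lemma \ref{lem:eigensystem_ET_EP_t_2}.

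\emph{Step 3 (diagonalise $M$ and reassemble).} Split $W_a=\C\,\q{a}\oplus W_a^\circ$, where $\q{a}\propto\sum_{\bar b\neq 0}\ket{\bar b}$ is the all‑ones vector (which is the vector $\q{a}$ of \eqref{vc_eq:def_vdcone}) and $W_a^\circ=\{\sum_{\bar b}\lambda_{\bar b}=0\}$. A direct count gives $M\q{a}=\tfrac K2\q{a}$, i.e.\ eigenvalue $\tfrac14$ for $\tcone$. On $W_a^\circ$, writing $[\,\s{\bar b}{\bar c}=1\,]=\tfrac12(1-(-1)^{\s{\bar b}{\bar c}})$ and using $\sum\lambda_{\bar b}=0$ yields $M|_{W_a^\circ}=-\tfrac12 Q|_{W_a^\circ}$ with $Q_{\bar b\bar c}=(-1)^{\s{\bar b}{\bar c}}$, and $W_a^\circ$ is visibly $Q$‑invariant. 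Bilinearity plus nondegeneracy give $\sum_{\bar c}(-1)^{\s{\bar b+\bar b'}{\bar c}}=K\,\delta_{\bar b,\bar b'}$, hence $Q^2=K\,\id$, so $Q|_{W_a^\circ}$ has eigenvalues $\pm\sqrt K=\pm N/2$; a $2\times2$ computation shows $\tr Q$ vanishes on $(W_a^\circ)^\perp=\mathrm{span}\{\ket 0,\sum_{\bar b}\ket{\bar b}\}$, so $\tr(Q|_{W_a^\circ})=\tr Q=K$ pins the multiplicities at $\tfrac12(K-2\pm\sqrt K)$. Thus $\tcone|_{W_a}$ has eigenvalue $\tfrac14$ on $\C\,\q{a}$, $+\tfrac1{2N}$ on the $(-\sqrt K)$‑eigenspace of $Q|_{W_a^\circ}$, and $-\tfrac1{2N}$ on the $(+\sqrt K)$‑eigenspace; translating the eigenequations $Qv=\mp\sqrt K v$ back through the substitution recovers precisely the linear conditions of \eqref{vc_eq:set_plus} and \eqref{vc_eq:set_minus} (paired with $\sum\lambda=0$), and lifting to the $\ket{a,b,a+b}$‑basis gives $\q{a}$ as in \eqref{vc_eq:def_vdcone} and $\ket{a_\pm}$ as in \eqref{vc_eq:def_vqcone}--\eqref{vc_eq:def_vrcone}. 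Summing the three eigenspaces over $a\in\Fs$ produces $\vdcone,\vqcone,\vrcone$, which are mutually orthogonal by self‑adjointness, giving $\tcone=\tfrac14\pdcone+\tfrac1{2N}\pqcone-\tfrac1{2N}\prcone$ (so $\tconep=\pdcone\tcone=\tfrac14\pdcone$ and $\tconem=\pdcone^\perp\tcone$), with $\dim\vdcone=|\Fs|=N^2-1$ and $\dim\vqcone,\dim\vrcone=(N^2-1)\cdot\tfrac12(K-2\mp\sqrt K)=\tfrac18(N^2-1)(N^2\mp 2N-8)$.

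\emph{Main obstacle.} Conceptually the argument is short; the real work is the bookkeeping in Step 2 — building the $\Wa{[(23)]}$‑symmetrised basis of $W_a$ with the normalisation that makes $\tcone|_{W_a}$ exactly $\tfrac2{N^2}M$ — and in Step 3 checking $Q^2=K\,\id$ (where nondegeneracy of the induced form is used) and verifying that the two $Q$‑eigenvector systems coincide verbatim with the $\lambda_b$‑relations \eqref{vc_eq:set_plus}--\eqref{vc_eq:set_minus}. Everything else is already in hand: Lemma \ref{lem:1} for the counting, \eqref{vc_eq:tncone_adjoint} for self‑adjointness, and the $\Wa{[(23)]}$‑intertwining identities of Lemma \ref{vc_lem:tnc_rela}.
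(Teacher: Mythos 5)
Your proposal is correct and it proves the lemma, but by a route that is genuinely different from the paper's. You factor the block $\tcone|_{V_a}$ through the quotient symplectic space $\overline{a^\perp}=a^\perp/\langle a\rangle\cong\F^{2(m-1)}$, identify $\tcone|_{W_a}$ (on the $\Wa{[(23)]}$-symmetrised basis) with $\tfrac{2}{N^2}$ times the symplectic incidence matrix $M$, split off the all-ones vector (eigenvalue $\tfrac14$ for $\tcone$), and on the complement use the character identity $Q^2=K\,\id$ for $Q=((-1)^{\s{\bar b}{\bar c}})$ plus a two-by-two trace calculation to get the remaining eigenvalues $\mp\tfrac{1}{2N}$ and their multiplicities. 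The paper does not pass to the quotient and does not invoke the character-sum identity: it verifies $\q{a}$ is a $\tfrac14$-eigenvector by a direct count (equation \eqref{vc_eq:q_eigenvector}), then evaluates $\tcone^2$ on hand-picked test vectors $\bigl(\ket{a,b,a+b}+\ket{a,a+b,b}\bigr)-\bigl(\ket{a,c,a+c}+\ket{a,a+c,c}\bigr)$ to show $\tcone^2=\tfrac{1}{4N^2}\id$ on the orthogonal complement (equations \eqref{vc_eq:tnconesquared_action0}--\eqref{vc_eq:tnconesquared_action}), and finally uses $\tr\tcone=0$ to pin the multiplicities. Your version is cleaner and more structural: the identity $Q^2=K\,\id$ packages in one line what the paper extracts from the Lemma \ref{lem:1} counts of \eqref{vc_eq:S_bk}--\eqref{vc_eq:S_ck}, and passing to $\overline{a^\perp}$ makes the parallel with the two-copy diagonal sector (Lemma 4 of \cite{Helsen2016}, and Lemma \ref{lem:eigensystem_ET_EP_t_2}) transparent rather than coincidental. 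The only cost is that you must justify that the induced form on $\overline{a^\perp}$ is nondegenerate (which you do) and be careful that $Q$ lives on the full $K$-dimensional space including $\ket{\bar 0}$, while $W_a^\circ$ has codimension two there; your invariance and trace bookkeeping handle this correctly. One small remark: writing $Qv=\mp\sqrt{K}v$ out in the $\ket{a,b,a+b}$ basis gives the constant $\pm\tfrac{N}{2}$ in the $\lambda$-recursion rather than the $\pm\tfrac{1}{2N}$ appearing in the lemma statement \eqref{vc_eq:set_plus}--\eqref{vc_eq:set_minus} (and the $\pm\tfrac{2}{N}$ in the paper's own \eqref{vc_eq:a_pm_eigenvalue_condition}); your derivation is the correct one, and those constants in the paper are typos.
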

\begin{proof} 
From equation \eqref{vc_eq:tncone_adjoint} we know that $\tcone$ is self-adjoint and has a spectral decomposition with orthogonal eigenspaces.

\noindent From the definition of $\tcone$ in equation \eqref{vc_eq:def_tncone}, we see that 
\begin{equation}
    \label{vc_eq:tncone_symm1}
    \tcone \ \ket{a,b,a+b} \ = \tcone \ \ket{a,a+b,b},
\end{equation}
since if some $h$ is contained in the set $\left\{ k \in \F^{2m} \ \big| \ \s{a}{k}=0, \s{b}{k}=1 \right\} $, then so is $a+h$. This further implies 
\begin{align}
\label{vc_eq:tncone_symm2}
   & \tcone \ \ket{a,b,a+b} \ = \  \tcone \ \Wa{[(23)]} \ \ket{a,b,a+b} \notag \\  \ & \; \;\;\;\;\;\;\;\;\;\;\;\;\;\;\;\;\;\;\;\;\;\;\;  =  \ \Wa{[(23)]} \ \tcone \ \ket{a,b,a+b} \notag \\
 \Longrightarrow & \ \left( \id - \Wa{[(23)]} \right) \ \tcone \ \ket{a,b,a+b} \ = \  0.
\end{align}
Since equation \eqref{vc_eq:tncone_symm2} is true for all $\ket{a,b,a+b}$, we get 
\begin{equation}
    \label{vc_eq:tncone_symm3}
    \left( \id - \Wa{[(23)]} \right) \ \tcone \ = \ 0,
\end{equation}
which informs us that $\supp{ \tcone} \subseteq \vcone$.

\noindent Next, we prove that for $\q{a}$, as defined in equation \eqref{vc_eq:def_vdcone}, is an eigenvector of $\tcone$, with eigenvalue $\frac{1}{4}$.
\begin{align}
    \label{vc_eq:q_eigenvector}
    \tcone \  \ket{a} \ = \ & \ \tcone \  \frac{\sqrt{2}}{\sqrt{N^2-4}} \ \sum_{\substack{b: \\
    \s{a}{b}=0 \\ b\neq 0,a}} \ \ket{a,b,a+b} \notag \\ 
     = \ & \frac{\sqrt{2}}{N^2 \ \sqrt{N^2-4}} \ \sum_{\substack{h: \\ \s{a}{h}=0}} \  \mathrm{S}_{h} \    \ \ket{a,h,a+h} \notag \\ 
    = \ & \ \dfrac{1}{4} \ \q{a},
\end{align}
where in the second line $\mathrm{S}_h =  \left| \ \left\{ b \in \F^{2m} \ \big| \ \s{a}{b}=0, \s{h}{b}=1, \ b \neq 0,a \right\} \  \right|$. Note that the condition $\s{h}{b}=1$ ensures that $b \neq 0,a$. $\mathrm{S}_h =  \left| \ \left\{ b \in \F^{2m} \ \big| \ \s{a}{b}=0, \s{h}{b}=1  \right\} \  \right|$. Also, had $h=0,a$, then $\s{b}{h}\neq 1$, so $h \neq 0,a$. Since $a,h$ are line arly independent, Lemma \ref{lem:1} tells us that this size is $\dfrac{N^2}{4}$. This proves equation \eqref{vc_eq:def_vdcone}. 
    
\noindent Let $b,c \in \Fs \setminus \{a\}$ such that $\s{a}{b}=\s{a}{c}=0$ and $c,a+c \neq b$.

Then
\begin{align}
    \label{vc_eq:tnconesquared_action0}
    & \ {\tcone}^2 \ \left( \left( \ket{a,b,a+b} + \ket{a,a+b,b } \right) -  \left( \ket{a,c,a+c} + \ket{a,a+c,c} \right)  \right) \notag \\ 
    = & \   \left( \dfrac{2}{N^2} \sum_{\substack{h: \\ \s{a}{h}=0 \\ \s{b}{h}=1}} \ T \ \ket{a,h,a+h}  \right) -  \left( \dfrac{2}{N^2} \sum_{\substack{h: \\ \s{a}{h}=0 \\ \s{c}{h}=1}} \ T \ \ket{a,h,a+h}  \right)   \notag \\ 
    = & \ \dfrac{2}{N^4} \ \sum_{\substack{k:\\ \s{a}{k}=0}} \ \left( \mathrm{S}_{b,k}  \ - \mathrm{S}_{c,k} \ \right) \ \ket{a,k,a+k},
\end{align}
where 
\begin{align}
    \label{vc_eq:Sbk}
    \mathrm{S}_{b,k} = \left| \left\{ h \in \F^{2m} \ \big| \ \s{a}{h}=0, \ \s{b}{h}=\s{k}{h}=1 \ \right\} \right| \\
    \label{vc_eq:Sck}
    \mathrm{S}_{c,k} = \left| \left\{ h \in \F^{2m} \ \big| \ \s{a}{h}=0, \ \s{c}{h}=\s{k}{h}=1 \ \right\} \right| 
\end{align}
Note that if $k=0,a$ then $\mathrm{S}_{b,k}, \mathrm{S}_{c,k} = 0$ since $\s{h}{k}\neq 1$ in that case. Using Lemma \ref{lem:1}, we see that 
\begin{align}
    \label{vc_eq:S_bk}
    \mathrm{S}_{b,k} \ = \  \dfrac{N^2}{4} \  \left( \delta_{k,b} + \delta_{k,a+b} \right) - \dfrac{N^2}{8} \ \left( 1 - \left(\delta_{k,b} + \delta_{k,a+b} \right) \right) \\
    \label{vc_eq:S_ck}
    \mathrm{S}_{c,k} \ = \  \dfrac{N^2}{4} \  \left( \delta_{k,c} + \delta_{k,a+c} \right) - \dfrac{N^2}{8} \ \left( 1 - \left(\delta_{k,c} + \delta_{k,a+c} \right) \right).
\end{align}
Using equations \eqref{vc_eq:S_bk} and \eqref{vc_eq:S_ck} in equation \eqref{vc_eq:tnconesquared_action0}, we finally get
\begin{align}
    \label{vc_eq:tnconesquared_action}
    &  {\tcone}^2 \ \left( \left( \ket{a,b,a+b} + \ket{a,a+b,b } \right) -  \left( \ket{a,c,a+c} + \ket{a,a+c,c} \right) \right) \notag \\  =  & \ \dfrac{1}{4 N^2} \ \left( \left( \ket{a,b,a+b} + \ket{a,a+b,b } \right) -  \left( \ket{a,c,a+c} + \ket{a,a+c,c} \right) \right).
\end{align}
Note that $\vdcone^\perp$ (orthogonal complement of $\vdcone$ in $\vcone$) is spanned by vectors of the form $\ket{a,b,a+b} + \ket{a,a+b,b }  -  \left( \ket{a,c,a+c} + \ket{a,a+c,c}\right)$. Hence the eigenvalues of $\tcone$ on $\vdcone^\perp$ (which is the orthogonal complement of $\vdcone$ in $\vcone$) can be $\frac{1}{2N}$ or $-\frac{1}{2N}$. If $\ket{a_\pm}$ is an eigenvector with eigenvalue $\pm\frac{1}{2N}$, and it has the basis expansion 
\begin{equation}
    \label{vc_eq:a_pm}
    \ket{a_\pm} \  = \sum_{\substack{b: \\ \s{a}{b}=0 \\ b \neq 0,a}} \ \lambda_b \ \ket{a,b,a+b},
\end{equation}
then $\lambda_b$ should satisfy the following two conditions.
\begin{enumerate}
    \item So that $\ket{a_\pm}$ are orthogonal to $\q{a}$, the $\lambda_b$'s should satisfy
    \begin{equation}
    \label{vc_eq:a_pm_orthogonality_q}
    \sum_{\substack{b: \\ \s{a}{b}=0 \\ b\neq 0,a}} \ \lambda_b \ = \ 0. 
    \end{equation}
    \item So that it has eigenvalue $\pm \dfrac{1}{2N}$, the $\lambda_b$'s must satisfy
    \begin{equation}
        \label{vc_eq:a_pm_eigenvalue_condition}
        \sum_{\substack{h: \\ \s{a}{h}=0 \\ \s{b}{h}=1}} \ \lambda_h \ = \ \pm  \dfrac{2}{N} \ \lambda_b.
    \end{equation}
\end{enumerate}
Hence we have obtained the eigenspaces in equations \eqref{vc_eq:def_vqcone} and \eqref{vc_eq:def_vrcone}. 

\noindent Using equation \eqref{vc_eq:tncone_adjoint}, we see that $\tr \tcone = 0$. If $n_+$ is the degeneracy of $\frac{1}{2N}$, then the degeneracy of $-\frac{1}{2N}$ is $\mathrm{rank} \  \tcone - \dim \vdcone - n_+$. We get the $\tr \tcone =0$ equation in terms of the eigenvalues and degeneracies.
\begin{equation}
    \label{vc_eq:tr_tcone}
    \dfrac{1}{2N} \ n_+ - \dfrac{1}{2N} \left( \mathrm{rank} \  \tcone - \dim \vdcone - n_+ \right) + \dfrac{1}{4} \dim \vdcone,
\end{equation}
which can be solved using the fact that $\mathrm{rank} \tcone = \dim \vcone = \frac{\left(N^2-1\right)\left(N^2-4 \right)}{4}$, and that $\dim \vdcone=N^2-1$. Using this we see that the multiplicity of the $\frac{1}{2N}$ eigenvalue is $ = \frac{\left(N^2-1 \right) \left( N^2 - 2N -8 \right)}{8}$, whereas the multiplicity of the $-\frac{1}{2N}$ eigenvalue is $ = \frac{\left(N^2-1 \right) \left( N^2 + 2N -8 \right)}{8}$.
\end{proof}

\begin{lem}
\label{vc_lem:ETTp_eigenvalues} 
The largest eigenvalue of $\ETTd$ (defined in equation \eqref{vc_eq:ETTp}) is $\frac{3}{4}$, and $\wtr$ (defined in \eqref{vc_eq:def_wtr}) is the corresponding eigenspace. All remaining eigenvalues are less than or equal to $\frac{1}{4} \left( 1 + \frac{2}{N-2} \right)$, and their corresponding eigenspaces lie in the orthogonal complement of $\wtr$.
\end{lem}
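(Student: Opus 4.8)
The plan is to exploit the $\Sr$-symmetry of $\ETTd$ to block-diagonalize it and then solve each block by a short linear-algebra computation, exactly parallel to the treatment of $\ETT_+$ in Lemmas \ref{lem:eig_ETTpcs} and \ref{lem:eig_ETTpw_ETTpws}. First I would record what is already available from Lemma \ref{vc_lem:tncone_spectral_decomposition}: $\ETTd = \frac14\left(\pdcone + \pdctwo + \pdcthree\right)$, where $\pdcone$ is the orthogonal projector onto $\vdcone = \mathrm{span}\{\q{a} : a \in \Fs\}$, the $\q{a}$ form an orthonormal set (so $\pdcone = \sum_{a \in \Fs}\kb{\hat a}{\hat a}$), and $\pdctwo,\pdcthree$ are the images of $\pdcone$ under conjugation by the cyclic and anticyclic permutations. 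Since $\Wa{[(23)]}$ fixes $\pdcone$ (because $\vdcone \subseteq \supp\left(\id + \Wa{[(23)]}\right)$) and swaps $\pdctwo$ with $\pdcthree$, while conjugation by $\Wa{[(123)]}$ cyclically permutes the three projectors, $\ETTd$ commutes with every $\Wa{\pi}$, $\pi \in \Sr$. Hence $\ETTd = \ps\,\ETTd\,\ps + \pw\,\ETTd\,\pw + \pws\,\ETTd\,\pws$ is an orthogonal block decomposition.

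Next I would collapse each block. Using $\Wa{[(123)]}\,\pj{k} = \pj{k}\,\Wa{[(123)]} = \omega^{-k}\pj{k}$ for $k=0,1,2$ (with $\pj 0 = \ps$, $\pj 1 = \pw$, $\pj 2 = \pws$), the phases cancel in $\pj{k}\,\pdctwo\,\pj{k}$ and $\pj{k}\,\pdcthree\,\pj{k}$, so that $\pj{k}\,\ETTd\,\pj{k} = \frac34\,\pj{k}\,\pdcone\,\pj{k} = \frac34\sum_{a \in \Fs}\kb{\pj{k}\hat a}{\pj{k}\hat a}$. Consequently the nonzero eigenvalues of $\ETTd$ are exactly the numbers $\frac34\mu$ for which the linear system $\sum_{b \in \Fs}\bra{\hat a}\pj{k}\q{b}\,\lambda_b = \mu\,\lambda_a$ ($a \in \Fs$) admits a nonzero solution $(\lambda_a)_{a\in\Fs}$. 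The essential input is the Gram-type quantity $\bra{\hat a}\pj{k}\q{b}$, which I would obtain by applying the cyclic permutations to the defining sum of $\q{b}$ and matching Pauli basis vectors $\ket{x,y,x+y}$ (separating the case $a=b$ using $\s{a}{a}=0$, and carefully excluding the indices $h \in \{0,a\}$); the outcome should be that $\bra{\hat a}\pj{k}\q{b}$ equals $\frac13\delta_{a,b}$ plus a correction $\frac{2c_k}{3(N^2-4)}$ whenever $a\neq b$ and $\s{a}{b}=0$ (and zero otherwise), with $c_0 = 2$ and $c_1 = c_2 = \omega + \omega^2 = -1$.

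Finally I would solve these systems with Lemma \ref{lem:1}. For fixed $a$, writing $\sum_{b\in\Fs}\lambda_b = \sum_{b:\s{a}{b}=0,\,b\neq 0}\lambda_b + \sum_{b:\s{a}{b}=1}\lambda_b$ shows that on the three standard ``diagonal-sector'' subspaces of the coefficient space $\{(\lambda_a)_{a\in\Fs}\}$ --- the uniform vector $\lambda_a\equiv 1$, and the two subspaces cut out by $\sum_b\lambda_b = 0$ together with $\sum_{h:\s{h}{a}=1}\lambda_h = \pm\frac{N}{2}\lambda_a$, which are precisely the ones occurring in the diagonal sector of \cite{Helsen2016} --- the inner sum $\sum_{b\neq a:\s{a}{b}=0}\lambda_b$ becomes proportional to $\lambda_a$, so the eigenvalue equation decouples and each such subspace lies inside a single eigenspace. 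The uniform vector yields $\mu = 1$ in the symmetric sector and $\mu = 0$ in the $\pw,\pws$ sectors, so, identifying the uniform vector of $\vdcone$ with $\ket{w_{\mathrm{tr}}}$ of \eqref{vc_eq:def_wtr}, $\wtr$ is exactly the (one-dimensional) $\frac34$-eigenspace of $\ETTd$. All remaining solutions give values $\mu<1$, whose $\frac34\mu$ one computes to be at most $\frac14\left(1+\frac{2}{N-2}\right)$, with eigenvectors orthogonal to $\ket{w_{\mathrm{tr}}}$; this proves the lemma. The main obstacle I anticipate is getting $\bra{\hat a}\pj{k}\q{b}$ and its constants exactly right --- the combinatorics of which admissible triples $(x,y,x+y)$ survive under the cyclic action, together with the boundary cases --- and checking that the three Helsen subspaces exhaust the support of each block; once that is settled the rest is the same bookkeeping as in Lemmas \ref{lem:eig_ETTpcs} and \ref{lem:eig_ETTpw_ETTpws}.
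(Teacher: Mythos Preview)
Your proposal is correct and follows essentially the same approach as the paper's proof: block-diagonalize $\ETTd$ via the $\Sr$-projectors $\pj{k}$, reduce each block to $3\,\pj{k}\tconep\pj{k} = \tfrac34\,\pj{k}\pdcone\pj{k}$, compute the Gram numbers $\bra{\hat a}\pj{k}\q{b}$ (your values $c_0=2$, $c_1=c_2=-1$ and the $\frac{2c_k}{3(N^2-4)}$ correction match the paper's equation \eqref{vc_eq:vnconep_innerproducts} exactly), and then diagonalize on the Helsen diagonal-sector subspaces $\wtr,\wpp,\wm$ of $\vdcone$ using equation \eqref{vc_eq:sum_lambda}. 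Your anticipated ``obstacle'' about exhaustion is automatic since $\wtr\oplus\wpp\oplus\wm = \vdcone$ by dimension count.
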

\begin{proof}
It is convenient to adopt the notation defined in equation \eqref{eq:def_pj}, with the identification that $\pj{0}=\ps$, $\pj{1}=\pw$ and $\pj{2}=\pws$. 

\noindent First we obtain an orthogonal decomposition of $\ETTd$ using the orthogonal projectors $\pj{k}$: note that due to the permutation relations among $\tconep$, $\tctwop$ and $\tcthreep$ (see Lemma \ref{vc_lem:tnc_rela} and equations \eqref{vc_eq:def_tctwop} and \eqref{vc_eq:def_tcthreep}), $\ETTd$ commutes with all $\Wa{\pi}$'s, and hence consequently commutes with $\pj{k}$'s. Noting the completeness relation in equation \eqref{eq:S3_id_decomposition}, we multiply on the right (or left) by $\ETTd$.
\begin{equation}
    \label{vc_eq:S3_id_decomposition_ETTp_commuting}
     \ETTd \ = \ \sum_{k=0}^2 \ \pj{k} \ \ETTd  \ = \ \sum_{k=0}^2 \ \pj{k} \ \ETTd \ \pj{k},
\end{equation} where we used the fact that $\pj{k}^2 = \pj{k}$ and and that $\pj{k}$ commutes with $\ETTd$. Define
 \begin{align}
   \label{vc_eq:def_ETTpjk}
    & \ \ETTpjk \ := \ \pj{k} \ \ETTd \ \pj{k}.
 \end{align} Then we get the following orthogonal decomposition of $\ETTd$
 \begin{align}
     \label{vc_eq:ETTp_decomposition}
     \ETTd \  = & \ \ETTps \  +  \ETTpw \ + \ \ETTpws.
 \end{align}
 Hence the eigendecomposition of $\ETTd$ is given by the eigendecomposition of $\ETTps$, $\ETTpw$ and $\ETTpws$. To obtain these eigendecompositions, note the following. \begin{align}
    \label{vc_eq:ETCCpcs_ps_tconep_rela}
    & \ETTpjk \ = \ 3 \ \pj{k} \tconep \pj{k},
\end{align} 
which we prove using two things: the permutation relations: that $\tctwo = \Wa{[(123)]} \tcone \Wa{[(132)]}$ and $\tcthree = \Wa{[(132)]} \tcone \Wa{[(123)]}$ (obtained in Lemma \ref{vc_lem:tnc_rela}), and also that
\begin{equation}
\label{vc_eq:pjktconeppjk_invariance}
\pj{k} \Wa{\sigma} \ \tconep \Wa{\sigma^-1} \ \pj{k} \ = \ \pj{k} \tconep \pj{k}.
\end{equation}

\noindent Thus one needs to obtain the eigendecomposition of $\pj{k} \tconep \pj{k}$. From Lemma \ref{vc_lem:tncone_spectral_decomposition}, we see that $\tconep$ has the following eigendecomposition
\begin{equation}
    \label{vc_eq:tncone_spectral_decomposition}
    \tconep \ = \ \dfrac{1}{4} \ \kb{\hat{a}}{\hat{a}},
\end{equation} where $\q{a}$ was defined in equation \eqref{vc_eq:def_vdcone}. We decompose $\vdcone$ (also defined in equation \ref{vc_eq:def_vdcone}) into three orthogonal subspaces. 
\begin{align}
    \label{vc_eq:wtr2}
    &  \ \wtr \ = \ \mathrm{span} \  \left\{ \  \ket{w_\mathrm{tr}} \ = \sqrt{\frac{N^2-4}{2}} \sum_{a \in \Fs} \ \q{a}  \right\},   \\ 
    \label{vc_eq:wpp}
    & \ \wpp \ := \ \mathrm{span} \ \left\{ \ \ket{w_+} \ = \  \sum_{a \in \Fs}\  \lambda_a \ \q{a}, \ \sum_{a \in \Fs} \ \lambda_a \ = 0, \ \sum_{\substack{h: \\ \s{h}{a}}=1} \ \lambda_h \ = \ \frac{N}{2} \ \lambda_a \right\}, \\ 
    \label{vc_eq:wm}
    & \ \wm \ := \ \mathrm{span} \ \left\{ \ \ket{w_-} \ = \ \sum_{a \in \Fs}\  \lambda_a \ \q{a}, \ \sum_{a \in \Fs} \ \lambda_a \ = 0, \ \sum_{\substack{h: \\ \s{h}{a}}=1} \ \lambda_h \ = \ - \frac{N}{2} \ \lambda_a \right\},
\end{align}
We note that $\ket{w_\mathrm{tr}}$ is the same as was defined in equation \eqref{vc_eq:def_wtr}, which can be seen as follows:
\begin{align}
    \label{vc_eq:wtr_expansion}
    \sum_{a \in \Fs 
     } \ \q{a} \ \propto \sum_{a\in\Fs} \left( \sum_{\substack{b \\ \s{a}{b}=0 \\ b\neq 0,a}} \ \ket{a,b,a+b} \right) \ =  \ \sum_{\substack{a,b \\ 
    \s{a}{b}=0 \\ a,b,a+b \neq 0}} \ket{a,b, a+b}  \ = \ket{w_\mathrm{tr}}.
\end{align} Note that the equations in the definition of $\wpp$ and $\wm$ are consistent and have solutions. We know this because these are the same equations are used to decompose the irreducible representations in the diagonal sector in Subsubsections \ref{subsubsec:ET_aan} and \ref{subsubsec:ET_ana_naa}, and they were also introduced in the diagonal sector in Helsen et al \cite{Helsen2016} (see Lemma 4 therein). We also know that $\dim \wpp = \frac{N(N-1)}{2}-1$ and $\dim\wm= \frac{N(N+1)}{2}-1$. And that the three spaces $\wtr$, $\wpp$ and $\wm$ are orthogonal to each other, collectively spanning the entire $N^2-1$ dimensional support of $\id + \Wa{[(23)]}$.

\noindent We want to further decompose $\wtr$, $\wpp$ and $\wm$ using $\pj{k}$. We start with $\wtr$.

\noindent Note that $\Wa{\sigma} \ket{w_\mathrm{tr}} = \ket{w_\mathrm{tr}}$ for\footnote{That $\Wa{\pi} \ket{w_\mathrm{tr}} = \ket{w_\mathrm{tr}}$ is actually true for all $\pi \in \Sr$.} $\sigma = \id,  [(123)], [(132)]$. We demonstrate this for $\sigma = [(123)]$.
\begin{align}
    \label{vc_eq:pi_wtr}
    \Wa{[(123)]} \ \ket{w_\mathrm{tr}} \ & = \ \sum_{\substack{a,b \\ \s{a}{b}=0 \\ a,b,a+b \neq 0}} \ \Wa{[(123)]} \ \ket{a,b,a+b} \notag \\
    & = \  \sum_{\substack{a,b \\ \s{a}{b}=0 \\ a,b,a+b \neq 0}} \ \ket{a+b,a,b} \notag \\ 
    & = \  \sum_{\substack{a',b' \\ \s{a'}{b'}=0 \\ a',b',a'+b' \neq 0}} \ \ket{a',b',a'+b'}, \notag \\ 
    & = \ \ket{w_\mathrm{tr}},
\end{align}
where the dummy variables were changed in the last line as follows: $b'=a$, and $a'=a+b$ (so that $a'$ varies with $b$ for fixed $b'=a$). This tells us the following.
\begin{equation}
    \label{vc_eq:ps_pw_pws_wtr}
    \pj{0} \ket{w_\mathrm{tr}} \ = \  \ket{w_\mathrm{tr}}, \; \; \pj{1} \ket{w_\mathrm{tr}} \ = \  \pj{2} \ \ket{w_\mathrm{tr}} \  =  \ 0.
\end{equation} Finally, we have the following orthogonal decomposition of $\vdcone$.
\begin{align}
    \label{vc_eq:vdcone_decomposition}
    \vdcone \ = \ \underbrace{ \wtr \oplus \vzp \oplus \vzm}_{\supp{\ETTps}} \oplus \underbrace{\vop  \oplus \vom}_{\supp{\ETTpw}} \oplus \underbrace{\vtp  \oplus \vtm}_{\supp{\ETTpws}},
\end{align}
where \begin{align}
    \label{vc_eq:def_vjpm}
    \vjpm \ := \ \supp{\pj{k}} \cap \wpm.
\end{align}
 Since $\wpm$ and $\supp{\pj{k}}$ are both invariant under the action of $\tconep$, $\tconep$ can be decomposed as follows.
\begin{align}
    \label{vc_eq:ETTp_decomposition_two}
    \ETTd \ =   \ \  \ptr \ \   \ETTd &   \ \  + \ \   \pzp\  \ \ETTd \ \  +\  \ \pzm \  \  \ETTd \notag \\ & \  \  +  \  \ \pop \  \  \ETTd  \ \  + \  \ \  \pom  \ \  \ETTd  \  \ \notag \\
& \  \ +  \  \ \ptp  \  \ \ETTd \   \ + \   \ \ptm \  \  \ETTd,
\end{align}
where \begin{equation}
    \label{vc_eq:def_pjpm} 
    \pjpm \ := \ \vjpm.
\end{equation}


\noindent On $\wtr$ we see 
\begin{align}
\label{vc_eq:ps_tcone_ps_wtr}
  \ \ps \ \tconep \ps \ \left( \ps \ket{w_\mathrm{tr}}\right) \ = \ \dfrac{1}{4} \ \ps \ket{w_\mathrm{tr}} \ \ \ 
   \Longrightarrow \ \  \ \ETTps \ \left( \ps \ket{w_\mathrm{tr}} \right) \ = \ \dfrac{3}{4} \ \ps \ket{w_\mathrm{tr}},
\end{align}
which proves the first part of Lemma \ref{vc_lem:ETTp_eigenvalues}.

\noindent To obtain the remaining eigenvalues, note that $\pj{k}\ket{w_\pm}$ span $\vjpm$, where $\ket{w_\pm} \in \wpm$. Using the expansions of $\ket{w_\pm}$ in equations \eqref{vc_eq:wpp} and \eqref{vc_eq:wm}, we obtain the action of $\pj{k} \tcone \pj{k}$ on $\pj{k}\ket{w_\pm}$.
\begin{align}
    \label{vc_eq:pjk_tcone_pjk_wpm}
    & \ \pj{k} \tcone \pj{k} \ \left( \pj{k} \ \ket{w_\pm} \right) \notag  \\ 
    = &  \ \dfrac{1}{4} \ \sum_{a,b \in \Fs} \ \lambda_b \ \left(\frac{1}{3} \ \bra{\hat{a}} \pj{k} \q{b} \right) \ \q{a}, \notag \\
    = & \ \begin{cases} \ \ \ \ 
 \dfrac{1}{12} \left( 1 - \dfrac{2}{N+2} \right) \ \pj{0} \ket{w_+} \\  \ \ \ \ 
 \dfrac{1}{12} \left( 1 + \dfrac{2}{N-2} \right) \ \pj{0} \ket{w_-} \\ \ \ \ \ 
 \dfrac{1}{12} \left( 1 + \dfrac{1}{N+2} \right) \ \pj{1} \ket{w_+} \\ \ \ \ \ 
 \dfrac{1}{12} \left( 1 - \dfrac{1}{N-2} \right) \ \pj{1} \ket{w_-} \\  \ \ \ \ 
 \dfrac{1}{12} \left( 1 + \dfrac{1}{N+2} \right) \ \pj{2} \ket{w_+} \\ \ \ \ \ 
 \dfrac{1}{12} \left( 1 - \dfrac{1}{N-2} \right) \ \pj{2} \ket{w_-}\end{cases},
\end{align} where we used the inner product between $\pj{k}\q{a}$ and $\pj{k}\q{b}$, which is
\begin{align}
    \label{vc_eq:vnconep_innerproducts}
& \    \bra{\hat{a}} \ \pj{k} \ \ket{\hat{b}} \ 
=  \ \begin{cases}
   \frac{1}{3} \ \left( \delta_{a,b} \ + \ \frac{4}{N^2-4}  \ \left(1-\delta_{a,b}\right) \delta_{\s{a}{b},0}  \right), \; k=0, \\ 
   \frac{1}{3} \ \left( \delta_{a,b} \ - \ \frac{2}{N^2-4} \  \left(1-\delta_{a,b}\right) \delta_{\s{a}{b},0}  \right), \; k=1,2,
\end{cases}
\end{align} and also used the relation 
\begin{equation}
\label{vc_eq:sum_lambda}
\sum_{\substack{b: \\ \s{a}{b}=0 \\ b \neq a }} \ \lambda_b \ = -  \sum_{\substack{b: \\ \s{a}{b}=1}} \ \lambda_b     \ \ - \lambda_a \ = \ - \left(  \dfrac{\pm N  \  + \ 2}{2}\right) \lambda,
\end{equation}
which we obtain from equations \eqref{vc_eq:wpp} and  \eqref{vc_eq:wm}.

\noindent Hence the full eigen decomposition of $\ETTd$ is given by 
\begin{align}
    \label{vc_eq:ETTd_complete_spectral_decomposition}
    \ETTd \ = & \ \  \frac{3}{4} \ptr \ + \ \frac{1}{4} \left( 1 + \frac{2}{N-2} \right) \ \pzm \ + \ \frac{1}{4} \left( 1 + \frac{1}{N+2} \right) \ \left( \pop + \ptp \right) \ \notag \\ 
    & + \ \frac{1}{4} \left( 1 - \frac{1}{N-2} \right) \ \left( \pom + \ptm \right) \ + \ \frac{1}{4} \left( 1 - \frac{1}{N+2} \right) \ \pzp.
\end{align}
\end{proof}

\section{Proofs of equation \texorpdfstring{\eqref{eq:ET_vnc_restriction}}{}}
\label{sec:appendix:ET_vnc}
On the RHS of equation \eqref{eq:ET_action_non_commuting_sector}, we see that the first term is given by
\begin{equation}
    \label{eq:ET_non_commuting_first_term}
    \dfrac{1}{N^2} \ \sum_{\substack{h: \\
    \s{a}{h}=0 \\ \s{b}{h}=0 }} \ \ket{a,b,a+b} \ = \ \dfrac{1}{4} \ \ket{a,b,a+b},
\end{equation}  where we used Lemma \ref{lem:1} to arrive at the result. We simplify the term in the expression \eqref{eq:T} by observing the following: for each $h$ which satisfies $\s{a}{h}=0$, $\s{b}{h}=1$, $b+h$ will satisfy $\s{a}{b+h}=\s{b}{b+h}=1$ (using $\s{a}{b}=1$). Similarly, for each $h'$ that satisfies $\s{a}{h'}=\s{b}{h'}=1$, $a+h'$ will satisfy $\s{a}{b+h'}=0$, and $\s{b}{b+h'}=1$. We replace $b+h$ in the term $\kt{a}{b+h}{a+b+h}$ with $h'=h+b$, thus obtaining $\ket{a,h',b+h'}$ and sum over $h'$ which satisfy $\s{a}{h'}=\s{b}{h'}=1$. Removing the primed symbol from $h'$ (since it is a dummy variable), we see that the expression in equation \eqref{eq:T} is
    \begin{equation}
    \label{eq:T2}
 \ \dfrac{1}{N^2} \  \ \sum_{\substack{h: \\
    \s{a}{h}=0 \\ \s{b}{h}=1 }} \ \kt{a}{b+h}{a+b+h} \  =  \ \dfrac{1}{N^2} \ \sum_{\substack{h: \\
    \s{a}{h}=1 \\ \s{b}{h}=1 }} \ \ket{a,h,a+h}.
    \end{equation} \noindent The expression in equation \eqref{eq:S} is simplified in a similar way to give us  
    \begin{equation}
    \label{eq:S2}
    \dfrac{1}{N^2} \ \sum_{\substack{h: \\ 
    \s{a}{h}=1 \\ \s{b}{h}=0 }} \ \kt{a+h}{b}{a+b+h} \ = \
    \dfrac{1}{N^2} \  \ \sum_{\substack{h: \\
    \s{a}{h}=1 \\ \s{b}{h}=1 }} \ \ket{h,b,b+h}.
    \end{equation} \noindent Finally, the expression in equation \eqref{eq:R} simplifies to give us
    \begin{equation}
        \label{eq:R2}
\dfrac{1}{N^2} \ \sum_{\substack{h: \\ 
    \s{a}{h}=1 \\ \s{b}{h}=1 }} \ \kt{a+h}{b+h}{a+b} \ = \
    \dfrac{1}{N^2} \  \ \sum_{\substack{h: \\
    \s{b}{h}=1 \\ \s{a+b}{h}=1 }} \ \ket{a+b+h,h,a+b}.
    \end{equation}

    \section{Permutation relations among  \texorpdfstring{$\tnconepm$}{}, \texorpdfstring{$\tnctwopm$}{} and \texorpdfstring{$\tncthreepm$}{} }
    \label{sec:appendix_tncpm_rela}
    
    We prove the following 
    \begin{itemize}
    \item $\tnconepm$, $\tnctwopm$ and $\tncthreepm$ commute with $\Wa{[(23)]}$, $\Wa{[(13)]}$, and $\Wa{[(12)]}$ respectively. \begin{align}
    \label{eq:tnconepm_axis_of_symmetry}
    \Wa{[(23)]} \ \tnconepm \ \Wa{[(23)]} \ = \ \tnconepm,\\
    \label{eq:tnctwopm_axis_of_symmetry}
    \Wa{[(13)]} \ \tnctwopm \ \Wa{[(13)]} \ = \ \tnctwopm, \\
    \label{eq:tncthreepm_axis_of_symmetry}
    \Wa{[(12)]} \ \tncthreepm \ \Wa{[(12)]} \ = \ \tncthreepm. 
\end{align}
\item $\tnconepm$, $\tnctwopm$ and $\tncthreepm$ are permutations of each other:  \begin{align}
\label{eq:tncpm_rela_3_1}
\tncthreepm \ = & \    \Wa{[(13)]} \  \tnconepm \ \Wa{[(13)]} \\ 
\label{eq:tncpm_rela_3_2}
= & \  \  \Wa{[(132)]} \ \tnconepm \Wa{[(123)]}  \\
= & \  \Wa{[(123)]} \ \tnctwopm \Wa{[(132)]} \notag \\
= & \   \Wa{[(23)]} \  \tnctwopm \ \Wa{[(23)]}, \notag  \\
\label{eq:tncpm_rela_2}
\ \tnctwopm \ = \ & \ \Wa{[(123)]} \ \tnconepm \Wa{[(132)]} \\ 
= & \  \Wa{[(12)]} \  \tnconepm \ \Wa{[(12)]}  \notag \\ 
=  & \  \Wa{[(132)]} \ \tncthreepm \Wa{[(123)]} \notag \\
= & \ \Wa{[(23)]} \  \tncthreepm \ \Wa{[(23)]} \notag \\ 
\label{eq:tncpm_rela_1}
\tnconepm \ = & \ \Wa{[(123)]} \ \tncthreepm \Wa{[(132)]} \\
= & \   \Wa{[(13)]} \  \tncthreepm \ \Wa{[(13)]} \notag  \\
=  &  \  \Wa{[(132)]} \ \tnctwopm \Wa{[(123)]} \notag \\
= \ &   \Wa{[(12)]} \  \tnctwopm \ \Wa{[(12)]}, \notag 
\end{align}
\item $\ETT_\pm$ commutes with $\Wa{\pi}$ for all $\pi \in \Sr$.
\end{itemize}

\begin{proof}
Equation \eqref{eq:tnconepm_axis_of_symmetry} is proved by conjugating both sides of equations \eqref{eq:def_tnconem} and \eqref{eq:def_tnconep} by $\Wa{[(23)]}$, and noting that the RHS of both equations remain invariant. Equations \eqref{eq:tnctwopm_axis_of_symmetry} and \eqref{eq:tncthreepm_axis_of_symmetry} can be proved similarly.

\noindent Equation \eqref{eq:tncpm_rela_3_1} is proved by conjugating equations \eqref{eq:def_tnconem} and \eqref{eq:def_tnconep} by $\Wa{[(13)]}$, and by noting that $[(13)][(23)][(13)]=[(12)]$, and using equation \eqref{eq:tnc_rela_3_1}, one then notes that the RHS are the same as equations \eqref{eq:def_tncthreem} and \eqref{eq:def_tncthreep} respectively. To prove relation \eqref{eq:tncpm_rela_3_2}, conjugate equation \eqref{eq:tnconepm_axis_of_symmetry} by $\Wa{[(13)]}$, and use equation \eqref{eq:tncpm_rela_3_1} and the relation:  $[(132)]=[(13)][(23)]$. All other equations can be proved using identical steps. 

\noindent That $\ETT_\pm$ commutes with $\Wa{\pi}$ for all $\pi \in \Sr$ now follows since $\ETT_\pm = \tnconepm+\tnctwopm+\tncthreepm$.
\end{proof}
    
\section{Inner elements of the orthogonal complement of r products between vectors in \texorpdfstring{$\vncones$}{ }}
\label{sec:appendix_inner}
We prove 
\begin{align}
\label{eq:inner_a_ps_b_appendix}
 & \ \bra{\hat{a}} \ps \q{b} \ = \ \frac{1}{3} \ \left( \delta_{a,b} \ + \ \dfrac{4}{N^2} \ \delta_{\s{a}{b},1} \right). \\
 \label{eq:inner_a_pw_or_pws_b}
 & \ \bra{\hat{a}} \   \pw \ \q{b} \ = \  \bra{\hat{a}} \   \pws \ \q{b}  \ = \  \frac{1}{3} \ \left( \delta_{a,b} \ - \ \dfrac{2}{N^2} \ \delta_{\s{a}{b},1} \right),
 \end{align}
 by proving it for general $\pj(k)$ defined in equation \eqref{eq:def_pj}. 
\begin{align}
    \label{eq:vnconep_innerproducts}
& \    \bra{\hat{a}} \ \pj{k} \ \ket{\hat{b}} \notag \\ 
= &  \ \left( \frac{\sqrt{2}}{N}  \sum_{\substack{h:\\
\s{a}{h}=1}}  \bra{a,h,a+h} \ \right)  \ \frac{1}{3} \left( \id + \omega^k \ \Wa{[(123)]} \ + \omega^{2k} \ \Wa{[(132)]} \right) \  \left(  \frac{\sqrt{2}}{N}  \sum_{\substack{k:\\
\s{b}{k}=1}}  \ket{b,k,a+k}     \right)  \notag \\ 
= &  \  \frac{2}{N^2}  \sum_{\substack{h,k:\\
\s{a}{h}=1 \\ \s{b}{k}=1 }} \frac{1}{3} \left( \delta_{a,b}  \ \delta_{h,k} + \omega^k \ \delta_{h,b}\ \delta_{k,a+b} \ + \omega^{2k} \ \delta_{k,a} \ \delta_{h,a+b} \right)   \notag \\
= &  \  \frac{1}{3} \ \left( \  \delta_{a,b} \ + \  \left( \omega^k + \omega^{2k}  \right) \ \frac{2}{N^2} \  \delta_{\s{a}{b},1} \ \right). \notag \\ 
= & \ \begin{cases}
   \frac{1}{3} \ \left( \delta_{a,b} \ + \ \frac{4}{N^2} \  \delta_{\s{a}{b},1}  \right), \; k=0, \\ 
   \frac{1}{3} \ \left( \delta_{a,b} \ - \ \frac{2}{N^2} \  \delta_{\s{a}{b},1}  \right), \; k=1,2,
\end{cases}
\end{align}
which proves equations \eqref{eq:inner_a_ps_b} and \eqref{eq:inner_a_pw_or_pws_b}.

\section{Proofs of equation \texorpdfstring{\eqref{vc_eq:ET_vnc_restriction}}{}}
\label{vc_sec:appendix:ET_vnc}
On the RHS of equation \eqref{vc_eq:ET_action_commuting_sector}, we see that the first term is given by
\begin{equation}
    \label{vc_eq:ET_non_commuting_first_term}
    \dfrac{1}{N^2} \ \sum_{\substack{h: \\
    \s{a}{h}=0 \\ \s{b}{h}=0 }} \ \ket{a,b,a+b} \ = \ \dfrac{1}{4} \ \ket{a,b,a+b},
\end{equation}  where we used Lemma \ref{lem:1} to arrive at the result. We simplify the term in the expression \eqref{vc_eq:T} by observing the following: for each $h$ which satisfies $\s{a}{h}=0$, $\s{b}{h}=1$, $b+h$ will satisfy $\s{a}{b+h}=0$, and $\s{b}{b+h}=1$ (using $\s{a}{b}=0$). Similarly, for each $h'$ that satisfies $\s{a}{h'}=0$, and $\s{b}{h'}=1$, $a+h'$ will satisfy $\s{a}{b+h'}=0$, and $\s{b}{b+h'}=1$. We replace $b+h$ in the term $\kt{a}{b+h}{a+b+h}$ with $h'=h+b$, thus obtaining $\ket{a,h',b+h'}$ and sum over $h'$ which satisfy $\s{a}{h'}=0$, $\s{b}{h'}=1$. Removing the primed symbol from $h'$ (since it is a dummy variable), we see that the expression in equation \eqref{vc_eq:T} is
    \begin{equation}
    \label{vc_eq:T2}
 \ \dfrac{1}{N^2} \  \ \sum_{\substack{h: \\
    \s{a}{h}=0 \\ \s{b}{h}=1 }} \ \kt{a}{b+h}{a+b+h} \  =  \ \dfrac{1}{N^2} \ \sum_{\substack{h: \\
    \s{a}{h}=0 \\ \s{b}{h}=1 }} \ \ket{a,h,a+h}.
    \end{equation} \noindent The expression in equation \eqref{vc_eq:S} is simplified in a similar way to give us  
    \begin{equation}
    \label{vc_eq:S2}
    \dfrac{1}{N^2} \ \sum_{\substack{h: \\ 
    \s{a}{h}=1 \\ \s{b}{h}=0 }} \ \kt{a+h}{b}{a+b+h} \ = \
    \dfrac{1}{N^2} \  \ \sum_{\substack{h: \\
    \s{a}{h}=1 \\ \s{b}{h}=0 }} \ \ket{h,b,b+h}.
    \end{equation} \noindent Finally, the expression in equation \eqref{vc_eq:R} simplifies to give us
    \begin{equation}
        \label{vc_eq:R2}
\dfrac{1}{N^2} \ \sum_{\substack{h: \\ 
    \s{a}{h}=1 \\ \s{b}{h}=1 }} \ \kt{a+h}{b+h}{a+b} \ = \
    \dfrac{1}{N^2} \  \ \sum_{\substack{h: \\
    \s{b}{h}=1 \\ \s{a+b}{h}=0 }} \ \ket{a+b+h,h,a+b}.
    \end{equation} 
    \section{Proof of Theorem \ref{thm:nc_inv_subsp}}
    \label{sec:appendix_nc_0}
    We begin by proving equation \eqref{eq:nc_d_intertwiner}. For that, first note that $\dim \mathrm{span} \left\{ \ket{a,a} \ | \ a \in \Fs \right\} = N^2-1$. Also, the vectors $\ket{a,a}$ form an ONB for $\mathrm{span} \left\{ \ket{a,a} \ | \ a \in \Fs \right\}$. From equation \eqref{eq:inner_11} in Lemma \ref{lem:A_subspace_inner}, see that $\dim \mathrm{span} \left\{ \ket{\hat{a}} \ | \ a \in \Fs \right\} = N^2-1$. Also, the vectors $\ket{\hat{a}}$ are an ONB for $\mathrm{span} \left\{ \ket{\hat{a}} \ | \ a \in \Fs \right\}$. Hence the dimensions of both spaces are the same. Next note that
    \begin{equation}
        \label{lem:transvection_d}
        \Ua{h} \ \ket{a,a} \ = \ket{a+\s{a}{h}h,a+\s{a}{h}h}, \ \forall \ h \in \Fm, 
    \end{equation}
    which can be directly verified by substituting from equation \eqref{eq:Th_matrix_form} into equation \eqref{eq:Clifford_action_special_basis} for $t=2$. Next from Lemma \ref{lem:Uh_A_action}, note that by putting $b=0$ in equation \eqref{eq:Uh_ab_action}, we get
    \begin{equation}
        \label{eq:Uh_hat_a}
        \Ua{h} \ \ket{\hat{a}} \ = \ \ket{\hat{a} +  \s{a}{h}\hat{h} }.
    \end{equation}
Thus it is seen that $L^{(d)\rightarrow (\mathrm{nc})}$ satisfies the equation $ \Ua{h} \  L^{(d)\rightarrow (\mathrm{nc})} \ \ket{a,a} \ = \ L^{(d)\rightarrow (\mathrm{nc})} \ \Ua{h} \ \ket{a,a}$ for all $a \in \Fs$. Note that $\Cl$ is  generated by all Clifford transvection and Paulis. The action of the diagonal sector subspace (Lemma 4 in \cite{Helsen2016}) is invariant under the adjoint action of Paulis, just like all vectors of the form $\ket{a,b,a+b}$ are (see Lemma \ref{lem:Pauli_basis_choice}). This immediately tells us that  $L^{(d)\rightarrow (\mathrm{nc})}$ is an intertwiner between both representations of the Clifford group. This proves equation \eqref{eq:nc_c_dd}. ~ \newline ~ \newline 
Next, we prove that the subrepresentations acting on the subspaces $V_1 \oplus V_2$ in Lemma 4 of \cite{Helsen2016} are equivalent to the subrepresentations acting on  $V^{(\mathrm{nc})}_{1} \oplus V^{(\mathrm{nc})}_{2}$ in equation \eqref{eq:nc_V_b}. Lemma 4 in \cite{Helsen2016} and Lemma \ref{lem:dimA} tell us that $\dim \left( V_1 \oplus V_2 \right) = \dim \left(  V^{(\mathrm{nc})}_1 \oplus V^{(\mathrm{nc})}_2 \right) = N^2-2$. Lemma \ref{lem:Uh_A_action} informs us that for any transvection Clifford $\Ua{h}$, and non-zero vector $\ket{x} \in V^{(\mathrm{nc})}_1 \oplus V^{(\mathrm{nc})}_2$, with the expansion $\ket{x}=\sum_{a\in \Fs} \lambda_a \ket{A(a)}$, we must have that $\sum_{a\in\Fs} \lambda_a=0$ (Lemma \ref{lem:dimA}), we get that $\Ua{h}\ket{x}=\sum_{a\in \Fs} \lambda_a \ket{A(a+\s{h}{a}h)}$ (using equation \eqref{eq:Uh_ab_action}). Similarly for the same choice of $\lambda_a$, the vector $\ket{v} \in V_1 \oplus V_2$ in the diagonal sector subspace, $\ket{v} = \ \sum_{a \in \Fs} \lambda_a \ket{a,a}$, the action of the same transvection Clifford is $\Ua{h} \ket{v}  \ = \ \sum_{a \in \Fs} \lambda_a \ket{a+\s{a}{h}h,a+\s{a}{h}h}$. Define $K:V^{(\mathrm{nc})}\oplus V^{(\mathrm{nc})} \rightarrow V_1 \oplus V_2$ as 
\begin{equation}
\label{eq:K_def}K \left( \sum_{a\in \Fs} \lambda_a \ket{A(a)} \right) \ = \   \sum_{a \in \Fs} \lambda_a \ket{a,a},\end{equation} with $\lambda_a$ such that $\sum_{a \in \Fs} \lambda_a =0$. It is easily verified that $K$ is linear. Thus $K$ is an intertwiner between both subrepresentations of the Clifford group.  \newline \noindent First note that $\mathrm{P}_S$ and $\mathcal{P}_1$ are both projectors, such that $\mathcal{P}_s \mathcal{P}_1 = 0$, i.e., they have orthogonal supports. Next, $\ET$ commutes with $W_\sigma$ for all $\sigma \in \Sr$. Thus, the action of $\ET$ on $V^{(\mathrm{nc})}_{\mathrm{null,1}}$ and on $ V^{(\mathrm{nc})}_{\mathrm{null,S}}$ is completely determined by the action of $\ET$ on the subspace spanned by $\ket{a;b} + \ket{b;a+b} + \ket{a+b;a}$ for $a,b \in \Fs$ such that $\s{a}{b}=0$ and $a \neq b$. Lemma \ref{lem:ET_action_Vnull} tells us that $\left(\ET - \ \frac{1}{4} \mathrm{id} \right) \left( \ket{\hat{a};b} + \ket{\hat{b};a+b} + \ket{\hat{a+b};a} \right) =0$. Thus vectors of the form $\ket{\hat{a};b} + \ket{\hat{b};a+b} + \ket{\hat{a+b};a}$ lie in the $1/4$ eigenspace of $\ET$. Since $\ET$ commutes with $\Ua{h}$ for each $h \in \Fm$, and since $\Ua{h}$ generate the action of the Clifford group on $\mathrm{span} \left\{ \ket{a,b,a+b} \ | \ a,b\in \Fs, \ a \neq b \right\}$, the result follows from a corollary of Schur's lemma. 
\begin{lem}
    \label{lem:A_subspace_inner}
    For any $a,b \in \Fs$, we have that 
    \begin{equation}
        \label{eq:A_subspace_inner1}
        \bk{A(a)}{A(b)} \ = \ \left( \dfrac{N^2}{2}-1 \right) \delta_{a,b} \ - \ \delta_{\s{a}{b},0}.
    \end{equation}
    \end{lem}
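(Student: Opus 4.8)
The plan is to push everything back to the inner‑product identity for the vectors $\ket{\hat a;b}$ that was already worked out in the proof of Lemma~\ref{lem:tnconep_spectral_decomposition}. Recall that equation~\eqref{eq:w_inner_product} gives, for $a,c\in\Fs$ and $b,d\in\Fm$ with $\s{a}{b}=\s{c}{d}=0$,
\[
\bk{\hat a;b}{\hat c;d}\;=\;\delta_{a,c}\,\bigl(\delta_{b,d}-\delta_{b+d,a}\bigr).
\]
So the first step is purely formal: write $G_a:=\{\,g\in\Fm:\s{g}{a}=0,\ g\neq 0,a\,\}$, so that by the definition~\eqref{eq:Aab} we have $\ket{A(a)}=\sum_{g\in G_a}\ket{\hat g;a}$, and hence
\[
\bk{A(a)}{A(b)}\;=\;\sum_{g\in G_a}\sum_{h\in G_b}\bk{\hat g;a}{\hat h;b}.
\]
Each summand is of the admissible form for the identity above (the conditions $\s{g}{a}=0$ and $\s{h}{b}=0$, and $g,h\neq 0$, are baked into $G_a$ and $G_b$), so substituting and letting the factor $\delta_{g,h}$ collapse the double sum yields
\[
\bk{A(a)}{A(b)}\;=\;\sum_{g\in G_a\cap G_b}\bigl(\delta_{a,b}-\delta_{a+b,g}\bigr).
\]

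The remaining step is to evaluate this in the three mutually exclusive cases and match it against $\bigl(\tfrac{N^2}{2}-1\bigr)\delta_{a,b}-\delta_{\s{a}{b},0}$. If $a=b$, then $\delta_{a,b}=1$ and $\delta_{a+b,g}=\delta_{0,g}=0$ throughout $G_a$, so the value is $|G_a|$; by the first part of Lemma~\ref{lem:1} there are $N^2/2$ vectors $g$ with $\s{g}{a}=0$, of which exactly $0$ and $a$ are excluded, giving $N^2/2-2=\bigl(\tfrac{N^2}{2}-1\bigr)-1$, as required. If $a\neq b$ and $\s{a}{b}=0$, then $\delta_{a,b}=0$, so the value is $-\bigl|\{g\in G_a\cap G_b:g=a+b\}\bigr|$; one checks directly that $a+b$ does lie in $G_a\cap G_b$ (using $\s{a+b}{a}=\s{a+b}{b}=\s{a}{b}=0$ and that $a+b\notin\{0,a,b\}$ since $a,b\in\Fs$ are distinct), so the answer is $-1=-\delta_{\s{a}{b},0}$. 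Finally, if $\s{a}{b}=1$ then again $\delta_{a,b}=0$, but now $\s{a+b}{a}=\s{b}{a}=1\neq0$ so $a+b\notin G_a$, the sum is empty, and the value is $0=-\delta_{\s{a}{b},0}$. Assembling the three cases gives the claimed closed form.

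There is no genuine obstacle here: the only substantive ingredient, the orthogonality relation~\eqref{eq:w_inner_product}, is already established, and Lemma~\ref{lem:1} supplies the one cardinality needed. The only thing one must be careful about is the bookkeeping of the exclusions $g\neq 0,a$ (and $h\neq 0,b$) built into the definition of $\ket{A(\cdot)}$ — both when counting $|G_a\cap G_b|$ in the diagonal case and when deciding whether the index $g=a+b$ is actually present — since an off‑by‑one there is precisely what would spoil the $-\delta_{\s{a}{b},0}$ correction term.
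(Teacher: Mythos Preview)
Your proof is correct and follows essentially the same route as the paper's own proof: reduce to the orthogonality relation $\bk{\hat g;a}{\hat h;b}=\delta_{g,h}(\delta_{a,b}-\delta_{a+b,g})$, collapse the double sum, and split into the three cases $a=b$, $a\neq b$ with $\s{a}{b}=0$, and $\s{a}{b}=1$. The only difference is cosmetic: you invoke the identity via equation~\eqref{eq:w_inner_product} from Lemma~\ref{lem:tnconep_spectral_decomposition}, whereas the paper re-derives the same identity from scratch as equation~\eqref{eq:inner_11} at the start of its proof.
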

    \begin{proof}
    From equation \eqref{eq:nc_a_b_0}, we get that for any $h,k \in \Fs$ and $a,b \in \Fm$,
    \begin{align}
        \label{eq:inner_11}
    &     \bk{\hat{h};a}{\hat{k};b} \notag \\ 
        \ = \ & \frac{2}{N^2} \left( \sum_{s:\s{s}{h}=1} \ \left( -1 \right)^{\s{s}{a}} \ \bra{h,s,h+s}  \right) \ \left( \sum_{t:\s{t}{k}=1} \ \left( -1 \right)^{\s{t}{b}} \ \ket{k,t,k+t}  \right)  \notag  \\
        \ = \ & \frac{2}{N^2} \  \sum_{s:\s{s}{h}=1} \ \sum_{t:\s{t}{k}=1} \  \ \left( -1 \right)^{\s{s}{a}+\s{t}{b}} \   \bk{h,s,h+s}{k,t,k+t}  \notag  \\ 
\ = \ & \frac{2}{N^2} \ \delta_{h,k}  \  \sum_{s:\s{s}{h}=1}  \ \left( -1 \right)^{\s{s}{a+b}},  \notag  \\ 
\ = \ & \delta_{h,k} \left( \delta_{a,b} \ - \ \delta_{h,a+b} \right).
    \end{align} The last line can be explained as follows. If $a=b$, then from Lemma \ref{lem:1} there are $N^2/2$ $s \in \F$ such that $\s{s}{h}=1$, and we get that $\bk{\hat{h};a}{\hat{k};b}=\delta_{h,k}\delta_{a,b}$. If $h=a+b$, then, using the same arguments as above we see that $\bk{\hat{h};a}{\hat{k};b}=-\delta_{h,k}\delta_{h,a+b}$. When $h$ and $a+b$ are linearly independent, then Lemma \ref{lem:1} informs us that there are $N^2/4$ solutions for $s$ for the simultaneous equations $\s{s}{h}=1$, $\s{s}{a+b}=0$, and again $N^2/4$ solutions for the simultaneous equations $\s{s}{h}=1$, $\s{s}{a+b}=1$. Thus, the summation gives $0$. \newline \noindent Next, from equation \eqref{eq:Aab}, 
    \begin{align}
        \label{eq:inner_product_Aab}
          & \bk{A(a)}{A(b)} \notag \\
          \  = \ &  \left( \ \sum_{h: \substack{\s{h}{a}=0 \\ h \neq a,0}} \ \bra{\hat{h};a} \right) \left( \ \sum_{k: \substack{\s{k}{b}=0 \\ k \neq b,0}} \ \ket{\hat{k};b} \right) \notag \\ 
          \ = \ & \ \sum_{h: \substack{\s{h}{a}=0 \\ h \neq a,0}} \ \sum_{k: \substack{\s{k}{b}=0 \\ k \neq b,0}}  \ \delta_{h,k} \left( \delta_{a,b} \ - \ \delta_{h,a+b} \right) \notag \\ 
          \ = \ & \  \delta_{a,b} \left( \sum_{h: \substack{\s{h}{a}=0 \\ h \neq a,0}} \ 1  \right) \ -\left(  \delta_{\s{a}{b},0}  -  \delta_{a,b}\right)  \notag \\ 
          \ = \ & \  \delta_{a,b} \ \left(  \frac{N^2}{2} - 1 \right) \  - \  \delta_{\s{a}{b},0}.
    \end{align}
    The second last line can be explained as follows: if $a=b$, we only need to count the number of $h \in \F$ such that $\s{h}{a}=0$, which is $N^2/2$ from Lemma \ref{lem:1}. Excluding the cases $h=0,a$, we get that this sum of $N^2/2-2$. Note that when $a=b$, $\s{a}{b}=0$. This verifies the final line. If, instead, $h=a+b$, we need to count over all $h$, which are non-zero, such that $\s{h}{a}=\s{h}{b}=0$, and $h=a+b$. If $\s{a}{b}=1$, then these conditions won't be satisfied. If, $\s{a}{b}=0$, this condition is satisfies only for one choice of $h$, i.e., $h=a+b$. Note that $h\neq0$. This implies that $a \neq b$. Thus the sum, in this case simply becomes $- \delta_{\s{a}{b},0}$. Thus we see that the final line is correct in this case. Hence equation \eqref{eq:Aab} is proved. 
    \end{proof}

    \begin{lem}
    \label{lem:dimA}
   The dimension of the subspace spanned by vector $\ket{A(a)}$, which are defined in equation \eqref{eq:Aab} is
        \begin{equation}
            \label{eq:dim_Aab}
\dim \mathrm{span}\left\{ \ket{A(a)} \ | \ a \in \Fs \right\}=N^2-2.
        \end{equation}
We also have that $\sum_{a \in \Fs} \ket{A(a)}=0$, and any non-zero vector $\ket{x} \in  \mathrm{span}\left\{ \ket{A(a)} \ | \ a \in \Fs \right\}$ of the form $\ket{x} \ = \ \sum_{a\in\Fs} \lambda_a \ket{A(a)}$, has the property $\sum_{a \in \Fs} \lambda_a =0$, and for any choice of $\lambda_a$ such that $\sum_{a\in\Fs} |\lambda_a|^2 >0$, and $\sum_{a \in \Fs} \lambda_a =0$, $\ket{x}$ is non-zero.
\end{lem}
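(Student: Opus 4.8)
The plan is to establish the three assertions of the lemma in turn: the linear relation $\sum_{a\in\Fs}\ket{A(a)}=0$, then the exact kernel of the evaluation map $\lambda\mapsto\sum_{a\in\Fs}\lambda_a\ket{A(a)}$, and finally the dimension count. The only external input is Lemma \ref{lem:A_subspace_inner}, which identifies the Gram matrix of $\{\ket{A(a)}\}_{a\in\Fs}$ as $G_{ab}=(\tfrac{N^2}{2}-1)\delta_{a,b}-\delta_{\s{a}{b},0}$. Since the Frobenius inner product is positive definite, $\dim\mathrm{span}\{\ket{A(a)}\}=\mathrm{rank}\,G$ and the evaluation map has the same kernel as $G$.

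For the relation, the quickest route is to note that every row of $G$ sums to zero: by Lemma \ref{lem:1}, $|\{b\in\Fs:\s{a}{b}=0\}|=\tfrac{N^2}{2}-1$, so $\sum_{b\in\Fs}G_{ab}=(\tfrac{N^2}{2}-1)-(\tfrac{N^2}{2}-1)=0$, hence $\big\|\sum_{a\in\Fs}\ket{A(a)}\big\|^2=\sum_{a,b}G_{ab}=0$ and the sum vanishes. (Alternatively one may expand via \eqref{eq:Aab}, interchange the two summations, and pair $\ket{\hat h;a}$ with $\ket{\hat h;a+h}=-\ket{\hat h;a}$ using \eqref{eq:nc_a_b_0}.)

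The heart of the proof is computing $\ker G$. I would pass to the symplectic Fourier transform on $\Fm$. Writing $\delta_{\s{a}{b},0}=\tfrac12\big(1+(-1)^{\s{a}{b}}\big)$ and setting $\hat\lambda(c)=\sum_{b\in\Fs}\lambda_b(-1)^{\s{c}{b}}$ for $c\in\Fm$ (so $\hat\lambda(0)=\mu:=\sum_b\lambda_b$), the equation $G\lambda=0$ becomes $\hat\lambda(a)=(N^2-2)\lambda_a-\mu$ for $a\in\Fs$. Applying Fourier inversion $N^2\lambda_c=\sum_{a\in\Fm}\hat\lambda(a)(-1)^{\s{c}{a}}$, which follows from $\sum_{a\in\Fm}(-1)^{\s{a}{v}}=N^2\delta_{v,0}$, then splitting off the $a=0$ term and using $\sum_{a\in\Fs}(-1)^{\s{c}{a}}=-1$ for $c\neq0$, one gets $2\mu+(N^2-2)\hat\lambda(c)=N^2\lambda_c$; substituting the expression for $\hat\lambda(c)$ back in yields $(N^2-1)(N^2-4)\lambda_c=(N^2-4)\mu$, so (since $N=2^m\geq 4$) $\lambda_c=\mu/(N^2-1)$ for every $c$. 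Thus $\ker G$ is contained in the line of constant vectors, and by the relation already proved it equals it. Consequently $\mathrm{rank}\,G=(N^2-1)-1=N^2-2$, which is the claimed dimension; and since the hyperplane $\{\lambda:\sum_a\lambda_a=0\}$ has dimension $N^2-2$ and meets $\ker G=\mathrm{span}\{(1,\dots,1)\}$ only at $0$ (as $N^2-1\neq0$), the evaluation map restricts to an isomorphism of that hyperplane onto $\mathrm{span}\{\ket{A(a)}\}$ — exactly the statement that every $\ket x$ admits (uniquely) a representative with $\sum_a\lambda_a=0$ and that such a representative is nonzero iff $\ket x\neq0$.

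I expect the Fourier computation to be the only real obstacle: one must track carefully the two ``boundary'' contributions of the deleted index $0$, namely the value $\hat\lambda(0)=\mu$ on the source side and the correction $\sum_{a\in\Fs}(-1)^{\s{c}{a}}=-1$ on the target side, since a dropped term or a sign error there would spoil the final identity $(N^2-1)(N^2-4)\lambda_c=(N^2-4)\mu$. Diagonalizing $\delta_{\s{a}{b},0}=\tfrac12(\mathbf 1\mathbf 1^{T}+J)$ directly is not cleaner, because after removing the $0$-indexed row and column the two summands no longer commute; the Fourier bookkeeping above is the most economical way to handle this defect.
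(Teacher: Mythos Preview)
Your proof is correct and complete. The paper takes a different but equally short route: instead of solving $G\lambda=0$ by the symplectic Fourier transform, it sets $H_{a,b}=\delta_{\s{a}{b},0}$ so that $G=\bigl(\tfrac{N^2}{2}-1\bigr)I-H$, and then computes directly that $(H^2)_{a,b}=\tfrac{N^2}{4}\delta_{a,b}+\bigl(\tfrac{N^2}{4}-1\bigr)$, i.e.\ $H^2=\tfrac{N^2}{4}I+\bigl(\tfrac{N^2}{4}-1\bigr)J$. Since $J$ has rank one with eigenvector the all-ones vector, this immediately gives the spectrum of $H$ as $\tfrac{N^2}{2}-1$ (multiplicity~$1$) and $\pm\tfrac{N}{2}$, hence the spectrum of $G$ as $0$ (multiplicity~$1$, on the constants) and $\tfrac{N^2\pm N-2}{2}$, both nonzero. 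Your Fourier argument is arguably the more natural one given that $\delta_{\s{a}{b},0}$ is a character sum, and it pinpoints the kernel without detouring through the full spectrum; the paper's squaring trick, on the other hand, yields all eigenvalues of $G$ for free, which is extra information you do not need here but which fits with the paper's broader program of tracking eigenvalues of $\ET$ on invariant subspaces.
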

\begin{proof}
    Consider an $\left(N^2-1\right) \times \left(N^2-1\right)$ matrix $H$, whose matrix elements are indexed by vectors in $\Fs$, and 
    \begin{equation}
        \label{eq:MatrixH}
        H_{a,b} \ = \ \delta_{\s{a}{b},0}, \ \forall \ a,b \in \Fs.    \end{equation}
Using Lemma \ref{lem:1}, we see that $\sum_{b \in \Fs} H_{a,b} \ = \ \frac{N^2}{2}-1$, for all $a \in \Fs$. Thus $\frac{N^2}{2}-1$ is an eigenvalue of $H$. We compute the matrix elements of $H^2$ as follows.
\begin{align}
    \label{eq:Hsquared}
    \left(H^2 \right)_{a,b} \ = \  \left|  \left\{  \ c \in \Fs \ | \ \s{a}{c}=\s{b}{c}=0 \right\}\right| \ = \ \dfrac{N^2}{4} \delta_{a,b} \ + \ \dfrac{N^2}{4}-1,
\end{align} which we obtain as follows: when $a=b$, Lemma \ref{lem:1} tell us that $\left|  \left\{  \ c \in \Fs \ | \ \s{a}{c}=0 \right\}\right|=\frac{N^2}{2}-1$, whereas when $a \neq b$, Lemma \ref{lem:1} tell us that $\left|  \left\{  \ c \in \Fs \ | \ \s{a}{c}=\s{b}{c}=0 \right\}\right|=\frac{N^2}{4}-1$. This agrees with the final line in equation \eqref{eq:Hsquared}. Let $J$ be an $\left(N^2-1 \right) \times \left( N^2-1\right)$ matrix, whose all matrix elements are $1$. Then $H$ satisfies the equation:
\begin{equation}
    \label{eq:equationH}
    H^2 \ = \ \frac{N^2}{4} \unity_{N^2-1} \ + \ \left( \frac{N^2}{4}-1 \right) \ J.
\end{equation}
$\mathrm{Rank} J =1$, and the corresponding eigenspace of $H^2$ is given by $\sum_{b \in \Fs} \left( H^2 \right)_{a,b} \  = \ \frac{N^2-4}{4}$, as expected, since it should be the square of the eigenvalue of $H$, for the same eigenvector of $H$. Since the eigenspaces of the other eigenvalues of $H^2$ lie in $\mathrm{ker}J$, the other eigenvalues have to be $\pm\frac{N}{2}$. The multiplicity of the eigenvalue $N^2/2-1$ is simply $1$. 
\newline \noindent Let $G$ be the $\left(N^2-1\right) \times \left(N^2-1\right)$ gram matrix of the vectors $\ket{A(a)}$ defined in equation \eqref{eq:Aab}. Then $G_{a,b} \ = \ \bk{A(a)}{A(b)}$. From Lemma \ref{lem:A_subspace_inner}, we get that $G= \left( \frac{N^2}{2}-1 \right) \unity_{N^2-1} \ - \ H$, thus it has eigenvalues $0$, $\frac{N^2 \pm N -2}{2}$, where the multiplicity of the eigenvalue $0$ is $1$. This proves that $\mathrm{rank}G= \dim \mathrm{span}\left\{ \ket{A(a)} \ | \ a \in \Fs \right\}= N^2-1$. This also proves that $\sum_{a \in \Fs} \ket{A(a)}=0$, since $||\sum_{a \in \Fs} \ket{A(a)} ||^2=0$, which is inferred by observing that $G J = 0$. Since $\dim  \mathrm{span}\left\{ \ket{A(a)} \ | \ a \in \Fs \right\} = N^2-2$, this tells us that any vector $\ket{x} \in  \mathrm{span}\left\{ \ket{A(a)} \ | \ a \in \Fs \right\}$ of the form $\sum_{a\in\Fs} \lambda_a \ket{A(a)}$, where $\sum_{a \in \Fs} \lambda_a =0$, is non-zero (because its norm, which may be computed from $G$ will be non-zero). 
\end{proof}

\begin{lem}
\label{lem:Uh_A_action}
For any $a,h \in \Fs$, and $b\in \Fm$, we have that 
\begin{align}
    \label{eq:Uh_ab_action}
  &  \Ua{h} \ \ket{\hat{a};b} \ = \ \ket{\hat{a}+\s{a}{h}\hat{h};b+\s{b}{h}h}, \\ 
    \label{eq:Uh_A_action}
  &  \Ua{h} \ \ket{A(a)} \ = \ \ket{A \left( a + \s{a}{h}h \right) }.
\end{align}
\end{lem}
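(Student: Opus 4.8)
The plan is to deduce both formulas from the basis-level action of the transvection Clifford $\Ua{h}$ recorded in Lemma \ref{lem:Pauli_basis_choice}. Since every ket occurring in $\ket{\hat a;b}$ (equation \eqref{eq:nc_a_b_0}) and in $\ket{A(a)}$ (equation \eqref{eq:Aab}) has its three registers summing to zero, equation \eqref{eq:Clifford_action_special_basis} together with the explicit form $T_h x = x + \s{x}{h}h$ from equation \eqref{eq:Th_matrix_form} gives $\Ua{h}\ket{x_1,x_2,x_3} = \ket{T_h x_1,\,T_h x_2,\,T_h x_3}$ for every such basis ket. To avoid a clash with the transvection label $h$, I would rewrite the defining sum with a fresh dummy, $\ket{\hat a;b} = \sqrt{2/N}\sum_{s:\s{a}{s}=1}(-1)^{\s{s}{b}}\ket{a,s,a+s}$ (valid for $a\in\Fs$, $b\in\Fm$ with $\s{a}{b}=0$), and likewise $\ket{A(a)} = \sum_{k:\,\s{k}{a}=0,\ k\neq 0,a}\ket{\hat k;a}$. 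The only structural fact needed beyond this is that $T_h$ is a symplectic involution: $T_h^2=\unity$ and $T_h\in\Sp$, so $T_h$ permutes $\Fm$ bijectively, fixes $0$, preserves the symplectic form, and satisfies $\s{T_h s}{b}=\s{s}{T_h b}$ --- which follows at once from $\s{T_h s}{b}=\s{s}{b}+\s{s}{h}\s{h}{b}$ and the symmetry of $\s{\cdot}{\cdot}$ over $\F$.

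First I would prove \eqref{eq:Uh_ab_action}. Applying $\Ua{h}$ termwise to the rewritten sum yields $\Ua{h}\ket{\hat a;b} = \sqrt{2/N}\sum_{s:\s{a}{s}=1}(-1)^{\s{s}{b}}\ket{T_h a,\,T_h s,\,T_h a+T_h s}$. Substituting $s'=T_h s$ and using that $T_h$ maps $\{s:\s{a}{s}=1\}$ bijectively onto $\{s':\s{T_h a}{s'}=1\}$, this becomes $\sqrt{2/N}\sum_{s':\s{T_h a}{s'}=1}(-1)^{\s{T_h s'}{b}}\ket{T_h a,\,s',\,T_h a+s'}$. It then remains to replace $\s{T_h s'}{b}$ by $\s{s'}{T_h b}=\s{s'}{b+\s{b}{h}h}$ and to note that $T_h a\neq 0$ and $\s{T_h a}{T_h b}=\s{a}{b}=0$, so that the resulting expression is precisely $\ket{\widehat{T_h a};\,T_h b}$. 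Under the convention --- consistent with equation \eqref{eq:Uh_hat_a} (the $b=0$ case) and with the intertwiner \eqref{eq:nc_d_intertwiner} --- that $\hat a+\s{a}{h}\hat h$ abbreviates $\widehat{a+\s{a}{h}h}=\widehat{T_h a}$, this is the claimed identity.

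Then \eqref{eq:Uh_A_action} is immediate: by linearity and the formula just established, $\Ua{h}\ket{A(a)} = \sum_{k:\,\s{k}{a}=0,\ k\neq 0,a}\ket{\widehat{T_h k};\,T_h a}$, and re-indexing by $k'=T_h k$ --- which carries $\{k:\s{k}{a}=0,\ k\neq 0,a\}$ bijectively onto $\{k':\s{k'}{T_h a}=0,\ k'\neq 0,T_h a\}$ --- turns this into $\sum_{k':\,\s{k'}{T_h a}=0,\ k'\neq 0,T_h a}\ket{\widehat{k'};\,T_h a} = \ket{A(T_h a)} = \ket{A(a+\s{a}{h}h)}$. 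The argument is essentially routine, so I expect no genuine obstacle; the one place that needs care is tracking the phase $(-1)^{\s{s}{b}}$ through the change of variables, which is exactly where $\s{T_h s}{b}=\s{s}{T_h b}$ is used, together with correctly identifying the right-hand side of \eqref{eq:Uh_ab_action} with $\widehat{T_h a}$. If one preferred to avoid the symplectic-involution bookkeeping, \eqref{eq:Uh_ab_action} could instead be verified by the same four-case split on $\s{s}{h},\s{a}{h}\in\{0,1\}$ used in Appendix \ref{sec:appendix:ET_vnc}, though the involution route is shorter.
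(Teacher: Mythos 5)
Your proof is correct and follows essentially the same route as the paper's: both apply the basis-level action $\Ua{h}\ket{x_1,x_2,x_3}=\ket{T_h x_1,T_h x_2,T_h x_3}$ termwise, change the dummy variable by $s\mapsto T_h s$, and use the symplectic invariance of the form under $T_h$ (your ``self-adjointness'' identity $\s{T_h s}{b}=\s{s}{T_h b}$ is equivalent, via $T_h^2=\unity$, to the $\s{T_h k}{T_h a}=\s{k}{a}$ that the paper invokes) to re-identify the result as $\ket{\widehat{T_h a};T_h b}$, after which \eqref{eq:Uh_A_action} follows by the same re-indexing applied to the sum defining $\ket{A(a)}$.
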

\begin{proof}
We verify equation \eqref{eq:Uh_ab_action} first. From equation \eqref{eq:nc_a_b_0}, and equation \eqref{eq:Uh_action_vectorized} we get that 
\begin{align}
    \label{eq:Uh_ab_action1}
     \Ua{h} \ \ket{a;b} \ 
 \ \propto \ & \sum_{k:\s{k}{a}=1} \ \left( -1 \right)^{\s{k}{b}} \ \ket{a+\s{h}{a}h,k+\s{k}{h}k,a+k+\s{h}{a+k}h}  \  \notag \\ 
 \ = \ & \sum_{k':\s{k}{a}=1} \ \left( -1 \right)^{\s{k'}{b+\s{h}{b}}} \ \ket{a+\s{h}{a}h,k',a+k'+\s{h}{a}h},
\end{align} where the last line is observed by noting that if for any $k$, we define $k'=k+\s{k}{h}h$, then $\s{k'}{a+\s{a}{h}h}=\s{k}{a}$. Replacing $a$ with $b$ we get, similarly, that $\s{k'}{b+\s{h}{b}h}=\s{k}{b}$. Summing over the new dummy variables $k'$ instead of the old variable $k$, we get the last line. \newline \noindent Next, from equation \eqref{eq:Aab} and equation \eqref{eq:Uh_ab_action}, we get 
\begin{align}
    \label{eq:Uh_A_action1}
    \Ua{h} \ \ket{A(a)} 
    = \ & \sum_{k: \substack{\s{k}{a}=0 \\ k \neq a,0}} \   \Ua{h} \ \ket{\hat{k};a} \notag \\ 
     = \ &  \sum_{k: \substack{\s{k}{a}=0 \\ h \neq a,0}} \   \Ua{h} \ \ket{\hat{k+\s{h}{k}h};a+\s{a}{h}h} \notag \\ 
     = \ & \sum_{k':\substack{\s{k'}{a+\s{h}{a}h}=0 \\ k' \neq 0, a+\s{h}{a}h}} \ \ket{\hat{k'};a+\s{h}{a}h},
\end{align} where we explain the last line as follows. For any $k$, denote $k'\coloneqq k+ \s{h}{k}h$. Then we see that $\s{k'}{a}=\s{k}{a}$. Also,  $k=\s{h}{k}h$, then $k=\s{h}{\s{h}{k}h}h=0$. We rule out $k=0$ in the summation over $k$. Similarly, if $k+a=\s{k+a}{h}h$, then $k+a=0$, which we rule out in the summation. Thus $k'\neq0,a+\s{h}{a}h$. Replacing the dummy variable $k$ with the newer dummy variable $k'$ thus gives us the last line.
\end{proof}
\begin{lem}
\label{lem:ET_action_Vnull}
We have that 
\begin{equation}
    \label{eq:ET_action_Vnull}
\left(\ET - \ \frac{1}{4} \mathrm{id} \right) \left( \ket{\hat{a};b} + \ket{\hat{b};a+b} + \ket{\hat{a+b};a} \right) =0.
\end{equation}
\end{lem}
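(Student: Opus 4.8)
The plan is to carry out everything inside the non-commuting sector $\vnc$ and to reduce the claim to the action of the three operators $\tncone,\tnctwo,\tncthree$. First observe that, under the standing hypotheses $a,b\in\Fs$, $a\neq b$, $\s{a}{b}=0$, each of $\ket{\hat{a};b}$, $\ket{\hat{b};a+b}$, $\ket{\hat{a+b};a}$ is, by \eqref{eq:nc_a_b_0}, a linear combination of basis vectors $\ket{x,h,x+h}$ with $\s{x}{h}=1$, so all three lie in $\vnc$; moreover the three ``hat'' labels $a,b,a+b$ are nonzero and distinct, and each of the ``second'' labels $b$, $a+b$, $a$ is nonzero and differs from its hat label (e.g.\ $a+b\neq 0$ and $a+b\neq b$). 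Put $\ket{v}:=\ket{\hat{a};b}+\ket{\hat{b};a+b}+\ket{\hat{a+b};a}$. By \eqref{eq:ET_vnc_restriction}, $\ET\big|_{\vnc}=\tfrac14\id+\tncone+\tnctwo+\tncthree$, so it suffices to prove $(\tncone+\tnctwo+\tncthree)\ket{v}=0$.

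For $\tncone$ I will show each summand of $\ket v$ is separately annihilated. If $c\in\Fs$ and $d\in\Fm$ with $\s{c}{d}=0$, then applying \eqref{eq:def_tncone} termwise to \eqref{eq:nc_a_b_0} and interchanging the two sums gives
\[
\tncone\ket{\hat{c};d}\ =\ \sqrt{\tfrac{2}{N}}\;\frac{1}{N^{2}}\sum_{m:\,\s{c}{m}=1}\Bigl(\ \sum_{h:\,\s{c}{h}=1,\ \s{h}{m}=1}(-1)^{\s{d}{h}}\Bigr)\ket{c,m,c+m}.
\]
Since $\s{c}{m}=1$ makes $c,m$ linearly independent, Lemma \ref{lem:1} shows the inner character sum vanishes unless $d\in\mathrm{span}\{c,m\}=\{0,c,m,c+m\}$; but $d\in\{m,c+m\}$ would force $\s{c}{d}=\s{c}{m}=1$, contradicting $\s{c}{d}=0$, while $d\neq 0$ and $d\neq c$ hold for each of the three summands of $\ket v$. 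Hence $\tncone\ket{\hat{c};d}=0$ in all three cases, so $\tncone\ket v=0$.

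For $\tnctwo$ and $\tncthree$ I will use the conjugation identities $\tnctwo=\Wa{[(12)]}\,\tncone\,\Wa{[(12)]}$ and $\tncthree=\Wa{[(13)]}\,\tncone\,\Wa{[(13)]}$ from Lemma \ref{lem:tnc_rela} (see \eqref{eq:tnc_rela_2} and \eqref{eq:tnc_rela_3_1}). Feeding $\ket v$ through these, expanding with \eqref{eq:nc_a_b_0} and \eqref{eq:def_tncone} and performing the resulting changes of summation variable (each uses $\s{a}{b}=0$), I will rewrite both outputs in the common basis $\{\ket{p,q,p+q}:\s{p}{q}=1\}$, say $\tnctwo\ket v=\sqrt{2/N}\,N^{-2}\sum_{p,q}C^{(2)}_{p,q}\ket{p,q,p+q}$ and $\tncthree\ket v=\sqrt{2/N}\,N^{-2}\sum_{p,q}C^{(3)}_{p,q}\ket{p,q,p+q}$. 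Each coefficient $C^{(2)}_{p,q}$, $C^{(3)}_{p,q}$ is a sum of three terms, a term being present only when a prescribed pair among the four bits $\s{a}{p},\s{b}{p},\s{a}{q},\s{b}{q}$ takes prescribed values, and carrying a sign $(-1)^{\epsilon}$ with $\epsilon$ a fixed linear combination (mod $2$) of $\s{a}{q},\s{b}{q}$ for $\tnctwo$ and of all four bits for $\tncthree$; here one uses $\s{a+b}{\cdot}=\s{a}{\cdot}\oplus\s{b}{\cdot}$. A direct check of the sixteen possible values of $(\s{a}{p},\s{b}{p},\s{a}{q},\s{b}{q})$, conveniently grouped by $(\s{a}{p},\s{b}{p})$, shows $C^{(2)}_{p,q}+C^{(3)}_{p,q}=0$ for every $p,q$. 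Hence $(\tnctwo+\tncthree)\ket v=0$, and combined with the preceding paragraph $(\ET-\tfrac14\id)\ket v=0$, which is \eqref{eq:ET_action_Vnull}.

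The main obstacle is purely the bookkeeping in the third paragraph: propagating the several dummy-variable substitutions through the two permutation conjugations and the intervening $\tncone$, and then organising the finite case analysis cleanly; the character-sum cancellations themselves are immediate from Lemma \ref{lem:1}. One could instead exploit that every summand of $\ket v$ is $\Wa{[(23)]}$-invariant, so $\ket v\in\supp(\id+\Wa{[(23)]})$, and run the argument through the $\ETT_{\pm}$ decomposition of \eqref{eq:ETT_decomposition}; this does not appear to shorten the computation.
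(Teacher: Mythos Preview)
Your proposal is correct and takes a genuinely different route from the paper. The paper never passes through the decomposition $\ET\big|_{\vnc}=\tfrac14\id+\tncone+\tnctwo+\tncthree$; instead it applies the transvection-action formula on hat vectors (Lemma~\ref{lem:Uh_A_action}, equation~\eqref{eq:Uh_ab_action}), namely $\Ua{h}\ket{\hat{c};d}=\ket{\widehat{c+\s{c}{h}h};\,d+\s{d}{h}h}$, sums over $h$ split by the four values of $(\s{a}{h},\s{b}{h})$, and then cancels the three non-trivial blocks termwise using only the antisymmetry $\ket{\hat{c};d}=-\ket{\hat{c};c+d}$ from \eqref{eq:nc_a_b_0} (together with an involutive relabelling $h\mapsto c+h$ of the summation domain). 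The entire argument thus stays at the level of hat vectors and never unpacks into the raw basis $\ket{p,q,p+q}$.

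Your approach buys a particularly clean first step: the observation that each summand $\ket{\hat{c};d}$ with $d\notin\{0,c\}$ lies individually in $\ker\tncone$ is exactly the content of \eqref{eq:tnconep_eigenvalue_equation_2} (since these vectors are already $\Wa{[(23)]}$-invariant), and your character-sum derivation of it is self-contained and transparent. The cost is the second step: after conjugating by $\Wa{[(12)]}$ and $\Wa{[(13)]}$ you lose the hat-vector structure and are forced into the sixteen-case table, which does work (I checked: with $(\alpha,\beta,\gamma,\delta)=(\s{a}{p},\s{b}{p},\s{a}{q},\s{b}{q})$ one indeed gets $C^{(2)}_{p,q}+C^{(3)}_{p,q}=0$ in every case) but is bookkeeping-heavy. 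A minor inaccuracy: the sign in the $\tncthree$ contribution is a linear form in $\s{p}{d},\s{q}{d}$ (two bits, not all four), though this does not affect the outcome. The paper's route avoids this case analysis entirely by keeping the antisymmetry relation in play throughout; your route trades that structural shortcut for a more mechanical verification that is nonetheless finite and valid.
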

\begin{proof}
Note that 
\begin{align}
    \label{eq:ET_action_Vnull1}
    & \ET \left( \ket{a;b} + \ket{b;a+b} + \ket{a+b;a} \right) \notag \\
    \ \propto \ & \sum_{h \in \Fm} \  \left(  \begin{aligned}   \ \ket{a+\s{a}{h}h;b+\s{b}{h}h}   \\ + \ket{b+\s{b}{h}h;a+b+\s{a+b}{h}h} \\
    + \ket{a+b+ \s{a+b}{h}h;a+\s{a}{h}h} \end{aligned}  \right)  \notag \\
    \ = \ & \sum_{\substack{h \in \Fm \\ \s{a}{h}=0 \\ \s{b}{h}=0}} \left( \ket{a;b} + \ket{b;a+b} + \ket{a+b;a} \right)  \notag \\ 
    + & \sum_{\substack{h \in \Fm \\ \s{a}{h}=1 \\ \s{b}{h}=0}} \left( \ket{a+h;b} + \ket{b;a+b+h} + \ket{a+b+h;a+h} \right)  \notag \\
    + & \sum_{\substack{h \in \Fm \\ \s{a}{h}=0 \\ \s{b}{h}=1}} \left( \ket{a;b+h} + \ket{b+h;a+b+h} + \ket{a+b+h;a} \right)  \notag \\
    + & \sum_{\substack{h \in \Fm \\ \s{a}{h}=1 \\ \s{b}{h}=1}} \left( \ket{a+h;b+h} + \ket{b+h;a+b} + \ket{a+b;a+h} \right).
\end{align} In the second line we re-express the summation \begin{equation}
    \label{eq:ET_action_Vnull1_term1}
\sum_{\substack{h\in\Fm\\ \s{a}{h}=1 \\ \s{b}{h}=0}} \ket{a+h;b} = \sum_{\substack{h' \in \Fm \\ \s{a}{h'}=1 \\ \s{b}{h'}=0}} \ \ket{h';b}
\end{equation} by defining a new dummy variables $h'\coloneqq a+h$ using the old dummy variable $h$ and noting that it satisfies the conditions: $\s{a}{h'}=\s{a}{h}$ and $\s{b}{h'}=\s{b}{h}$. Similarly one may express \begin{equation}
    \label{eq:ET_action_Vnull1_term2}
\sum_{\substack{h\in\Fm\\ \s{a}{h}=1 \\ \s{b}{h}=0}} \ket{a+b+h;a+h} = \sum_{\substack{h' \in \Fm \\ \s{a}{h'}=1 \\ \s{b}{h'}=0}} \ \ket{h';h'+b},
\end{equation} by defining a new dummy variable $h'\coloneqq a+b+h$ using the old dummy variable and noting that $\s{h'}{a}=\s{h}{a}$ and $\s{h'}{b}=\s{h}{b}$. Using equation \eqref{eq:nc_a_b_0}, we see that the sum of both term is $0$. The term 
\begin{equation}
    \label{eq:ET_action_Vnull1_term3}
\sum_{\substack{h\in\Fm\\ \s{a}{h}=1 \\ \s{b}{h}=0}} \ket{b;a+b+h} =0
\end{equation} because for any $h \in \left\{ k \in \Fm \ | \ \s{a}{k}=1, \ \s{b}{k}=0 \right\}$, $b+h$ will also belong to the same set. Thus the vectors $\ket{b;a+b+h}$ and $\ket{b;a+h}$ both occur in the summation and from equation \eqref{eq:nc_a_b_0} we see that their sum is $0$. All terms will cancel out this way. \newline
Above we explained in detail while the sum in the second line in equation \eqref{eq:ET_action_Vnull1} is $0$. The third and fourth lines are also $0$, and this can be proved using the same nature of arguments. This proves the lemma. 
\end{proof}
    \section{Proof of Theorem \ref{thm:nc_1_decomposition}}
    \label{sec:appendix:nc_W}
    Note that all the eigenspaces listed in Theorem \ref{thm:nc_inv_subsp}
 span $\mathrm{supp}  \dfrac{1}{2}\left( \mathrm{id}+ W_{[(23)]}\right)$. Also note that $W $ anti-commutes with $\dfrac{1}{2}\left( \mathrm{id}+ W_{[(23)]}\right)$, while commuting with the action of the Clifford group (due to Schur-Weyl duality). This proves that if an irreducible representation within $V^{(\mathrm{nc})}_x$ isn't annihilated by $W$, then $W$'s action on takes it to an orthogonal subspace, which carries the same irreducible representation. Hence proved. 


\begin{thebibliography}{9}
\bibitem{RB1} Emerson, Joseph; Alicki, Robert; Zyczkowski, Karol (2005). "Scalable noise estimation with random unitary operators". Journal of Optics B: Quantum and Semiclassical Optics. 7 (10): S347. 
\bibitem{RB2} Dankert, Christoph; Cleve, Richard; Emerson, Joseph; Livine, Etera (2009). "Exact and Approximate Unitary 2-Designs: Constructions and Applications". Physical Review A. 80: 012304.
\bibitem{RB3} Knill, E; Leibfried, D; Reichle, R; Britton, J; Blakestad, R; Jost, J; Langer, C; Ozeri, R; Seidelin, S; Wineland, D.J. (2008). "Randomized benchmarking of quantum gates". Physical Review A. 77 (1): 012307.
\bibitem{RB4} Magesan, Easwar; Gambetta, Jay M.; Emerson, Joseph (2011). "Scalable and Robust Randomized Benchmarking of Quantum Processes". Physical Review Letters. 106 (31–9007): 180504.
\bibitem{RB5} Magesan, Easwar; Gambetta, Jay M.; Emerson, Joseph (2012). "Characterizing quantum gates via randomized benchmarking". Physical Review A. 85 (1050–2947): 042311. 
\bibitem{RB6} Dugas, A; Wallman, J; Emerson, J (2015). "Characterizing universal gate sets via dihedral benchmarking". Physical Review A. 92 (6): 060302.
\bibitem{RB7} Dugas, Arnaud; Boone, Kristine; Wallman, Joel; Emerson, Joseph (2018). "From randomized benchmarking experiments to gate-set circuit fidelity: how to interpret randomized benchmarking decay parameters". New Journal of Physics. 20 (9): 092001.
\bibitem{RB8} Boone, Kristine; Dugas, Arnaud; Wallman, Joel; Emerson, Joseph (2019). "Randomized benchmarking under different gate sets". Physical Review A. 99 (3): 032329.
\bibitem{RB9} Wallman, Joel; Granade, Chris; Harper, Robin; Flammia, Steven (2015). "Estimating the coherence of noise". New Journal of Physics. 17 (11): 113020.
\bibitem{RB10} Gambetta, Jay M.; Corcoles, A.D.; Merkel, Seth T.; Johnson, Blake R.; Smolin, John A.; Chow, Jerry M.; Ryan, Colm A.; Rigetti, Chad; Poletto, Stefano; Ohki, Thomas A.; Ketchen, Mark B.; Steffen, Matthias (2012). "Characterization of Addressability by Simultaneous Randomized Benchmarking". Physical Review Letters. 109 (31–9007): 240504.
\bibitem{RB11} Quantum Circuits for Exact Unitary t-Designs and Applications to Higher-Order Randomized Benchmarking, Yoshifumi Nakata, Da Zhao, Takayuki Okuda, Eiichi Bannai, Yasunari Suzuki, Shiro Tamiya, Kentaro Heya, Zhiguang Yan, Kun Zuo, Shuhei Tamate, Yutaka Tabuchi, and Yasunobu Nakamura, PRX Quantum 2, 030339 (2021)
\bibitem{DiVincenzo2002} D. P. DiVincenzo, D. W. Leung, and B. M. Terhal. Quantum data hiding. IEEE Trans. Inf.
Theory, 48(3):580–598, 2002.
\bibitem{DP1}O. Szehr, F. Dupuis, M. Tomamichel, and R. Renner. Decoupling with unitary approximate
two-designs. New Journal of Physics, 15(5):053022, 2013.
\bibitem{DP2} Decoupling with random diagonal unitaries, Yoshifumi Nakata, Christoph Hirche, Ciara Morgan, and Andreas Winter, Quantum 1, 18 (2017).
\bibitem{BH1}  P. Hayden and J. Preskill. Black holes as mirrors: quantum information in random subsystems. J. High
Energy Phys., 2007(09):120, 200
 \bibitem{BH2} Y. Sekino and L. Susskind. Fast scramblers. J. High Energy Phys., 2008(10):065, 2008.
\bibitem{BH3} N. Lashkari, D. Stanford, M. Hastings, T. Osborne, and P. Hayden. Towards the fast scrambling
conjecture. J. High Energy Phys., 2013(4), 2013
\bibitem{TH1}  S. Popescu, A. J. Short, and A. Winter. Entanglement and the foundations of statistical mechanics. Nat. Phys., 2(11):754–758, 2006.
\bibitem{TH2} S. Goldstein, J. L. Lebowitz, R. Tumulka, and N. Zangh´ı. Canonical Typicality. Phys. Rev. Lett., 96(5):050403, 2006.
\bibitem{TH3} P. Reimann. Foundation of Statistical Mechanics under Experimentally Realistic Conditions. Phys. Rev. Lett., 101(19):190403, 2008
\bibitem{TR1} O. T. O’Meara, Symplectic groups, ser. Mathematical Surveys. American Mathematical Society, Providence, R.I., 1978, vol. 16.
\bibitem{TR2}  D. Callan, “The generation of Sp(F2) by transvections,” J. Algebra, vol. 42, no. 2, pp. 378–390, 1976.
\bibitem{Tan20} X. Tan, N. Rengaswamy and R. Calderbank. Approximate Unitary 3-Designs from Transvection Markov Chains. quant-ph/2011.00128
\bibitem{Koenig14} R. Koenig and J. A. Smolin, "How to efficiently select an arbitrary Clifford group element", J. Math. Phys., vol. 55, no. 12, Dec. 2014.
\bibitem{Pllaha21} Tefjol Pllaha, Kalle Volanto and Olav Tirkkonen. Decomposition of Clifford Gates. quant-ph/2102.11380
\bibitem{misc_transvections1} T. Can, N. Rengaswamy, R. Calderbank, and H. D. Pfister, “Kerdock Codes Determine
Unitary 2-Designs,” IEEE Trans. Inform. Theory, vol. 66, no. 10, pp. 6104–6120, 2020,
\bibitem{misc_transvections2} N. Rengaswamy, R. Calderbank, S. Kadhe, and H. D. Pfister, “Logical Clifford synthesis
for stabilizer codes,” IEEE Trans. Quantum Engg., vol. 1, 2020.
\bibitem{misc_transvections3} N. Rengaswamy, R. Calderbank, S. Kadhe, and H. D. Pfister, “Synthesis of Logical Clifford Operators via Symplectic Geometry,” in Proc. IEEE Int. Symp. Inform. Theory,
Jun 2018, pp. 791–795.
\bibitem{ion_trap1} N. Rengaswamy, R. Calderbank, S. Kadhe and H. D. Pfister, "Logical Clifford Synthesis for Stabilizer Codes," in IEEE Transactions on Quantum Engineering, vol. 1, pp. 1-17, 2020, Art no. 2501217, doi: 10.1109/TQE.2020.3023419.
\bibitem{ion_trap2} ] Sørensen A and Mølmer K 1999 Quantum computation with ions in thermal motion Phys. Rev. Lett. 82 1971
\bibitem{ion_trap3} Debnath S, Linke N M, Figgatt C, Landsman K A, Wright K and Monroe C 2016 Demonstration of a small programmable
quantum computer with atomic qubits Nature 536 63–6
\bibitem{ion_trap4} Figgatt C, Ostrander A, Linke N M, Landsman K A, Zhu D, Maslov D and Monroe C 2019 Parallel entangling operations on a
universal ion-trap quantum computer Nature 572 368–72
\bibitem{ion_trap5}  Grzesiak N et. al. Efficient arbitrary simultaneously entangling gates on a trapped-ion quantum computer Nat. Commun. 11
2963 (2020)
\bibitem{ion_trap6} Martinez E A, Monz T, Nigg D, Schindler P and Blatt R 2016 Compiling quantum algorithms for architectures with multi-qubit
gates New J. Phys. 18 063029
\bibitem{Diaconis2003} Diaconis, P. (2003). Random walks on groups: Characters and geometry. In C. Campbell, E. Robertson, \& G. Smith (Eds.), Groups St Andrews 2001 in Oxford (London Mathematical Society Lecture Note Series, pp. 120-142). Cambridge: Cambridge University Press. doi:10.1017/CBO9780511542770.017
\bibitem{Aaronson2008} Scott Aaronson and Daniel Gottesman
Phys. Rev. A 70, 052328 (2004). 
\bibitem{Bravyi21} Sergey Bravyi, Ruslan Shaydulin, Shaohan Hu, Dmitri Maslov, Clifford Circuit Optimization with Templates and Symbolic Pauli Gates,	Quantum 5, 580 (2021)
\bibitem{Gidney2021} C. Gidney, Stim: a fast stabilizer circuit simulator, Quantum 5, 497 (2021)
\bibitem{Bravyi_Maslov_21} S. Bravyi and D. Maslov, "Hadamard-Free Circuits Expose the Structure of the Clifford Group," IEEE Transactions on Information Theory, vol. 67, no. 7, pp. 4546-4563, July 2021, doi: 10.1109/TIT.2021.3081415.

\bibitem{Helsen2016} "Representations of the multi-qubit Clifford group".
Journal of Mathematical Physics 59, 072201 (2018); https://doi.org/10.1063/1.4997688
 Jonas Helsen,  Joel J. Wallman, and Stephanie Wehner.
\bibitem{RBCL1} Multiqubit randomized benchmarking using few samples
Jonas Helsen, Joel J. Wallman, Steven T. Flammia, and Stephanie Wehner
Phys. Rev. A 100, 032304 (2019)
\bibitem{RBCL2} Efficient unitarity randomized benchmarking of few-qubit Clifford gates, Bas Dirkse, Jonas Helsen, and Stephanie Wehner
Phys. Rev. A 99, 012315 – Published 10 January 2019
\bibitem{Zhu16} The Clifford group fails gracefully to be a unitary 4-design
Huangjun Zhu, Richard Kueng, Markus Grassl, David Gross, https://arxiv.org/abs/1609.08172
\bibitem{Hahn2019} Hahn, H., Zarantonello, G., Schulte, M. et al. Integrated 9Be+ multi-qubit gate device for the ion-trap quantum computer. npj Quantum Inf 5, 70 (2019). https://doi.org/10.1038/s41534-019-0184-5
\bibitem{Shapira2020} Theory of robust multiqubit nonadiabatic gates for trapped ions
Yotam Shapira, Ravid Shaniv, Tom Manovitz, Nitzan Akerman, Lee Peleg, Lior Gazit, Roee Ozeri, and Ady Stern, Phys. Rev. A 101, 032330 – Published 20 March 2020
\bibitem{Rasmussen2020} Single-step implementation of high-fidelity n-bit Toffoli gates. S. E. Rasmussen, K. Groenland, R. Gerritsma, K. Schoutens, and N. T. Zinner. Phys. Rev. A 101, 022308 – Published 7 February 2020
\bibitem{Goel2021} Native multiqubit Toffoli gates on ion trap quantum computers, Nilesh Goel, J. K. Freericks, arXiv:2103.00593
\bibitem{Gross17} David Gross, Sepehr Nezami, and Michael Walter, Schur-Weyl duality for the Clifford group with applications: Property testing, a robust Hudson theorem, and de Finetti representations, arXiv preprint arXiv:1712.08628 (2017).
\bibitem{Appleby2011} D.M. Appleby, Ingemar Bengtsson, Stephen Brierley, Markus Grassl, David Gross, Jan-Ake Larsson, The monomial representations of the Clifford group, Quantum Information and Computation, Vol. 12 No. 5 \& 6, 0404-0431 (2012)
\bibitem{Gross21} Gross, D., Nezami, S. \& Walter, M. Schur–Weyl Duality for the Clifford Group with Applications: Property Testing, a Robust Hudson Theorem, and de Finetti Representations. Commun. Math. Phys. 385, 1325–1393 (2021). https://doi.org/10.1007/s00220-021-04118-7
 \bibitem{Zhu15} H. Zhu.  Multiqubit Clifford groups are unitary 3-designs.
Phys. Rev. A 96, 062336, 2017
\bibitem{Webb15} Z. Webb, The Clifford group forms a unitary 3-design,
Quantum Inf. Comput. 16, 1379–1400 (2016)
\bibitem{Chau05} H. F. Chau. Unconditionally secure key distribution in higher dimensions by depolarization.
IEEE Trans. Inf. Theory, 51(4):1451–1468, 2005. arXiv:quant-ph/0405016.
\bibitem{Cleve16} R. Cleve, D. Leung, L. Liu, and C. Wang. Near-linear constructions of exact unitary 2-designs. Quant.
Info. \& Comp., 16(9 \& 10):0721–0756, 2016 
\bibitem{Schur}  Issai Schur (1905), "Neue Begründung der Theorie der Gruppencharaktere" (New foundation for the theory of group characters), Sitzungsberichte der Königlich Preußischen Akademie der Wissenschaften zu Berlin, pages 406-432
\bibitem{Gross2006} D. Gross, K. Audenaert, and J. Eisert , "Evenly distributed unitaries: On the structure of unitary designs", \href{https://doi.org/10.1063/1.2716992}{Journal of Mathematical Physics 48, 052104 (2007)},  \href{https://arxiv.org/abs/quant-ph/0611002}{Arxiv link: quant-ph/0611002}
\bibitem{Gross2005} D. Gross. Diploma Thesis. University of Potsdam (2005).
Available online at http://gross.qipc.org
\bibitem{Dankert2009} Dankert, Christoph; Cleve, Richard; Emerson, Joseph; Livine, Etera (2009). "Exact and Approximate Unitary 2-Designs: Constructions and Applications". Physical Review A. 80: 012304.
\bibitem{wiki_group_ring} Wikipedia page on 'Group Ring'. Available at \href{https://en.wikipedia.org/wiki/Group_ring}{this} link.
\bibitem{Haferkamp2020} Jonas Haferkamp, Felipe Montealegre-Mora, Markus Heinrich, Jens Eisert, David Gross, Ingo Roth, Quantum homeopathy works: Efficient unitary designs with a system-size independent number of non-Clifford gates, arXiv:2002.09524

\bibitem{Pitt_Coste_unpublished} Pittet Ch., Coste L. S., A survey on the relationships between volume growth, isoperimetry, and the behavior of simple random walk on Cayley graphs (unpublished). Pdf is available for download from \href{https://pi.math.cornell.edu/~lsc/avpub.html}{this} webpage.

\end{thebibliography}
\end{document}